\newcounter{equationbis}
\newcommand{\tagbis}{\stepcounter{equationbis}\tag{\arabic{equation}\alph{equationbis}}}
\newcommand{\inference}[3]{\inferenceright{#1}{#2}{\text{#3}}}
\newcommand{\binference}[3]{\binferenceright{#1}{#2}{\text{#3}}}
\renewcommand{\loch}{{\llcorner\hspace{-.4ex}\lrcorner}}
\newcommand{\Ty}{\mathrm{Ty}}
\newcommand{\Tm}{\mathrm{Tm}}
\newcommand{\Disc}{\name{Disc}}
\newcommand{\uniPsh}{\cat U^{\name{Psh}}}
\newcommand{\uniNDD}{\cat U^{\name{NDD}}}
\newcommand{\uniDD}{\cat U^{\name{DD}}}
\newcommand{\PropD}{\Prop^{\name{D}}}
\newcommand{\IB}{\mathbb{B}}
\newcommand{\IP}{\mathbb{P}}
\newcommand{\IE}{\mathbb{E}}
\newcommand{\XX}{\mathbb{X}}
	\newcommand{\ctxbrid}[1]{{#1} : \IB}
	\newcommand{\ctxpath}[1]{{#1} : \IP}
	\newcommand{\ctxedge}[1]{{#1} : \IE}
	\newcommand{\ctxline}[1]{{#1} : \XX}
\newcommand{\IX}{\mathbb{I}}
\newcommand{\IF}{\mathbb{F}}
\newcommand{\cwfpair}[2]{{#1, #2}}
\newcommand{\ap}{\name{ap}}
\newcommand{\J}{\name{J}}
\newcommand{\tycode}[1]{\ulcorner #1 \urcorner}
\newcommand{\dtycode}[1]{\underline \ulcorner #1 \underline \urcorner}
\newcommand{\dEl}{\underline \El}
\newcommand{\tycodeDD}[1]{\ulcorner #1 \urcorner^{\name{DD}}}
\newcommand{\ElDD}{\El^{\name{DD}}}
\newcommand{\tycodeD}[1]{\ulcorner #1 \urcorner^{\name{D}}}
\newcommand{\ElD}{\El^{\name{D}}}
\newcommand{\dtype}{\,\name{dtype}}
\newcommand{\prop}{\,\name{prop}}
\newcommand{\eqrel}{\,\name{eqrel}\,}
\newcommand{\yoneda}{\mathbf{y}}
\newcommand{\homclass}[1]{\mathscr{#1}}
\newcommand{\sheq}{\name{SE}}
\newcommand{\fpsh}[1]{{#1}^\dagger}%
\newcommand{\fpshadj}[1]{\underline{\alpha}_{#1}}
\newcommand{\lpsh}[1]{\widehat{#1}}
\newcommand{\ftrtm}[2]{{}^{#1} #2}
\newcommand{\ftrvar}[2]{\ftrtm{#1}{\var{#2}}}
\newcommand{\shp}{{\rbag}}%
\newcommand{\quotshp}{\mathrlap{\shp}{\circ}}
	\newcommand{\inquotshp}{\varsigma_\circ}
	\newcommand{\hatinquotshp}{\inquotshp}
\newcommand{\tmshp}[1]{\hatinquotshp(#1)}
\newcommand{\coshp}{\P}
\newcommand{\catTop}{\name{Top}}
\newcommand{\cohpi}{{\sqcap}}
\newcommand{\cohdisc}{{\vartriangle}}
\newcommand{\cohfget}{{\sqcup}}
\newcommand{\cohcodisc}{{\triangledown}}
\newcommand{\cohpaths}{{\boxminus}}
\newcommand{\sys}[1]{\paren{#1}}
\newcommand{\sysclauseb}[2]{#1\,?\,#2}
\newcommand{\sysclause}[2]{\parclr{#1}\,?\,#2}
\newcommand{\Glue}{\name{Glue}}
\newcommand{\Gluesys}[2]{\Glue \accol{#1 \leftarrow \sys{#2}}}
\newcommand{\Gluesysclause}[3]{#1\,?\,{#2, \ptwclr{#3}}}
\newcommand{\Gluesysclauseb}[3]{#1\,?\,{#2, #3}}
\newcommand{\glue}{\name{glue}}
\newcommand{\gluesys}[2]{\glue \accol{#1 \mapsfrom \sys{#2}}}
\newcommand{\unglue}{\name{unglue}}
\newcommand{\ungluesys}[1]{\unglue\,\sys{#1}}
\newcommand{\Gluetp}[4]{\Gluesys{#2}{\Gluesysclause{#1}{#3}{#4}}}
\newcommand{\gluetm}[3]{\gluesys{#2}{\sysclause{#1}{#3}}}
\newcommand{\ungluetm}[3]{\ungluesys{\sysclause{#1}{\ptwclr{#2}}}\,#3}
\newcommand{\Weld}{\name{Weld}}
\newcommand{\Weldsys}[2]{\Weld \accol{#1 \to \sys{#2}}}
\newcommand{\Weldsysclause}[3]{#1\,?\,{#2, \ptwclr{#3}}}
\newcommand{\Weldsysclauseb}[3]{#1\,?\,{#2, #3}}
\newcommand{\weld}{\name{weld}}
\newcommand{\weldsys}[1]{\weld\,\sys{#1}}
\newcommand{\dweld}{\underline{\weld}}
\newcommand{\dunweld}{\underline{\name{unweld}}}
\newcommand{\Weldtp}[4]{\Weldsys{#2}{\Weldsysclause{#1}{#3}{#4}}}
\newcommand{\weldtm}[3]{\weldsys{\sysclause{#1}{\ptwclr{#2}}}\,#3}
\newcommand{\parclr}[1]{\textcolor{purple}{#1}}
\newcommand{\ptwclr}[1]{\textcolor{cyan}{#1}}
\newcommand{\varclr}[1]{\textcolor{orange}{#1}}
\newcommand{\leftflat}[1]{\sharp \setminus {#1}}
\newcommand{\leftsharp}[1]{\coshp \setminus {#1}}
\newcommand{\judty}[1]{\sez \parclr{ #1 } \type}
\newcommand{\judtm}[2]{\sez #1 : \parclr{ #2 }}
\newcommand{\judtmeq}[3]{\sez #1 \jeq #2 : \parclr{ #3 }}
\newcommand{\idmod}{\mathbf{id}}
\newcommand{\ctxctu}[2]{#1 : \parclr{ #2 }}
\newcommand{\ctxpar}[2]{\parclr{ #1 }^\sharp : \parclr{ #2 }}
\newcommand{\ctxptw}[2]{\ptwclr{ #1 }^\coshp : \parclr{ #2 }}
\renewcommand{\ctxvar}[3]{\varclr{ #2 }^{#1} : \parclr{ #3 }}
\newcommand{\ctxface}[1]{\parclr{#1}}
\newcommand{\Pipar}{\forall}
\newcommand{\Piptw}{\Pi^\coshp}
\newcommand{\Pivar}[1]{\Pi^{#1}}
\newcommand{\Sigmapar}{\exists}
\newcommand{\Sigmaptw}{\Sigma^\coshp}
\newcommand{\Sigmavar}[1]{\Sigma^{#1}}
\newcommand{\prodpar}[2]{\Pipar\paren{#1 : #2}}
\newcommand{\prodctu}[2]{\Pi\paren{#1 : #2}}
\newcommand{\prodptw}[2]{\Piptw\paren{\ptwclr{#1} : #2}}
\renewcommand{\prodvar}[3]{\Pivar{#1}\paren{\varclr{#2} : #3}}
\newcommand{\sumpar}[2]{\Sigmapar\paren{#1 : #2}}
\newcommand{\sumctu}[2]{\Sigma\paren{#1 : #2}}
\newcommand{\sumptw}[2]{\Sigmaptw\paren{\ptwclr{#1} : #2}}
\renewcommand{\sumvar}[3]{\Sigmavar{#1}\paren{\varclr{#2} : #3}}
\newcommand{\lamannotpar}[2]{\lambda (\parclr{#1}^\sharp : \parclr{#2})}
\newcommand{\lamannotvar}[3]{\lambda (\varclr{#2}^{#1} : \parclr{#3})}
\newcommand{\appar}[2]{#1\,\parclr{#2}^\sharp}
\newcommand{\apvar}[3]{#2\,\varclr{#3}^{#1}}
\renewcommand{\pairvar}[3]{(\varclr{#2}^{#1}, #3)}
\newcommand{\pairpar}[2]{(\parclr{#1}^\sharp, #2)}
\newcommand{\fstptw}[1]{\name{fst}^\coshp\,\parclr{#1}}
\newcommand{\sndptw}[1]{\name{snd}^\coshp\,#1}
\newcommand{\idpr}[2]{#1 \doteq #2}
\newcommand{\degax}{\name{degax}}
\newcommand{\degaxof}[1]{\degax\,\parclr{#1}}
\newcommand{\Nat}{\name{Nat}}
\renewcommand{\Size}{\name{Size}}
	\newcommand{\szero}{0_{\name{S}}}
	\newcommand{\ssuc}{{\uparrow}}
	\newcommand{\smax}[2]{#1 \sqcup #2}
	\newcommand{\sfix}{\name{fix}}
	\newcommand{\sfill}{\name{fill}}
	\newcommand{\sfillsys}[1]{\name{fill}\sys{#1}}
	\newcommand{\sfillsyssharp}[1]{\name{fill}_\sharp\sys{#1}}
	\newcommand{\sfillsysclause}[2]{#1\,?\,#2}
\newcommand{\leqfill}{\name{fill}_\leq}
\newcommand{\leqfillsys}[1]{\leqfill\sys{#1}}
\newcommand{\leqfillsysclause}[2]{\sysclause{#1}{#2}}
\renewcommand{\suc}{\name{s}}
\newcommand{\forsub}[1]{#1 !}
\newcommand{\fix}{\name{fix}}
\newcommand{\cubecat}{\name{Cube}}
\newcommand{\bpcubecat}{\name{BPCube}}
	\newcommand{\facewkn}[1]{#1 / \oslash}
	\newcommand{\bpdisc}{\widehat{\bpcubecat}_\Disc}
		\newcommand{\DTy}{\Ty^\Disc}
\newcommand{\pointcat}{{\name{Point}}}
\newcommand{\catW}{{\cat W}}
\newcommand{\catV}{{\cat V}}
\newcommand{\RGcat}{{\name{RG}}}
	\newcommand{\RGdisc}{\widehat{\RGcat}_\Disc}
\newcommand{\PSub}[2]{#1 \Rrightarrow #2}
\newcommand{\DSub}[2]{#1 \Rightarrow #2}
\newcommand{\Dsez}{\mathrel{\rhd}}
\newcommand{\sub}[1]{\brac{#1}}
\newcommand{\dsub}[1]{{\left[ #1 \right\rangle}}
\newcommand{\psub}[1]{\angles{#1}}
\newcommand{\ssub}[1]{\left\{ #1 \right \}}
\newcommand{\dlambda}{\underline \lambda}
\newcommand{\dap}{\underline \ap}
\newcommand{\emptysub}{\bullet}
\newcommand{\textdef}[1]{\textbf{#1}}
\newcommand{\var}[1]{{\mathbf{#1}}}
\newcommand{\wknvar}[1]{\wkn{\var{#1}}}
\newcommand{\wkn}[1]{\pi^{#1}}
\newcommand{\subext}{{+}}
\newcommand{\fresh}{\mathbf{fr}}
\newcommand{\fst}{\name{fst}}
\newcommand{\snd}{\name{snd}}
\newcommand{\textand}{\,\text{and}\,}
\newcommand{\thetitle}{A Model of Parametric Dependent Type Theory in Bridge/Path Cubical Sets}
\newcommand{\theauthor}{Andreas Nuyts}
\newcommand{\theinstitution}{KU Leuven}
\newcommand{\thedepartment}{imec-DistriNet \\ Dept. of Computer Science}
\begin{document}
	\addtolength{\voffset}{-.5in}

\title{\thetitle}
\date{\today}
\author{\theauthor}

\begin{titlepage}
	\centering
	{\scshape\LARGE \theinstitution \par}
	{\scshape\Large \thedepartment\par}
	\vspace{1cm}
	{\scshape\Large Technical Report\par}
	\vspace{1.5cm}
	{\huge\bfseries \thetitle \par}
	\vspace{2cm}
	{\Large\itshape \theauthor \par}
	\vfill

	\vfill

	{\large \today\par}
\end{titlepage}

\setcounter{tocdepth}{2}
\tableofcontents

\chapter*{Introduction}
The purpose of this text is to prove all technical aspects of our model for dependent type theory with parametric quantifiers \cite{paramdtt}. It is well-known that any presheaf category constitutes a model of dependent type theory \cite{Hofmann97-presheaf-chapter}, including a hierarchy of universes if the metatheory has one \cite{psh-universes}. We construct our model by defining the base category $\bpcubecat$ of \textdef{bridge/path cubes} and adapting the general presheaf model over $\bpcubecat$ to suit our needs. Our model is heavily based on the models by Atkey, Ghani and Johann \cite{dtt-parametricity}, Huber \cite{huber}, Bezem, Coquand and Huber \cite{model-cubical}, Cohen, Coquand, Huber and M\"ortberg \cite{cubical}, Moulin \cite{moulin} and Bernardy, Coquand and Moulin \cite{moulin-param3}.

In \cref{ch:psh}, we review the main concepts of categories with families, and the standard presheaf model of dependent type theory, and we establish the notations we will use.

In \cref{ch:cwf-morphisms}, we capture morphisms of CwFs, and natural transformations and adjunctions between them, in typing rules. We especially study morphisms of CwFs between presheaf categories, that arise from functors between the base categories.

In \cref{ch:bpcubecat}, we introduce the category $\bpcubecat$ of bridge/path cubes, and its presheaf category $\widehat{\bpcubecat}$ of bridge/path cubical sets. There is a rich interaction with the category of cubical sets $\widehat{\cubecat}$ which we investigate more closely using ideas from axiomatic cohesion \cite{adjoint-logic}.

In \cref{ch:discreteness}, we define discrete types and show that they form a model of dependent type theory. We prove some infrastructural results.

In \cref{ch:paramdtt}, we give an interpretation of the typing rules of ParamDTT \cite{paramdtt} in $\widehat{\bpcubecat}$.

\section*{Acknowledgements}
Special thanks goes to Andrea Vezzosi. A cornerstone of this model was Andrea's insight that a shape modality on reflexive graphs is relevant to modelling parametricity. The other foundational ideas -- in particular the use of (cohesive-like) endofunctors of a category with families and the internalization of them as modalities -- were formed in discussion with him. He also injected some vital input during the formal elaboration process and pointed out the relevance of the $\Glue$-operator from cubical type theory \cite{cubical}.

Also thanks to Andreas Abel, Paolo Capriotti, Jesper Cockx, Dominique Devriese, Dan Licata and Sandro Stucki for many fruitful discussions.

The author holds a Ph.D. Fellowship from the Research Foundation - Flanders (FWO).

\section*{Changes in version 2}
\begin{enumerate}
	\item An erratum regarding the semantics of the reflection rule was rectified to the extent possible, see \cref{sec:idtp-semantics}.
	\item Some citations were added to the introduction above.
	\item An unimportant error was fixed in a remark in \cref{def:morphism-of-cwfs}.
	\item A notation convention that was never used, was removed from \cref{sec:cubecat}.
	\item A few typos and typesetting errors were fixed.
\end{enumerate}

\chapter{The standard presheaf model of Martin-L\"of Type Theory}\label{ch:psh}
In this chapter, we introduce the notion of a category with families (CwF, \cite{dybjer-cwf}) and show that every presheaf category constitutes a CwF that supports various interesting type formers. Most of this has been shown by \cite{Hofmann97-presheaf-chapter,psh-universes}; the construction of $\Glue$-types has been shown by \cite{cubical}. The construction of the $\Weld$-type is new. 

\section{Categories with families}
We state the definition of a CwF without referring to the category $\name{Fam}$ of families. Instead, we will make use of the category of elements:
\begin{definition}
	Let $\cat C$ be a category and $A : \cat C \to \Set$ a functor. Then the category of elements $\int_{\cat C} A$ is the category whose
	\begin{itemize}
		\item objects are pairs $(c, a)$ where $c$ is an object of $\cat C$ and $a \in A(c)$,
		\item morphisms are pairs $(\vfi | a) : (c, a) \to (c', a')$ where $\vfi : c \to c'$ and $a' = A(\vfi)(a)$.
	\end{itemize}
	If the functor $A : \cat C\op \to \Set$ is contravariant, we define $\int_{\cat C} A := \paren{\int_{\cat C\op} A}\op$. Thus, its morphisms are pairs $(\vfi|a') : (c,a) \to (c', a')$ where $\vfi : c \to c'$ and $a = A(\vfi)(a')$.
\end{definition}
\begin{definition}\label{def:cwf}
	A \textbf{category with families} (CwF) \cite{dybjer-cwf} consists of:
	\begin{enumerate}
		\item A category $\Ctx$ whose objects we call \textbf{contexts}, and whose morphisms we call \textbf{substitutions}. We also write $\Gamma \ctx$ to say that $\Gamma$ is a context.
		\item A contravariant functor $\Ty : \Ctx\op \to \Set$. The elements $T \in \Ty(\Gamma)$ are called \textbf{types} over $\Gamma$ (also denoted $\Gamma \sez T \type$). The action $\Ty(\sigma) : \Ty(\Gamma) \to \Ty(\Delta)$ of a substitution $\sigma : \Delta \to \Gamma$ is denoted $\loch[\sigma]$, i.e. if $\Gamma \sez T \type$ then $\Delta \sez T[\sigma] \type$.
		\item A contravariant functor $\Tm : \paren{\int_{\Ctx} \Ty}\op \to \Set$ from the category of elements of $\Ty$ to $\Set$. The elements $t \in \Tm(\Gamma, T)$ are called \textbf{terms} of $T$ (also denoted $\Gamma \sez t : T$). The action $\Tm(\sigma | T) : \Tm(\Gamma, T) \to \Tm(\Delta, T[\sigma])$ of $(\sigma | T) : (\Delta, T[\sigma]) \to (\Gamma, T)$ is denoted $\loch[\sigma]$, i.e. if $\Gamma \sez t : T$, then $\Delta \sez t[\sigma] : T[\sigma]$.
		\item A terminal object $()$ of $\Ctx$ called the \textbf{empty context}.
		\item A \textbf{context extension} operation: if $\Gamma \ctx$ and $\Gamma \sez T \type$, then there is a context $\Gamma.T$, a substitution $\pi : \Gamma.T \to \Gamma$ and a term $\Gamma.T \sez \xi : T[\pi]$, such that for all $\Delta$, the map
		\begin{equation*}
			\Hom(\Delta,\Gamma.T) \to \Sigma(\sigma : \Hom(\Delta, \Gamma)). \Tm(\Delta, T[\sigma]) : \tau \mapsto (\pi  \tau , \xi[\tau])
		\end{equation*}
		is invertible. We call the inverse $\loch, \loch$.
		Note that for more precision and less readability, we could write $\pi_{\Gamma, T}$, $\xi_{\Gamma, T}$ and $(\loch, \loch)_{\Gamma, T}$.
		
		If $\sigma : \Delta \to \Gamma$, then we will write $\sigma \subext = (\sigma \pi, \xi) : \Delta.T[\sigma] \to \Gamma.T$.
	
		Sometimes, for clarity, we will use variable names: we write $\Gamma, \var x : T$ instead of $\Gamma.T$, and $\wknvar x : (\Gamma, \var x : T) \to \Gamma$ and $\Gamma, \var x : T \sez \var x : T[\wknvar x]$ for $\pi$ and $\xi$. Their joint inverse will be called $(\loch, \loch/\var x)$.
	\end{enumerate}
\end{definition}

\section{Presheaf categories are CwFs}
This is proven elaborately in \cite{Hofmann97-presheaf-chapter}, though we give an unconventional treatment that views the Yoneda-embedding truly as an embedding, i.e. treating the base category $\catW$ as a fully faithful subcategory of $\widehat{\catW}$.

\subsection{Contexts} Pick a base category $\catW$. We call its objects \textdef{primitive contexts} and its morphisms \textdef{primitive substitutions}, denoted $\vfi : \PSub V W$. The presheaf category $\widehat \catW$ over $\catW$ is defined as the functor space $\Set^{\catW\op}$. We will use $\widehat \catW$ as $\Ctx$. A context $\Gamma$ is thus a \textdef{presheaf} over $\catW$, i.e. a functor $\Gamma : \catW\op \to \Set$. We denote its action on a primitive context $W$ as $\DSub W \Gamma$, and the elements of that set are called \textdef{defining substitutions} from $W$ to $\Gamma$. The action of $\Gamma$ on $\vfi : \PSub V W$ is denoted $\loch \vfi : (\DSub W \Gamma) \to (\DSub V \Gamma)$ and is called \textdef{restriction} by $\vfi$. A substitution $\sigma : \Delta \to \Gamma$ is then a natural transformation $\sigma \loch : (\DSub \loch \Delta) \to (\DSub \loch \Gamma)$.

\subsection{Types} A type $\Gamma \sez T \type$ is a \textdef{dependent presheaf} over $\Gamma$. In categorical language, this is a functor $T : \paren{\int_{\catW} \Gamma}\op \to \Set$. We denote its action on an object $(W, \gamma)$, where $\gamma : \DSub W \Gamma$, as $T \dsub \gamma$; the elements $t \in T \dsub \gamma$ will be called \textdef{defining terms} and denoted $W \Dsez t : T \dsub \gamma$. The action of $T$ on a morphism $(\vfi | \gamma) : (V, \gamma \vfi) \to (W, \gamma)$ is denoted $\loch \psub \vfi : T \dsub \gamma \to T \dsub{\gamma \vfi}$ and is again called \textdef{restriction}. We have now defined $\Ty(\Gamma)$.

Given a substitution $\sigma : \Delta \to \Gamma$, we need an action $\loch \sub \sigma : \Ty(\Gamma) \to \Ty(\Delta)$. This is defined by setting $T \sub \sigma \dsub \delta := T \dsub{\sigma \delta}$ and defining $\loch \psub \vfi^{T[\sigma]} : T \sub \sigma \dsub \delta \to T \sub \sigma \dsub{\delta \vfi}$ as $\loch \psub \vfi^T : T \dsub{\sigma \delta} \to T \dsub{\sigma \delta \vfi}$.

\subsection{Terms} A term $\Gamma \sez t : T$ consists of, for every $\gamma : \DSub W \Gamma$, a defining term $W \Dsez t \dsub \gamma : T \dsub \gamma$. Moreover, this must be natural in $W$, i.e. for every $\vfi : V \to W$, we require $t \dsub \gamma \psub \vfi = t \dsub{\gamma \vfi}$.

\subsection{The empty context} We set $(\DSub W ()) = \accol{\emptysub}$. The unique substitutions $\Gamma \to ()$ will also be denoted $\emptysub$.

\subsection{Context extension} We set $(\DSub W \Gamma.T) = \set{(\gamma, t)}{\gamma : \DSub W \Gamma \textand W \Dsez t : T \dsub \gamma}$, and $(\gamma, t)\vfi = (\gamma \vfi, t \psub \vfi)$. Of course $\pi(\gamma, t) = \gamma$ and $\xi \dsub{(\gamma, t)} = t$. In variable notation, we will write $(\gamma, t/\var x)$ for $(\gamma, t)$.

\subsection{Yoneda-embedding} There is a fully faithful embedding $\yoneda : \catW \to \widehat \catW$, called the Yoneda embedding, given by $(\DSub V {\yoneda W}) := (\PSub V W)$. Fully faithful means that $(\PSub V W) \cong (\yoneda V \to \yoneda W)$. We have moreover that $(\DSub V \Gamma) \cong (\yoneda V \to \Gamma)$ and $\Tm(\yoneda V, T) \cong T \dsub \id_V$ meaning that terms $\yoneda V \sez t : T$ correspond to defining terms $V \Dsez t \dsub \id : T \dsub \id$. We will omit notations for each of these isomorphisms, effectively treating them as equality.

\section{$\Sigma$-types}
\begin{definition}
	We say that a CwF \textdef{supports $\Sigma$-types} if it is closed under the following rules:
	\begin{equation}
		\inference{\Gamma \sez A \type \qquad \Gamma.A \sez B \type}{\Gamma \sez \Sigma A B \type}{} \qquad
		\inference{\Gamma \sez a : A \qquad \Gamma \sez b : B[\id, a]}{\Gamma \sez (a , b) : \Sigma A  B}{}
	\end{equation}
	\begin{equation}
		\inference{\Gamma \sez p : \Sigma A B}{\Gamma \sez \fst\,p : A}{} \qquad
		\inference{\Gamma \sez p : \Sigma A B}{\Gamma \sez \snd\,p : B[\id, \fst\,p]}{}
	\end{equation}
	where $(\fst, \snd)$ and $(\loch, \loch)$ are inverses and all four operations are natural in $\Gamma$:
	\begin{align*}
		(\Sigma A B)[\sigma] &= \Sigma (A[\sigma])(B[\sigma \subext]), \\
		(a , b)[\sigma] &= (a[\sigma] , b[\sigma]), \qquad
		(\fst\,p)[\sigma] = \fst(p[\sigma]), \qquad
		(\snd\,p)[\sigma] = \snd(p[\sigma]).
	\end{align*}
\end{definition}
\begin{proposition}
	Every presheaf category supports $\Sigma$-types.
\end{proposition}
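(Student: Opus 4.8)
The plan is the evident componentwise construction over the category of elements, followed by a verification that the displayed equations hold strictly; the latter reduces entirely to the universal property of context extension recorded in \cref{def:cwf}. Given $\Gamma \sez A \type$ and $\Gamma.A \sez B \type$, I would define the dependent presheaf $\Sigma A B$ by
\[
	(\Sigma A B)\dsub\gamma := \set{(a,b)}{W \Dsez a : A\dsub\gamma \textand W \Dsez b : B\dsub{(\gamma, a)}}
\]
for $\gamma : \DSub W \Gamma$, where $(\gamma,a) : \DSub W {\Gamma.A}$ is formed by the context-extension isomorphism, and for $\vfi : \PSub V W$ I would put $(a,b)\psub\vfi := (a\psub\vfi, b\psub\vfi)$. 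This lands in $(\Sigma A B)\dsub{\gamma\vfi}$: the restriction morphism $(\vfi | (\gamma,a))$ of $\int_{\catW} \Gamma.A$ has source $(V, (\gamma,a)\vfi)$, and $(\gamma,a)\vfi = (\gamma\vfi, a\psub\vfi)$ by the definition of restriction in $\Gamma.A$, so $B$ sends $b \in B\dsub{(\gamma,a)}$ to $b\psub\vfi \in B\dsub{(\gamma\vfi, a\psub\vfi)}$. Functoriality of restriction on $\Sigma A B$ is then inherited from $A$ and $B$, giving $\Sigma A B \in \Ty(\Gamma)$.

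For the introduction rule, given $\Gamma \sez a : A$ and $\Gamma \sez b : B[\id, a]$, I would set $(a,b)\dsub\gamma := (a\dsub\gamma, b\dsub\gamma)$; this is well formed because $b\dsub\gamma \in B[\id, a]\dsub\gamma = B\dsub{(\id_\Gamma, a)\gamma}$ and $(\id_\Gamma, a)\gamma = (\gamma, a\dsub\gamma)$ (apply $\pi$ and $\xi$), while naturality in $W$ is immediate from that of $a$ and $b$. For the eliminators, $(\fst\,p)\dsub\gamma$ and $(\snd\,p)\dsub\gamma$ would be the first and second components of the pair $p\dsub\gamma$, and the same identity $(\id_\Gamma, \fst\,p)\gamma = (\gamma, (\fst\,p)\dsub\gamma)$ shows $\snd\,p : B[\id, \fst\,p]$. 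That $(\loch,\loch)$ and $(\fst,\snd)$ are mutually inverse holds stagewise, since at each $\gamma$ they are the evident inverse bijections between $(\Sigma A B)\dsub\gamma$ and the set of pairs defining it.

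The remaining task is the naturality in $\Gamma$ of all four operations along a substitution $\sigma : \Delta \to \Gamma$. For the type equation, $(\Sigma A B)[\sigma]\dsub\delta$ is the set of pairs $(a,b)$ with $a \in A[\sigma]\dsub\delta$ and $b \in B\dsub{(\sigma\delta, a)}$, whereas $\Sigma(A[\sigma])(B[\sigma\subext])\dsub\delta$ is the set of pairs $(a,b)$ with $a \in A[\sigma]\dsub\delta$ and $b \in B[\sigma\subext]\dsub{(\delta,a)} = B\dsub{\sigma\subext\circ(\delta,a)}$; these agree, together with their restriction maps, because $\sigma\subext\circ(\delta,a) = (\sigma\delta, a)$, which follows by applying $\pi$ and $\xi$ and using $\pi\circ\sigma\subext = \sigma\pi$ and $\xi[\sigma\subext] = \xi$. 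The three term equations are then immediate componentwise, e.g. $(a,b)[\sigma]\dsub\delta = (a\dsub{\sigma\delta}, b\dsub{\sigma\delta}) = (a[\sigma], b[\sigma])\dsub\delta$, with $\fst$ and $\snd$ likewise commuting with $\loch[\sigma]$. I do not expect any genuine obstacle; the one point demanding care throughout is exactly this substitution bookkeeping — chiefly the identities $(\id_\Gamma, a)\gamma = (\gamma, a\dsub\gamma)$ and $\sigma\subext\circ(\delta,a) = (\sigma\delta, a)$ — which is what makes all constructions land in the right sets and all equations hold on the nose rather than merely up to isomorphism.
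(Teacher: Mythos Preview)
Your proposal is correct and follows exactly the same construction as the paper: the same componentwise definition of $(\Sigma A B)\dsub\gamma$, the same restriction, pairing, and projections, and the same verification of naturality. The only difference is that you spell out the substitution bookkeeping (the identities $(\id_\Gamma,a)\gamma = (\gamma,a\dsub\gamma)$ and $\sigma\subext\circ(\delta,a) = (\sigma\delta,a)$) that the paper dismisses with ``easily seen to be natural in $\Gamma$ and $W$''.
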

\begin{proof}
	Given $\gamma : \DSub W \Gamma$, we set $(\Sigma A B) \dsub \gamma = \set{(a, b)}{W \Dsez a : A \dsub \gamma \textand W \Dsez b : B \dsub{\gamma, a}}$, and $(a, b) \psub \vfi = (a \psub \vfi, b \psub \vfi)$, which is natural in $\Gamma$.
	
	We define the pair term $(a, b)$ by $(a, b) \dsub \gamma = (a \dsub \gamma, b \dsub \gamma)$; $\fst\,p$ by $(\fst\,p) \dsub \gamma = p \dsub \gamma_1$ and $(\snd\,p) \dsub \gamma = p \dsub \gamma_2$. All of this is easily seen to be natural in $\Gamma$ and $W$.
\end{proof}

\section{$\Pi$-types}\label{sec:psh-pi-types}
\begin{definition}\label{def:pi-types}
	We say that a CwF \textdef{supports $\Pi$-types} if it is closed under the following rules:
	\begin{equation}
		\inference{\Gamma \sez A \type \qquad \Gamma.A \sez B \type}{\Gamma \sez \Pi A B \type}{} \qquad
		\inference{\Gamma.A \sez b : B}{\Gamma \sez \lambda b : \Pi A B}{} \qquad
		\inference{\Gamma \sez f : \Pi A B}{\Gamma.A \sez \ap f : B}{}
	\end{equation}
	such that $\ap$ and $\lambda$ are inverses and such that all three operations commute with substitution:
	\begin{align*}
		(\Pi A B)[\sigma] = \Pi (A[\sigma])(B[\sigma \subext]), \qquad
		(\lambda b)[\sigma] = \lambda(b[\sigma \subext]), \qquad
		(\ap f)[\sigma \subext] = \ap (f[\sigma])
	\end{align*}
	We will write $f\,a$ for $(\ap\,f)\sub{\id, a}$.
\end{definition}
\begin{proposition}
	Every presheaf category supports $\Pi$-types.
\end{proposition}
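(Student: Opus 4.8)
The plan is to mimic the construction of $\Sigma$-types above, with one essential difference: the fibre of $\Pi A B$ over a stage $W$ cannot be read off from $A$ and $B$ at $W$ alone, it must collect compatible data over all restrictions of the stage. The tidiest way to phrase this is through the Yoneda correspondences $(\DSub W \Gamma) \cong (\yoneda W \to \Gamma)$ and $\Tm(\yoneda W, T) \cong T\dsub{\id}$ from the excerpt. For $\gamma : \DSub W \Gamma$, viewed as a substitution $\gamma : \yoneda W \to \Gamma$, set
\[
  (\Pi A B)\dsub{\gamma} \;:=\; \Tm\paren{\yoneda W.A[\gamma],\; B[\gamma\subext]},
\]
where $\gamma\subext = (\gamma\pi,\xi) : \yoneda W.A[\gamma] \to \Gamma.A$. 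Restriction along $\vfi : V \to W$ is precomposition with the evident substitution $\yoneda V.A[\gamma\vfi] \to \yoneda W.A[\gamma]$ induced by $\yoneda\vfi$. Unwound, an element $f \in (\Pi A B)\dsub{\gamma}$ assigns to each $\vfi : V \to W$ and each $a \in A\dsub{\gamma\vfi}$ an element $f(\vfi,a) \in B\dsub{(\gamma\vfi,a)}$, naturally in $V$, with $(f\psub{\vfi})(\psi,a) = f(\vfi\psi,a)$. Functoriality in $\Gamma$ is then immediate, and the substitution law $(\Pi A B)[\sigma] = \Pi(A[\sigma])(B[\sigma\subext])$ reduces to the functoriality of context extension, $(\sigma\,\delta)\subext = \sigma\subext \circ \delta\subext$, a routine consequence of the universal property in \cref{def:cwf}, together with $A[\sigma][\delta] = A[\sigma\delta]$ and $B[\sigma\subext][\delta\subext] = B[(\sigma\delta)\subext]$.

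Next I define the term formers. For $b \in \Tm(\Gamma.A, B)$, put $(\lambda b)\dsub{\gamma} := b[\gamma\subext]$, i.e. $(\lambda b)\dsub{\gamma}(\vfi,a) = b\dsub{(\gamma\vfi,a)}$; naturality in $W$ is immediate. For $f \in \Tm(\Gamma, \Pi A B)$ and $(\gamma,a) : \DSub W {\Gamma.A}$, put $(\ap f)\dsub{(\gamma,a)} := f\dsub{\gamma}(\id_W, a) \in B\dsub{(\gamma,a)}$. The only point here that needs a short argument is naturality of $\ap f$ in $W$: restricting along $\vfi : V \to W$, naturality of the term $f\dsub{\gamma}$ gives $f\dsub{\gamma}(\id_W,a)\psub{\vfi} = f\dsub{\gamma}(\vfi, a\psub{\vfi})$, while naturality of $f$ as a term of $\Pi A B$ gives $f\dsub{\gamma\vfi} = f\dsub{\gamma}\psub{\vfi}$ and hence $f\dsub{\gamma\vfi}(\id_V, a\psub{\vfi}) = f\dsub{\gamma}(\vfi, a\psub{\vfi})$; so $(\ap f)\dsub{(\gamma,a)}\psub{\vfi} = (\ap f)\dsub{(\gamma\vfi, a\psub{\vfi})}$.

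Finally, that $\lambda$ and $\ap$ are mutually inverse is a one-line computation in each direction: $\ap(\lambda b)\dsub{(\gamma,a)} = (\lambda b)\dsub{\gamma}(\id_W,a) = b\dsub{(\gamma,a)}$, and $\lambda(\ap f)\dsub{\gamma}(\vfi,a) = (\ap f)\dsub{(\gamma\vfi,a)} = f\dsub{\gamma\vfi}(\id_V,a) = f\dsub{\gamma}(\vfi,a)$, again using $f\dsub{\gamma\vfi} = f\dsub{\gamma}\psub{\vfi}$. The three substitution equations for $\Pi$, $\lambda$ and $\ap$ are equally routine, each reducing once more to $(\sigma\,\delta)\subext = \sigma\subext \circ \delta\subext$ together with naturality. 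I expect no genuine obstacle: presheaf categories are cartesian closed and locally cartesian closed, so $\Pi$-types must exist; the only thing to keep straight is the distinction between the two flavours of naturality in play — that of a term in its stage, and that of a substitution acting on a context extension — which is precisely what makes the naturality of $\ap f$ the most delicate bookkeeping step.
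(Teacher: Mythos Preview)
Your proposal is correct and follows essentially the same approach as the paper: both define $(\Pi A B)\dsub{\gamma}$ as $\Tm(\yoneda W.A[\gamma], B[\gamma\subext])$, define $\lambda$ by substitution along $\gamma\subext$, and define $\ap$ by evaluating at $(\id_W, a)$. The only cosmetic difference is that the paper wraps elements in a label $\dlambda$ and writes your $f(\vfi,a)$ as $f\psub{\vfi}\cdot a$ with $\cdot$ defined via the inverse $\dap$; your unwound presentation and the paper's labelled one carry out the same verifications.
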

\begin{proof}
	Given $\gamma : \DSub W \Gamma$, we set $(\Pi A B) \dsub \gamma = \set{\dlambda b}{\yoneda W.A[\gamma] \sez b : B[\gamma \subext]}$, where the label $\dlambda$ is included for clarity but can be implemented as the identity function; and $(\dlambda b) \psub \vfi = \dlambda (b \sub{\vfi \subext})$.
	To see that this is natural in $\Gamma$, take $\sigma : \Delta \to \Gamma$ and $\delta : \DSub W \Delta$ and unfold the definitions of $(\Pi A B)\sub \sigma \dsub \delta$ and $(\Pi (A \sub \sigma) (B \sub{\sigma \subext})) \dsub \delta$.
	
	We define $\lambda b$ by $(\lambda b) \dsub \gamma = \dlambda(b[\gamma \subext])$. To see that $\lambda b$ is a term:
	\begin{equation}
		(\lambda b) \dsub \gamma \psub \vfi
		= (\dlambda (b \sub{\gamma\subext})) \psub \vfi
		= \dlambda (b \sub{\gamma\subext} \sub{\vfi\subext})
		= \dlambda (b \sub{(\gamma \vfi)\subext}) = (\dlambda b) \dsub{\gamma \vfi}.
	\end{equation}
	One easily checks that $\lambda$ is natural in $\Gamma$.
	
	Let $\dap$ be the inverse of $\dlambda$. Then $\dap$ satisfies $(\dap\,f)[\vfi \subext] = \dap\,(f \psub \vfi)$. Write $f \cdot a$ for $(\dap\,f) \dsub{\cwfpair{\id}{a}}$. We have
	\begin{equation}
		(\dap (f)) \dsub{\cwfpair{\vfi}{a}}
		= (\dap (f)) \sub{\vfi \subext} \dsub{\cwfpair{\id}{a}}
		= (\dap (f \psub{\vfi})) \dsub{\cwfpair{\id}{a}}
		= f \psub \vfi \cdot a,
	\end{equation}
	so that a defining term $W \Dsez f : (\Pi A B) \dsub \gamma$ is fully determined if we know $f \psub \vfi \cdot a$ for all $\vfi : \PSub V W$ and $V \Dsez a : A \dsub{\gamma \vfi}$.
	Similarly, a term $\Gamma \sez f : \Pi A B$ is fully determined if we know $f \dsub \gamma \cdot a$ for all $\gamma : \DSub V \Gamma$ and $V \Dsez a : A \dsub{\gamma}$.
	
	We define $\ap\,f$ by $(\ap\,f) \dsub{\gamma, a} = f \dsub \gamma \cdot a$. To see that this is a term:
	\begin{align}
		(f \dsub \gamma \cdot a) \psub \vfi
		&= (\dap\,(f \dsub \gamma)) \dsub{\id, a} \psub \vfi
		= (\dap\,(f \dsub \gamma)) \sub{\vfi \subext} \dsub{\id, a \psub \vfi}
		= (\dap\,(f \dsub{\gamma \vfi})) \dsub{\id, a \psub \vfi}
		\nn \\
		&= (f \dsub{\gamma \vfi}) \cdot (a \psub \vfi)
		= (\ap\,f) \dsub{(\gamma, a) \vfi}.
	\end{align}
	One easily checks that $\ap$ is natural in $\Gamma$.
	
	To see that $\ap\,\lambda b = b$, we can unfold
	\begin{equation}
		(\ap\,\lambda b) \dsub{\gamma, a}
		= (\lambda b) \dsub \gamma \cdot a
		= \dlambda (b \sub{\gamma \subext}) \cdot a
		= b \sub{\gamma \subext} \dsub{\id, a}
		= b \dsub{\gamma, a}.
	\end{equation}
	To see that $\lambda\,\ap\,f = f$:
	\begin{equation}
		(\lambda\,\ap\,f) \dsub \gamma \cdot a
		= \dlambda((\ap\,f) \sub{\gamma \subext}) \cdot a
		= (\ap\,f) \sub{\gamma \subext} \dsub{\id, a}
		= (\ap\,f) \dsub{\gamma, a}
		= f \dsub \gamma \cdot a. \qedhere
	\end{equation}
\end{proof}

\section{Identity type}\label{sec:cwf-idtp}
\begin{definition}
	A CwF \textdef{supports the identity type} if it is closed under the following rules:
	\begin{equation}
		\inference{
			\Gamma \sez A \type \\
			\Gamma \sez a, b : A
		}{\Gamma \sez a \idtp A b \type}{}
		\qquad
		\inference{
			\Gamma \sez a : A
		}{\Gamma \sez \refl\,a : a \idtp A a}{}
		\qquad
		\inference{
			\Gamma \sez a, b : A \\
			\Gamma, \var y : A, \var w : (a[\wknvar y] \idtp{A[\wknvar y]} \var y) \sez C \type \\
			\Gamma \sez e : a \idtp A b \\
			\Gamma \sez c : C[\id, a / \var y, \refl\,a / \var w]
		}{\Gamma \sez \J(a, b, \var y.\var w.C, e, c) : C[\id, b / \var y, e / \var w]}{}
	\end{equation}
	such that all three operations commute with substitution:
	\begin{align*}
		(a \idtp A b)[\sigma] &= \paren{a[\sigma] \idtp{A[\sigma]} b[\sigma]} \\
		(\refl\,a)[\sigma] &= \refl\,(a[\sigma]) \\
		\J(a, b, \var y.\var w.C, p, c)[\sigma] &= \J(a[\sigma], b[\sigma], \var y.\var w.C[\sigma \wknvar y \wknvar w, \var y/\var y, \var w/\var w], p[\sigma], c[\sigma])
	\end{align*}
	and such that $\J(a, a, \var y.\var w.C, \refl\,a, c) = c$.
\end{definition}
\begin{proposition}
	Every presheaf category supports the identity type.\footnote{The identity type in the standard presheaf model, expresses equality of mathematical objects. It supports the reflection rule and axiom K, and not the univalence axiom.}
\end{proposition}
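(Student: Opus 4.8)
The plan is to interpret the identity type as strict equality of defining terms --- the classical presheaf construction of Hofmann --- the only real content being the observation that a term witnessing such an equality forces a judgmental equality, so that $\J$ can simply return its last argument unchanged.

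First I would define the type and $\refl$. Given $\Gamma \sez A \type$ and $\Gamma \sez a, b : A$, set for each $\gamma : \DSub W \Gamma$
\[
	(a \idtp A b)\dsub\gamma := \set{\duh}{a\dsub\gamma = b\dsub\gamma},
\]
a subsingleton that is inhabited precisely when $a\dsub\gamma$ and $b\dsub\gamma$ are equal in $A\dsub\gamma$, with the only possible restriction $\duh\psub\vfi := \duh$. This is a well-defined dependent presheaf: if $a\dsub\gamma = b\dsub\gamma$ then applying $\loch\psub\vfi$ together with the naturality of the terms $a$ and $b$ gives $a\dsub{\gamma\vfi} = b\dsub{\gamma\vfi}$, so the restriction lands in the correct fibre, and functoriality is automatic because every fibre has at most one element. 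Stability under a substitution $\sigma : \Delta \to \Gamma$, i.e. $(a\idtp A b)\sub\sigma = (a\sub\sigma \idtp{A\sub\sigma} b\sub\sigma)$, follows by unfolding both sides at a defining substitution $\delta : \DSub W \Delta$. Then I set $(\refl\,a)\dsub\gamma := \duh \in (a\idtp A a)\dsub\gamma$; naturality in $W$ and commutation with substitution are immediate.

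Next comes the eliminator. The key point is that $(a\idtp A b)$ is subsingleton-valued at \emph{every} stage, so a term $\Gamma \sez e : a \idtp A b$ forces $a\dsub\gamma = b\dsub\gamma$ for all $\gamma : \DSub W \Gamma$ and all $W$; since a term of $A$ over $\Gamma$ is determined by its values at all $\gamma$, this means $a = b$ as terms, and then $e\dsub\gamma = \duh = (\refl\,a)\dsub\gamma$ for all $\gamma$, so $e = \refl\,a$. Hence the type of $c$ and the conclusion type of the $\J$-rule coincide on the nose: $C[\id, b/\var y, e/\var w] = C[\id, a/\var y, \refl\,a/\var w]$. I would therefore put
\[
	\J(a, b, \var y.\var w.C, e, c) := c,
\]
which is well-typed by the preceding equality of types. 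The computation rule $\J(a, a, \var y.\var w.C, \refl\,a, c) = c$ holds by definition, and the substitution law reduces to $c[\sigma] = c[\sigma]$ once one notes that $e[\sigma]$ is again a term of an identity type, so the same collapse applies after substituting.

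There is no genuinely hard step; the one subtlety worth flagging is that the judgmental equality $a = b$ must be harvested \emph{globally}, from the inhabitedness of $(a\idtp A b)\dsub\gamma$ at all stages, rather than stagewise --- it is precisely this equality of terms that makes the definition $\J := c$ typecheck. (An alternative that avoids the point is to set $\J(a, b, \var y.\var w.C, e, c)\dsub\gamma := c\dsub\gamma$ directly, using at each stage only $a\dsub\gamma = b\dsub\gamma$ and $e\dsub\gamma = \duh$, and then check naturality by hand; both routes work.) As the footnote records, the resulting identity type is extensional: it validates equality reflection and axiom K, but not univalence.
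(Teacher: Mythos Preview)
Your proof is correct and follows the same construction as the paper: the identity type is the subsingleton presheaf recording stagewise equality, $\refl$ picks the unique element, and $\J$ returns $c$ (the paper defines this stagewise as $\J(\ldots)\dsub\gamma = c\dsub\gamma$, which is exactly the alternative you mention). Your extra care in spelling out that $e$ forces $a = b$ globally as terms is a valid and slightly more detailed justification of the same well-typedness claim the paper makes in one clause.
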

\begin{proof}
	We set $(a \idtp A b)\dsub \gamma$ equal to $\accol \star$ if $a \dsub \gamma = b \dsub \gamma$, and make it empty otherwise. We define $(\refl\,a)\dsub \gamma = \star$ and $\J(a, b, \var y.\var w.C, e, c) \dsub \gamma = c \dsub \gamma$, which is well-typed since $e$ witnesses that $a = b$ and all terms of the identity type are equal.
\end{proof}

\section{Universes}
Assume a functor $\Ty^* : \Ctx\op \to \Set$, and write $\Gamma \sez T \type^*$ for $T \in \Ty^*(\Gamma)$.
\begin{definition}
	We say that a CwF \textdef{supports a universe} for $\Ty^*$ if it is closed under the following rules:
	\begin{equation}
		\inference{\Gamma \ctx}{\Gamma \sez \uni{}^* \type}{} \qquad
		\inference{\Gamma \sez T : \uni{}^*}{\Gamma \sez \El\,T \type^*}{} \qquad
		\inference{\Gamma \sez T \type^*}{\Gamma \sez \tycode T : \uni{}^*}{}
	\end{equation}
	where $\El$ and $\tycode \loch$ are inverses and all three operators commute with substitution:
	\begin{equation}
		\uni{}^*[\sigma] = \uni{}^*, \qquad
		(\El\,T)[\sigma] = \El(T[\sigma]), \qquad
		\tycode T[\sigma] = \tycode{T[\sigma]}.
	\end{equation}
	Note that here, we are switching between term substitution and *-type substitution.
\end{definition}
We assume that the metatheory has Grothendieck universes, i.e. there is a chain
\begin{equation}
	\Set_0 \in \Set_1 \in \Set_2 \in \ldots
\end{equation}
such that every $\Set_k$ is a model of ZF set theory. We say that a type $T \in \Ty(\Gamma)$ has level $k$ if $T \dsub \gamma \in \Set_k$ for every $\gamma : \DSub W \Gamma$. Let $\Ty_k(\Gamma)$ be the set of all level $k$ types over $\Gamma$; this constitutes a functor $\Ty_k : \widehat \catW \op \to \Set_{k+1}$. Write $\Gamma \sez T \type_k$ for $T \in \Ty_k(\Gamma)$.
\begin{proposition}\label{thm:unipsh}
	Every presheaf category (over a base category of level 0) supports a universe $\uni k$ for $\Ty_k$ that is itself of level $k+1$.
\end{proposition}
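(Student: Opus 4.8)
The plan is to carry out the standard Hofmann--Streicher-style construction of the presheaf universe \cite{psh-universes}. I would \emph{define} the type $\uni k$ over a context $\Gamma$ by setting $(\uni k)\dsub\gamma := \Ty_k(\yoneda W)$ for every $\gamma : \DSub W \Gamma$, with restriction $\loch\psub\vfi : (\uni k)\dsub\gamma \to (\uni k)\dsub{\gamma\vfi}$ along $\vfi : \PSub V W$ given by $T \mapsto T\sub{\yoneda\vfi}$. Functoriality of this assignment in the category of elements of $\Gamma$ is immediate from functoriality of type substitution and of the Yoneda embedding, so $\uni k$ is a genuine type, and since neither the fibres nor the restriction maps mention $\Gamma$, the law $\uni k[\sigma] = \uni k$ holds on the nose. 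Unfolding the definition of a term, $\Gamma \sez T : \uni k$ is precisely a family assigning to each $\gamma : \DSub W \Gamma$ a type $T\dsub\gamma \in \Ty_k(\yoneda W)$ subject to $T\dsub\gamma\sub{\yoneda\vfi} = T\dsub{\gamma\vfi}$; in particular, via the isomorphism $\Tm(\yoneda W, \uni k) \cong (\uni k)\dsub{\id_W}$, terms $\yoneda W \sez T : \uni k$ correspond exactly to level-$k$ types over $\yoneda W$, as one expects of a classifier.

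Next I would supply $\El$ and $\tycode\loch$ and verify that they are mutually inverse. Given $\Gamma \sez T : \uni k$, I set $(\El\,T)\dsub\gamma := (T\dsub\gamma)\dsub{\id_W}$ for $\gamma : \DSub W \Gamma$, with restriction inherited from that of the dependent presheaf $T\dsub\gamma$ along the morphism of $\int_\catW \yoneda W$ with underlying arrow $\vfi : \PSub V W$. Combining the definition of type substitution with the constraint $T\dsub\gamma\sub{\yoneda\vfi} = T\dsub{\gamma\vfi}$ identifies $(T\dsub\gamma)\dsub\vfi$ with $(T\dsub{\gamma\vfi})\dsub{\id_V} = (\El\,T)\dsub{\gamma\vfi}$, so that this restriction is well-typed, and functoriality descends from that of $T\dsub\gamma$; moreover $\El\,T$ indeed lands in $\Ty_k$ because its fibre $(T\dsub\gamma)\dsub{\id_W}$ is in $\Set_k$. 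Conversely, for $\Gamma \sez A \type_k$ I set $\tycode A\dsub\gamma := A\sub\gamma$, which is a term of $\uni k$ by associativity of substitution. A direct computation then yields $\El(\tycode A) = A$ (both sides have fibre $A\dsub\gamma$ over $\gamma$, and likewise agree on restrictions) and $\tycode{\El\,T} = T$ (both sides have as fibre over $\gamma$ the type $\vfi \mapsto (T\dsub\gamma)\dsub\vfi$ over $\yoneda W$), while the substitution laws $(\El\,T)[\sigma] = \El(T[\sigma])$ and $\tycode A[\sigma] = \tycode{A[\sigma]}$ unfold to the identities $(T\dsub{\sigma\delta})\dsub{\id} = ((T\sub\sigma)\dsub\delta)\dsub{\id}$ and $A\sub{\sigma\delta} = (A\sub\sigma)\sub\delta$.

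The one point that needs genuine care is the level bookkeeping: I must check $\uni k \in \Ty_{k+1}(\Gamma)$, i.e. $\Ty_k(\yoneda W) \in \Set_{k+1}$ for each primitive context $W$, and this is exactly where the hypothesis that $\catW$ has level $0$ is used. Since $\catW$ is of level $0$, the category of elements $\int_\catW \yoneda W$ has a level-$0$ set of objects and level-$0$ hom-sets, so a level-$k$ dependent presheaf over it is specified by a function from a level-$0$ set into $\Set_k$ (the fibres) together with a level-$0$-indexed family of functions between level-$k$ sets (the restrictions), subject to equations; the totality of such data forms an element of $\Set_{k+1}$ because $\Set_{k+1}$ is a model of $\mathrm{ZF}$ that contains $\Set_k$ and is therefore closed under the function-space and separation constructions involved. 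I expect this size argument, together with the attendant care about which Grothendieck universe each component inhabits, to be the main obstacle; all the remaining verifications are routine manipulations of the presheaf definitions from the earlier sections.
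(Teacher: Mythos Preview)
Your proposal is correct and follows essentially the same approach as the paper: the Hofmann--Streicher construction with $\uni k\dsub\gamma = \Ty_k(\yoneda W)$, encoding by $\tycode A\dsub\gamma = A[\gamma]$, and decoding by $(\El\,T)\dsub\gamma = (T\dsub\gamma)\dsub{\id_W}$. The only differences are cosmetic---the paper introduces an explicit label $\dtycode\loch$ on elements of $\uni k\dsub\gamma$ for readability and is terser about the size argument, which you spell out in more detail.
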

\begin{proof}
	Given $\gamma : \DSub W \Gamma$, we set $\uni k \dsub \gamma = \set{\dtycode T}{\yoneda W \sez T \type_k} \cong \Ty_k(\yoneda W) \in \Set_{k+1}$. Again, $\dtycode \loch$ is just a label which we add for readability. We set $\dtycode T \psub \vfi = \dtycode{T \sub \vfi}$. Naturality in $\Gamma$ is immediate, as the definition of $\uni k$ does not refer to either $\Gamma$ or $\gamma$.
	
	Given $\Gamma \sez T \type_k$, we define $\Gamma \sez \tycode T : \uni k$ by $\tycode T \dsub \gamma = \dtycode{T \sub \gamma}$, which satisfies
	\begin{equation}
		\tycode T \dsub \gamma \psub \vfi = \dtycode{T \sub \gamma} \psub \vfi = \dtycode{T \sub \gamma \sub \vfi} = \tycode T \dsub{\gamma \vfi}.
	\end{equation}
	
	Write $\dEl$ for the inverse of $\dtycode \loch$. It satisfies $\dEl\,A \sub \vfi = \dEl(A \psub \vfi)$. Given $\Gamma \sez A : \uni k$, we define $\Gamma \sez \El\,A \type_k$ by $\El\,A \dsub \gamma = \dEl(A \dsub \gamma) \dsub \id$. Given $\vfi : V \to W$ and $W \Dsez a : \El\,A \dsub \gamma$, we set $a \psub \vfi^{\El\,A} : \El\,A \dsub{\gamma \vfi}$ equal to $a \psub \vfi^{\dEl(A \dsub \gamma)} : \dEl(A \dsub \gamma) \dsub \vfi$. This is well-typed, because
	\begin{equation}
		\dEl(A \dsub \gamma) \dsub \vfi
		= \dEl(A \dsub \gamma) \sub \vfi \dsub \id
		= \dEl(A \dsub{\gamma \vfi}) \dsub \id
		= \El\,A\dsub{\gamma\vfi}.
	\end{equation}
	To see that $\El\,\tycode T = T$:
	\begin{equation}
		\El\,\tycode T \dsub \gamma
		= \dEl (\tycode T \dsub \gamma) \dsub \id
		= \dEl (\dtycode{T \sub \gamma}) \dsub \id = T \dsub \gamma 
	\end{equation}
	One can check that the substitution operations of $\El\,\tycode T$ and $T$ also match.
	
	Conversely, we show that $\tycode{\El\,A} = A$. To that end, we unpack both completely by applying $\dEl(\loch \dsub \gamma) \dsub \vfi$:
	\begin{align}
		\dEl(\tycode{\El\,A} \dsub \gamma) \dsub \vfi
		&= \dEl(\dtycode{\El\,A \sub \gamma}) \dsub \vfi
		= \El\,A \dsub{\gamma \vfi}
		= \dEl(A \dsub{\gamma \vfi}) \dsub \id, \\
		\dEl(A \dsub \gamma) \dsub \vfi
		&= \dEl(A \dsub \gamma) \sub \vfi \dsub \id
		= \dEl(A \dsub{\gamma \vfi}) \dsub \id. \qedhere
	\end{align}
\end{proof}
We say that a type $T \in \Ty(\Gamma)$ is a \textdef{proposition} if for every $\gamma : \DSub W \Gamma$, we have $T \dsub \gamma \subseteq \accol \star$. We denote this as $T \in \Prop(\Gamma)$ or $\Gamma \sez T \prop$.
\begin{proposition}
	Every presheaf category (over a base category of level 0) supports a universe $\Prop$ of propositions that is itself of level 0.
\end{proposition}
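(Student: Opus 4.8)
The plan is to recycle the construction and all the verifications from the proof of \cref{thm:unipsh}, simply replacing the functor $\Ty_k$ by $\Prop$ throughout. Concretely, given $\gamma : \DSub W \Gamma$ I would set $\Prop \dsub \gamma = \set{\dtycode T}{\yoneda W \sez T \prop} \cong \Prop(\yoneda W)$, with $\dtycode T \psub \vfi = \dtycode{T \sub \vfi}$; since this definition mentions neither $\Gamma$ nor $\gamma$, naturality in $\Gamma$ is immediate, exactly as before. The encoding and decoding are given by the same formulas, $\tycode T \dsub \gamma = \dtycode{T \sub \gamma}$ and $\El\,A \dsub \gamma = \dEl(A \dsub \gamma) \dsub \id$, and the arguments that these commute with substitution and that $\El$ and $\tycode \loch$ are mutually inverse carry over word for word, since they never used anything about $\Ty_k$ beyond its being a $\Set$-valued functor on $\Ctx\op$.

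Two points are specific to $\Prop$ rather than $\Ty_k$ and should be checked, but are routine. First, one must see that $\El$ really lands in $\Prop$-types: if $\Gamma \sez A : \Prop$ then $\dEl(A \dsub \gamma)$ is by construction a proposition over $\yoneda W$, so $\El\,A \dsub \gamma = \dEl(A \dsub \gamma) \dsub \id \subseteq \accol \star$; dually, $\tycode \loch$ is only ever applied to propositions, so it is well-defined as a map $\Prop(\Gamma) \to \Tm(\Gamma, \Prop)$.

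The one genuinely new point -- and the only real obstacle -- is the level claim, namely that $\Prop$ itself has level $0$, i.e. $\Prop \dsub \gamma \in \Set_0$ for every $\gamma$. For this I would argue that a proposition over $\yoneda W$ carries no data beyond a choice, for each object $(V, \vfi)$ of $\int_{\catW} \yoneda W$, of whether $T \dsub \vfi$ is $\emptyset$ or $\accol \star$: the restriction maps are then forced, since between any two subsets of $\accol \star$ there is at most one function. Hence $\Prop(\yoneda W)$ embeds into the function space $\accol{\emptyset, \accol\star}^{\mathrm{Ob}(\int_{\catW} \yoneda W)}$ (and the functoriality constraint only cuts this set down further). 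Since the base category $\catW$ has level $0$, its set of objects and all its hom-sets lie in $\Set_0$, so $\mathrm{Ob}(\int_{\catW} \yoneda W)$ lies in $\Set_0$; as $\Set_0$ is a model of ZF it is closed under this function space, which therefore lies in $\Set_0$, and hence so does its subset $\Prop(\yoneda W) \cong \Prop \dsub \gamma$.
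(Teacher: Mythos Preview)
Your proposal is correct and follows the paper's approach; the paper's own proof is simply ``Completely analogous,'' so you have in fact supplied more detail than the paper does. In particular, your argument for why $\Prop$ has level $0$ (rather than level $1$, as a naive analogy with $\uni k$ would suggest) is the one point that is \emph{not} literally analogous to the proof of \cref{thm:unipsh}, and your reasoning---that a proposition over $\yoneda W$ is determined by a subset of a $\Set_0$-sized set---is exactly the extra observation needed.
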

\begin{proof}
	Completely analogous.
\end{proof}
One easily shows that $\Prop$ is closed under $\top$, $\bot$, $\wedge$ and $\vee$. It also clearly contains the identity types. There is an absurd eliminator for $\bot$ and we can construct systems to eliminate proofs of $\vee$.

\section{Glueing}
\begin{definition}
	A CwF \textdef{supports glueing} if it is closed under the following rules:
	\begin{equation*}
		\inference{
			\Gamma \sez P \prop \\
			\Gamma.P \sez T \type \\
			\Gamma.P \sez f : T \to A[\pi] \\
			\Gamma \sez A \type
		}{\Gamma \sez \Gluesys{A}{\Gluesysclauseb{P}{T}{f}} \type}{}, \qquad
		\inference{
			\Gamma \sez \Gluesys{A}{\Gluesysclauseb{P}{T}{f}} \type \\
			\Gamma \sez a : A \\
			\Gamma.P \sez t : T \\
			\Gamma.P \sez ft = a[\pi] : T
		}{\Gamma \sez \gluesys{a}{\sysclauseb{P}{t}} : \Gluesys{A}{\Gluesysclauseb{P}{T}{f}}}{},
	\end{equation*}
	\begin{equation*}
		\inference{
			\Gamma \sez b : \Gluesys{A}{\Gluesysclauseb{P}{T}{f}}
		}{\Gamma \sez \ungluesys{\sysclauseb P f} b : A}{},
	\end{equation*}
	naturally in $\Gamma$, such that
	\begin{align*}
		\Gluesys{A}{\Gluesysclauseb{\top}{T}{f}} &= T[\cwfpair \id \star], \\
		\gluesys{a}{\sysclauseb{\top}{t}} &= t[\cwfpair \id \star], \\
		\ungluesys{\sysclauseb \top f} b &= f[\cwfpair \id \star]\,b, \\
		\ungluesys{\sysclauseb P f} (\gluesys{a}{\sysclauseb{P}{t}}) &= a, \\
		\gluesys{\ungluesys{\sysclauseb P f} b}{\sysclauseb{P}{b[\pi]}} &= b.
	\end{align*}
\end{definition}
\begin{proposition}
	Every presheaf category supports glueing.
\end{proposition}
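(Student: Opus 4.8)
The plan is to imitate the $\Sigma$-type construction, but with the ``partial'' component of a glued element recorded not as a bare element of $T$ (which would fail to restrict) but as a natural family indexed by all restrictions along which $P$ becomes true. Fix $\gamma : \DSub W \Gamma$. Since $P$ is a proposition, the assignment $V \mapsto \set{\vfi : \PSub V W}{P\dsub{\gamma\vfi} = \accol\star}$ is a subpresheaf $P|_\gamma \hookrightarrow \yoneda W$, and it carries a canonical map $P|_\gamma \to \Gamma.P$ lying over $\gamma$ (sending $\vfi$ to $\cwfpair{\gamma\vfi}{\star}$), along which $T$, $A[\pi]$ and $f$ restrict. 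I would then set $\Gluesys{A}{\Gluesysclauseb{P}{T}{f}}\dsub\gamma$ to be the set of pairs $(a,t)$ where $W \Dsez a : A\dsub\gamma$ and $t$ assigns to every $\vfi : \PSub V W$ with $P\dsub{\gamma\vfi} = \accol\star$ a defining term $V \Dsez t(\vfi) : T\dsub{\cwfpair{\gamma\vfi}{\star}}$, naturally in $V$ (so $t(\vfi)\psub\psi = t(\vfi\psi)$) and subject to $f\dsub{\cwfpair{\gamma\vfi}{\star}}(t(\vfi)) = a\psub\vfi$; equivalently, $t$ is a term of $T$ over the subpresheaf $P|_\gamma$. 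Restriction is $(a,t)\psub\psi := (a\psub\psi,\ \vfi \mapsto t(\psi\vfi))$, which is well defined precisely because $P\dsub{\gamma\psi\vfi}$ holding puts $\psi\vfi$ in the indexing set for $\gamma$; functoriality in $W$ is then immediate.

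Next I would verify the three substitution laws. For $\sigma : \Delta \to \Gamma$ and $\delta : \DSub W \Delta$ one has $P[\sigma]\dsub\delta = P\dsub{\sigma\delta}$, hence $P[\sigma]|_\delta = P|_{\sigma\delta}$, and $T[\sigma\subext]\dsub{\cwfpair{\delta\vfi}{\star}} = T\dsub{\cwfpair{\sigma\delta\vfi}{\star}}$, and similarly for $A$ and $f$; so the defining sets of $\Gluesys{A}{\Gluesysclauseb{P}{T}{f}}[\sigma]$ and of $\Gluesys{A[\sigma]}{\Gluesysclauseb{P[\sigma]}{T[\sigma\subext]}{f[\sigma\subext]}}$ coincide on the nose, together with their restriction maps. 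For the $\top$-reduction: when $P = \top$ we get $P|_\gamma = \yoneda W$, so by naturality $t$ is determined by $t(\id) \in T\dsub{\cwfpair\gamma\star} = T[\cwfpair\id\star]\dsub\gamma$ and then $a = f(t(\id))$ is forced, giving $\Gluesys{A}{\Gluesysclauseb{\top}{T}{f}} = T[\cwfpair\id\star]$ (as is customary we treat the resulting bijection of singleton-fibred sets as a strict equality).

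For the term formers I would put $\gluesys{a}{\sysclauseb{P}{t}}\dsub\gamma := \bigl(a\dsub\gamma,\ \vfi \mapsto t\dsub{\cwfpair{\gamma\vfi}{\star}}\bigr)$, which is a legal glued element because $\Gamma.P \sez ft = a[\pi]$ forces $f(t\dsub{\cwfpair{\gamma\vfi}{\star}}) = a\dsub{\gamma\vfi} = a\psub\vfi$, and $\ungluesys{\sysclauseb P f} b\dsub\gamma := a$ where $b\dsub\gamma = (a,t)$; naturality in $W$ and stability under $\sigma$ are immediate. The law $\ungluesys{\sysclauseb P f}(\gluesys{a}{\sysclauseb P t}) = a$ holds by construction, and so do the three $\top$-instances. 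For the last equation, note that over $\Gamma.P$ the proposition $P[\pi]$ is $\top$, so by the $\top$-reduction (together with $\Gamma.P.(P[\pi]) \cong \Gamma.P$) one gets $\Gluesys{A}{\Gluesysclauseb{P}{T}{f}}[\pi] = T$ over $\Gamma.P$ — which is exactly what makes $b[\pi]$ admissible in the $t$-slot; then for $\Gamma \sez b : \Gluesys{A}{\Gluesysclauseb{P}{T}{f}}$ with $b\dsub\gamma = (a,t)$, one computes that $\gluesys{\ungluesys{\sysclauseb P f} b}{\sysclauseb{P}{b[\pi]}}\dsub\gamma$ has first component $a$ and second component the family $\vfi \mapsto (b[\pi])\dsub{\cwfpair{\gamma\vfi}{\star}}$, which under the identification $\Gluesys{A}{\Gluesysclauseb{P}{T}{f}}[\pi] = T$ is $\vfi \mapsto t(\vfi)$; so the result is $(a,t) = b\dsub\gamma$.

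The main obstacle is the very first step: the naive definition — a total $a : A$ together with a partial $t : T$ defined only when $P$ holds at the current stage $W$ — does \emph{not} assemble into a presheaf, since restriction along $\vfi : \PSub V W$ may pass from a stage where $P$ is false (no $t$-datum available) to one where $P$ is true (a $t$-datum required). Recording $t$ as a natural family over $P|_\gamma$ is precisely the fix, and everything downstream (naturality in $W$ and in $\Gamma$, the computation laws) is then routine bookkeeping analogous to the $\Sigma$-type case, modulo the standing convention — implicit already in the statement, which writes $=$ rather than $\cong$ — of identifying $\set{(a,t)}{a = f(t)}$ with the set of such $t$.
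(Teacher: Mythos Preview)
Your construction is correct and is essentially the paper's own proof. The only difference is presentational: the paper makes an explicit case distinction on $P\dsub\gamma$, setting $G\dsub\gamma = T\dsub{\gamma,\star}$ when $P\dsub\gamma=\accol\star$ and using your pair-with-section description only when $P\dsub\gamma=\eset$; this is precisely so that $\Gluesys{A}{\Gluesysclauseb{\top}{T}{f}} = T[\cwfpair\id\star]$ holds \emph{on the nose} rather than up to the bijection you invoke. Your uniform description is cleaner and makes naturality in $\Gamma$ and the $\eta$-law more transparent (the paper has to work through a nontrivial step $(\dagger)$ for the cross-case part of $\eta$), but the case split is what buys the strict equalities demanded by the definition; your parenthetical ``treat the bijection as an equality'' is exactly the gap the paper's case analysis closes.
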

\begin{proof}
	We assume given the prerequisites of the type former. Write $G = \Gluesys{A}{\Gluesysclauseb{P}{T}{f}}$.
	\begin{description}
		\item[The type] We define $G \dsub \gamma$ by case distinction on $P \dsub \gamma$:
		\begin{enumerate}
			\item If $P \dsub \gamma = \accol \star$, then we set $G \dsub \gamma = T \dsub{\gamma, \star}$.
			\item If $P \dsub \gamma = \eset$, we let $P \dsub \gamma$ be the set of pairs $(a \mapsfrom t)$ where $a : A \dsub \gamma$ and $\yoneda W.P[\gamma] \sez t : T[\gamma \subext]$, such that for every $\vfi$ for which $P \dsub{\gamma \vfi} = \accol \star$, the application $f \dsub{\gamma \vfi, \star} \cdot t \dsub{\vfi, \star}$ is equal to $a \psub \vfi$.
		\end{enumerate}
		Given $g : G \dsub \gamma$ and $\vfi : \PSub V W$, we need to define $g \psub \vfi$.
		\begin{enumerate}
			\item If $P \dsub \gamma = P \dsub{\gamma \vfi} = \accol \star$, then we use the definition from $T$.
			\item If $P \dsub \gamma = P \dsub{\gamma \vfi} = \eset$, then we set $(a \mapsfrom t) \psub \vfi = (a \psub \vfi \mapsfrom t[\vfi \subext])$.
			\item If $P \dsub \gamma = \eset$ and $P \dsub{\gamma \vfi} = \accol \star$, then we set $(a \mapsfrom t) \psub \vfi = t \dsub{\vfi, \star}$.
		\end{enumerate}
		One can check that this definition preserves composition and identity, and that this entire construction is natural in $\Gamma$.
		
		\item[The constructor] Write $g = \gluesys{a}{\sysclauseb{P}{t}}$. We define $g \dsub \gamma$ by case distinction on $P \dsub \gamma$:
		\begin{enumerate}
			\item If $P \dsub \gamma = \accol \star$, then we set $g \dsub \gamma = t \dsub{\gamma, \star}$.
			\item If $P \dsub \gamma = \eset$, then we set $g \dsub \gamma = (a \dsub \gamma \mapsfrom t \sub{\gamma \subext})$.
		\end{enumerate}
		By case distinction, it is easy to check that this is natural in the domain $W$ of $\gamma$. Naturality in $\Gamma$ is straightforward.
		
		\item[The eliminator] Write $u = \ungluesys{\sysclauseb P f} b$. We define $u \dsub \gamma$ by case distinction on $P \dsub \gamma$:
		\begin{enumerate}
			\item If $P \dsub \gamma = \accol \star$, then we set $u \dsub \gamma = f \dsub{\gamma, \star} \cdot b \dsub{\gamma}$.
			\item If $P \dsub \gamma = \eset$, then $b \dsub \gamma$ is of the form $(a \mapsfrom t)$ and we set $u \dsub \gamma = a$.
		\end{enumerate}
		Naturality in the domain $W$ of $\gamma$ is evident when we consider non-cross-case restrictions. Naturality for cross-case restrictions is asserted by the condition on pairs $(a \mapsfrom t)$. Again, naturality in $\Gamma$ is straightforward.
		
		\item[The $\beta$-rule] Pick $\gamma : \DSub W \Gamma$. Write $g = \gluesys{a}{\sysclauseb{P}{t}}$.
		\begin{enumerate}
			\item If $P \dsub \gamma = \accol \star$, then we get
			\begin{equation}
				\mathrm{LHS} \dsub \gamma
				= f \dsub{\gamma, \star} \cdot g \dsub{\gamma}
				= f \dsub{\gamma, \star} \cdot t \dsub{\gamma, \star}
				= (f\,t) \dsub{\gamma, \star} = a \dsub{\gamma}
			\end{equation}
			by the premise of the $\glue$ rule.
			\item If $P \dsub \gamma = \eset$, then we have $g \dsub \gamma = (a \dsub \gamma \mapsfrom t \sub{\gamma \subext})$ and $\unglue$ simply extracts the first component.
		\end{enumerate}
		
		\item[The $\eta$-rule] Pick $\gamma : \DSub W \Gamma$. Write $u = \ungluesys{\sysclauseb P f} b$.
		\begin{enumerate}
			\item If $P \dsub \gamma = \accol \star$, then we have $\mathrm{LHS} \dsub \gamma = b \sub \pi \dsub{\gamma, \star} = b \dsub \gamma$.
			\item If $P \dsub \gamma = \eset$, then $b \dsub \gamma$ has the form $(a \mapsfrom t)$ and we get
			\begin{equation}
				\mathrm{LHS} \dsub \gamma
				= (u \dsub \gamma \mapsfrom b \sub \pi \sub{\gamma \subext})
				= (a \mapsfrom b \sub{\gamma \pi})
				=^{(\dagger)} (a \mapsfrom t).
			\end{equation}
			The last step $(\dagger)$ is less than trivial. We show that $\yoneda W.P[\gamma] \sez b \sub{\gamma \pi} = t : T$. Pick any $(\vfi, \star) : \DSub V {(\yoneda W.P[\gamma])}$. If you manage to pick one, then $P \dsub{\gamma \vfi} = \accol \star$, so $b \sub{\gamma \pi} \dsub{\vfi, \star} = b \dsub{\gamma \vfi} = (a \mapsfrom t) \psub \vfi = t \dsub{\vfi, \star}$. \qedhere
		\end{enumerate} 
	\end{description}
\end{proof}

\section{Welding}
\begin{definition}
	A CwF \textdef{supports welding} if it is closed under the following rules:
	\begin{equation}
		\inference{
			\Gamma \sez P \prop \\
			\Gamma, \var p : P \sez T \type \\
			\Gamma, \var p : P \sez f : A[\pi] \to T \\
			\Gamma \sez A \type
		}{\Gamma \sez \Weldsys{A}{\Weldsysclauseb{\var p : P}{T}{f}} \type}{}, \qquad
		\inference{
			\Gamma \sez \Weldsys{A}{\Weldsysclauseb{\var p : P}{T}{f}} \type \\
			\Gamma \sez a : A
		}{\Gamma \sez \weldsys{\sysclauseb{\var p : P}{f}} a : \Weldsys{A}{\Weldsysclauseb{\var p : P}{T}{f}}}{},
	\end{equation}
	\begin{equation}
		\inference{
			\Gamma, \var y : \Weldsys{A}{\Weldsysclauseb{\var p : P}{T}{f}} \sez C \type \\
			\Gamma, \var p : P, \var y : T \sez d : C[\wknvar p \subext] \\
			\Gamma, \var x : A \sez c : C[\wknvar x, \weldsys{\sysclauseb{\var p : P[\wknvar x \subext]}{f[\wknvar x \subext]}} \var x/\var y] \\
			\Gamma, \var p : P, \var x : A[\wknvar p] \sez d[\wknvar x, f[\wknvar x]\,\var x/\var y] = c[\wknvar p \subext] : C[\wknvar p \wknvar x, f[\wknvar x]\,\var x/\var y]\\
			\Gamma \sez b : \Weldsys{A}{\Weldsysclauseb{\var p : P}{T}{f}}
		}{\Gamma \sez \ind_\Weld(\var y.C, \sys{\sysclauseb{\var p : P}{\var y.d}}, \var x.c, b) : C[\id, b/\var y]}{}
	\end{equation}
	naturally in $\Gamma$, such that
	\begin{align*}
		\Weldsys{A}{\Weldsysclauseb{\var p : \top}{T}{f}} &= T[\id, \star/\var p], \\
		\weldsys{\sysclauseb{\var p : \top}{f}} a &= f[\id, \star/\var p] a, \\
		\ind_\Weld(\var y.C, \sys{\sysclauseb{\var p : \top}{\var y.d}}, \var x.c, b) &= d[\id, \star / \var p, b / \var y], \\
		\ind_\Weld(\var y.C, \sys{\sysclauseb{\var p : P}{\var y.d}}, \var x.c, \weldsys{\sysclauseb{\var p : P}{f}} a) &= c[\id, a/\var x].
	\end{align*}
\end{definition}
\begin{proposition}
	Every presheaf category supports welding.
\end{proposition}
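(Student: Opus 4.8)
The plan is to mimic the structure of the $\Glue$-proof, since $\Weld$ is its dual: whereas $\Gluesys{A}{\Gluesysclauseb{P}{T}{f}}$ glues a subobject $T$ onto $A$ along a map $T \to A$, the type $\Weldsys{A}{\Weldsysclauseb{\var p : P}{T}{f}}$ welds $A$ onto $T$ along a map $A \to T$. Concretely, given $\gamma : \DSub W \Gamma$, I would define $\Weldsys{A}{\Weldsysclauseb{\var p : P}{T}{f}} \dsub \gamma$ by case distinction on $P \dsub \gamma$: if $P \dsub \gamma = \accol \star$, set it to $T \dsub{\gamma, \star}$; if $P \dsub \gamma = \eset$, set it to $A \dsub \gamma$. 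The subtlety compared to $\Glue$ is in the restriction maps. For $g \in \Weldsys{A}{\Weldsysclauseb{\var p : P}{T}{f}} \dsub \gamma$ and $\vfi : \PSub V W$: in the same-case situations use the restriction from $T$ resp.\ $A$; in the cross-case $P\dsub\gamma = \eset$, $P\dsub{\gamma\vfi} = \accol\star$ (no element can flow the other way since $P$ is a proposition), we must send $a \in A\dsub\gamma$ to $f \dsub{\gamma\vfi,\star} \cdot (a \psub\vfi) \in T\dsub{\gamma\vfi,\star}$. One checks functoriality: for a composite $V \xrightarrow{\psi} V' \xrightarrow{\vfi} W$ with $P\dsub\gamma = \eset$ and $P\dsub{\gamma\vfi\psi} = \accol\star$, either $P\dsub{\gamma\vfi} = \accol\star$ already (then use naturality of $f$ as a term) or $P\dsub{\gamma\vfi} = \eset$ (then use naturality of restriction in $A$ and in $f$). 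Naturality in $\Gamma$ is routine.

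Next I would define the constructor: $(\weldsys{\sysclauseb{\var p : P}{f}} a) \dsub \gamma = f\dsub{\gamma,\star} \cdot a\dsub\gamma$ when $P\dsub\gamma = \accol\star$, and $= a\dsub\gamma$ when $P\dsub\gamma = \eset$. Naturality in $W$ reduces, in the cross-case, to exactly the cross-case restriction rule just defined; naturality in $\Gamma$ is straightforward. Then the eliminator $\ind_\Weld(\var y.C, \sys{\sysclauseb{\var p : P}{\var y.d}}, \var x.c, b)$ is defined, again by case distinction on $P\dsub\gamma$: if $P\dsub\gamma = \accol\star$, then $b\dsub\gamma \in T\dsub{\gamma,\star}$ and we output $d\dsub{\gamma,\star,b\dsub\gamma}$ (using the $\var p : P$, $\var y : T$ context); if $P\dsub\gamma = \eset$, then $b\dsub\gamma \in A\dsub\gamma$ and we output $c\dsub{\gamma, b\dsub\gamma}$ (using the $\var x : A$ context). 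This is well-typed because in case $\accol\star$ we have $b = \weldsys{\sysclauseb{\var p : P}{f}} (\cdots)$ only virtually, but $C[\id,b/\var y]$ at stage $\gamma$ is $C\dsub{\gamma,b\dsub\gamma}$ which matches the type of $d$ at that stage; similarly in case $\eset$.

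The main obstacle — exactly as the $(\dagger)$ step was for $\Glue$ — will be naturality of the eliminator under cross-case restrictions, i.e.\ checking that $\ind_\Weld(\cdots) \dsub\gamma \psub\vfi = \ind_\Weld(\cdots)\dsub{\gamma\vfi}$ when $P\dsub\gamma = \eset$ and $P\dsub{\gamma\vfi} = \accol\star$. Here the left side is $(c\dsub{\gamma, b\dsub\gamma})\psub\vfi$, which by naturality of $c$ equals $c\dsub{\gamma\vfi, (b\dsub\gamma)\psub\vfi} = c\dsub{\gamma\vfi, b\dsub{\gamma\vfi}}$; but $b\dsub{\gamma\vfi} = (b\dsub\gamma)\psub\vfi = f\dsub{\gamma\vfi,\star}\cdot((b\dsub\gamma)\psub\vfi)$ lives in the $\accol\star$ case, where $\ind_\Weld$ evaluates to $d\dsub{\gamma\vfi,\star, b\dsub{\gamma\vfi}}$. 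Reconciling these two is precisely the content of the fourth premise of the $\ind_\Weld$ rule, the coherence condition $d[\wknvar x, f[\wknvar x]\,\var x/\var y] = c[\wknvar p\subext]$: instantiating it at stage $(\gamma\vfi,\star)$ with $\var x := (b\dsub\gamma)\psub\vfi$ gives exactly $d\dsub{\gamma\vfi,\star,f\dsub{\gamma\vfi,\star}\cdot((b\dsub\gamma)\psub\vfi)} = c\dsub{\gamma\vfi,(b\dsub\gamma)\psub\vfi}$, closing the gap. The two computation rules $\Weldsys{A}{\Weldsysclauseb{\var p : \top}{T}{f}} = T[\id,\star/\var p]$ etc.\ and the $\beta$-rules fall out immediately by taking $P\dsub\gamma = \accol\star$ throughout resp.\ by unfolding the definitions in each case; I expect those to be a few lines each.
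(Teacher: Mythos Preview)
Your proposal is correct and follows essentially the same approach as the paper: define the $\Weld$-type, constructor, and eliminator by case distinction on $P\dsub\gamma$, with the cross-case restriction sending $a$ to $f\dsub{\gamma\vfi,\star}\cdot(a\psub\vfi)$, and use the coherence premise $d[\wknvar x, f[\wknvar x]\,\var x/\var y] = c[\wknvar p\subext]$ to establish cross-case naturality of the eliminator. One small point you glossed over: the $\beta$-rule in the case $P\dsub\gamma = \accol\star$ is not quite immediate but also requires one more invocation of the same coherence premise (since $\weld\,a$ evaluates to $f\dsub{\gamma,\star}\cdot a\dsub\gamma$ there, and the eliminator returns $d\dsub{\gamma,\star,f\dsub{\gamma,\star}\cdot a\dsub\gamma}$, which must be matched with $c\dsub{\gamma,a\dsub\gamma}$).
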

\begin{proof}
	We assume given the prerequisites of the type former. Write $\Omega = \Weldsys{A}{\Weldsysclauseb{\var p : P}{T}{f}}$.
	\begin{description}
		\item[The type] We define $W \dsub \gamma$ by case distinction on $P \dsub \gamma$:
		\begin{enumerate}
			\item If $P \dsub \gamma = \accol \star$, then we set $\Omega \dsub \gamma = T \dsub{\gamma, \star/\var p}$.
			\item If $P \dsub \gamma = \eset$, then we set $\Omega \dsub \gamma = \set{\dweld\,a}{a : A \dsub \gamma} \cong A \dsub \gamma$. Once more, $\dweld$ is a meaningless label that we add for readability.
		\end{enumerate}
		Given $w : \Omega \dsub \gamma$ and $\vfi : \PSub V W$, we need to define $w \psub \vfi$.
		\begin{enumerate}
			\item If $P \dsub \gamma = P \dsub{\gamma \vfi} = \accol \star$, then we use the definition from $T$.
			\item If $P \dsub \gamma = P \dsub{\gamma \vfi} = \eset$, then we set $(\dweld\,a) \psub \vfi = \dweld (a \psub \vfi)$
			\item If $P \dsub \gamma = \eset$ and $P \dsub{\gamma \vfi} = \accol \star$, then we set $(\dweld\,a) \psub \vfi = f \dsub{\gamma \vfi, \star / \var p} \cdot a \psub \vfi$.
		\end{enumerate}
		One can check that this definition preserves composition and identity, and that this entire construction is natural in $\Gamma$.
		
		\item[The constructor] Write $w = \weldsys{\sysclauseb{\var p : P}{f}} a$.
		\begin{enumerate}
			\item If $P \dsub \gamma = \accol \star$, then we set $w \dsub \gamma = f \dsub{\gamma, \star / \var p} \cdot a \dsub \gamma$.
			\item If $P \dsub \gamma = \eset$, then we set $w \dsub \gamma = \dweld(a \dsub \gamma)$.
		\end{enumerate}
		This is easily checked to be natural in $\Gamma$ and the domain $W$ of $\gamma$.
		
		\item[The eliminator] Write $z = \ind_\Weld(\var y.C, \sys{\sysclauseb{\var p : P}{\var y.d}}, \var x.c, b)$.
		\begin{enumerate}
			\item If $P \dsub \gamma = \accol \star$, then $b \dsub \gamma : T \dsub{\gamma, \star / \var p}$, and we can set $z \dsub \gamma = d \dsub{\gamma, \star / \var p, b \dsub \gamma / \var y}$.
			\item If $P \dsub \gamma = \eset$, then $\dunweld(b \dsub \gamma) : A \dsub \gamma$ and we can set $z \dsub \gamma = c \dsub{\gamma, \dunweld(b \dsub \gamma) / \var x}$, where $\dunweld$ removes the $\dweld$ label.
		\end{enumerate}
		We need to show that this is natural in the domain $W$ of $\gamma$, which is only difficult in the cross-case-scenario. So pick $\vfi : \PSub V W$ such that $P \dsub \gamma = \eset$ and $P \dsub{\gamma \vfi} = \accol \star$. We need to show that $z \dsub \gamma \psub \vfi = z \dsub{\gamma \vfi}$. Write $b \dsub \gamma = \dweld\,a$. We have
		\begin{align}
			z \dsub \gamma \psub \vfi
			&= c \dsub{\gamma, a / \var x} \psub \vfi
			= c \dsub{\gamma \vfi, a \psub \vfi / \var x}, \\ \nn
			z \dsub{\gamma \vfi}
			&= d \dsub{\gamma \vfi, \star / \var p, b \dsub{\gamma \vfi} / \var y}
			= d \dsub{\gamma \vfi, \star / \var p, (\dweld\,a) \psub \vfi / \var y} \\ \nn
			&= d \dsub{\gamma \vfi, \star / \var p, f \dsub{\gamma \vfi, \star / \var p} \cdot a \psub \vfi / \var y}.
		\end{align}
		From the premises, we know that $d[\wknvar x, f[\wknvar x]\,\var x/\var y] = c[\wknvar p \subext]$. Applying $\loch \dsub{\gamma \vfi, \star / \var p, a \psub \vfi / \var x}$ to this equation yields
		\begin{equation}
			d \dsub{\gamma \vfi, \star / \var p, f \dsub{\gamma \vfi, \star / \var p} \cdot a \psub \vfi / \var y} = c \dsub{\gamma \vfi, a \psub \vfi / \var x}.
		\end{equation}
		
		\item[The $\beta$-rule] Write $w = \weldsys{\sysclauseb{\var p : P}{f}} a$.
		\begin{enumerate}
			\item If $P \dsub \gamma = \accol \star$, then
			\begin{align}
				\mathrm{LHS} \dsub \gamma
				&= d \dsub{\gamma, \star / \var p, w \dsub \gamma / \var y}
				= d \dsub{\gamma, \star / \var p, f \dsub{\gamma, \star / \var p} \cdot a \dsub \gamma / \var y}, \\ \nn
				\mathrm{RHS} \dsub \gamma
				&= c \dsub{\gamma, a \dsub \gamma / \var x}.
			\end{align}
			Again, the premises give us this equality.
			\item If $P \dsub \gamma = \eset$, then the equality is trivial. \qedhere
		\end{enumerate}
	\end{description}
\end{proof}

\chapter{Internalizing transformations of semantics}\label{ch:cwf-morphisms}
Given categories with families (CwFs) $\cat C$ and $\cat D$, we can consider functors $F : \cat C \to \cat D$ that sufficiently preserve the CwF structure to preserve semantical truth (though not necessarily falsehood) of judgements. Such functors will be called \textbf{morphisms of CwFs}.

In \cref{sec:cwf-transform} of this chapter, we are concerned with how we can internalize a CwF morphism and even more interestingly, how we can internalize a natural transformation between CwF morphisms. That is, we want to answer the question: What inference rules become meaningful when we know of the existence of (natural transformations between) CwF morphisms? Finally, we consider adjoint CwF morphisms, which of course give rise to unit and co-unit natural transformations.

In \cref{sec:psh-transform}, we delve deeper and study the implications of functors, natural transformations and adjunctions between categories $\catV$ and $\catW$ for the CwFs $\widehat{\catV}$ and $\widehat{\catW}$.

Throughout the chapter, we will need to annotate symbols like $\Ty$, $\sez$ and $\Dsez$ with the CwF that we are talking about.

\section{Categories with families}\label{sec:cwf-transform}

\subsection{Morphisms of CwFs}
\begin{definition}\label{def:morphism-of-cwfs}
	A \textbf{morphism of CwFs} $F : \cat C \to \cat D$ consists of:
	\begin{enumerate}
		\item A functor $F_\Ctx : \cat C \to \cat D$,
		\item A natural transformation $F_\Ty : \Ty_{\cat C} \to \Ty_{\cat D} \circ F_\Ctx$,
		\item A natural transformation $F_\Tm : \Tm_{\cat C} \to \Tm_{\cat D} \circ F_{\int}$, where $F_{\int} : \int_{\cat C} \Ty_{\cat C} \to \int_{\cat D} \Ty_{\cat D}$ is easily constructed from $F_\Ctx$ and $F_\Ty$,
		\item such that $F_\Ctx () = ()$,
		\item such that $F_\Ctx(\Gamma.T) = (F_\Ctx \Gamma).(F_\Ty T)$, $F_\Ctx \pi = \pi$ and $F_\Tm \xi = \xi$.
	\end{enumerate}
	The images of a context $\Gamma$, a substitution $\sigma$, a type $T$ and a term $t$ are also denoted $F\Gamma$, $F \sigma$, $FT$ and $\ftrtm F t$ respectively. We choose to denote the action of terms differently because CwF morphisms act very differently on types and on terms of the universe: in general $\ftrtm F A$ will be quite different from $F(\El\,A)$.
\end{definition}
A morphism of CwFs $F : \cat C \to \cat D$ is easy to internalize:
\begin{equation}
	\inference{\Gamma \sez_\cat C \Ctx}{F\Gamma \sez_\cat D \Ctx}{} \qquad
	\inference{\Gamma \sez_\cat C T \type}{F\Gamma \sez_\cat D FC \type}{} \qquad
	\inference{\Gamma \sez_\cat C t : T}{F\Gamma \sez_\cat D \ftrtm F t : FT}{}
\end{equation}
with equations for context formation:
\begin{equation}
	F () = (), \qquad F(\Gamma.T) = F\Gamma.FT,
\end{equation}
substitution:
\begin{equation}
	F\id = \id, \qquad
	F(\tau \sigma) = (F\tau) (F\sigma), \qquad
	F(T[\sigma]) = (FT)[F\sigma], \qquad
	\ftrtm F {(t[\sigma])} = (\ftrtm F t)[F\sigma],
\end{equation}
and pairing and projecting:
\begin{equation}
	F \pi = \pi, \qquad
	\ftrtm F \xi = \xi, \qquad
	F(\cwfpair \sigma t) = (F \sigma, \ftrtm F t).
\end{equation}
When using variable notation, the equation $\ftrtm F \xi = \xi$ inspires us to write $F(\Gamma, \var x : T) = (F\Gamma, \ftrtm{F}{\var x} : FT)$. Here, $\ftrtm F {\var x}$ is to be regarded as an atomic variable name, which happens to be equal to the compound term of the same notation. Then we get $F \wknvar x = \wkn{\ftrtm F x}$.

\subsection{Natural transformations of CwFs}
In this section, we consider morphisms of CwFs $F, G : \cat C \to \cat D$ and a natural transformation $\nu : F \to G$ between the underlying functors. It is clear that for any context $\Gamma$, we get a substitution between its respective images:
\begin{equation}
	\inference{\Gamma \sez_\cat C \Ctx}{\nu : F\Gamma \xrightarrow{\cat D} G\Gamma}{}
\end{equation}
Just like we did for $\pi$ and $\xi$, we will omit the index $\Gamma$ on $\nu$. For extended contexts $\Gamma, \var x : T$, naturality on $\wknvar x$ shows that $\wkn{\ftrtm G \var x} \circ \nu = \nu \circ \wkn{\ftrtm F \var x} : (F\Gamma, \ftrvar F x : FT) \to G\Gamma$.
\begin{proposition}\label{thm:nattrans-function}
	We have an operation $\nu_\loch(\loch)$ for applying $\nu$ to terms:
	\begin{equation}
		\inference{
			\Gamma \sez_{\cat C} T \type \qquad
			\sigma : \Delta \xrightarrow{\cat D} F \Gamma \qquad
			\Delta \sez_{\cat D} t : (FT)[\sigma]
		}{\Delta \sez_{\cat D} \nu_\sigma(t) : (GT)[\nu \sigma]}{}.
	\end{equation}
	\begin{enumerate}
		\item This operation is natural in $\Delta$, i.e. $\nu_\sigma(t)[\tau] = \nu_{\sigma \tau}(t[\tau])$.
		\item It is also natural in $\Gamma$, i.e. if $\rho : \Gamma' \xrightarrow{\cat C} \Gamma$ and $\sigma : \Delta \xrightarrow{\cat D} F\Gamma'$, then $\nu_{F\rho \circ \sigma}(t) = \nu_\sigma(t)$. For this reason, we will write $\nu(t)$ for $\nu_\sigma(t)$.
		\item For a context $\Gamma, \var x : T$, we have $\nu = (\nu \wkn{\ftrtm F \var x}, \nu(\ftrtm F \var x) / \ftrtm G \var x) : (F \Gamma, \ftrtm F \var x : FT) \xrightarrow{\cat D} (G \Gamma, \ftrtm G \var x : GT)$.
		\item We have $\nu(\ftrtm F {t'}) = \ftrtm G {t'}[\nu]$.
		\item We have $(\nu \mu)(t) = \nu(\mu(t))$ and $\id(t) = t$.
		\item We have $\ftrtm{R}{(\nu(t))} = (R\nu)(\ftrtm R t)$.
	\end{enumerate}
\end{proposition}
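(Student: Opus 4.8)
The plan is to reduce all six items to a single construction. The key observation is that a term $\Delta \sez_\cat D t : (FT)[\sigma]$ with $\sigma : \Delta \xrightarrow{\cat D} F\Gamma$ is exactly a substitution $(\sigma, t) : \Delta \to F\Gamma.FT = F(\Gamma.T)$ over $\sigma$ (using clause~5 of \cref{def:morphism-of-cwfs}, namely $F(\Gamma.T) = F\Gamma.FT$). I would then define $\nu_\sigma(t)$ by postcomposing with $\nu$ at the extended context: $\nu_{\Gamma.T} \circ (\sigma, t) : \Delta \to G(\Gamma.T) = G\Gamma.GT$. Naturality of $\nu$ on $\pi : \Gamma.T \to \Gamma$, together with $F\pi = \pi = G\pi$, gives $\pi \circ \nu_{\Gamma.T} = \nu_\Gamma \circ \pi$, so the first component of $\nu_{\Gamma.T} \circ (\sigma, t)$ is $\nu_\Gamma \circ \sigma = \nu\sigma$. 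Hence $\nu_{\Gamma.T} \circ (\sigma, t) = (\nu\sigma, u)$ for a unique $\Delta \sez_\cat D u : (GT)[\nu\sigma]$, and I set $\nu_\sigma(t) := u$. Well-typedness is then built in, and the defining equation $\nu_{\Gamma.T} \circ (\sigma, t) = (\nu\sigma, \nu_\sigma(t))$ is what gets chased through below.

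Items~1, 3, 4, 5 are then short. Item~1 follows from $(\sigma, t) \circ \tau = (\sigma\tau, t[\tau])$. Item~5 follows from functoriality of $\nu \mapsto \nu_{\Gamma.T}$ (vertical composition and identities of natural transformations) combined with the defining equation applied to $\mu$ in place of $\nu$, i.e.\ $\mu_{\Gamma.T} \circ (\sigma, t) = (\mu\sigma, \mu_\sigma(t))$: then $(\nu\mu)_{\Gamma.T} \circ (\sigma,t) = \nu_{\Gamma.T} \circ (\mu\sigma, \mu_\sigma(t))$, whose last component is $\nu_{\mu\sigma}(\mu_\sigma(t))$, while $\id_{\Gamma.T} = \id$ gives $\id_\sigma(t) = \xi[(\sigma,t)] = t$. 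For item~3, apply the defining equation with $\sigma = \pi$ and $t = \xi = \ftrvar F x$, so that $(\sigma, t) = \id$ on $F\Gamma.FT$ and the equation reads $\nu_{\Gamma.T} = (\nu\pi, \nu(\ftrvar F x))$ literally. For item~4, the point is $(\id, \ftrtm F {t'}) = F((\id, t'))$, so naturality of $\nu$ on $(\id, t') : \Gamma \to \Gamma.T$ gives $\nu_{\Gamma.T} \circ (\id, \ftrtm F {t'}) = G((\id, t')) \circ \nu = (\id, \ftrtm G {t'}) \circ \nu$, whose last component is $\ftrtm G {t'}[\nu]$. Item~6 uses that whiskering by a CwF morphism $R$ satisfies $(R\nu)_{\Gamma.T} = R(\nu_{\Gamma.T})$, that $R(\cwfpair \sigma t) = (R\sigma, \ftrtm R t)$, that $\ftrtm R \xi = \xi$, and that $R$ commutes with term substitution; chaining these rewrites $\ftrtm R {(\nu_\sigma(t))} = \ftrtm R {(\xi[\nu_{\Gamma.T} \circ (\sigma,t)])}$ as $\xi[R(\nu_{\Gamma.T} \circ (\sigma,t))] = (R\nu)_{R\sigma}(\ftrtm R t)$.

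The one item needing genuine care --- and where I expect to fight typing mismatches --- is item~2 (naturality in $\Gamma$). Given $\rho : \Gamma' \xrightarrow{\cat C} \Gamma$, the bookkeeping device is the weakening $\rho\subext = (\rho\pi, \xi) : \Gamma'.T[\rho] \to \Gamma.T$. From the CwF-morphism equations ($F\pi = \pi$, $\ftrtm F \xi = \xi$, $F(\cwfpair \sigma t) = (F\sigma, \ftrtm F t)$) one checks $F(\rho\subext) = (F\rho)\subext$ and similarly for $G$. Since $(F\rho)\subext \circ (\sigma, t) = (F\rho \circ \sigma, t)$ and $\xi[(G\rho)\subext \circ \chi] = \xi[\chi]$, naturality of $\nu$ on $\rho\subext$ rewrites $\nu_{\Gamma.T} \circ (F\rho \circ \sigma, t)$ as $G(\rho\subext) \circ \nu_{\Gamma'.T[\rho]} \circ (\sigma, t)$; taking last components identifies $\nu_{F\rho\circ\sigma}(t)$, computed for $T$ over $\Gamma$, with $\nu_\sigma(t)$, computed for $T[\rho]$ over $\Gamma'$. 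The real work here is only in keeping straight which context extension each $\pi$, $\xi$ and $(\loch,\loch)$ belongs to; no new idea is involved.
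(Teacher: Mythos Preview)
Your proposal is correct and takes essentially the same approach as the paper: both define $\nu_\sigma(t)$ as $\xi[\nu_{\Gamma.T}\circ(\sigma,t)]$ (the paper writes it as $(\ftrtm G{\var x})[\nu(\sigma,t/\ftrtm F{\var x})]$), and the six items are verified by the same manipulations --- in particular item~2 via naturality of $\nu$ on $\rho\subext$ and item~4 via naturality on $(\id,t')$. The only cosmetic difference is that the paper motivates the definition by first noting that item~3 and item~1 force it, whereas you state it directly.
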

\begin{proof}
	The unpairing of $\nu$ requires that $F\Gamma, \ftrtm F \var x : FT \sez \ftrtm G \var x[\nu] = \nu_{(\wkn{\ftrtm F \var x})}(\ftrtm F \var x) : (GT)[\nu \wkn{\ftrtm F \var x}]$. Now $t = (\ftrtm F \var x)[\sigma, t / \ftrtm F \var x]$, so naturality requires us to define $\nu_\sigma(t) = (\ftrtm G \var x)[\nu(\sigma, t / \ftrtm F \var x)]$.
	\begin{enumerate}
		\item This is easily seen to be natural in $\Delta$.
		\item For naturality in $\Gamma$:
		\begin{align}
			\nu_{F \rho \circ \sigma}(t)
			= (\ftrtm G \var x)[\nu(F \rho \circ \sigma, t / \ftrtm F \var x)]
			&= (\ftrtm G \var x)[\nu(F\rho \subext)(\sigma, t / \ftrtm F \var x)] \\ \nn
			&= (\ftrtm G \var x)[(G\rho \subext)\nu(\sigma, t / \ftrtm F \var x)]
			= (\ftrtm G \var x)[\nu(\sigma, t / \ftrtm F \var x)]
			= \nu_\sigma(t).
		\end{align}
		\item We have $\wkn{\ftrtm G \var x} \nu = \nu \wkn{\ftrtm F \var x}$ and $\ftrtm G \var x[\nu] = \ftrtm G \var x[\nu(\pi, \ftrtm F \var x / \ftrtm F \var x)] = \nu(\ftrtm F \var x)$.
		\item $\nu_\id(\ftrtm F {t'}) = (\ftrtm G \var x)[\nu (\id, \ftrtm F {t'} / \ftrtm F \var x)] = (\ftrtm G \var x)[\nu \circ F(\id, t' / \var x)] = (\ftrtm G \var x)[G(\id, t' / \var x) \circ \nu] = \ftrtm{G}{t'}[\nu]$.
		\item Assume $\mu : E \to F$ and $\nu : F \to G$. We have
		\begin{equation}
			(\nu \mu)(t)
			= \ftrtm G \var x[\nu \mu(\sigma, t / \ftrtm E \var x)]
			= \ftrtm G \var x[\nu (\mu \wkn{\ftrtm E \var x}, \mu(\ftrtm E \var x)/\ftrtm F \var x)(\sigma, t / \ftrtm E \var x)] = \ftrtm G \var x[\nu(\mu \sigma, \mu(t) / \ftrtm F \var x)] = \nu(\mu(t)).
		\end{equation}
		\item This is immediate from the definition. \qedhere
	\end{enumerate}
\end{proof}

\subsection{Adjoint morphisms of CwFs}
In this section, we consider functors $L : \cat C \to \cat D$ and $R : \cat D \to \cat C$, where $L$ may be and $R$ is a morphism of CwFs, such that $\alpha : L \dashv R$. Then $LR : \cat D \to \cat D$ will be a comonad and $RL : \cat C \to \cat C$ will be a monad. Of course, we have unit and co-unit natural transformations
\begin{equation}
	\eps : LR \to \Id_{\cat D}, \qquad \eta : \Id_{\cat C} \to RL,
\end{equation}
such that $\eps L \circ L \eta = \id_L$ and $R\eps \circ \eta R = \id_R$. The isomorphism $\alpha : \Hom(L\loch, \loch) \cong \Hom(\loch, R\loch)$ can be retrieved from unit and co-unit:
\begin{equation}
	\alpha(\sigma) = R\sigma \circ \eta, \qquad
	\alpha\inv(\tau) = \eps \circ L\tau,
\end{equation}
and vice versa:
\begin{equation}
	\eta = \alpha(\id), \qquad \eps = \alpha\inv(\id).
\end{equation}
Finally, $\alpha$ is natural in the following sense:
\begin{equation}
	\alpha(\tau \circ \sigma \circ L\rho) = R\tau \circ \alpha(\sigma) \circ \rho, \qquad
	\alpha\inv(R\tau \circ \sigma \circ \rho) = \tau \circ \alpha\inv(\sigma) \circ L\rho.
\end{equation}
If $L$ is a CwF morphism, then \cref{thm:nattrans-function} gives us functions $\eta : T \to (RLT)[\eta]$ and $\eps : LRT \to T[\eps]$. Moreover, $\eta(t) = (\ftrtm{RL}{t})[\eta]$ and $\eps(\ftrtm{LR}{t}) = t[\eps]$.

\begin{proposition}\label{thm:adjunction-rules}
	Assume $\Gamma \sez_{\cat D} T \type$. We have mutually inverse rules:
	\begin{equation}
		\inference{
			\sigma : L\Delta \xrightarrow{\cat D} \Gamma \\
			L\Delta \sez_{\cat D} t : T[\sigma]
		}{\Delta \sez_{\cat C} \alpha_\sigma(t) : (RT)[\alpha(\sigma)]}{}
		\qquad \qquad
		\inference{
			\tau : \Delta \xrightarrow{\cat D} R\Gamma \\
			\Delta \sez_{\cat C} t' : (RT)[\tau]
		}{L\Delta \sez_{\cat D} \alpha\inv_{\alpha\inv(\tau)}(t') : T[\alpha\inv(\tau)]}{}
	\end{equation}
	\begin{enumerate}
		\item These operations are natural in $\Delta$, i.e. $\alpha_\sigma(t)[\tau] = \alpha_{\sigma \circ L \tau}(t[L\tau])$.
		\item These operations are natural in $\Gamma$, i.e if $\rho : \Gamma' \to \Gamma$ and $\sigma : L\Delta \xrightarrow{\cat D} \Gamma'$, then $\alpha_{\rho \sigma}(t) = \alpha_\sigma(t)$. For this reason, we will omit the subscript on $\alpha$.
		\item $\alpha(t) = (\ftrtm R t)[\eta]$.
		\item If $L$ is a CwF morphism, then $\alpha\inv(t') = \eps(\ftrtm L {t'})$. In general, we have $\alpha\inv(t') = \xi[\eps \circ L(\eta, t')]$.
	\end{enumerate}
\end{proposition}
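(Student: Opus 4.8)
The plan is to reduce the whole statement to the action of the adjunction isomorphism $\alpha$ on \emph{substitutions}, exploiting the standard bijection between terms and sections of a context extension. Given $\sigma : L\Delta \xrightarrow{\cat D} \Gamma$ and $L\Delta \sez_{\cat D} t : T[\sigma]$, I would form the substitution $(\sigma, t) : L\Delta \to \Gamma.T$ and apply $\alpha$, obtaining $\alpha(\sigma,t) : \Delta \to R(\Gamma.T)$. Because $R$ is a morphism of CwFs, $R(\Gamma.T) = R\Gamma.RT$ and $R\pi = \pi$, so by naturality of $\alpha$ we get $\pi \circ \alpha(\sigma,t) = R\pi \circ \alpha(\sigma,t) = \alpha(\pi \circ (\sigma,t)) = \alpha(\sigma)$. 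Hence $\alpha(\sigma,t)$ factors uniquely as $(\alpha(\sigma), s)$ for some $\Delta \sez_{\cat C} s : (RT)[\alpha(\sigma)]$, and I define $\alpha_\sigma(t) := s$, i.e.\ $\alpha(\sigma, t) = (\alpha(\sigma), \alpha_\sigma(t))$. Dually, from $\tau : \Delta \to R\Gamma$ and $\Delta \sez_{\cat C} t' : (RT)[\tau]$ I form $(\tau, t') : \Delta \to R\Gamma.RT = R(\Gamma.T)$, apply $\alpha\inv$, note $\pi \circ \alpha\inv(\tau,t') = \alpha\inv(\tau)$ by the same computation, and set $\alpha\inv_{\alpha\inv(\tau)}(t') := \xi[\alpha\inv(\tau,t')]$, the second component.

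That the two operations are mutually inverse is then immediate, since $\alpha, \alpha\inv$ are mutually inverse on substitutions and pairing is a bijection. Naturality in $\Delta$ (item~1) I would obtain by precomposing with $L\tau$: from $(\sigma,t) \circ L\tau = (\sigma \circ L\tau,\, t[L\tau])$ and $\alpha(\kappa \circ L\tau) = \alpha(\kappa) \circ \tau$, comparing second components of $\alpha((\sigma,t)\circ L\tau)$ computed two ways gives $\alpha_{\sigma \circ L\tau}(t[L\tau]) = \alpha_\sigma(t)[\tau]$. For naturality in $\Gamma$ (item~2), a substitution $\rho : \Gamma' \to \Gamma$ induces $\rho\subext : \Gamma'.T[\rho] \to \Gamma.T$ with $(\rho\sigma, t) = \rho\subext \circ (\sigma, t)$; since $R\pi = \pi$ and $\ftrtm R \xi = \xi$ force $R(\rho\subext) = (R\rho)\subext$, transferring through $\alpha$ and reading off second components yields $\alpha_{\rho\sigma}(t) = \alpha_\sigma(t)$. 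The argument for $\alpha\inv$ is symmetric.

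Items 3 and 4 I would get by unfolding $\alpha(\kappa) = R\kappa \circ \eta$ and $\alpha\inv(\kappa) = \eps \circ L\kappa$ against the CwF-morphism equation $R(\kappa, u) = (R\kappa, \ftrtm R u)$. For item~3: $\alpha(\sigma,t) = R(\sigma,t) \circ \eta = (R\sigma, \ftrtm R t) \circ \eta = (\alpha(\sigma),\, (\ftrtm R t)[\eta])$, so $\alpha_\sigma(t) = (\ftrtm R t)[\eta]$, i.e.\ $\alpha(t) = (\ftrtm R t)[\eta]$ after suppressing the subscript via item~2. For item~4: $\alpha\inv(\tau,t') = \eps \circ L(\tau,t')$ gives $\alpha\inv_{\alpha\inv(\tau)}(t') = \xi[\eps \circ L(\tau,t')]$ directly. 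To reach the stated generic form I would use $\tau = R\alpha\inv(\tau) \circ \eta$ to re-read $t'$, by the CwF-morphism property of $R$, as a term over $\eta : \Delta \to RL\Delta$ of type $(R(T[\alpha\inv(\tau)]))[\eta]$; item~2 then lets me take $\tau = \eta$, so that the formula becomes $\alpha\inv(t') = \xi[\eps \circ L(\eta, t')]$. When $L$ is also a CwF morphism, $L(\eta, t') = (L\eta, \ftrtm L{t'})$, and matching against the description of $\eps(\loch)$ read off from the proof of \cref{thm:nattrans-function}, namely $\eps_\kappa(u) = \xi[\eps \circ (\kappa, u)]$, identifies this with $\eps(\ftrtm L{t'})$; the triangle identity $\eps L \circ L\eta = \id_L$ confirms that the result has type $T[\alpha\inv(\tau)]$.

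I expect the work to be essentially bookkeeping: keeping track of which substitution each term sits over, and checking the type-indices line up across every rewrite. The one genuinely delicate step is the passage in item~4 that replaces an arbitrary $\tau$ by the unit $\eta$ — this is precisely where the hypothesis that $R$ is a CwF morphism is used, to rewrite $\tau$ as a restriction of an $RL$-image so that the single formula $\xi[\eps \circ L(\eta, t')]$ covers the general case.
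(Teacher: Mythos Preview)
Your proposal is correct and follows essentially the same approach as the paper: both define $\alpha_\sigma(t)$ via the equation $\alpha(\sigma,t) = (\alpha(\sigma), \alpha_\sigma(t))$ using that $R$ preserves context extension, derive items 1--3 from naturality of $\alpha$ on substitutions and the decomposition $R(\sigma,t) = (R\sigma, \ftrtm R t)$, and obtain item 4 by first invoking naturality in $\Gamma$ (for $\alpha\inv$) to replace $\tau$ by $\eta$ and then unfolding $\alpha\inv = \eps \circ L(\loch)$. The only cosmetic difference is that you first write the raw formula $\xi[\eps \circ L(\tau,t')]$ and then normalize $\tau$ to $\eta$, whereas the paper normalizes first; the content is identical.
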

\begin{proof}
	We have $\alpha : (L\Delta \to (\Gamma, \var x : T)) \cong (\Delta \to (R\Gamma, \ftrtm{R}{\var x} : RT))$. We will define $\alpha_\sigma(t)$ and $\alpha_{\alpha\inv(\tau)}(t')$ by the (equivalent) equations:
	\begin{equation}
		\alpha(\sigma, t / \var x) = (\alpha(\sigma), \alpha_\sigma(t) / \ftrtm R \var x),
		\qquad
		\alpha\inv(\tau, t' / \ftrtm R \var x) = (\alpha\inv(\tau), \alpha\inv_{\alpha\inv(\tau)}(t') / \var x).
	\end{equation}
	The first components of these equations are correct:
	\begin{align}
		\wkn{\ftrtm R \var x} \circ \alpha(\sigma, t / \var x)
		= \alpha(\wknvar x \circ (\sigma, t / \var x))
		= \alpha(\sigma), \\
		\wknvar x \circ \alpha\inv(\tau, t'/\ftrtm R \var x)
		= \alpha\inv(\wkn{\ftrtm R \var x} \circ (\tau, t' / \ftrtm R \var x))
		= \alpha\inv(\tau).
	\end{align}
	\begin{enumerate}
		\item This follows from naturality of the adjunction $\alpha$.
		\item Note that $R(\rho \subext) = (R \rho) \subext$. We have $\alpha_{\rho \sigma}(t) = \ftrvar R x[\alpha(\rho \sigma, t / \var x)] = \ftrvar R x[R \rho \subext \circ \alpha(\sigma, t / \var x)] = \ftrtm R x[\alpha(\sigma, t / \var x)] = \alpha_\sigma(t)$. Then naturality for $\alpha\inv$ holds because it is inverse to $\alpha$.
		\item We have $\alpha_\sigma(t) = \ftrtm R \var x[\alpha(\sigma, t / \var x)] = \ftrtm R \var x[(R\sigma, \ftrtm R t / \ftrtm R \var x) \circ \eta] = \ftrtm R t[\eta]$.
		\item We have $\alpha\inv_{\alpha\inv(\tau)}(t') = \alpha\inv_{\alpha\inv(\eta)}(t') = \var x[\alpha\inv(\eta, t' / \ftrtm R \var x)] = \var x[\eps \circ L(\eta, t' / \ftrtm R \var x)]$. If $L$ is a CwF morphism, then this reduces further to $\eps(\ftrtm{LR} \var x[L\eta, \ftrtm{L}{t'}/\ftrtm{LR} \var x]) = \eps(\ftrtm{L}{t'})$. \qedhere
	\end{enumerate}
\end{proof}
\begin{corollary}
	We have naturality rules as for ordinary adjunctions:
	\begin{equation}
		\begin{array}{r c l c r c l}
			\alpha(\tau \circ \sigma \circ L\rho)
			&=& R\tau \circ \alpha(\sigma) \circ \rho
			&\qquad \qquad&
			\alpha\inv(R\tau \circ \sigma \circ \rho)
			&=& \tau \circ \alpha\inv(\sigma) \circ L\rho
			\\
			\alpha(t[\sigma][L\rho])
			&=& (\ftrtm R t)[\alpha(\sigma)][\rho]
			&&
			\alpha\inv((\ftrtm R t)[\sigma][\rho])
			&=& t[\alpha\inv(\sigma)][L\rho]
			\\
			\alpha(\nu(s[L \rho])) &=& (R\nu)(\alpha(s))[\rho]
			&&
			\alpha\inv((R\nu)s[\rho]) &=& \nu(\alpha\inv(s)[L \rho])
			\\
			\alpha(\nu(\mu(\ftrtm L r))) &=& (R\nu)(\alpha(\mu)(r))
			&&
			\alpha\inv((R\nu)(\mu(r))) &=& \nu(\alpha\inv(\mu)(\ftrtm L r))
		\end{array}
	\end{equation}
	where $\rho, \sigma, \tau$ denote substitutions, $s, t$ denote terms and $\mu, \nu$ denote natural transformations.
\end{corollary}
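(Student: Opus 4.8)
The plan is to reduce everything to \cref{thm:adjunction-rules}. The first row of the display is nothing but the ordinary naturality of the adjunction $\alpha$, which was already recalled immediately above \cref{thm:adjunction-rules}, so there is nothing to prove there. For the remaining three rows I would first establish the left-hand (the $\alpha$) equations, and then obtain each right-hand ($\alpha\inv$) equation by applying $\alpha\inv$ to both sides of the corresponding left-hand equation and reparametrising the given substitutions; this is legitimate precisely because $\alpha$ and $\alpha\inv$ are mutually inverse by \cref{thm:adjunction-rules}.

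For the $\alpha$-equations, the uniform tool is the identity $\alpha(u) = (\ftrtm R u)[\eta]$ from \cref{thm:adjunction-rules}, i.e. ``$\alpha(-)$ is: apply the CwF morphism $R$, then restrict along the unit $\eta$''. Row~2 then falls out by unfolding $\ftrtm R{(t[\sigma][L\rho])} = (\ftrtm R t)[R\sigma][RL\rho]$ (functoriality of $R$), commuting the restriction along $\eta$ past $RL\rho$ using naturality of the unit ($RL\rho \circ \eta = \eta \circ \rho$), and recognising $R\sigma \circ \eta = \alpha(\sigma)$. Row~3 is the same computation with the extra ingredients from \cref{thm:nattrans-function}: that $R$ commutes with $\nu(-)$, i.e. $\ftrtm R{(\nu(t))} = (R\nu)(\ftrtm R t)$, and that $(R\nu)(-)$ commutes with restriction; combined with naturality of $\eta$ this turns $\alpha(\nu(s[L\rho]))$ into $(R\nu)(\alpha(s))[\rho]$. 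Row~4 is Row~3 specialised to $\rho = \id$ and to a term of the form $\mu(\ftrtm L r)$: one computes $\alpha(\mu(\ftrtm L r)) = (R\mu)(\ftrtm{RL}{r})[\eta] = (R\mu)(\eta(r)) = (R\mu \circ \eta)(r)$, using (since $L$ is here a CwF morphism) $\ftrtm R{(\ftrtm L r)} = \ftrtm{RL}{r}$, the identity $\eta(r) = (\ftrtm{RL}{r})[\eta]$ noted before \cref{thm:adjunction-rules}, and $(\nu'\mu')(t) = \nu'(\mu'(t))$ from \cref{thm:nattrans-function}; writing $\alpha(\mu) := R\mu \circ \eta$ --- the very same formula as $\alpha(\sigma) = R\sigma \circ \eta$ for substitutions --- this reads $\alpha(\mu(\ftrtm L r)) = \alpha(\mu)(r)$, whence Row~4.

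The $\alpha\inv$-column is then formal: for instance, in the second row, given $\sigma$ with $\alpha\inv(\sigma) = \sigma_0$, one applies $\alpha\inv$ to $\alpha(t[\sigma_0][L\rho]) = (\ftrtm R t)[\alpha(\sigma_0)][\rho] = (\ftrtm R t)[\sigma][\rho]$, and similarly for rows~3 and~4. I do not expect a real obstacle here: the only genuine content beyond bookkeeping is naturality of the unit $\eta$ (dually, of the co-unit $\eps$ on the $\alpha\inv$ side), which is exactly what permits a restriction $[L\rho]$ living ``inside $L$'' to be traded for the restriction $[\rho]$ living ``outside $R$''. The fiddly part is keeping straight over which context each term and substitution lives and checking that the iterated type-restrictions on the two sides literally coincide --- again an instance of the first-row naturality of $\alpha$.
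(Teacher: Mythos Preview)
Your proposal is correct and follows essentially the same approach as the paper: the paper's proof consists of the single sentence that the right column is equivalent to the left, the first left equation is already known, and the remaining left equations ``follow from $\alpha(t) = (\ftrtm R t)[\eta]$''. You have simply spelled out in detail how they follow, using exactly the ingredients the paper intends (functoriality of $R$, naturality of $\eta$, and the properties of \cref{thm:nattrans-function}).
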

\begin{proof}
	Each equation on the right is equivalent to its counterpart on the left. The first equation on the left is old news. The other equations follow from $\alpha(t) = (\ftrtm R t)[\eta]$.
\end{proof}

\section{Presheaf models}\label{sec:psh-transform}
In this section, we study the implications of functors, natural transformations and adjunctions between categories $\catV$ and $\catW$ for the CwFs $\widehat{\catV}$ and $\widehat{\catW}$.

\subsection{Lifting functors}
A functor $F : \catV \to \catW$ gives rise to a functor $\fpsh F : \widehat{\catW} \to \widehat{\catV} : \Gamma \mapsto \Gamma \circ F$.
\begin{theorem}
	For any functor $F : \catV \to \catW$, the functor $\fpsh F$ is a morphism of CwFs.
\end{theorem}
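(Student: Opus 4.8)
The plan is to supply the data of \cref{def:morphism-of-cwfs} with $F_\Ctx := \fpsh F$ and to check the required equations; since $\fpsh F$ does nothing but restrict presheaves and their dependent data along $F$, every verification is a short unfolding of definitions. First I would pin down how $\fpsh F$ acts on substitutions: a substitution $\sigma : \Delta \to \Gamma$ in $\widehat\catW$ is a natural transformation with components $\sigma_W : (\DSub W\Delta) \to (\DSub W\Gamma)$, and $\fpsh F\sigma : \fpsh F\Delta \to \fpsh F\Gamma$ is its whiskering with $F$, i.e. the natural transformation with component $\sigma_{FV} : (\DSub{FV}\Delta) \to (\DSub{FV}\Gamma)$ at $V \in \catV$ (recall $(\DSub V{\fpsh F\Gamma}) = (\DSub{FV}\Gamma)$). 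Naturality of $\fpsh F\sigma$ and functoriality of $\fpsh F$ on substitutions ($\fpsh F\id = \id$, $\fpsh F(\tau\sigma) = \fpsh F\tau \circ \fpsh F\sigma$) are immediate from naturality of $\sigma$ and functoriality of $F$.

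The key observation is that $F$ lifts to a functor $\fpsh F_\int : \int_\catV(\fpsh F\Gamma) \to \int_\catW\Gamma$ for every context $\Gamma$, sending an object $(V,\delta)$ — where $\delta : \DSub V{\fpsh F\Gamma}$, equivalently $\delta : \DSub{FV}\Gamma$ — to $(FV,\delta)$, and a morphism $(\vfi\mid\delta')$ to $(F\vfi\mid\delta')$; this respects the incidence condition, since $\delta = \delta'\vfi$ in $\fpsh F\Gamma$ is exactly $\delta = \delta'(F\vfi)$ in $\Gamma$, and it is functorial because $F$ is. One also checks that $\fpsh F_\int$ is natural in $\Gamma$, i.e. compatible with the reindexing functors induced by substitutions.

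Using $\fpsh F_\int$ I would define $F_\Ty$ and $F_\Tm$ by precomposition. For $\Gamma \sez T \type$, i.e. a functor on $(\int_\catW\Gamma)\op$, set $\fpsh F T$ to be $T$ reindexed along $\fpsh F_\int$; concretely $(\fpsh F T)\dsub\delta := T\dsub\delta$ with restriction $\loch\psub\vfi^{\fpsh F T} := \loch\psub{F\vfi}^T$. This is a dependent presheaf over $\fpsh F\Gamma$ since $T$ is one and $F$ preserves identities and composites, and naturality of $F_\Ty$ in $\Gamma$ — the equation $\fpsh F(T\sub\sigma) = (\fpsh F T)\sub{\fpsh F\sigma}$ — holds strictly, both sides having fibre $T\dsub{\sigma\delta}$ and restriction $\loch\psub{F\vfi}^T$ over $\delta$. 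Likewise, for $\Gamma \sez t : T$ set $\ftrtm{\fpsh F}{t}\dsub\delta := t\dsub\delta$ (for $\delta : \DSub{FV}\Gamma$); naturality of $t$ in the base, specialized to morphisms of the form $F\vfi$, shows $\ftrtm{\fpsh F}{t}$ is a term of $\fpsh F T$, and $\ftrtm{\fpsh F}{(t\sub\sigma)} = (\ftrtm{\fpsh F}{t})\sub{\fpsh F\sigma}$ again by unfolding.

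Finally I would check preservation of the terminal object and of context extension. For the empty context, $(\DSub V{\fpsh F()}) = (\DSub{FV}{()}) = \accol\emptysub$, so $\fpsh F() = ()$ on the nose. For context extension, $(\DSub V{\fpsh F(\Gamma.T)}) = (\DSub{FV}{\Gamma.T})$ is, by the concrete description in the presheaf CwF, the set of pairs $(\gamma,t)$ with $\gamma : \DSub{FV}\Gamma$ and $FV\Dsez t : T\dsub\gamma$, with restriction $(\gamma,t)\vfi = (\gamma\vfi, t\psub{F\vfi})$ — which is literally the concrete description of $(\DSub V{\fpsh F\Gamma.\fpsh F T})$. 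Hence $\fpsh F(\Gamma.T) = \fpsh F\Gamma.\fpsh F T$, and since $\pi$ and $\xi$ are in both cases the first and second projections of these pair-sets, $\fpsh F\pi = \pi$ and $\ftrtm{\fpsh F}{\xi} = \xi$ follow at once. There is no genuine obstacle: the argument is entirely definitional. The only things to be careful about are the variance bookkeeping in forming $\fpsh F_\int$ and $\fpsh F T$, and making sure that all the reindexing squares commute strictly rather than merely up to isomorphism — which they do precisely because restriction in a presheaf CwF is strictly functorial and $\fpsh F$ is pure restriction along $F$.
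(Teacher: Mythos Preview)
Your proposal is correct and follows essentially the same approach as the paper: both construct the functor $\int_\catV \fpsh F\Gamma \to \int_\catW \Gamma$ sending $(V,\gamma) \mapsto (FV,\gamma)$ and obtain $F_\Ty$, $F_\Tm$ by precomposition, then verify preservation of the terminal object and context extension by direct unfolding. The only cosmetic difference is that the paper introduces an explicit label $\fpshadj F$ to track the bookkeeping, whereas you work directly with the identification $(\DSub V{\fpsh F\Gamma}) = (\DSub{FV}\Gamma)$.
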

\begin{proof}
Throughout the proof, it is useful to think of $\fpsh F$ as being right adjoint to $F$. For that reason we will again add an ignorable label $\fpshadj F$ for readability.
\begin{enumerate}
	\item A context $\Gamma \in \widehat{\catW}$ is mapped to the context $\fpsh F \Gamma = \Gamma \circ F \in \Psh(\catV)$. A substitution $\sigma : \Delta \to \Gamma$ is mapped to a substitution $\fpsh F \sigma : \fpsh F \Delta \to \fpsh F \Gamma$ by functoriality of composition. Unpacking this and adding labels, we get:
	\begin{itemize}
		\item $(\DSub{V}{\fpsh F \Gamma}) = \set{\fpshadj F(\gamma)}{\gamma : \DSub{FV}{\Gamma}}$,
		\item For $\vfi : \PSub{V'}{V}$ and $\gamma : \DSub{V}{\fpsh F \Gamma}$, we have $\fpshadj F(\gamma) \circ \vfi = \fpshadj F(\gamma \circ F \vfi) : \DSub{V'}{\fpsh F \Gamma}$.
		\item For $\sigma : \Delta \to \Gamma$, we get $\fpsh F \sigma \circ \fpshadj F(\delta) = \fpshadj F(\sigma \circ \delta)$.
	\end{itemize}
	
	\item We easily construct a functor $\int_{\catV}\fpsh F \Gamma \to \int_{\catW} \Gamma$ sending $(V, \gamma)$ to $(FV, \gamma)$. Precomposing with this functor, yields a map $\Ty(\Gamma) \to \Ty(\fpsh F \Gamma) : T \mapsto \fpsh F T$.
	Let us unpack and label this construction. Given $(\Gamma \sez_{\widehat{\catW}} T \type)$, the type $(\fpsh F \Gamma \sez_{\widehat{\catV}} \fpsh F T \type)$ is defined by:
	\begin{itemize}
		\item Terms $V \Dsez \fpshadj F(t) : (\fpsh F T) \dsub{\fpshadj F(\gamma)}$ are obtained by labelling terms $FV \Dsez_{\widehat{\catW}} t : T \dsub \gamma$.
		\item $\fpshadj F(t) \psub \vfi = \fpshadj F(t \psub{F \vfi})$.
	\end{itemize}
	This is natural in $\Gamma$, because
	\begin{align}
		\fpsh F(T \sub \sigma) \dsub{\fpshadj F(\gamma)} &\cong T \sub \sigma \dsub \gamma = T \dsub{\sigma \gamma} \cong (\fpsh F T) \dsub{\fpshadj F(\sigma \gamma)} = (\fpsh F T) \sub{\fpsh F \sigma} \dsub{\fpshadj F(\gamma)},
	\end{align}
	where the isomorphisms become equalities if we ignore labeling.
	
	\item Given $\Gamma \sez_{\widehat{\catW}} t : T$, we define $\fpsh F \Gamma \sez_{\widehat{\catV}} \ftrtm{\fpsh F}{t} : \fpsh F T$ by setting $(\ftrtm{\fpsh F}{t})\dsub{\fpshadj F(\gamma)}$ equal to $\fpshadj F(t \dsub \gamma)$. This is natural in the domain $V$ of $\fpshadj F(\gamma)$:
	\begin{equation}
		(\ftrtm{\fpsh F} t) \dsub{\fpshadj F(\gamma)} \psub \vfi
		= \fpshadj F(t \dsub \gamma) \psub{\vfi}
		= \fpshadj F(t \dsub{\gamma \circ F\vfi})
		= (\ftrtm{\fpsh F} t) \dsub{\fpshadj F(\gamma \circ F \vfi)}
		= (\ftrtm{\fpsh F} t) \dsub{\fpshadj F(\gamma) \circ \vfi}.
	\end{equation}
	It is also natural in $\Gamma$ by the same reasoning as for types.
	
	\item The terminal presheaf over $\catW$ is automatically mapped to the terminal presheaf over $\catV$.
	
	\item It is easily checked that comprehension, $\pi$, $\xi$ and $(\cwfpair \loch \loch)$ are preserved on the nose if we assume that $\fpshadj F(\gamma, t) = (\fpshadj F(\gamma), \fpshadj F(t))$, e.g. by ignoring labels.\qedhere
\end{enumerate}
\end{proof}
\begin{proposition}
	For any functor $F : \catV \to \catW$, the morphism of CwFs $\fpsh F$ preserves all operators related to $\Sigma$-types, $\Weld$-types and identity types.
\end{proposition}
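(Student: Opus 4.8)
The plan is to exploit the fact, made explicit in the proof of the preceding theorem, that $\fpsh F$ acts on a context, type or term purely by \emph{relabelling and precomposition along $F$}: for a presheaf $\Gamma$ over $\catW$ and $\gamma : \DSub{FV}{\Gamma}$, the corresponding stage-$V$ element of $\fpsh F\Gamma$ is $\fpshadj F(\gamma)$, with $(\fpsh F T)\dsub{\fpshadj F(\gamma)} \cong T\dsub\gamma$, $(\ftrtm{\fpsh F}{t})\dsub{\fpshadj F(\gamma)} = \fpshadj F(t\dsub\gamma)$, restriction along $\vfi : V'\to V$ corresponding to restriction along $F\vfi$, and $\emptysub$, context extension, $\pi$, $\xi$, $(\loch,\loch)$ preserved on the nose. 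In other words $\fpsh F$ commutes with the three primitive operations — evaluating a type or term at a stage, restricting along a morphism, and case distinction on the value of a proposition — out of which the constructions of $\Sigma$-types, identity types and $\Weld$-types are built, in contrast with $\Pi$-types and universes, whose fibres refer to the whole representable $\yoneda W$. So in each case the proof is to fix a stage $\fpshadj F(\gamma)$ over $\fpsh F\Gamma$ (equivalently $\gamma : \DSub{FV}{\Gamma}$) and unfold both sides, checking that $\fpsh F$ transports the $\widehat{\catW}$-side definition of the type, of its restriction maps, of the term-level operators, and of the defining equations, to the $\widehat{\catV}$-side one.

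For $\Sigma$-types, both $(\fpsh F(\Sigma A B))\dsub{\fpshadj F(\gamma)}$ and $(\Sigma (\fpsh F A)(\fpsh F B))\dsub{\fpshadj F(\gamma)}$ unfold to the set of pairs $(a,b)$ with $a : A\dsub\gamma$ and $b : B\dsub{\gamma,a}$, using that $\fpsh F$ preserves context extension, so that $(\fpshadj F(\gamma), a) = \fpshadj F(\gamma, a)$ and hence $(\fpsh F B)\dsub{\fpshadj F(\gamma),a} = B\dsub{\gamma,a}$; the restriction maps agree componentwise, and likewise $\ftrtm{\fpsh F}{\pairctu a b} = \pairctu{\ftrtm{\fpsh F}a}{\ftrtm{\fpsh F}b}$, $\ftrtm{\fpsh F}{(\fst\,p)} = \fst(\ftrtm{\fpsh F}p)$, $\ftrtm{\fpsh F}{(\snd\,p)} = \snd(\ftrtm{\fpsh F}p)$. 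For identity types, $(a\idtp A b)\dsub\gamma$ is $\accol\star$ or $\eset$ according to whether $a\dsub\gamma = b\dsub\gamma$; since $\ftrtm{\fpsh F}a\dsub{\fpshadj F(\gamma)} = \fpshadj F(a\dsub\gamma)$ and $\fpshadj F$ is an injective relabelling, $a\dsub\gamma = b\dsub\gamma$ iff $\ftrtm{\fpsh F}a\dsub{\fpshadj F(\gamma)} = \ftrtm{\fpsh F}b\dsub{\fpshadj F(\gamma)}$, so $\fpsh F(a\idtp A b) = \ftrtm{\fpsh F}a\idtp{\fpsh F A}\ftrtm{\fpsh F}b$; $\refl$ goes to $\refl$ (both are $\star$ pointwise) and $\J$ goes to $\J$ because on both sides the eliminator reduces pointwise to the $c$-branch, $\J(\ldots)\dsub\gamma = c\dsub\gamma$.

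For $\Weld$-types one repeats the same bookkeeping, now with a case distinction on $P\dsub\gamma$ at every step (the type, its restriction maps, the constructor, the eliminator, and the defining equations). On the branch $P\dsub\gamma = \eset$, $\Omega\dsub\gamma$ is $\set{\dweld\,a}{a : A\dsub\gamma}$ on both sides and everything is a pure relabelling; on the branch $P\dsub\gamma = \accol\star$ one falls back onto $\fpsh F T$, already handled. The one place where something more than relabelling happens is the cross-case restriction $(\dweld\,a)\psub\vfi = f\dsub{\gamma\vfi,\star}\cdot a\psub\vfi$ (and the star-branch of the constructor, $w\dsub\gamma = f\dsub{\gamma,\star}\cdot a\dsub\gamma$), which invoke $\Pi$-application, and $\fpsh F$ does \emph{not} preserve $\Pi$-types. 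Here one takes the image of the hypothesis $f : A[\pi]\to T$ to be the canonical term $g := \lambda(\ftrtm{\fpsh F}{\ap\,f}) : (\fpsh F A)[\pi]\to\fpsh F T$ over $\fpsh F\Gamma.\fpsh F P$ — available precisely because $\fpsh F$ preserves context extension, so that $\ftrtm{\fpsh F}{\ap\,f}$ is a term over $\fpsh F\Gamma.\fpsh F P.(\fpsh F A)[\pi]$ — and checks the single identity $g\dsub{\fpshadj F(\delta),\star}\cdot\fpshadj F(a) = \fpshadj F(f\dsub{\delta,\star}\cdot a)$: indeed $\ap\,g = \ftrtm{\fpsh F}{\ap\,f}$, then $(\ap\,g)\dsub{\fpshadj F(\delta),\star,\fpshadj F(a)} = \fpshadj F((\ap\,f)\dsub{\delta,\star,a})$ by the description of $\ftrtm{\fpsh F}{(\loch)}$, and $(\ap\,h)\dsub{\delta',a} = h\dsub{\delta'}\cdot a$ on both sides. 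Granting this, the $\Weld$-type, its restriction maps, the constructor and the eliminator are all transported by $\fpsh F$ (recalling that restriction along $\psi$ over $\fpsh F\Gamma$ is restriction along $F\psi$ over $\Gamma$), and each of the defining equations of $\Weld$ holds because $\fpsh F$ turns the $\widehat{\catW}$-side proof of it into the $\widehat{\catV}$-side proof.

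The only step requiring genuine care — and hence the main obstacle — is this last point for $\Weld$: checking that the function application appearing in the definition of $\Weld$ is preserved \emph{even though} $\fpsh F$ does not preserve $\Pi$-types. The reason it goes through is that $\Weld$ only ever applies $f$ to an argument living at the same stage as the point of evaluation, so only the ``diagonal'' behaviour of $f$ — equivalently $\ap\,f$, which a CwF morphism does carry along — is used; this is also exactly why $\Glue$, whose type and constructor genuinely refer to $\yoneda W.P[\gamma]$, is absent from the statement. Everything else is a routine, mechanical unfolding, made immediate by the relabelling principle of the first paragraph.
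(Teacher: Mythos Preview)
Your proof is correct and follows essentially the same approach as the paper: both rest on the observation that $\fpsh F$ acts by relabelling defining terms at each stage, and that the fibres, restriction maps, and term-level operators of $\Sigma$-, identity-, and $\Weld$-types are assembled purely from such defining terms (in contrast to $\Pi$, $\Glue$, and the universe). The paper compresses this into a single sentence, whereas you unfold each case and, in particular, make explicit the one point the paper elides — that the function application $f\dsub{\gamma,\star}\cdot a$ appearing in the $\Weld$ construction is really $(\ap f)\dsub{\gamma,\star,a}$, a defining term of $T$, so that taking $g := \lambda(\ftrtm{\fpsh F}{\ap\,f})$ as the image of $f$ makes everything go through; this is a useful clarification rather than a different argument.
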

\begin{proof}
	The defining terms of each of these types, are built from of other defining terms. This is in constrast with $\Pi$- and $\Glue$-types, where defining terms also contain non-defining terms, and the universe, whose defining terms even contain types. As $\fpsh F$ merely reshuffles defining terms, it respects each of these operations (ignoring labels).
\end{proof}

The reason we use a different notation for terms and for types ($\ftrtm{\fpsh F} t$ versus ${\fpsh F} T$) is to avoid confusion when it comes to encoding and decoding types: $\fpsh F$ acts very differently on types and on terms of the universe. To begin with, ${\fpsh F} \uni{}$ will typically not be the universe of $\widehat{\catV}$. Indeed, its primitive terms still originate as types over the primitive contexts of $\catW$. So if $\Gamma \sez_{\widehat{\catW}} A : \uni{}$, then $\ftrtm{\fpsh F} A$ is a representation in $\widehat{\catV}$ of a type from $\widehat{\catW}$. In contrast, $\fpsh F(\El\,A)$ is a type in $\widehat{\catV}$. Put differently still, when applied to an element of the universe, ${\fpsh F}$ reorganizes the presheaf structure of the universe. When applied to a type $T$, ${\fpsh F}$ reorganizes the presheaf structure of $T$.

\begin{example}\label{ex:presheaf-sets}
	Let $\pointcat$ be the terminal category with just a single object $()$ and only the identity morphism. It is easy to see that $\widehat{\pointcat} \cong \Set$. Meanwhile, let $\RGcat$ be the base category with objects $()$ and $(\var i : \IE)$ and maps non-freely generated by $() : (\var i : \IE) \to ()$ and $(0/\var i), (1/\var i) : () \to (\var i : \IE)$, such that $\widehat \RGcat$ is the category of reflexive graphs. We have a functor $F : \pointcat \to \RGcat$ sending $()$ to $()$. This constitutes a functor $\fpsh F : \widehat \RGcat \to \widehat \pointcat$ sending a reflexive graph to its set of nodes. It is not surprising that this functor is sufficiently well-behaved to be a CwF morphism.
	
	This example also illustrates that the universe is not preserved. The nodes of the universe in $\widehat \RGcat$ are reflexive graphs. Thus $\fpsh F \uni{}$ will be the set of reflexive graphs. Then if $\Gamma \sez_{\widehat \RGcat} A : \uni{}$ in $\widehat \RGcat$, then $\fpsh F \Gamma \sez_{\widehat \pointcat} \ftrtm{\fpsh F}{A} : \fpsh F \uni{}$ sends nodes of $\Gamma$ to reflexive graphs. The edges \emph{between} types are forgotten. However, the type $\fpsh F \Gamma \sez_{\widehat \pointcat} \fpsh F (\El\,A) \type$ is simply the dependent set of nodes of $\El\,A$. Here, the edges \emph{within} $\El\,A$ are also forgotten.
	
	In general, from $\var X : \uni{} \sez \El\,\var X \type$, we can deduce $\ftrtm{\fpsh F}{\var X} : \fpsh F \uni{} \sez \fpsh F(\El\,\var X) \type$.
\end{example}

\subsection{Lifting natural transformations}
Assume we have functors $F, G : \catV \to \catW$ and a natural transformation $\nu : F \to G$. Then we get morphisms of CwFs $\fpsh F, \fpsh G : \Psh(\catW) \to \Psh(\catV)$ and, since presheaves are contravariant, a natural transformation $\fpsh \nu : \fpsh G \to \fpsh F$.

Let us see how $\fpsh \nu$ works. Pick $\fpshadj G(\gamma) : \DSub{V}{\fpsh G \Gamma}$. Then $\gamma : \DSub{GV}{\Gamma}$ and $\gamma \circ \nu : \DSub{FV}{\Gamma}$. Now $\fpsh \nu \circ \fpshadj G(\gamma) = \fpshadj F(\gamma \circ \nu)$. This is natural because
\begin{equation}
	\fpsh \nu \circ (\fpshadj G(\gamma) \circ \vfi)
	= \fpsh \nu \circ \fpshadj G(\gamma \circ G\vfi)
	= \fpshadj F(\gamma \circ G\vfi \circ \nu)
	= \fpshadj F(\gamma \circ \nu \circ F\vfi)
	= \fpshadj F(\gamma \circ \nu) \circ \vfi
	= (\fpsh \nu \circ \fpshadj F(\gamma)) \circ \vfi.
\end{equation}

\subsection{Lifting adjunctions}
\begin{proposition}\label{thm:lifting-adjunctions}
	Assume we have functors $L : \catV \to \catW$ and $R : \catW \to \catV$ such that $\alpha : L \dashv R$ with unit $\eta : \Id \to RL$ and co-unit $\eps : LR \to \Id$. Then we get $\fpsh L : \widehat \catW \to \widehat \catV$ and $\fpsh R : \widehat \catV \to \widehat \catW$ and we have $\fpsh \alpha : \fpsh L \dashv \fpsh R$ with unit $\fpsh \eps : \Id \to \fpsh R \fpsh L$ and co-unit $\fpsh \eta : \fpsh L \fpsh R \to \Id$. Moreover, $\fpsh L \circ \yoneda \cong \yoneda \circ R : \catW \to \widehat \catV$.
\end{proposition}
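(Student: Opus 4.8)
The plan is to obtain everything from a single source: the adjunction $L \dashv R$ at the level of base categories, transported to presheaf categories by the contravariant $2$-functor $(\loch)^\dagger$. Since $\fpsh\loch$ reverses direction, the roles of unit and co-unit swap, which explains why the statement names $\fpsh\eps$ as the unit and $\fpsh\eta$ as the co-unit. The cleanest route is to define the lifted adjunction via the hom-set bijection directly, rather than by first exhibiting unit and co-unit and then checking the triangle identities; this keeps the bookkeeping minimal. Concretely, for $\Gamma \in \widehat\catV$ and $\Delta \in \widehat\catW$, I would exhibit a natural bijection
\begin{equation*}
	(\fpsh L \Delta \to \Gamma) \cong (\Delta \to \fpsh R \Gamma).
\end{equation*}

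First I would unfold both sides. A substitution $\fpsh L \Delta \to \Gamma$ is, by the Yoneda-style description of contexts in \cref{ch:psh}, a natural family assigning to each $\fpshadj L(\delta) : \DSub{V}{\fpsh L \Delta}$ (i.e.\ each $\delta : \DSub{LV}{\Delta}$) an element of $\DSub{V}{\Gamma}$; whereas a substitution $\Delta \to \fpsh R \Gamma$ assigns to each $\epsilon : \DSub{W}{\Delta}$ an element of $\DSub{W}{\fpsh R \Gamma} = \DSub{RW}{\Gamma}$. The bijection is then forced by the base adjunction: transpose along $\alpha : (LV \to W) \cong (V \to RW)$ applied to the indexing primitive substitutions, and use the naturality clauses of $\alpha$ recorded in the last display of \cref{sec:cwf-transform} to check that a natural family on one side goes to a natural family on the other. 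Once this bijection is in place, reading off $\fpsh\eps = \alpha(\id)$ and $\fpsh\eta = \alpha\inv(\id)$ and comparing with the explicit formulas for $\fpsh\nu$ from the previous subsection (applied to $\eta$ and $\eps$ respectively) identifies the unit and co-unit as claimed; the triangle identities follow from those for $\alpha$ by the usual abstract argument, so I would not spell them out.

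For the final clause $\fpsh L \circ \yoneda \cong \yoneda \circ R$, I would compute both sides pointwise. By definition of $\fpsh L$, for $W \in \catW$ we have $(\DSub V {\fpsh L (\yoneda W)}) = (\DSub {LV}{\yoneda W}) = (\PSub{LV}{W})$, while $(\DSub V {\yoneda(RW)}) = (\PSub V{RW})$; these are naturally isomorphic in $V$ precisely by the adjunction bijection $\alpha$, and naturality in $W$ follows from naturality of $\alpha$ in its second argument. This is really just the statement that a left adjoint preserves representables in the presheaf-lifted sense, and it is immediate once the bijection above is set up.

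The main obstacle — though it is routine rather than deep — is the bookkeeping in verifying that the transposition is compatible with restriction, i.e.\ that it sends natural families to natural families on both sides and that the two directions are mutually inverse; here one must carefully use $\alpha(\tau \circ \sigma \circ L\rho) = R\tau \circ \alpha(\sigma) \circ \rho$ with the right choices of $\rho,\tau$ (typically taking $\tau$ or $\rho$ to be an identity, or a structure map of $\Delta$ or $\Gamma$). A secondary subtlety is purely notational: keeping the ignorable labels $\fpshadj L(\loch)$ and $\fpshadj R(\loch)$ straight so that the swap of unit and co-unit is visibly correct. Neither of these requires new ideas beyond the naturality of $\alpha$ and the description of $\fpsh\loch$ already established.
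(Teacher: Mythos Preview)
Your proposal is correct but proceeds differently from the paper. You argue via the hom-set bijection
\[
(\fpsh L \Delta \to \Gamma) \;\cong\; (\Delta \to \fpsh R \Gamma),
\]
essentially invoking that $\fpsh{(\loch)}$ is a contravariant $2$-functor and hence sends adjunctions to adjunctions (with unit and co-unit swapped). The paper instead takes the other standard characterisation: it \emph{directly checks the triangle identities} $\fpsh\eta \circ \fpsh L \fpsh\eps = \id_{\fpsh L}$ and $\fpsh R \fpsh\eta \circ \fpsh\eps = \id_{\fpsh R}$ by picking a generic defining substitution $\fpshadj L(\gamma)$ (resp.\ $\fpshadj R(\delta)$) and computing, using the explicit formula $\fpsh\nu \circ \fpshadj G(\gamma) = \fpshadj F(\gamma \circ \nu)$ established just before, together with the base-level triangle identities $\eps L \circ L\eta = \id$ and $R\eps \circ \eta R = \id$. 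Your route is more conceptual and avoids the element chase; the paper's route is shorter on paper (four short display lines) and has the advantage that it exercises exactly the label-tracking conventions $\fpshadj L$, $\fpshadj R$ used throughout the rest of the text. For the final clause $\fpsh L \circ \yoneda \cong \yoneda \circ R$, you and the paper do the same pointwise computation via $\alpha : (\PSub{LV}{W}) \cong (\PSub{V}{RW})$.

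One small caution: your phrase ``transpose along $\alpha$ applied to the indexing primitive substitutions'' is a little loose, since there is no single primitive substitution being transposed; the actual transport is $\tau \mapsto \paren{\delta \mapsto \tau_{LV}(\delta) \circ \eta_V}$ and its inverse via $\eps$, which you correctly flag as the bookkeeping step.
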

So $\fpsh \loch$ is an operation that takes the right adjoint of a functor and immediately extends it to the entire presheaf category.
\begin{proof}
	To prove the adjunction, it is sufficient to prove that $\fpsh \eta \circ \fpsh L \fpsh \eps = \id_{\fpsh L}$ and $\fpsh R \fpsh \eta \circ \fpsh \eps = \id_{\fpsh R}$. We need to assume that $\fpshadj{\Id} = \id$ and $\fpshadj{FG} = \fpshadj{G} \circ \fpshadj{F}$.
	
	Pick $\fpshadj L(\gamma) : \DSub{V}{\fpsh L \Gamma}$. Then
	\begin{align}
		\fpsh \eta \circ \fpsh L \fpsh \eps \circ \fpshadj L(\gamma)
		&= \fpsh \eta \circ \fpshadj L(\fpsh \eps \circ \gamma)
		= \fpsh \eta \circ \fpshadj L(\fpshadj{R} (\fpshadj{L}(\gamma \circ \eps))) \\ \nn
		&= \fpshadj{L}(\gamma \circ \eps) \circ \eta
		= \fpshadj{L}(\gamma \circ \eps \circ L\eta)
		= \fpshadj{L}(\gamma).
	\end{align}
	
	Similarly, pick $\fpshadj R(\delta) : \DSub{W}{\fpsh R \Gamma}$. Then
	\begin{align}
		\fpsh R \fpsh \eta \circ \fpsh \eps \circ \fpshadj R(\delta)
		&= \fpsh R \fpsh \eta \circ \fpshadj R(\fpshadj L(\fpshadj R(\delta) \circ \eps))
		= \fpsh R \fpsh \eta \circ \fpshadj R(\fpshadj L(\fpshadj R(\delta \circ R\eps))) \\ \nn
		&= \fpshadj R(\fpsh \eta \circ \fpshadj L(\fpshadj R(\delta \circ R\eps)))
		= \fpshadj R(\delta \circ R \eps \circ \eta) = \fpshadj R(\delta).
	\end{align}
	To see the isomorphism:
	\begin{equation}
		(\DSub V {\fpsh L \yoneda W})
		\cong (\DSub{LV}{\yoneda W})
		= (\PSub{LV}{W})
		\cong (\PSub{V}{RW})
		= (\DSub{V}{\yoneda RW}).
	\end{equation}
	This is clearly natural.
\end{proof}

\subsection{The left adjoint to a lifted functor}\label{sec:left-adjoint-to-lifted-functor}
Assume a functor $F : \catV \to \catW$. Under reasonable conditions, one finds a general construction of a \emph{functor} $\lpsh F : \widehat \catV \to \widehat \catW$ that is left adjoint to $\fpsh F : \widehat \catW \to \widehat \catV$ \cite[00VC]{stacks-project17}. Although we will need such a left adjoint at some point, the general construction is overly complicated for our needs. Therefore, we will construct that functor ad hoc when we need it.
For now, we simply assume its existence and prove a lemma and some bad news:
\begin{lemma}\label{thm:yoneda-and-left-adjoint}
	Suppose we have a functor $\lpsh F : \widehat \catV \to \widehat \catW$ and a functor $F : \catV \to \catW$, such that $\lpsh F \dashv \fpsh F$. Then $\lpsh F \circ \yoneda \cong \yoneda \circ F$.
\end{lemma}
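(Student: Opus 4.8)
The plan is to reduce the statement to the Yoneda lemma, applied once in $\widehat{\catV}$ and once in $\widehat{\catW}$, with the adjunction bridging the two. Fix a primitive context $V$ of $\catV$. For an arbitrary context $\Gamma$ of $\widehat{\catW}$, I would first assemble the chain of isomorphisms
\[
	(\lpsh F(\yoneda V) \to \Gamma)
	\;\cong\; (\yoneda V \to \fpsh F \Gamma)
	\;\cong\; (\DSub V {\fpsh F \Gamma})
	\;=\; (\DSub{FV}{\Gamma})
	\;\cong\; (\yoneda (FV) \to \Gamma),
\]
where the first isomorphism is the adjunction $\lpsh F \dashv \fpsh F$, the second and fourth are the isomorphism $(\DSub{V'}{\Delta}) \cong (\yoneda{V'} \to \Delta)$ from the Yoneda-embedding subsection (taken in $\widehat{\catV}$ and in $\widehat{\catW}$ respectively), and the middle equality is the defining formula $\fpsh F \Gamma = \Gamma \circ F$ (up to the ignorable labels $\fpshadj F$).

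Since this composite isomorphism $(\lpsh F(\yoneda V) \to \loch) \cong (\yoneda(FV) \to \loch)$ of covariant hom-functors $\widehat{\catW} \to \Set$ is natural in $\Gamma$, the Yoneda lemma (in the form: objects with naturally isomorphic covariant hom-functors are canonically isomorphic) yields an isomorphism $\theta_V : \lpsh F(\yoneda V) \cong \yoneda(FV)$ in $\widehat{\catW}$, obtained by transporting $\id_{\lpsh F(\yoneda V)}$ through the chain. To upgrade this to an isomorphism of functors $\catV \to \widehat{\catW}$, I would check that every step of the chain is moreover natural in $V$: the adjunction $\alpha$ is natural in both arguments, $\yoneda$ and $\lpsh F$ are functors, and the Yoneda isomorphism is natural. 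Hence the whole chain is a natural isomorphism of functors $\catV\op \times \widehat{\catW} \to \Set$, which forces the family $(\theta_V)_V$ to be natural in $V$, so $\lpsh F \circ \yoneda \cong \yoneda \circ F$.

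The argument is essentially formal and I do not expect a real obstacle; the only points that need care are keeping the two presheaf categories apart when invoking the Yoneda isomorphism (the second instance lives in $\widehat{\catV}$, the fourth in $\widehat{\catW}$), and tracking naturality in $V$ all the way through — it is exactly this that promotes the pointwise isomorphism $\theta_V$ to an isomorphism of functors. One may note that this is the abstract counterpart of the isomorphism $\fpsh L \circ \yoneda \cong \yoneda \circ R$ proved in \cref{thm:lifting-adjunctions}, now with a generic left adjoint $\lpsh F$ in place of a lifted one.
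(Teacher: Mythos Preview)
Your proposal is correct and follows essentially the same route as the paper: both set up the chain $(\lpsh F(\yoneda V) \to \Gamma) \cong (\yoneda V \to \fpsh F \Gamma) \cong (\DSub{FV}{\Gamma}) \cong (\yoneda(FV) \to \Gamma)$ and then extract the desired isomorphism via Yoneda. The only cosmetic difference is that the paper spells out the Yoneda step explicitly (transporting $\id$ through the chain in each direction and using naturality in $\Gamma$ to check these are mutually inverse), whereas you invoke the abstract principle directly and additionally emphasise naturality in $V$ to upgrade to an isomorphism of functors.
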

\begin{proof}
	We have a chain of isomorphisms, natural in $W$ and $\Gamma$:
	\begin{equation}
		(\lpsh F \yoneda W \to \Gamma) \cong (\yoneda W \to \fpsh F \Gamma) \cong (\DSub{F W}{\Gamma}) \cong (\yoneda F W \to \Gamma).
	\end{equation}
	Call the composite of these isomorphisms $f$. Then we have $f(\id_{\lpsh F \yoneda}) : \yoneda F \to \lpsh F \yoneda$ and $f\inv(\id_{\yoneda F}) : \lpsh F \yoneda \to \yoneda F$. By naturality in $\Gamma$, we have:
	\begin{equation}
		f(\id) \circ f\inv(\id) = f\inv(f(\id) \circ \id) = f\inv(f(\id)) = \id,
	\end{equation}
	\begin{equation}
		f\inv(\id) \circ f(\id) = f(f\inv(\id) \circ \id) = f(f\inv(\id)) = \id. \qedhere
	\end{equation}
\end{proof}
\begin{proposition}\label{thm:left-adjoint-not-cwf}
	The functor $\lpsh F : \widehat \catV \to \widehat \catW$ is not in general a morphism of CwFs.
\end{proposition}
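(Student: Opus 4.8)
The plan is to prove this negative statement by a counterexample rather than a construction: I will exhibit a functor $F : \catV \to \catW$ for which $\lpsh F$ violates one of the defining clauses of \cref{def:morphism-of-cwfs}. The cleanest clause to attack is preservation of the empty context, that is, of the terminal object of the context category: a morphism of CwFs $G$ must satisfy $G() = ()$. The underlying reason it fails is that $\lpsh F$ is a \emph{left} adjoint, hence preserves all colimits — in particular it sends the empty (initial) presheaf of $\widehat{\catV}$ to the empty presheaf of $\widehat{\catW}$ — but there is no reason for it to preserve terminal objects as well, and generically it does not.

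Concretely, take $\catW = \pointcat$ and let $\catV$ be the discrete category on two objects, so that $\widehat{\catW} \cong \Set$ and $\widehat{\catV} \cong \Set \times \Set$ with the componentwise CwF structure, and let $F : \catV \to \catW$ be the unique functor. Reading off $(\DSub V {\fpsh F \Gamma}) = (\DSub{FV}{\Gamma})$ at the two objects of $\catV$, one sees that $\fpsh F : \Set \to \Set \times \Set$ is the diagonal $S \mapsto (S,S)$; its left adjoint is therefore the binary coproduct $\lpsh F : \Set \times \Set \to \Set$, $(S,T) \mapsto S \sqcup T$ (indeed $\Hom(S \sqcup T, U) \cong \Hom(S,U) \times \Hom(T,U) \cong \Hom((S,T),(U,U))$, naturally in $U$). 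The terminal object of $\widehat{\catV}$ is the pair of singletons $(1,1)$, whereas $\lpsh F(1,1) \cong 1 \sqcup 1$ is a two-element set, which is not terminal in $\Set \cong \widehat{\catW}$. So $\lpsh F$ does not send the terminal presheaf of $\widehat{\catV}$ to that of $\widehat{\catW}$, i.e. $\lpsh F() \neq ()$; and since this is a genuine non-isomorphism of presheaves rather than a discrepancy that could be absorbed by relabelling in the way strict CwF-morphism equations are usually arranged, no CwF-morphism structure whatsoever can exist on $\lpsh F$. As the proposition only asserts failure ``in general'', one such example suffices.

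There is essentially no hard step here; the only care needed is in confirming that $\fpsh F$ really is the diagonal (immediate from the displayed description of $\fpsh F$ as precomposition with $F$) and in observing that the obstruction is an honest non-isomorphism, so it cannot be patched. One may equally take $\catV = \emptyset$, whose presheaf category is the terminal category, and note that $\lpsh F$ must send its unique object — terminal there — to the empty presheaf of $\widehat{\catW}$, again contradicting $\lpsh F() = ()$. It is worth remarking, for contrast, that the failure genuinely depends on the choice of $F$: when $\fpsh F$ is a ``forgetful''-style functor, for instance the one sending a reflexive graph to its set of nodes as in \cref{ex:presheaf-sets}, its left adjoint — the corresponding ``discrete'' functor — does preserve the terminal object, and can in fact be checked to extend to a morphism of CwFs.
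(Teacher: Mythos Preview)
Your argument is correct: taking $F$ to be the unique functor from a two-object discrete category to $\pointcat$, the left adjoint $\lpsh F$ is the binary coproduct on $\Set$, which visibly sends the terminal object $(1,1)$ to a two-element set. Since \cref{def:morphism-of-cwfs} demands $F_\Ctx() = ()$ on the nose, this kills any CwF-morphism structure on $\lpsh F$, and one example suffices for a ``not in general'' claim.

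The paper, however, chooses a different and more pointed counterexample. It takes $G : \RGcat \to \pointcat$, so that $\fpsh G$ is the discrete-reflexive-graph functor and $\lpsh G$ is the connected-components functor. This $\lpsh G$ \emph{does} preserve the terminal object (the terminal reflexive graph has one component), so your style of obstruction is invisible here; instead the paper exhibits a dependent type over a two-node context on which $\lpsh G$ fails to commute with substitution, i.e.\ $\lpsh G(T[\gamma]) \neq \lpsh G(T[\gamma'])$ although $\lpsh G\gamma = \lpsh G\gamma'$. The payoff is twofold: first, this is exactly the reflexive-graph avatar of the functor $\cohpi$ that the paper later needs and explicitly flags as not a CwF morphism, so the counterexample is the one that actually matters for the development; second, it shows that the obstruction persists even when the ``easy'' clauses (terminal object, and in fact context extension too) are satisfied --- the failure is genuinely about how $\lpsh G$ interacts with dependent types, not merely about limits. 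Your argument, by contrast, is shorter and needs no discussion of types at all, which is a virtue if one only wants \emph{some} counterexample; but it leaves open whether the specific left adjoints arising later in the paper might still be CwF morphisms, a question the paper's proof settles negatively.
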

\begin{proof}
	Consider the only functor $G : \RGcat \to \pointcat$; it sends $()$ and $(\var i : \IE)$ to $()$. Then $\fpsh G : \widehat \pointcat \to \widehat \RGcat$ sends a set $S$ to the discrete reflexive graph on $S$.
	
	Its left adjoint $\lpsh G$ sends a reflexive graph $\Gamma$ to its colimit. That is, it identifies all edge-connected nodes. Now consider a type $\Gamma \sez_{\widehat \RGcat} T \type$:
	\begin{equation}
		\xymatrix{
			\gamma \ar@{-}[d] &&& t \ar@{-}[dl] \ar@{-}[dr]
			\\
			\gamma' && t' && t''
		}
	\end{equation}
	That is: $\Gamma$ contains two nodes and an edge connecting them (as well as constant edges). $T$ has one node above $\gamma$ and two nodes above $\gamma'$ and they are connected as shown.
	
	There are two substitutions from $\yoneda O$, namely $\gamma$ and $\gamma' : \yoneda O \to \Gamma$. Clearly, $\lpsh G(T[\gamma]) \neq \lpsh G(T[\gamma'])$. But $\lpsh G (\gamma) = \lpsh G (\gamma')$. Thus, $\lpsh G$ cannot preserve substitution in the sense that $\lpsh G(T[\sigma])$ is equal to $\lpsh GT[\lpsh G \sigma]$.
\end{proof}

\chapter{Bridge/path cubical sets}\label{ch:bpcubecat}
In this chapter, we move from the general presheaf model to the category $\widehat{\cubecat}$ of cubical sets and the category $\widehat{\bpcubecat}$ of bridge/path cubical sets. Throughout, we assume the existence of an infinite alphabet $\aleph$ of variable names, as well as a function $\fresh : \Pi(A \subseteq \aleph).(\aleph \setminus A)$ that picks a fresh variable for a given set of variables $A$.

\section{The category of cubes}\label{sec:cubecat}
In this section, we define the category of cubes $\cubecat$; presheaves over this category are called \textdef{cubical sets}. The reason we are interested in cubical sets, is that they generalize reflexive graphs: they contain points and edges, but also edges between edges (squares), edges between squares (cubes), etc. Imagine we have a type $\IE$ that contains two points, connected by an edge (as opposed to a bridge or a path). Then an $n$-dimensional cube is a value that ranges over $n$ variables of type $\IE$. For this reason, we define a \textdef{cube} as a finite set $W \subseteq \aleph$. We write $()$ for the 0-dimensional cube, $(W, \ctxedge{\var i})$ for $W \uplus \accol{\var i}$, implying that $\var i \not\in W$, and similarly $(V, W)$ for $V \uplus W$.

A face map $\vfi : \PSub V W$ assigns to every variable $\var i \in W$ a value $\var i \psub \vfi$ which is either 0, 1 (up to $n - 1$ for $n$-ary parametricity), or a variable in $V$. We use common substitution notation to denote face maps. So $\var i \psub{()} = \var i$, while $\var i \psub{\vfi, t / \var i} = t$. We write $\vfi = (\psi, f(\var i)/\var i \in W') : \PSub V W$ to denote that $\var i \psub \vfi = f(\var i)$ for all $\var i \in W' \subseteq W$. To emphasize that a face map does not use a variable $\var i$, we write $\vfi = (\psi, \facewkn{\var i})$. We make sure that different clauses in the same substitution do not conflict; hence we need no precedence rules.

Then a cubical set $\Gamma$ contains, for every cube $W$, a set of $|W|$-dimensional cubes $\DSub W \Gamma$. Every such cube has $2^{|W|}$ vertices, extractable using the face maps $(\epsilon(\var i)/\var i \in M) : \PSub \eset W$ for all $\epsilon : W \to \accol{0, 1}$. By substituting 0 or 1 for only some variables, we obtain the sides of a cube. We can create flat (\textdef{degenerate}/\textdef{constant}) cubes by not using variables, e.g. $(\facewkn{\var i}) : \PSub {(M, \ctxedge{\var i})} M$. Cubes also have diagonals, e.g. $(\var i/\var j) : \PSub{(\ctxedge{\var i})}{(\ctxedge{\var i, \var j})}$.

It is easy to see that $\cubecat$ is closed under finite cartesian products; namely $V \times W = (V, W)$. Because the Yoneda-embedding preserves limits such as cartesian products, we have $\yoneda W \cong (\yoneda \IE)^W$.

This is not the only useful definition. For example, \cite{moulin-param3} and \cite{huber} consider cubes which have no diagonals, while \cite{cubical} uses cubes that have `connections', a way of adding a dimension by folding open a line to become a square with two adjacent constant sides.

\section{The category of bridge/path cubes}
The novel category of bridge/path cubes $\bpcubecat$ is similar to $\cubecat$ but its cubes have two flavours of dimensions: bridge dimensions and path dimensions. So a \textdef{bridge/path cube} $W$ is a pair $(W_\IB, W_\IP)$, where $W_\IB$ and $W_\IP$ are disjoint subsets of $\aleph$. We write $()$ for the 0-dimensional cube, $(W, \ctxbrid{\var i})$ for $(W_\IB \uplus \accol{\var i}, W_\IP)$ and $(W, \ctxpath{\var i})$ for $(W_\IB, W_\IP \uplus \accol{\var i})$.

A face map $\vfi : \PSub V W$ assigns to every bridge variable $(\ctxbrid{\var i}) \in W$ either 0, 1 or a bridge variable from $V$, and to every path variable $(\ctxpath{\var i}) \in W$ either 0, 1, or a path or bridge variable from $V$. We will sometimes add a superscript to make the status of a variable clear, e.g. $(\var j^\IB / \var i^\IP) : \PSub{(\var j : \IB)}{(\var i : \IP)}$.

Then a bridge/path cubical set $\Gamma$ contains, for every bridge/path cube $W$, a set of cubes with $|W_\IB|$ bridge dimensions and $|W_\IP|$ path dimensons. Again we can extract vertices and faces and we can create flat cubes by introducing bridge or path dimensions. We can weaken paths to bridges, e.g. $(\var j/\var i) : \PSub{(W, \ctxbrid{\var j})}{(W, \ctxpath{\var i})}$. Finally, we can extract diagonals, but if the cube of which we take the diagonal, has at least one bridge dimension, then the diagonal has to be a bridge, e.g. $(\var j^\IB/\var i^\IP) : \PSub{(W, \ctxbrid{\var j})}{(W, \ctxpath{\var i}, \ctxbrid{\var j})}$.

\section{The cohesive structure of $\widehat{\bpcubecat}$ over $\widehat{\cubecat}$}
In this section, we construct a chain of five adjoint functors (all but the leftmost one morphisms of CwFs) between $\widehat{\bpcubecat}$ and $\widehat{\cubecat}$. By composing each one with its adjoint, these give rise to a chain of four adjoint endofunctos (all but the leftmost one endomorphisms of CwFs) on $\widehat{\bpcubecat}$.

We are not interested in the adjoint quintuple between $\widehat{\bpcubecat}$ and $\widehat{\cubecat}$ per se, but making a detour along them reveals a structure similar to what is studied in cohesive type theory, and may also be beneficial in order to understand intuitively what is going on in the rest of this text.

\subsection{Cohesion}
Let $\cat S$ be a category whose objects are some notion of spaces. Then a notion of \emph{cohesion} on objects of $\cat S$ gives rise to a category $\cat C$ of cohesive spaces and a \textbf{forgetful functor} $\cohfget : \cat C \to \cat S$ which maps a cohesive space $C$ to the underlying space $UC$, forgetting its cohesive structure.

Typically, if $\cohfget : \cat C \to \cat S$ appeals to the intuition about cohesion, then it is part of an adjoint quadruple of functors\footnote{More often, these are denoted $\Pi \dashv \Delta \dashv U \dashv \nabla$, but some of these symbols are already heavily in use in type theory.}
\begin{equation}
	\cohpi \dashv \cohdisc \dashv \cohfget \dashv \cohcodisc.
\end{equation}
Here, the \textbf{discrete functor} $\cohdisc$ equips a space $S$ with a discrete cohesive structure, i.e. in the cohesive space $\cohdisc S$, nothing is stuck together. As such, a cohesive map $\cohdisc S \to C$ amounts to a map $S \to UC$.

Dually, the \textbf{codiscrete functor} $\cohcodisc$ equips a space $S$ with a codiscrete cohesive structure, sticking everything together. As such, a cohesive map $C \to \cohcodisc S$ amounts to a map $UC \to S$.

Finally, the functor $\cohpi$ maps a cohesive space $C$ to its space of cohesively connected components. A map $C \to \cohdisc S$ will necessarily be constant on cohesive components, as $\cohdisc S$ is discrete, and hence amounts to a map $\cohpi C \to S$.

Typically, the composites $\cohpi \cohdisc, \cohfget \cohdisc, \cohfget \cohcodisc : \cat S \to \cat S$ will be isomorphic to the identity functor, the latter two even equal. Indeed: if we equip a space with discrete cohesion and then contract components, we have done essentially nothing. If we equip a space with a discrete or codiscrete cohesion, and then forget it again, we have litterally done nothing.

The composites $\shp = \cohdisc \cohpi$, $\flat = \cohdisc \cohfget$ and $\sharp = \cohcodisc \cohfget : \cat C \to \cat C$ form a more interesting adjoint triple $\shp \dashv \flat \dashv \sharp$ of endofunctors on $\cat C$. The \textbf{shape} functor $\shp$ contracts cohesive components. The \textbf{flat} functor $\flat$ removes the existing cohesion in favour of the discrete one, and the \textbf{sharp} functor $\sharp$ removes it in favour of the codiscrete one. If the adjoint triple on $\cat S$ is indeed as described above, then we can show
\begin{equation}
	\begin{array}{c c c c}
		\shp \shp \cong \shp &
		\shp \flat \cong \flat &&\\
		\flat \shp = \shp &
		\flat \flat = \flat &
		\flat \sharp = \flat &\\&
		\sharp \flat = \sharp &
		\sharp \sharp = \sharp,
	\end{array}
\end{equation}
Moreover, $\shp \dashv \flat$ will have (essentially) the same co-unit $\kappa : \flat \to \Id$ as $\cohdisc \dashv \cohfget$ and the same unit $\varsigma : \Id \to \shp$ as $\cohpi \dashv \cohdisc$. The adjunction $\flat \dashv \sharp$ will have the same unit $\kappa : \flat \to \Id$ as $\cohdisc \dashv \cohfget$ and the same co-unit $\iota : \Id \to \sharp$ as $\cohfget \dashv \cohcodisc$. For more information, see e.g. \cite{adjoint-logic}.

\begin{example}
	Let $\cat C = \catTop$ (the category of topological spaces) and $\cat S = \Set$. Then $\cohfget : \catTop \to \Set$ maps a topological space $(X, \tau)$ to the underlying set $X$. The discrete functor $\cohdisc$ equips a set $X$ with its discrete topology $2^X$ and $\cohcodisc$ equips it with the codiscrete topology $(\eset, X)$. Finally, $\cohpi$ maps a topological space to its set of connected components.
\end{example}
\begin{example}
	Let $\cat C = \Cat$ and $\cat S = \Set$. Then we can take $\cohfget \cat A = \Obj~\cat A$, make $\cohdisc X$ the discrete category on $X$ with only identity morphisms, $\cohcodisc X$ the codiscrete category on $X$ where every Hom-set is a singleton, and $\cohpi \cat A$ the set of zigzag-connected components of $\cat A$, i.e. $\cohpi \cat A = \Obj(\cat A)/\Hom$.
\end{example}
\begin{example}
	Let $\cat C = \widehat \RGcat$, the category of reflexive graphs, and $\cat S = \Set$. Then we can let $\cohfget$ map a reflexive graph $\Gamma$ to its set of nodes $\cohfget \Gamma = (\DSub{()} \Gamma)$; $\cohdisc$ will map a set $X$ to the discrete reflexive graph with only constant edges (i.e. $(\DSub{()}{\cohdisc X}) = (\DSub{(\var i : \IE)}{\cohdisc X}) = X$), and $\cohcodisc$ maps a set $X$ to the codiscrete reflexive graph with a unique edge between any two points (i.e. $(\DSub{()}{\cohcodisc X}) = X$ and $(\DSub{(\var i : \IE)}{\cohcodisc X}) = X \times X$). Finally, $\cohpi$ maps a graph $\Gamma$ to its set $\cohpi \Gamma$ of edge-connected components.
\end{example}
This last example is interesting, because we know that $\cohfget : \widehat{\RGcat} \to \widehat{\pointcat} \cong \Set$ is a morphism of CwFs, arising as $\cohfget = \fpsh F$ with $F : \pointcat \to \RGcat$ the unique functor mapping $()$ to $()$ (see \cref{ex:presheaf-sets}). The functor $\cohdisc$ is also a morphism of CwFs, arising as $\cohdisc = \fpsh G$, with $G$ the unique functor $\RGcat \to \pointcat$.

\subsection{The cohesive structure of $\widehat{\bpcubecat}$ over $\widehat{\cubecat}$, intuitively}
In the remainder of this section, we establish a chain of no less than five adjoint functors between $\widehat{\bpcubecat}$ and $\widehat{\cubecat}$. The reason we have more than in other situations, is that bridge/path cubical sets can be seen as cohesive cubical sets in two ways: we can either view cubical sets as bridge-only cubical sets, in which case the forgetful functor $\cohfget : \widehat{\bpcubecat} \to \widehat{\cubecat}$ forgets the cohesive structure given by the paths; or we can view cubical sets as path-only cubical sets, in which case the forgetful functor $\cohpaths : \widehat{\bpcubecat} \to \widehat{\cubecat}$ forgets the cohesion given by the bridges.

The chain of functors we obtain is the following:
\begin{equation}
	\cohpi \dashv \cohdisc \dashv \cohfget \dashv \cohcodisc \dashv \cohpaths, \qquad
	\cohpi, \cohfget, \cohpaths : \widehat{\bpcubecat} \to \widehat{\cubecat}, \qquad
	\cohdisc, \cohcodisc : \widehat{\cubecat} \to \widehat{\bpcubecat}.
\end{equation}
and it can likely be extended by a sixth functor on the right, that we currently have no use for. The (cohesion-as-paths) forgetful functor $\cohfget$ maps a bridge/path cubical set $\Gamma$ to the cubical set $\cohfget \Gamma$ made up of its bridges, forgetting which bridges are in fact paths. The (cohesion-as-paths) discrete functor $\cohdisc$ introduces a discrete path relation, the bridges of $\cohdisc \Gamma$ are the edges of $\Gamma$, whereas the paths of $\cohdisc \Gamma$ are all constant. The (cohesion-as-paths) codiscrete functor $\cohcodisc$ introduces a path relation which is codiscrete in the sense that there are as many paths as there can be: every bridge is a path. So a bridge in $\cohcodisc \Gamma$ is the same as an edge in $\Gamma$, and a path in $\cohcodisc \Gamma$ is also the same as an edge in $\Gamma$. Note that $\cohcodisc$ is also the cohesion-as-bridges discrete functor: viewing $\Gamma$ as a path-only cubical set, it equips $\cohcodisc \Gamma$ with the fewest bridges possible: only when there is a path, there will also be a bridge. The paths functor $\cohpaths$, which is the cohesion-as-bridges forgetful functor, maps a bridge/path cubical set $\Gamma$ to its cubical set of paths $\cohpaths \Gamma$. Finally, $\cohpi$ divides out a bridge/path cubical set by its path relation, obtaining a bridge-only cubical set.

\subsection{The cohesive structure of $\bpcubecat$ over $\cubecat$}
We saw in \cref{thm:lifting-adjunctions} that if we have adjoint functors $L \dashv R$ on the base categories, then we obtain functors $\fpsh L \dashv \fpsh R$ on the presheaf categories and moreover $\fpsh L$ extends $R$. So $\fpsh \loch$ takes the right adjoint of a functor and at the same time extends it from the category of primitive contexts, to the entire presheaf category. In this sense, it is a good idea to start by defining the functors
\begin{equation}
	\cohpi \dashv \cohdisc \dashv \cohfget \dashv \cohcodisc, \qquad
	\cohpi, \cohfget : \bpcubecat \to \cubecat, \qquad
	\cohdisc, \cohcodisc : \cubecat \to \bpcubecat,
\end{equation}
on the base categories. We define these functors as in \cref{fig:cohesion-base}.
This may not be entirely intuitive. The key here is that every path dimension can be weakened to a bridge dimension. Thus, two adjacent vertices of a bridge/path cube are always connected by a bridge, and only connected by a path if they are adjacent along a path dimension. The $\cohpi$ functor leaves bridges alone (converting them to edges), but contracts paths. The $\cohdisc$ functor turns edges into bridges, but does not produce paths. The $\cohfget$ functor keeps bridges (converting them to edges) and forgets paths, but remembers the bridges they weaken to. The $\cohcodisc$ functor turns edges into paths, which are also bridges.
\begin{figure}
	\begin{equation*}
		\begin{array}{c || c c c | c c c c}
			& () & (W, \ctxbrid{\var i}) & (W, \ctxpath{\var i})
			& () & (\vfi, \var j^\IB /\var i^\IB) & (\vfi, \var j^\IB /\var i^\IP) & (\vfi, \var j^\IP /\var i^\IP) \\
			\hline \hline
			\cohpi
			& () & (\cohpi W, \ctxedge{\var i}) & \cohpi W 
			& () & (\cohpi \vfi, \var j^\IE / \var i^\IE) & \cohpi \vfi & \cohpi \vfi \\
			\cohfget
			& () & (\cohfget W, \ctxedge{\var i}) & (\cohfget W, \ctxedge{\var i})
			& () & (\cohfget \vfi, \var j^\IE / \var i^\IE) & (\cohfget \vfi, \var j^\IE / \var i^\IE) & (\cohfget \vfi, \var j^\IE / \var i^\IE) \\
			\hline
			\shp
			& () & (\shp W, \ctxbrid{\var i}) & \shp W
			& () & (\shp \vfi, \var j^\IB / \var i^\IB) & \shp \vfi & \shp \vfi \\
			\flat
			& () & (\flat W, \ctxbrid{\var i}) & (\flat W, \ctxbrid{\var i})
			& () & (\flat \vfi, \var j^\IB/\var i^\IB) & (\flat \vfi, \var j^\IB/\var i^\IB) & (\flat \vfi, \var j^\IB/\var i^\IB) \\
			\sharp
			& () & (\sharp W, \ctxpath{\var i}) & (\sharp W, \ctxpath{\var i})
			& () & (\sharp \vfi, \var j^\IP / \var i^\IP) & (\sharp \vfi, \var j^\IP / \var i^\IP) & (\sharp \vfi, \var j^\IP / \var i^\IP) \\
			\hline
			\varsigma : \Id \to \shp
			& () & (\varsigma_W, \var i^\IB / \var i^\IB) & (\varsigma_W, \facewkn{\var i^\IP})\\
			\kappa : \flat \to \Id
			& () & (\kappa_W, \var i^\IB / \var i^\IB) & (\kappa_W, \var i^\IB / \var i^\IP) &&& \circlearrowright \\
			\iota : \Id \to \sharp
			& () & (\iota_W, \var i^\IB / \var i^\IP) & (\iota_W, \var i^\IP / \var i^\IP)
		\end{array}
	\end{equation*}
	\begin{equation*}
		\begin{array}{c || c c | c c}
			& () & (W, \ctxedge{\var i})
			& () & (\vfi, \var j^\IE / \var i^\IE) \\
			\hline \hline
			\cohdisc & () & (\cohdisc W, \ctxbrid{\var i}) 
			& () & (\cohdisc \vfi, \var j^\IB / \var i^\IB) \\
			\cohcodisc & () & (\cohcodisc W, \ctxpath{\var i}) 
			& () & (\cohcodisc \vfi, \var j^\IP / \var i^\IP)
		\end{array}
	\end{equation*}
	\caption{The cohesive structure of $\bpcubecat$ over $\cubecat$.}
	\label{fig:cohesion-base}
\end{figure}

Note also that $\cohpaths$ cannot be defined this way as it does not map all primitive contexts to primitive contexts. For example, $(\ctxbrid{\var i})$ consists of two points connected by a bridge. The only paths are constant. Hence, forgetting the bridge structure yields two loose points, which together do not form a cube.

\begin{lemma}\label{thm:uniqueness-of-nattrans-base}
	Let $\catV, \catW \in \accol{\cubecat, \bpcubecat}$ and let $F, G : \catV \to \catW$ be composites of the functors $\cohpi$, $\cohdisc$, $\cohfget$ and $\cohcodisc$. Then all natural transformations $F \to G$ are equal.
\end{lemma}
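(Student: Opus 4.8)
The plan is to reduce the claim to the behaviour of a natural transformation $\nu : F \to G$ on the one-variable ``generating'' cubes, and there to pin $\nu$ down by naturality on vertices. Observe first that the conclusion is genuinely ``at most one'': for many pairs $(F,G)$ there is no natural transformation at all --- for instance $\cohpi\cohcodisc \to \Id$ on $\cubecat$, where naturality on the two vertices of $(\ctxedge{\var i})$ is already contradictory --- and then the statement holds vacuously.

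First I would record two structural facts, both visible in \cref{fig:cohesion-base}, about the four functors $\cohpi, \cohdisc, \cohfget, \cohcodisc$ and hence about every composite $F$ of them. Each acts on objects dimension by dimension, so it preserves the empty cube and binary products strictly, sending $(V,W)$ to $(FV,FW)$ with the two projections preserved; hence every composite $F$ preserves finite products strictly. Moreover, every object $W$ of $\cubecat$ is \emph{literally} the product of the cubes $(\ctxedge{\var i})$ over $\var i \in W$, and every object of $\bpcubecat$ is the product of the cubes $(\ctxbrid{\var i})$ and $(\ctxpath{\var j})$ over its bridge and path variables respectively. So for any such $F$ we have $FW = \prod_k Fg_k$ with projections $F\pi_k$, where $g_k$ ranges over the generating cubes of $W$.

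Now take $\nu : F \to G$. Its component $\nu_{()}$ is an endomorphism of the terminal object and is therefore forced; and for $W = \prod_k g_k$, naturality on the projections $\pi_k : W \to g_k$ gives $G\pi_k \circ \nu_W = \nu_{g_k} \circ F\pi_k$, so since the $G\pi_k$ are jointly monic (being the projections of $GW = \prod_k Gg_k$), the component $\nu_W$ is uniquely determined by the $\nu_{g_k}$. It thus suffices to show each $\nu_g$ at a generating cube $g$ is unique. Each of the four functors, and hence $F$ and $G$ (stability under composition being clear), sends $g$ either to a generating cube --- possibly of a different flavour --- or to $()$, and sends the two vertices $v_0, v_1 : () \to g$ to the corresponding vertices of the image (or to $\id_{()}$ when the image is $()$); naturality on $v_0$ and $v_1$, using $\nu_{()} = \id_{()}$, yields $\nu_g \circ Fv_\epsilon = Gv_\epsilon$ for $\epsilon \in \{0,1\}$. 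If $Gg = ()$, then $\nu_g$ is forced by terminality. If $Fg = ()$, the constraints read $\nu_g = Gv_0 = Gv_1$, which determines $\nu_g$ or is unsatisfiable. Otherwise $Fg$ and $Gg$ are one-variable cubes, so $\Hom(Fg,Gg)$ consists of the two constant maps together with, in the flavour-compatible cases, one further non-constant map; precomposing with $Fv_0$ eliminates the constant-$1$ map and precomposing with $Fv_1$ eliminates the constant-$0$ one. In every case $\nu_g$, if it exists, is unique, so $\Hom(F,G)$ has at most one element.

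The only place I expect to spend effort is the bookkeeping in the second step: verifying from \cref{fig:cohesion-base} (extended to constant face maps in the evident way) that the four functors do preserve finite products strictly and do send vertices to vertices. Granting that, the descent from arbitrary cubes to generators via product projections, and the two-vertex argument on the generators, are purely formal.
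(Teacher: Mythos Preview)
Your proof is correct and follows essentially the same approach as the paper: reduce to single-variable cubes via strict product preservation of $F$ and $G$, then use naturality on the two vertex maps $(0/\var i), (1/\var i) : () \to g$ to eliminate the constant face maps and pin down $\nu_g$. Your case analysis is slightly more explicit than the paper's (you separate out $Fg = ()$ and discuss flavour-incompatible hom-sets), but the argument is the same.
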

\begin{proof}
	Let $\nu : F \to G$ be a natural transformation. We show that $\nu$ is completely determined. Since $F$ and $G$ preserve products, $\nu$ is determined by its action on single-variable contexts. Now if $\Gamma$ has a single variable $\var i$, then either $G\Gamma = ()$ in which case $\nu_\Gamma : F \Gamma \to ()$ is determined, or $G\Gamma$ also contains $\var i$ as its only variable. Now $\var i \psub{\nu_\Gamma} \neq 0$ because then the following diagram could not commute:
\begin{equation}
	\xymatrix{
		() \ar[r]^{\nu_{()}} \ar[d]_{F(1/\var i)}
		& () \ar[d]^{G(1/\var i) = (1 / \var i)}
		\\
		F\Gamma \ar[r]^{\nu_\Gamma}
		& G\Gamma
	}
\end{equation}
Similarly, $\var i \psub{\nu_\Gamma} \neq 1$. Then $F\Gamma$ must contain a variable, implying that it contains only the variable $\var i$ and $\var i \psub{\nu_\Gamma} = \var i$.
\end{proof}

\begin{proposition}[The cohesive structure of $\bpcubecat$ over $\cubecat$]\label{thm:cohesion-base}
	These four functors are adjoint: $\cohpi \dashv \cohdisc \dashv \cohfget \dashv \cohcodisc$. On the $\cubecat$-side, we have
	\begin{equation}
		\cohpi\cohdisc = \Id \quad \dashv \quad
		\cohfget\cohdisc = \Id \quad \dashv \quad
		\cohfget\cohcodisc = \Id \quad 
		: \quad \cubecat \to \cubecat.
	\end{equation}
	On the $\bpcubecat$-side, we write
	\begin{equation}
		\shp := \cohdisc\cohpi \quad \dashv \quad
		\flat := \cohdisc\cohfget \quad \dashv \quad
		\sharp := \cohcodisc\cohfget \quad
		: \quad \bpcubecat \to \bpcubecat.
	\end{equation}
	By consequence, we have
	\begin{equation}
		\begin{array}{c c c c}
			\shp \shp = \shp &
			\shp \flat = \flat &&\\
			\flat \shp = \shp &
			\flat \flat = \flat &
			\flat \sharp = \flat &\\&
			\sharp \flat = \sharp &
			\sharp \sharp = \sharp.
		\end{array}
	\end{equation}
	The following table lists the units and co-units of all adjunctions involved:
	\begin{equation}
		\begin{array}{c || c | c | c || c | c}
			& \cohpi \dashv \cohdisc & \cohdisc \dashv \cohfget & \cohfget \dashv \cohcodisc & \shp \dashv \flat & \flat \dashv \sharp \\
			\hline
			\text{unit} & \varsigma : \Id \to \shp & \id : \Id \to \Id & \iota : \Id \to \sharp & \varsigma : \Id \to \shp & \iota : \Id \to \sharp \\
			\hline
			\text{co-unit} & \id : \Id \to \Id & \kappa : \flat \to \Id & \id : \Id \to \Id & \kappa : \flat \to \Id & \kappa : \flat \to \Id
		\end{array}
	\end{equation}
	The functors $\shp$, $\flat$ and $\sharp$ and the natural transformations $\varsigma$, $\kappa$ and $\iota$ are given in \cref{fig:cohesion-base}. Finally, the following natural transformations are all the identity:
	\begin{equation}\label{eq:cohesion-base-identities}
		\begin{array}{c c c c | c c c}
			\cohpi \to \cohpi & \cohdisc \to \cohdisc & \cohfget \to \cohfget & \cohcodisc \to \cohcodisc & \shp \to \shp & \flat \to \flat & \sharp \to \sharp \\ \hline
			\cohpi \varsigma & \varsigma \cohdisc &&&
			\shp \varsigma = \varsigma \shp & \varsigma \flat \qquad{} \\
			& \kappa \cohdisc & \cohfget \kappa &&
			{} \qquad \kappa \shp & \kappa \flat = \flat \kappa & \sharp \kappa \qquad{} \\
			&& \cohfget \iota & \iota \cohcodisc &
			& {}\qquad \flat \iota & \sharp \iota = \iota \sharp
		\end{array}
	\end{equation}
\end{proposition}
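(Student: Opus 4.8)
The plan is to push essentially everything through \cref{thm:uniqueness-of-nattrans-base}: once we know that between any two composites of $\cohpi,\cohdisc,\cohfget,\cohcodisc$ there is at most one natural transformation, the only things left to verify by hand are (i) that the four functors of \cref{fig:cohesion-base} are well defined, (ii) that $\varsigma,\kappa,\iota$ are natural transformations with the stated (co)domains, and (iii) the three composite identities $\cohpi\cohdisc=\cohfget\cohdisc=\cohfget\cohcodisc=\Id$ on $\cubecat$. For (i), I would check that each functor preserves identities and composition of face maps by a routine case analysis, one line per kind of component of a face map (a variable sent to $0$, to $1$, to a bridge variable, or, for a path variable, to a path variable), since the action of each functor on a face map is determined componentwise by \cref{fig:cohesion-base}. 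For (iii), reading \cref{fig:cohesion-base} off on objects and on face maps shows that $\cohpi\cohdisc$, $\cohfget\cohdisc$ and $\cohfget\cohcodisc$ are literally the identity endofunctor of $\cubecat$. From these, the whole endofunctor algebra on the $\bpcubecat$-side is pure rewriting in $\shp=\cohdisc\cohpi$, $\flat=\cohdisc\cohfget$, $\sharp=\cohcodisc\cohfget$: e.g.\ $\shp\shp=\cohdisc(\cohpi\cohdisc)\cohpi=\shp$, $\shp\flat=\cohdisc(\cohpi\cohdisc)\cohfget=\flat$, $\flat\shp=\cohdisc(\cohfget\cohdisc)\cohpi=\shp$, $\flat\flat=\flat$, $\flat\sharp=\cohdisc(\cohfget\cohcodisc)\cohfget=\flat$, $\sharp\flat=\cohcodisc(\cohfget\cohdisc)\cohfget=\sharp$, $\sharp\sharp=\cohcodisc(\cohfget\cohcodisc)\cohfget=\sharp$.

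Next I would establish the three basic adjunctions by exhibiting candidate units and co-units and invoking the uniqueness lemma for the triangle identities. For $\cohdisc\dashv\cohfget$ the candidate unit is $\id:\Id\to\cohfget\cohdisc$ (legitimate since $\cohfget\cohdisc=\Id$) and the candidate co-unit is $\kappa$; for $\cohpi\dashv\cohdisc$ they are $\varsigma$ and $\id:\cohpi\cohdisc\to\Id$; for $\cohfget\dashv\cohcodisc$ they are $\iota$ and $\id:\cohfget\cohcodisc\to\Id$. Here one must do the real combinatorial work of (ii): verifying that $\varsigma$, $\kappa$, $\iota$ satisfy the naturality squares, the only mildly delicate point being the clauses in which a path variable is assigned a bridge variable (such as the component $\var{i}^\IB/\var{i}^\IP$ occurring in $\kappa$). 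Granting that, each of the six triangle identities is an equality of natural transformations between a single basic functor ($\cohpi$, $\cohdisc$, $\cohfget$ or $\cohcodisc$, after collapsing composites by (iii)) and itself, hence holds automatically by \cref{thm:uniqueness-of-nattrans-base}; this yields the whole chain $\cohpi\dashv\cohdisc\dashv\cohfget\dashv\cohcodisc$.

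For the composite adjunctions I would use that adjunctions compose: $\shp=\cohdisc\cohpi\dashv\cohdisc\cohfget=\flat$ arises from $\cohpi\dashv\cohdisc$ followed by $\cohdisc\dashv\cohfget$, and $\flat=\cohdisc\cohfget\dashv\cohcodisc\cohfget=\sharp$ arises from $\cohfget\dashv\cohcodisc$ followed by $\cohdisc\dashv\cohfget$. The unit and co-unit of a composite adjunction are the usual whiskered composites of the constituents', and using $\cohpi\cohdisc=\cohfget\cohdisc=\cohfget\cohcodisc=\Id$ (which turns the relevant whiskered factors into identities) these collapse precisely to the entries $\varsigma$, $\kappa$, $\iota$ recorded in the table; alternatively, once $\shp\dashv\flat$ and $\flat\dashv\sharp$ are adjunctions at all, the identification of their (co)units with $\varsigma$, $\kappa$, $\iota$ is again forced by \cref{thm:uniqueness-of-nattrans-base}. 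That the rows for $\shp,\flat,\sharp,\varsigma,\kappa,\iota$ in \cref{fig:cohesion-base} are the ones named here is then just a matter of composing the constituent rows. Finally, for \eqref{eq:cohesion-base-identities}, each listed transformation is a natural transformation whose source and target, after applying the identities of (iii), coincide as a composite of the four functors — e.g.\ $\cohpi\varsigma:\cohpi\to\cohpi\cohdisc\cohpi=\cohpi$, $\kappa\flat:\flat\flat\to\flat$ with $\flat\flat=\flat$, $\flat\kappa:\flat\flat\to\flat$, $\sharp\iota:\sharp\to\sharp\cohcodisc\cohfget=\cohcodisc(\cohfget\cohcodisc)\cohfget=\sharp$, and similarly for the rest — so, $\id$ being a transformation of that shape, \cref{thm:uniqueness-of-nattrans-base} forces every one of them to be $\id$.

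The main obstacle is nothing conceptual: it is exactly the bookkeeping of (i) and (ii) — checking functoriality of the four base functors and naturality of $\varsigma$, $\kappa$, $\iota$ directly from \cref{fig:cohesion-base}. Everything structural — the triangle identities, the endofunctor algebra, the (co)units of the composite adjunctions, and all of \eqref{eq:cohesion-base-identities} — is then delivered cheaply by the uniqueness lemma together with the composition of adjunctions.
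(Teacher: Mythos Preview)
Your proposal is correct and follows essentially the same approach as the paper: once the candidate units and co-units exist as well-typed natural transformations, the triangle identities and all of \eqref{eq:cohesion-base-identities} are forced by \cref{thm:uniqueness-of-nattrans-base}. The paper's proof is terser---it simply remarks that the mere existence of well-typed candidates for unit and co-unit suffices for adjointness, and that the identities in \eqref{eq:cohesion-base-identities} can be read off \cref{fig:cohesion-base} or obtained from the uniqueness lemma---whereas you spell out the bookkeeping (functoriality, naturality, the $\cubecat$-side identities) and invoke composition of adjunctions explicitly for $\shp\dashv\flat$ and $\flat\dashv\sharp$, but the underlying strategy is the same.
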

\begin{proof}
	The equalities are immediate from \cref{thm:uniqueness-of-nattrans-base}.
\end{proof}
\begin{proof}
	Each of the transformations in \cref{eq:cohesion-base-identities} is easily seen to be the identity by inspecting the definitions in \cref{fig:cohesion-base}. In order to prove that $L \dashv R$ with unit $\eta$ and co-unit $\eps$, it suffices to check that $\eps L \circ L \eta = \id : L \to L$ and $R \eps \circ \eta R = \id : R \to R$, which also follows from \cref{thm:uniqueness-of-nattrans-base}. In other words, the mere existence of well-typed candidates for the unit and co-unit is sufficient to conclude adjointness.
\end{proof}

\subsection{The cohesive structure of $\widehat{\bpcubecat}$ over $\widehat{\cubecat}$, formally}
We now define functors and natural transformations of the same notation by
\begin{equation}
	\cohdisc := \fpsh \cohpi, \quad
	\cohcodisc := \fpsh \cohfget \quad
	: \widehat{\cubecat} \to \widehat{\bpcubecat},
\end{equation}
\begin{equation}
	\cohfget := \fpsh \cohdisc, \quad
	\cohpaths := \fpsh \cohcodisc \quad
	: \widehat{\bpcubecat} \to \widehat{\cubecat},
\end{equation}
\begin{equation}
	\flat := \fpsh \shp = \cohdisc \cohfget, \quad
	\sharp := \fpsh \flat = \cohcodisc \cohfget, \quad
	\coshp := \fpsh \sharp = \cohcodisc \cohpaths \quad : \widehat{\bpcubecat} \to \widehat{\bpcubecat}.
\end{equation}
\begin{equation}
	\kappa := \fpsh \varsigma : \flat \to \Id, \qquad
	\iota := \fpsh \kappa : \Id \to \sharp, \qquad
	\vartheta := \fpsh \iota : \coshp \to \Id.
\end{equation}
From \cref{sec:left-adjoint-to-lifted-functor}, we know that there is a further left adjoint $\cohpi \dashv \cohdisc$, which we should not expect to be a morphism of CwFs. Indeed, the proof of \cref{thm:left-adjoint-not-cwf} is easily adapted to show the contrary. We postpone its construction to \cref{sec:def-cohpi}; however, by \cref{thm:yoneda-and-left-adjoint} it will satisfy the property that $\cohpi \circ \yoneda \cong \yoneda \circ \cohpi$, which we can use to characterize its behaviour. We take a moment to see how each of these functors behaves:
\begin{description}
	\item[$\cohpi$] We know that $W$-cubes $\gamma : \DSub W \Gamma$ correspond to substitutions $\gamma : \yoneda W \to \Gamma$ and hence give rise to substitutions $\cohpi \gamma : \cohpi(\yoneda W) \to \cohpi \Gamma$, which in turn correspond to $\cohpi W$-cubes $\DSub{\cohpi W}{\cohpi \Gamma}$. So a bridge $\DSub{(\ctxbrid{\var i})}{\Gamma}$ is turned into an edge $\DSub{(\ctxedge{\var i})}{\cohpi \Gamma}$, whereas a path $\DSub{(\ctxpath{\var i})}{\Gamma}$ is contracted to a point $\DSub{()}{\cohpi \Gamma}$. Simply put, $\cohpi$ contracts paths to points.
	\item[$\cohdisc$] A $W$-cube $\fpshadj{\cohpi}(\gamma) : \DSub{W}{\cohdisc \Gamma}$ is a $\cohpi W$-cube $\gamma : \DSub{\cohpi W}{\Gamma}$. So a bridge $\DSub{(\ctxbrid{\var i})}{\cohdisc \Gamma}$ is the same as an edge $\DSub{(\ctxedge{\var i})}{\Gamma}$ and a path $\DSub{(\ctxpath{\var i})}{\cohdisc \Gamma}$ is the same as a point $\DSub{()}{\Gamma}$, which in turn is the same as a point $\DSub{()}{\cohdisc \Gamma}$, showing that there are only constant paths.
	
	Viewed differently, using that $\cohdisc \circ \yoneda = \yoneda \circ \cohdisc$ by \cref{thm:lifting-adjunctions}, we can say that an edge $\DSub{(\ctxedge{\var i})}{\Gamma}$ gives rise to a bridge $\DSub{(\ctxbrid{\var i})}{\cohdisc \Gamma}$, while there is nothing that gives rise to (non-trivial) paths.
	\item[$\cohfget$] A $W$-cube $\fpshadj \cohdisc(\gamma) : \DSub{W}{\cohfget \Gamma}$ is the same as a brdige $\gamma : \DSub{\cohdisc W}{\Gamma}$. So an edge in $\cohfget \Gamma$ is a bridge in $\Gamma$. Alternatively, we can say that any bridge and any path in $\Gamma$ gives rise to an edge in $\cohfget \Gamma$.
	\item[$\cohcodisc$] A bridge in $\cohcodisc \Gamma$ is the same as an edge in $\Gamma$. A path in $\cohcodisc \Gamma$ is also the same as an edge in $\Gamma$. Alternatively, we can say that an edge in $\Gamma$ gives rise to a path in $\cohcodisc \Gamma$, which can then also be weakened to a bridge.
	\item[$\cohpaths$] An edge in $\cohpaths \Gamma$ is the same as a path in $\Gamma$. The alternative formulation --- a path in $\Gamma$ gives rise to an edge in $\cohpaths \Gamma$; a bridge in $\Gamma$ is forgotten --- cannot be formalized as in the previous cases, because $\cohpaths$ was not defined for primitive contexts and hence the property $\cohpaths \circ \yoneda = \yoneda \circ (\ldots)$ cannot be formulated
\end{description}
\begin{proposition}[The cohesive structure of $\widehat{\bpcubecat}$ over $\widehat{\cubecat}$]\label{thm:cohesion-psh}
	These five functors are adjoint: $\cohpi \dashv \cohdisc \dashv \cohfget \dashv \cohcodisc \dashv \cohpaths$. On the $\widehat{\cubecat}$-side, we have
	\begin{equation}
		\bar \shp := \cohpi\cohdisc \cong \Id \quad \dashv \quad
		\cohfget\cohdisc = \Id \quad \dashv \quad
		\cohfget\cohcodisc = \Id \quad \dashv \quad 
		\cohpaths\cohcodisc = \Id \quad
		: \quad \widehat{\cubecat} \to \widehat{\cubecat}.
	\end{equation}
	On the $\widehat{\bpcubecat}$-side, we write
	\begin{equation}
		\shp := \cohdisc\cohpi \quad \dashv \quad
		\flat := \cohdisc\cohfget \quad \dashv \quad
		\sharp := \cohcodisc\cohfget \quad \dashv \quad
		\coshp := \cohcodisc\cohpaths \quad
		: \quad \widehat{\bpcubecat} \to \widehat{\bpcubecat}.
	\end{equation}
	By consequence, we have
	\begin{equation}
		\begin{array}{c c c c c}
			\shp \shp \cong \shp &
			\shp \flat \cong \flat &&\\
			\flat \shp = \shp &
			\flat \flat = \flat &
			\flat \sharp = \flat &\\&
			\sharp \flat = \sharp &
			\sharp \sharp = \sharp &
			\sharp \coshp = \coshp \\&&
			\coshp \sharp = \sharp &
			\coshp \coshp = \coshp,
		\end{array}
	\end{equation}
	
	The following tables lists the units and co-units of all adjunctions involved:
	\begin{equation}
		\begin{array}{c || c | c | c | c}
			& \cohpi \dashv \cohdisc & \cohdisc \dashv \cohfget & \cohfget \dashv \cohcodisc & \cohcodisc \dashv \cohpaths \\
			\hline
			\text{unit} & \varsigma : \Id \to \shp & \id : \Id \to \Id & \iota : \Id \to \sharp & \id : \Id \to \Id \\
			\hline
			\text{co-unit} & \bar \varsigma\inv : \bar \shp \cong \Id & \kappa : \flat \to \Id & \id : \Id \to \Id & \vartheta : \coshp \to \Id
		\end{array}
	\end{equation}
	\begin{equation}
		\begin{array}{c || c || c | c | c}
			& \bar \shp \dashv \Id & \shp \dashv \flat & \flat \dashv \sharp & \sharp \dashv \coshp \\
			\hline
			\text{unit} & \bar \varsigma : \Id \cong \bar \shp & \varsigma : \Id \to \shp & \iota : \Id \to \sharp & \iota : \Id \to \sharp \\
			\hline
			\text{co-unit} & \bar \varsigma \inv : \bar \shp \cong \Id & \kappa \circ (\varsigma \flat)\inv : \shp \flat \to \Id & \kappa : \flat \to \Id & \vartheta : \coshp \to \Id
		\end{array}
	\end{equation}
	Finally, the following natural transformations are all (compatible with) the identity:
	\begin{equation}\label{eq:cohesion-psh-identities}
		\begin{array}{c c c c c | c c c c}
			\cohpi \to \cohpi & \cohdisc \to \cohdisc & \cohfget \to \cohfget & \cohcodisc \to \cohcodisc & \cohpaths \to \cohpaths & \shp \to \shp & \flat \to \flat & \sharp \to \sharp & \coshp \to \coshp \\ \hline
			(\cohpi \varsigma) & (\varsigma \cohdisc) &&&&
			(\shp \varsigma = \varsigma \shp) & (\varsigma \flat) \qquad{} \\
			& \kappa \cohdisc & \cohfget \kappa &&&
			{} \qquad \kappa \shp & \kappa \flat = \flat \kappa & \sharp \kappa \qquad{} \\
			&& \cohfget \iota & \iota \cohcodisc &&
			& {}\qquad \flat \iota & \sharp \iota = \iota \sharp & \iota \coshp \qquad{} \\
			&&& \vartheta \cohcodisc & \cohpaths \vartheta &
			&& {} \qquad \vartheta \sharp & \vartheta \coshp = \coshp \vartheta
		\end{array}
	\end{equation}
	The ones involving $\kappa$, $\iota$ and $\vartheta$ are actually equal, while for $\varsigma$ we have
	\begin{equation*}
		\cohpi \varsigma = \bar \varsigma \cohpi : \cohpi \cong \cohpi \shp, \qquad
		\varsigma \cohdisc = \cohdisc \bar \varsigma : \cohdisc \cong \shp \cohdisc, \qquad
		\shp \varsigma = \varsigma \shp = \cohdisc \bar \varsigma \cohpi : \shp \shp \cong \shp, \qquad
		\varsigma \flat = \cohdisc \bar \varsigma \cohfget : \shp \flat \cong \flat.
	\end{equation*}
\end{proposition}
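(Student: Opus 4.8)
The strategy is to obtain almost everything by transporting \cref{thm:cohesion-base} along the operation \(\fpsh\loch\). Recall from \cref{thm:lifting-adjunctions} that \(\fpsh\loch\) turns an adjunction \(L\dashv R\) of functors on the base categories into \(\fpsh L\dashv\fpsh R\) with unit and co-unit \emph{interchanged}, and that it is contravariantly functorial, \(\fpsh G\circ\fpsh F=\fpsh{(F\circ G)}\); it is likewise contravariant on natural transformations and swaps the two whiskering sides. Applying this to the three base adjunctions \(\cohpi\dashv\cohdisc\), \(\cohdisc\dashv\cohfget\), \(\cohfget\dashv\cohcodisc\) of \cref{thm:cohesion-base} produces, on the presheaf level, the adjunctions \(\cohdisc\dashv\cohfget\), \(\cohfget\dashv\cohcodisc\), \(\cohcodisc\dashv\cohpaths\) together with their units and co-units: the identity co-unit of base \(\cohpi\dashv\cohdisc\) becomes the identity unit of presheaf \(\cohdisc\dashv\cohfget\), the co-unit \(\kappa\) of base \(\cohdisc\dashv\cohfget\) becomes the unit \(\fpsh\kappa=\iota\) of presheaf \(\cohfget\dashv\cohcodisc\), the unit \(\iota\) of base \(\cohfget\dashv\cohcodisc\) becomes the co-unit \(\fpsh\iota=\vartheta\) of presheaf \(\cohcodisc\dashv\cohpaths\), and so on. The fifth, left-most adjunction \(\cohpi\dashv\cohdisc\) of presheaf functors is exactly the one supplied by \cref{sec:left-adjoint-to-lifted-functor} (its left adjoint \(\cohpi\) being, as in \cref{thm:left-adjoint-not-cwf}, not a morphism of CwFs), and by \cref{thm:yoneda-and-left-adjoint} it satisfies \(\cohpi\circ\yoneda\cong\yoneda\circ\cohpi\); I take \(\varsigma\) to be its unit. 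Every equality of composites on the \(\widehat{\cubecat}\)-side (\(\cohfget\cohdisc=\Id\), \(\cohfget\cohcodisc=\Id\), \(\cohpaths\cohcodisc=\Id\)) is then just \(\fpsh\loch\) of the corresponding base equality via \(\fpsh G\circ\fpsh F=\fpsh{(FG)}\); and since the presheaf endofunctors \(\flat,\sharp,\coshp\) are \(\fpsh\loch\) of the base \(\shp,\flat,\sharp\) respectively, every idempotency or absorption relation among \(\flat,\sharp,\coshp\) descends likewise.

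The only genuinely new point is \(\cohpi\cohdisc\cong\Id\) on \(\widehat{\cubecat}\), which I would prove purely formally. The unit of presheaf \(\cohdisc\dashv\cohfget\) is the identity (it lifts the identity co-unit of base \(\cohpi\dashv\cohdisc\)), hence the left adjoint \(\cohdisc\) is fully faithful; and a fully faithful right adjoint — here \(\cohdisc\) in \(\cohpi\dashv\cohdisc\) — has invertible co-unit. Writing \(\bar\varsigma\) for the inverse of that co-unit gives \(\bar\shp:=\cohpi\cohdisc\cong\Id\), making the degenerate chain \(\bar\shp\dashv\Id\dashv\Id\dashv\Id\) on the \(\widehat{\cubecat}\)-side automatic. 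Inserting this iso then yields \(\shp\shp=\cohdisc(\cohpi\cohdisc)\cohpi\cong\shp\) and \(\shp\flat=\cohdisc(\cohpi\cohdisc)\cohfget\cong\flat\), whereas \(\flat\shp=\cohdisc(\cohfget\cohdisc)\cohpi=\shp\) holds on the nose by the already-established \(\cohfget\cohdisc=\Id\).

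For the endofunctor adjunctions \(\shp\dashv\flat\dashv\sharp\dashv\coshp\) on \(\widehat{\bpcubecat}\) I would compose consecutive adjunctions in the five-functor chain: \(\shp=\cohdisc\cohpi\dashv\cohdisc\cohfget=\flat\) uses the first two, \(\flat=\cohdisc\cohfget\dashv\cohcodisc\cohfget=\sharp\) the middle two, \(\sharp=\cohcodisc\cohfget\dashv\cohcodisc\cohpaths=\coshp\) the last two. Plugging into the standard composite-adjunction formulas for unit and co-unit and simplifying: for \(\flat\dashv\sharp\) and \(\sharp\dashv\coshp\) the interposed natural transformations are identities, so the unit is \(\iota\) and the co-units are \(\kappa\), \(\vartheta\) respectively; for \(\shp\dashv\flat\) the unit collapses to \(\varsigma\) and the co-unit to \(\kappa\circ(\varsigma\flat)\inv\), where \(\varsigma\flat=\cohdisc\,\bar\varsigma\,\cohfget\) is read off the triangle identity \(\cohdisc\bar\varsigma\inv\circ\varsigma\cohdisc=\id_\cohdisc\) of \(\cohpi\dashv\cohdisc\), i.e.\ \(\varsigma\cohdisc=\cohdisc\bar\varsigma\). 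The remaining \(\varsigma\)-identities come the same way: \(\cohpi\varsigma=\bar\varsigma\cohpi\) from the other triangle identity of \(\cohpi\dashv\cohdisc\), and \(\shp\varsigma=\varsigma\shp=\cohdisc\,\bar\varsigma\,\cohpi\) by whiskering these with \(\shp=\cohdisc\cohpi\).

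It remains to check \cref{eq:cohesion-psh-identities}. The entries built only from \(\cohdisc,\cohfget,\cohcodisc,\cohpaths,\flat,\sharp,\coshp,\kappa,\iota,\vartheta\) are precisely \(\fpsh\loch\) of the entries of \cref{eq:cohesion-base-identities}: since \(\fpsh\loch\) reverses 1-cells, 2-cells and whiskering sides, a base identity such as \(\kappa\shp=\id\) becomes the presheaf identity \(\flat\iota=\fpsh{(\kappa\shp)}=\id\), and under the dictionary \(\cohdisc=\fpsh\cohpi,\ \cohfget=\fpsh\cohdisc,\ \cohcodisc=\fpsh\cohfget,\ \cohpaths=\fpsh\cohcodisc,\ \flat=\fpsh\shp,\ \sharp=\fpsh\flat,\ \coshp=\fpsh\sharp,\ \kappa=\fpsh\varsigma,\ \iota=\fpsh\kappa,\ \vartheta=\fpsh\iota\) the presheaf table is exactly the image of the base one; as \(\fpsh\loch\) sends identities to identities, all these are (compatible with) the identity. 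The few entries mixing the non-lifted \(\shp\) with lifted data reduce via \(\shp=\cohdisc\cohpi\) and whiskering to the lifted case (e.g.\ \(\kappa\shp=(\kappa\cohdisc)\cohpi\) with \(\kappa\cohdisc=\fpsh{(\cohpi\varsigma)}=\id\)), and the \(\varsigma\)-entries were handled above. I expect the main obstacle to be exactly this bookkeeping: \(\fpsh\loch\) reverses 1-cell composition, 2-cell direction and whiskering side simultaneously, and the presheaf names sit one step along the chain from the base names, so matching each claimed presheaf identity to its base source — and keeping variances consistent throughout the composite-adjunction computations — requires care. The genuinely new content is confined to the short formal argument for \(\cohpi\cohdisc\cong\Id\) and to everything touching \(\cohpi\), which has to be routed through \cref{thm:yoneda-and-left-adjoint} and the triangle identities rather than transported from \cref{thm:cohesion-base}.
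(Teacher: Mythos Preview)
Your approach is correct and essentially parallel to the paper's, but you are doing by hand what the paper dispatches in one stroke. The paper's proof lifts the three base adjunctions via \cref{thm:lifting-adjunctions} and obtains $\bar\shp\cong\Id$ by observing that both $\cohpi\cohdisc$ and $\Id$ are left adjoint to $\cohfget\cohdisc=\Id$ and invoking uniqueness of left adjoints (\cref{thm:unicity-of-left-adjoint}); your fully-faithful-right-adjoint argument is an equally valid alternative route to the same isomorphism. The real divergence is in the last part: where you carefully transport each entry of the base identity table through $\fpsh\loch$, track the whiskering-side swap, and derive the $\varsigma$-identities from triangle identities, the paper simply invokes \cref{thm:uniqueness-of-nattrans-psh}. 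That corollary says that between composites of these cohesion functors there is at most one natural transformation, so every claimed equality of natural transformations (units, co-units, the whole table \cref{eq:cohesion-psh-identities}) holds automatically once both sides are seen to have the same domain and codomain. Your explicit transport is honest and instructive, and makes the ``dictionary'' between base and presheaf levels visible; the paper's uniqueness argument is shorter and sidesteps exactly the bookkeeping you flag as the main obstacle.
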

\begin{lemma}
	A natural transformation $\nu : \fpsh F \to H : \widehat \catV \to \widehat \catW$ whose domain is a lifted functor, is fully determined by $\nu \yoneda F : \fpsh F \yoneda F \to H \yoneda F : \catW \to \widehat \catW$. If $H = \fpsh G$, then $\nu = \fpsh{\tilde \nu}$ for some $\tilde \nu : G \to F : \catW \to \catV$. If $H = \lpsh K \dashv \fpsh K$ for some $K : \catV \to \catW$, then $\nu$ corresponds to a natural transformation $\mu : \Id \to KF : \catW \to \catW$.
\end{lemma}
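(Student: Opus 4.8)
Write $F:\catW\to\catV$ for the lifted functor, so that $\fpsh F:\widehat\catV\to\widehat\catW$ is precomposition with $F$; the crucial point is that a cell of $\fpsh F\Gamma$ over $W$ is literally a defining substitution $\gamma:\DSub{FW}{\Gamma}$, Yoneda‑classified by a substitution $\bar\gamma:\yoneda(FW)\to\Gamma$ in $\widehat\catV$ which sends the generic cell $\id_{FW}\in(\fpsh F\,\yoneda(FW))(W)=\Hom_\catV(FW,FW)$ to $\gamma$. To prove that $\nu$ is determined by $\nu\yoneda F$ — for \emph{any} target $H$ — I would fix $\Gamma\in\widehat\catV$, $W\in\catW$ and $\gamma:\DSub{FW}{\Gamma}$, and apply naturality of $\nu$ at $\bar\gamma$: this gives $\nu_\Gamma\circ\fpsh F(\bar\gamma)=H(\bar\gamma)\circ\nu_{\yoneda(FW)}$, and evaluating at $W$ on the argument $\id_{FW}$ yields $(\nu_\Gamma)_W(\gamma)=(H\bar\gamma)_W\bigl((\nu_{\yoneda(FW)})_W(\id_{FW})\bigr)$. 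The right‑hand side mentions $\nu$ only through the component $\nu_{\yoneda(FW)}=(\nu\yoneda F)_W$, so the value of $\nu_\Gamma$ on each cell, hence $\nu$ itself, is pinned down by $\nu\yoneda F$.

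When $H=\fpsh G$ with $G:\catW\to\catV$, I would restrict $\nu$ along $\yoneda:\catV\to\widehat\catV$ to obtain maps $\nu_{\yoneda V}:\Hom_\catV(F\loch,V)\to\Hom_\catV(G\loch,V)$ natural in $V\in\catV$. For fixed $W$ the functor $\Hom_\catV(FW,\loch)$ is corepresentable, so by the Yoneda lemma this family is precomposition by the single arrow $\tilde\nu_W:=(\nu_{\yoneda(FW)})_W(\id_{FW}):GW\to FW$; naturality in $W$ makes $\tilde\nu$ a natural transformation $G\to F$. Since $\fpsh{\tilde\nu}$ restricts along $\yoneda$ to exactly this family (by definition of $\fpsh\loch$ on $2$‑cells), we get $\fpsh{\tilde\nu}\,\yoneda F=\nu\,\yoneda F$, and the determinacy clause forces $\nu=\fpsh{\tilde\nu}$.

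When $H=\lpsh K$ with $\lpsh K\dashv\fpsh K$ and $K:\catV\to\catW$, \cref{thm:yoneda-and-left-adjoint} gives $\lpsh K\circ\yoneda\cong\yoneda\circ K$, so restricting $\nu$ along $\yoneda$ yields maps $\Hom_\catV(F\loch,V)\to\Hom_\catW(\loch,KV)$ natural in $V$; Yoneda identifies the value at a fixed $W$ with an element $\mu_W:=(\nu_{\yoneda(FW)})_W(\id_{FW})\in\Hom_\catW(W,KFW)$, so that $\nu_{\yoneda V}$ sends $f:FW\to V$ to $Kf\circ\mu_W$, and naturality in $W$ assembles the $\mu_W$ into $\mu:\Id_\catW\to KF$. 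Conversely $\mu$ rebuilds $\nu$ by sending a cell $\gamma:\DSub{FW}{\Gamma}$ to the cell of $\lpsh K\Gamma$ classified by $\yoneda W\xrightarrow{\yoneda\mu_W}\yoneda(KFW)\cong\lpsh K(\yoneda(FW))\xrightarrow{\lpsh K\bar\gamma}\lpsh K\Gamma$; naturality of $\mu$ makes this natural in $W$, functoriality of $\lpsh K$ makes it natural in $\Gamma$, and the two passages are mutually inverse — one composite is the identity by the determinacy clause, the other because $K(\id_{FW})\circ\mu_W=\mu_W$.

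The crux is the elementary cell‑chase of the first paragraph, which works precisely because $\fpsh F$ only ``sees'' $\Gamma$ on objects in the image of $F$; after that, the two named cases are just the Yoneda lemma applied to the corepresentable $\Hom_\catV(FW,\loch)$ together with the explicit descriptions of $\fpsh\loch$ and $\lpsh\loch$ on representables. I expect the only real friction to be bookkeeping: tracking variances through the several Yoneda invocations and verifying the naturality squares — in $W$, and in $\Gamma$ for the reconstruction in the last case — that upgrade the pointwise data to honest natural transformations.
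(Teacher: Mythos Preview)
Your proof is correct and follows essentially the same approach as the paper: determinacy via naturality of $\nu$ at the Yoneda-classifying map $\bar\gamma:\yoneda(FW)\to\Gamma$, extraction of $\tilde\nu_W$ (resp.\ $\mu_W$) by evaluating $\nu_{\yoneda(FW)}$ at the generic cell $\id_{FW}$, and use of the isomorphism $\lpsh K\circ\yoneda\cong\yoneda\circ K$ from \cref{thm:yoneda-and-left-adjoint} in the left-adjoint case. The only stylistic difference is that you package the naturality checks for $\tilde\nu$ and the reconstruction of $\nu$ from $\mu$ as instances of the Yoneda lemma for the corepresentable $\Hom_\catV(FW,\loch)$, whereas the paper writes these verifications out explicitly.
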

\begin{proof}
	Pick a presheaf $\Gamma \in \widehat \catV$ and a defining substitution $\fpshadj F (\gamma) : \DSub{W}{\fpsh F \Gamma}$. Then the following diagram commutes:
	\begin{equation}
		\xymatrix{
			& \fpsh F \yoneda F W \ar[r]^{\nu \yoneda FW} \ar[d]^{\fpsh F \gamma}
			& H \yoneda F W \ar[d]^{H \gamma} 
			\\
			W \ar@{=>}[ru]^{\fpshadj F (\id)} \ar@{=>}[r]_{\fpshadj F (\gamma)}
			& \fpsh F \Gamma \ar[r]_{\nu \Gamma}
			& H\Gamma,
		}
	\end{equation}
	showing that $\nu_\Gamma \circ \fpshadj F(\gamma)$ is determined by $\nu \yoneda F W$.
	\begin{itemize}
		\item If $H = \fpsh G$, then we define $\tilde \nu W = \fpshadj G\inv(\nu \yoneda F W \circ \fpshadj F(\id)) : \PSub{GW}{FW}$. Note that if $\nu = \fpsh \mu$, then we would find
		\begin{equation}
			\tilde \nu W = \fpshadj G\inv(\fpsh \mu \yoneda F W \circ \fpshadj F(\id))
			= \fpshadj G\inv(\fpshadj G(\id \circ \mu W)) = \mu W : \PSub{GW}{FW}.
		\end{equation}
		In general, this is a natural transformation because if we have $\vfi : \PSub V W$, then\footnote{Remember that we write $\gamma : \yoneda W \to \Gamma$ when $\gamma : \DSub W \Gamma$, implying that we do not write $\yoneda$ when applied to a morphism.}
		\begin{align*}
			F \vfi \circ \tilde \nu V
			&= \fpshadj G \inv (\fpsh G (F \vfi) \circ \nu \yoneda F V \circ \fpshadj F (\id)) \\
			&= \fpshadj G \inv (\nu \yoneda F W \circ \fpsh F (F \vfi) \circ \fpshadj F (\id)) \\
			&= \fpshadj G \inv (\nu \yoneda F W \circ \fpshadj F (F \vfi)) \\
			&= \fpshadj G \inv (\nu \yoneda F W \circ \fpshadj F (\id) \circ \vfi) \\
			&= \fpshadj G \inv (\nu \yoneda F W \circ \fpshadj F (\id)) \circ G \vfi
			= \tilde \nu W \circ G \vfi.
		\end{align*}
		Moreover, the above diagram shows that $\nu = \fpsh{\tilde \nu}$ because
		\begin{equation}
			\nu \circ \fpshadj F(\gamma) = \fpsh G \gamma \circ \nu \yoneda F W \circ \fpshadj F(\id) = \fpsh G \gamma \circ \fpshadj G (\tilde \nu) = \fpshadj G (\gamma \circ \tilde \nu) = \fpsh{\tilde \nu} \circ \fpshadj G(\gamma).
		\end{equation}
		
		\item If $H = \lpsh K \dashv \fpsh K$, then we show that natural transformations $\fpsh F \to \lpsh K : \widehat \catV \to \widehat \catW$ correspond to natural transformations $\Id \to KF : \catW \to \catW$. We already know that $\nu : \fpsh F \to \lpsh K$ is determined by $\nu \yoneda F : \fpsh F \yoneda F \to \lpsh K \yoneda F$, and \cref{thm:yoneda-and-left-adjoint} tells us that $\zeta : \lpsh K \yoneda F \cong \yoneda K F$.
		
		Given $\nu$, we now define $\mu : \Id \to KF$ by $\mu W = \zeta W \circ \nu \yoneda F W \circ \fpshadj F(\id_W) \in (\DSub{W}{\yoneda KF W}) = (\PSub{W}{KFW})$. Conversely, given $\mu : \Id \to KF$, we define $\nu : \fpsh F \to \lpsh K$ by setting for every $\fpshadj F(\gamma) : \DSub{W}{\fpsh F \Gamma}$ (i.e. $\gamma : \DSub{FW}{\Gamma}$), the composition $\nu \Gamma \circ \fpshadj F(\gamma)$ equal to $\lpsh K \gamma \circ (\zeta W)\inv \circ \mu W : \DSub W {\lpsh K \Gamma}$. These operations are inverse:
		\begin{align*}
			\lpsh K \gamma \circ (\zeta W)\inv \circ \mu W
			&= \lpsh K \gamma \circ (\zeta W)\inv \circ \zeta W \circ \nu \yoneda F W \circ \fpshadj F(\id_W) \\
			&= \lpsh K \gamma \circ \nu \yoneda F W \circ \fpshadj F(\id_W)
			= \nu \Gamma \circ \fpsh F \gamma \circ \fpshadj F(\id_W)
			= \nu \Gamma \circ \fpshadj F(\gamma). \\
			\zeta W \circ \nu \yoneda F W \circ \fpshadj F(\id_W)
			&= \zeta W \circ \lpsh K\,\id_W \circ (\zeta W)\inv \circ \mu W = \mu W. \qedhere
		\end{align*}
	\end{itemize}
\end{proof}
\begin{corollary}\label{thm:uniqueness-of-nattrans-psh}
	Let $\catV, \catW \in \accol{\cubecat, \bpcubecat}$ and $F, G : \widehat \catV \to \widehat \catW$. Then all natural transformations from $F$ to $G$ are equal in each of the following cases:
	\begin{enumerate}
		\item If $F$ and $G$ are composites of $\cohdisc$, $\cohfget$, $\cohcodisc$ and $\cohpaths$;
		\item If $F$ is a composite of $\cohdisc$, $\cohfget$, $\cohcodisc$ and $\cohpaths$, and $G$ is a composite of $\cohpi$, $\cohdisc$, $\cohfget$ and $\cohcodisc$;
		\item If $F$ factors as $LP$, where $L \dashv R$ and one of the previous cases apply to $P$ and $RG$.
	\end{enumerate}
\end{corollary}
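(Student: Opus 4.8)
The plan is to reduce all three cases to the Lemma just proved together with \cref{thm:uniqueness-of-nattrans-base}. The preliminary observation I would record first is that, by the definitions opening this subsection, the presheaf functors $\cohdisc$, $\cohfget$, $\cohcodisc$, $\cohpaths$ are exactly the lifted functors $\fpsh{\cohpi}$, $\fpsh{\cohdisc}$, $\fpsh{\cohfget}$, $\fpsh{\cohcodisc}$ of base-category functors; since $\fpsh{\loch}$ is contravariantly functorial, any composite of $\cohdisc$, $\cohfget$, $\cohcodisc$, $\cohpaths$ is again a lifted functor $\fpsh{F_0}$ with $F_0$ a composite of the base functors $\cohpi$, $\cohdisc$, $\cohfget$, $\cohcodisc$. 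Dually, each of the presheaf functors $\cohpi$, $\cohdisc$, $\cohfget$, $\cohcodisc$ is a left adjoint whose right adjoint --- respectively $\cohdisc$, $\cohfget$, $\cohcodisc$, $\cohpaths$ --- is a lifted functor; since right adjoints compose in reverse and composites of lifted functors are lifted, any composite $G$ of $\cohpi$, $\cohdisc$, $\cohfget$, $\cohcodisc$ satisfies $G \dashv \fpsh{G_0}$ for a composite $G_0$ of base functors, so $G \cong \lpsh{G_0}$.

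For cases~1 and~2, the domain $F$ is a composite of $\cohdisc$, $\cohfget$, $\cohcodisc$, $\cohpaths$, hence $F = \fpsh{F_0}$ with $F_0$ a composite of base functors, and the preceding Lemma applies to it. In case~1, $G = \fpsh{G_0}$ likewise, and the Lemma says every natural transformation $F \to G$ has the form $\fpsh{\tilde\nu}$ for a natural transformation $\tilde\nu : G_0 \to F_0$ between base functors; by \cref{thm:uniqueness-of-nattrans-base} there is at most one such $\tilde\nu$, hence at most one $F \to G$. In case~2, $G \cong \lpsh{G_0}$ for a composite $G_0$ of base functors, and the Lemma gives a bijection between natural transformations $F \to G$ and natural transformations $\Id \to G_0 F_0$; since $G_0 F_0$ is a composite of base functors and $\Id$ is the empty one, \cref{thm:uniqueness-of-nattrans-base} again leaves at most one, hence at most one $F \to G$.

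For case~3, suppose $F = L \circ P$ with $L \dashv R$, say with unit $\eta$ and co-unit $\eps$. Transposing each component along the adjunction, the assignment $\nu \mapsto (R\nu) \circ (\eta P)$ is a bijection from natural transformations $F = LP \to G$ to natural transformations $P \to RG$, with inverse $\mu \mapsto (\eps G) \circ (L\mu)$; naturality of both assignments follows from naturality of $\eta$, $\eps$ and of the transformations being transposed. By hypothesis one of cases~1--2 applies to the pair $(P, RG)$, so there is at most one natural transformation $P \to RG$, hence at most one $F \to G$.

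The only real work is the bookkeeping of the first paragraph: recognizing composites of the five presheaf-level functors either as lifts of base functors (the domains) or as left adjoints of such lifts (the codomain in case~2), while keeping the variance of $\fpsh{\loch}$ and the direction of each base adjunction straight. After that, each of the three cases is an immediate application of the two cited results, with case~3 amounting to the observation that transposition along an adjunction is a bijection on natural transformations.
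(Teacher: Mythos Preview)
Your proof is correct and follows the same approach as the paper: case~1 via the lemma's $\fpsh{\tilde\nu}$ clause together with \cref{thm:uniqueness-of-nattrans-base}, case~2 via the lemma's $\lpsh K$ clause and the same base-level uniqueness, and case~3 via adjoint transposition reducing $LP \to G$ to $P \to RG$. Your write-up is more explicit than the paper's about why composites of the four listed presheaf functors are lifted (respectively left adjoint to lifted), but the substance is identical.
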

\begin{proof}
	Pick $\nu : F \to G$.
	\begin{enumerate}
		\item Then both $F$ and $G$ are lifted so that $\nu = \fpsh{\tilde \nu}$, and $\tilde \nu$ is completely determined by \cref{thm:uniqueness-of-nattrans-base}.
		\item Then $F$ is lifted and $G$ is left adjoint to a lifted functor, so that $\nu$ corresponds to a natural transformation of primitive contexts, which is uniquely determined because of \cref{thm:uniqueness-of-nattrans-base}.
		\item Natural transformations $LP \to G$ are in bijection with natural transformations $P \to RG$ because $L \dashv R$. \qedhere
	\end{enumerate}
\end{proof}
\begin{lemma}\label{thm:unicity-of-left-adjoint}
	Assume $L_1, L_2 : \catV \to \catW$ and $R : \catW \to \catV$ such that $\alpha_i : L_i \dashv R$ with unit $\eta_i : \Id \to R L_i$ and co-unit $\eps_i : L_i R \to \Id$. Then there is a natural isomorphism $\zeta : L_1 \cong L_2$ such that $R \zeta \circ \eta_1 = \eta_2$ and $\eps_1 = \eps_2 \circ \zeta R$.
\end{lemma}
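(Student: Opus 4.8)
The statement is the standard fact that adjoints are unique up to a canonical isomorphism compatible with unit and co-unit; the plan is simply to carry this out in the notation of \cref{thm:adjunction-rules}. The idea is to define $\zeta$ as the adjoint transpose of $\eta_2$ along $\alpha_1$. Concretely, for an object $X$ of $\catV$ the unit component $\eta_2 X : X \to RL_2 X$ transposes to $\zeta_X := \alpha_1\inv(\eta_2 X) = \eps_1 \circ L_1(\eta_2 X) : L_1 X \to L_2 X$, where the outer $\eps_1$ is taken at $L_2 X$; naturality of $\zeta$ in $X$ is immediate from naturality of $\alpha_1\inv$ (the rule $\alpha\inv(R\tau \circ \sigma \circ \rho) = \tau \circ \alpha\inv(\sigma) \circ L\rho$, applied once with $\rho$ a morphism of $\catV$ and once with $\tau = L_2$ of a morphism) together with naturality of $\eta_2$. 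Dually, set $\zeta'_X := \alpha_2\inv(\eta_1 X) = \eps_2 \circ L_2(\eta_1 X) : L_2 X \to L_1 X$. I claim $\zeta : L_1 \cong L_2$ with inverse $\zeta'$.

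Next I would record the unit compatibility, which is essentially definitional. Applying $\alpha_1$ componentwise to $\zeta$ (recall $\alpha_1(\sigma) = R\sigma \circ \eta_1$) gives $R\zeta \circ \eta_1 = \alpha_1(\alpha_1\inv(\eta_2)) = \eta_2$, which is the first required equation, and the symmetric computation gives $R\zeta' \circ \eta_2 = \eta_1$. To see that $\zeta$ and $\zeta'$ are mutually inverse, compose these: $R(\zeta' \circ \zeta) \circ \eta_1 = R\zeta' \circ (R\zeta \circ \eta_1) = R\zeta' \circ \eta_2 = \eta_1 = R(\id_{L_1}) \circ \eta_1 = \alpha_1(\id_{L_1})$, and since every component of $\alpha_1$ is a bijection (in particular injective), $\zeta' \circ \zeta = \id_{L_1}$; symmetrically $\zeta \circ \zeta' = \id_{L_2}$. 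Hence $\zeta$ is a natural isomorphism satisfying $R\zeta \circ \eta_1 = \eta_2$.

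It remains to check $\eps_1 = \eps_2 \circ \zeta R$. Unfolding at an object $Y$ of $\catW$, with $\zeta_{RY} = \eps_1 \circ L_1(\eta_2(RY))$ (outer $\eps_1$ at $L_2 R Y$), I get $\eps_2 Y \circ \zeta_{RY} = \eps_2 Y \circ \eps_1 \circ L_1(\eta_2(RY))$. Naturality of $\eps_1 : L_1 R \to \Id$ along the morphism $\eps_2 Y : L_2 R Y \to Y$ rewrites $\eps_2 Y \circ \eps_1$ as $\eps_1 \circ L_1(R\eps_2 Y)$, with $\eps_1$ now taken at $Y$; so the expression becomes $\eps_1 \circ L_1(R\eps_2 Y \circ \eta_2(RY))$, and the triangle identity $R\eps_2 \circ \eta_2 R = \id_R$ for $\alpha_2 : L_2 \dashv R$ collapses the inner composite to $\id_{RY}$, leaving $\eps_1 Y$. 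This last step is the only one that goes beyond bookkeeping with transposes, and it is where I would be most careful: one must orient the naturality square of $\eps_1$ correctly and invoke the triangle identity of the \emph{second} adjunction $L_2 \dashv R$ rather than of $L_1 \dashv R$. Everything else is forced once $\zeta$ is defined as the transpose of $\eta_2$.
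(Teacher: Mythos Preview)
Your proof is correct and uses the same definition of $\zeta$ as the paper (namely $\zeta = \eps_1 L_2 \circ L_1 \eta_2$, which is exactly your $\alpha_1\inv(\eta_2)$), and your verification of the co-unit compatibility $\eps_1 = \eps_2 \circ \zeta R$ is essentially identical to the paper's.

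Where you differ is in how invertibility is established. The paper shows $\zeta \circ \zeta\inv = \id$ by an explicit diagram chase using naturality squares for $\eta_1$, $\eps_2$, and the triangle identity for the first adjunction; you instead first derive $R\zeta \circ \eta_1 = \eta_2$ and $R\zeta' \circ \eta_2 = \eta_1$ directly from the transpose definition, then conclude $\zeta' \circ \zeta = \id$ by noting that $\alpha_1(\zeta' \circ \zeta) = R(\zeta' \circ \zeta) \circ \eta_1 = \eta_1 = \alpha_1(\id)$ and invoking injectivity of $\alpha_1$. Your route is shorter and arguably more conceptual, since it makes explicit that everything is forced once $\zeta$ is the transpose of $\eta_2$; the paper's diagram chase is more hands-on but makes the intermediate naturality steps visible. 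Either way, the unit compatibility in the paper is proved by a direct computation equivalent to your transpose argument, so there is no substantive gap between the two proofs.
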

\begin{proof}
	We set $\zeta = \eps_1 L_2 \circ L_1 \eta_2$ and $\zeta\inv = \eps_2 L_1 \circ L_2 \eta_1$. We show that $\zeta \circ \zeta\inv = \id$; the other equation holds by symmetry of the indices. Observe the commutative diagram:
	\begin{equation}
		\xymatrix{
			L_2 \ar[rr]^{L_2 \eta_1} \ar[d]_{L_2 \eta_2}
			\ar@/^{2em}/[rrrr]^{\zeta}
			&& L_2 R L_1 \ar[rr]^{\eps_2 L_1} \ar[d]_{L_2 R L_1 \eta_2}
			&& L_1 \ar[d]_{L_1 \eta_2}
			\ar@/^{2em}/[dd]^{\zeta\inv}
			\\
			L_2 R L_2 \ar[rr]^{L_2 \eta_1 R L_2} \ar[rrd]_{\id}
			&& L_2 R L_1 R L_2 \ar[rr]^{\eps_2 L_1 R L_2} \ar[d]^{L_2 R \eps_1 L_2}
			&& L_1 R L_2 \ar[d]_{\eps_1 L_2}
			\\
			&& L_2 R L_2 \ar[rr]_{\eps_2 L_2}
			&& L_2
		}
	\end{equation}
	The top right square applies naturality of $\eps_2 L_1$ to $L_1 \eta_2$. The top left square still holds after removing $L_2$ on the left and is then an instance of naturality of $\eta_1$. The lower right square still holds after removing $L_2$ on the right and is then an instance of naturality of $\eps_2$. The lower right triangle commutes because $R \eps_1 \circ \eta_1 R = \id$. Finally, the entire left-lower side composes to the identity. We have
	\begin{align*}
		R\zeta \circ \eta_1
		&= R \eps_1 L_2 \circ R L_1 \eta_2 \circ \eta_1
		= R \eps_1 L_2 \circ \eta_1 RL_2 \circ \eta_2 = \eta_2, \\
		\eps_2 \circ \zeta R
		&= \eps_2 \circ \eps_1 L_2 R \circ L_1 \eta_2 R
		= \eps_1 \circ L_1 R \eps_2 \circ L_1 \eta_2 R = \eps_1. \qedhere
	\end{align*}
\end{proof}
\begin{proof}[Proof of \cref{thm:cohesion-psh}]
	The adjunctions $\cohdisc \dashv \cohfget \dashv \cohcodisc \dashv \cohpaths$ follow from \cref{thm:lifting-adjunctions}. For now, we just assume $\cohpi$ to be some left adjoint to $\cohdisc$.
	
	The fact that $\cohfget \cohdisc = \Id$, $\cohfget\cohcodisc = \Id$ and $\cohpaths\cohcodisc = \Id$ follows from the fact that $\fpsh \loch$ swaps composition and preserves identity.
	
	It is clear that both $\bar \shp := \cohpi \cohdisc$ and $\Id$ are left adjoint to $\cohfget \cohdisc = \Id$. Then \cref{thm:unicity-of-left-adjoint} below gives us, after filling in the identity in various places, an isomorphism $\bar \varsigma : \Id \cong \bar \shp$ which is the unit of $\bar \shp \dashv \Id$, while the co-unit is $\bar \varsigma \inv$.
	
	The equalities and isomorphisms are obvious.
	
	We \emph{define} $\varsigma$ as the unit of $\cohpi \dashv \cohdisc$.
	The rest of the theorem now follows from \ref{thm:uniqueness-of-nattrans-psh}.
\end{proof}

\subsection{Characterizing cohesive adjunctions}
We currently have various cohesion-based ways of manipulating terms: we can apply functors, turning ($t \mapsto \ftrtm F t$), we can apply natural transformations ($t \mapsto \nu(t)$), we can instead substitute with natural transformations ($t \mapsto t[\nu]$) and apply adjunctions ($t \mapsto \alpha(t)$). We have various equations telling us how these relate, but altogether it becomes hard to tell whether terms are equal. For this reason, we will at least try to write the relevant adjunctions in terms of the other constructions.
\begin{proposition}
	For any contexts $\Gamma$ and $\Delta$, we have the following diagrams, in which all arrows are invertible:
	\begin{equation}
		\xymatrix{
			(\shp \Gamma \to \Delta)
				\ar[rr]^{\alpha_{\shp \dashv \flat}}
				\ar@{<-}[rd]_{\kappa \circ \loch}
			&&
			(\Gamma \to \flat \Delta)
				\ar@{<-}[ld]^{\loch \circ \varsigma}
			\\&
			(\shp \Gamma \to \flat \Delta)
		\\
			(\flat \Gamma \to \Delta)
				\ar[rr]^{\alpha_{\flat \dashv \sharp}}
				\ar[rd]_{\iota \circ \loch}
			&&
			(\Gamma \to \sharp \Delta)
				\ar[ld]^{\loch \circ \kappa}
			\\&
			(\flat \Gamma \to \sharp \Delta)
		\\
			(\sharp \Gamma \to \Delta)
				\ar[rr]^{\alpha_{\sharp \dashv \coshp}}
				\ar@{<-}[rd]_{\vartheta \circ \loch}
			&&
			(\Gamma \to \coshp \Delta)
				\ar@{<-}[ld]^{\loch \circ \iota}
			\\&
			(\sharp \Gamma \to \coshp \Delta)
		}
	\end{equation}
\end{proposition}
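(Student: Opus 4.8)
The plan is to treat the three triangles uniformly. Each $\alpha$-arrow is a bijection, being an adjunction transpose, so for each triangle it suffices to prove that it commutes and that one of its two remaining arrows is a bijection; the third is then a bijection as well, since in a commutative triangle any two invertible legs force the third to be invertible.

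For commutativity I would first read off from \cref{thm:cohesion-psh} the shapes of the transposes: the unit of $\shp\dashv\flat$ is $\varsigma:\Id\to\flat\shp=\shp$, so $\alpha_{\shp\dashv\flat}(f)=\flat f\circ\varsigma$; the unit of $\flat\dashv\sharp$ is $\iota:\Id\to\sharp\flat=\sharp$, so $\alpha_{\flat\dashv\sharp}(f)=\sharp f\circ\iota$; and the unit of $\sharp\dashv\coshp$ is $\iota:\Id\to\coshp\sharp=\sharp$, so $\alpha_{\sharp\dashv\coshp}(f)=\coshp f\circ\iota$. I would also use that $\cohdisc$ and $\cohcodisc$ are fully faithful (the co-units $\cohpi\cohdisc\cong\Id$ of $\cohpi\dashv\cohdisc$ and $\cohfget\cohcodisc=\Id$ of $\cohfget\dashv\cohcodisc$ are invertible), which gives $\flat h=\cohdisc\cohfget h=h$ for any $h:\shp\Gamma\to\flat\Delta$ --- since $\shp\Gamma=\cohdisc\cohpi\Gamma$ and $\flat\Delta=\cohdisc\cohfget\Delta$ both lie in the image of $\cohdisc$ and $\cohfget\cohdisc=\Id$ --- and dually $\coshp h=h$ for any $h:\sharp\Gamma\to\coshp\Delta$. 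With these, the top triangle is $\alpha_{\shp\dashv\flat}(\kappa_\Delta\circ h)=\flat\kappa_\Delta\circ\flat h\circ\varsigma_\Gamma=h\circ\varsigma_\Gamma$, using $\flat\kappa=\id$; the bottom triangle is obtained by the same argument with $(\shp,\flat,\varsigma,\kappa,\cohdisc,\cohfget)$ replaced by $(\sharp,\coshp,\iota,\vartheta,\cohcodisc,\cohpaths)$ and $\coshp\vartheta=\id$; and the middle triangle is $\alpha_{\flat\dashv\sharp}(f)\circ\kappa_\Gamma=\sharp f\circ\iota_\Gamma\circ\kappa_\Gamma=\sharp f\circ\iota_{\flat\Gamma}=\iota_\Delta\circ f$, where the second equality is naturality of $\iota$ at $\kappa_\Gamma:\flat\Gamma\to\Gamma$ together with $\sharp\kappa=\id$, and the third is naturality of $\iota$ at $f$.

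For the outstanding invertibility I would, in each triangle, show that the arrow which postcomposes with the near-identity $\kappa_\Delta$, $\iota_\Delta$ or $\vartheta_\Delta$ --- that is $\kappa\circ\loch$ in the top triangle, $\iota\circ\loch$ in the middle, $\vartheta\circ\loch$ in the bottom --- is a bijection, by composing it with the bijection $\alpha$ and invoking naturality of the adjunction transpose in its codomain: for the top triangle $\alpha_{\shp\dashv\flat}(\kappa_\Delta\circ h)=\flat\kappa_\Delta\circ\alpha_{\shp\dashv\flat}(h)=\alpha_{\shp\dashv\flat}(h)$ once one reads $\flat\flat\Delta=\flat\Delta$, so that $\kappa\circ\loch$ postcomposed with $\alpha_{\shp\dashv\flat}$ is again an $\alpha$-transpose and hence a bijection; the middle and bottom cases are identical, invoking $\sharp\iota=\id$ and $\coshp\vartheta=\id$ (together with $\sharp\sharp=\sharp$, $\coshp\coshp=\coshp$). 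Commutativity then promotes the last remaining arrow of each triangle to a bijection. The step I expect to be the real obstacle is none of these individually, but the bookkeeping: matching each occurrence of $\varsigma,\kappa,\iota,\vartheta$ and of their whiskerings from \cref{thm:cohesion-psh} to the object at which it is applied, and checking that the equalities used ($\flat\shp=\shp$, $\cohfget\cohdisc=\Id$, $\flat\kappa=\id$, $\sharp\kappa=\id$, $\sharp\iota=\id$, $\coshp\vartheta=\id$, $\ldots$) hold strictly, so that the collapses $\flat h=h$ and $\flat\kappa_\Delta=\id$ are legitimate.
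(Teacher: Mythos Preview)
Your proposal is correct and follows essentially the same approach as the paper: both arguments reduce everything to the adjunction transposes and the table of identities in \cref{thm:cohesion-psh} (in particular $\flat\kappa=\id$, $\sharp\iota=\id$, $\coshp\vartheta=\id$, together with $\flat\sigma=\sigma$ for $\sigma:\shp\Gamma\to\flat\Delta$ and $\coshp\sigma=\sigma$ for $\sigma:\sharp\Gamma\to\coshp\Delta$). The only cosmetic difference is that the paper establishes invertibility of each diagonal arrow directly as an instance of an $\alpha$-transpose (treating $\kappa\circ\loch$ separately because $\shp\shp\cong\shp$ is only an isomorphism), whereas you first prove commutativity and then deduce the missing invertibilities; your justification of $\flat h=h$ via full faithfulness of $\cohdisc$ is equivalent to the paper's naturality-of-$\kappa$ argument.
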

\begin{proof}
	For all diagonal arrows except $\kappa \circ \loch$, we can again take the adjunction isomorphism since $\flat$, $\sharp$ and $\coshp$ are idempotent. One can check that these boil down to composition with a (co)-unit, e.g. for $\sigma : \flat \Gamma \to \Delta$ we have
	\begin{equation}
		\alpha_{\flat \dashv \sharp}^{\flat \Gamma, \Delta}(\sigma)
		= \sharp \sigma \circ \iota \flat \Gamma
		= \iota \Delta \circ \sigma.
	\end{equation}
	For $\sigma : \sharp \Gamma \to \coshp \Delta$, we have
	\begin{equation}
		\alpha_{\sharp \dashv \coshp}^{\Gamma, \coshp \Delta}(\sigma)
		= \coshp \sigma \circ \iota \Gamma = \sigma \circ \iota \Gamma
	\end{equation}
	because $\coshp \sigma = \vartheta \coshp \Delta \circ \coshp \sigma = \sigma \circ \vartheta \sharp \Gamma = \sigma$. For the arrow $\kappa \circ \loch$, we use a composition of isomorphisms
	\begin{equation}
		(\shp \Gamma \to \flat \Delta) \xrightarrow{(\alpha_{\shp \dashv \flat}^{\shp \Gamma, \Delta})\inv} (\shp \shp \Gamma \to \flat \Delta) \xrightarrow{\loch \circ \varsigma \shp \Gamma} (\shp \Gamma \to \Delta).
	\end{equation}
	A substitution $\sigma : \shp \Gamma \to \flat \Delta$ is then mapped to
	\begin{equation}
		(\alpha_{\shp \dashv \flat}^{\shp \Gamma, \Delta})\inv(\sigma) \circ \varsigma \shp \Gamma
		= \kappa \Delta \circ (\varsigma \flat \Delta)\inv \circ \shp \sigma \circ \varsigma \shp \Gamma
		= \kappa \Delta \circ (\varsigma \flat \Delta)\inv \circ \varsigma \flat \Delta \circ \sigma
		= \kappa \Delta \circ \sigma.
	\end{equation}
	Finally, one can check that the diagrams commute, e.g. if $\sigma : \shp \Gamma \to \flat \Delta$ then
	\begin{equation}
		\alpha_{\shp \dashv \flat}^{\Gamma, \Delta}(\kappa \Delta \circ \sigma)
		= \flat \kappa \Delta \circ \flat \sigma \circ \varsigma \Gamma
		= \flat \sigma \circ \varsigma \Gamma = \sigma \circ \varsigma \Gamma
	\end{equation}
	because $\flat \sigma = \kappa \flat \Delta \circ \flat \sigma = \sigma \circ \kappa \shp \Gamma = \sigma$.
\end{proof}
\begin{notation}
	When it exists, we write $\tau \setminus \sigma$ for the unique substitution such that $\tau \circ (\tau \setminus \sigma) = \sigma$, and $\sigma / \tau$ for the unique substitution such that $(\sigma / \tau) \circ \tau = \sigma$. Uniqueness implies that $\tau \setminus (\tau \circ \sigma) = \sigma$ and $(\sigma \circ \tau) \setminus \tau = \sigma$ if the left hand side exists.
	
	Similarly, when it exists, we write $\nu\inv(t)$ for the unique term such that $\nu(\nu\inv(t)) = t$ and $t[\nu]\inv$ for the unique term such that $t[\nu]\inv[\nu] = t$. Uniqueness implies that $\nu\inv(\nu(t)) = t$ and $t[\nu][\nu]\inv = t$.

	The above theorem then justifies the following notations:
	\begin{equation}
		\begin{array}{r | c | c | c}
			& \shp \dashv \flat & \flat \dashv \sharp & \sharp \dashv \coshp
			\\ \hline
			\alpha(\sigma)
			& (\kappa \setminus \sigma) \circ \varsigma
			& (\iota \circ \sigma) / \kappa
			& (\vartheta \setminus \sigma) \circ \iota
			\\
			\alpha\inv(\tau)
			& \kappa \circ (\tau / \varsigma)
			& \iota \setminus (\tau \circ \kappa)
			& \vartheta \circ (\tau / \iota)
			\\
			\alpha(t)
			& \kappa\inv(t)[\varsigma]
			& \iota(t)[\kappa]\inv
			& \vartheta\inv(t)[\iota]
			\\
			\alpha\inv(u)
			& \kappa(t[\varsigma]\inv)
			& \iota\inv(t[\kappa])
			& \vartheta(t[\iota]\inv)
		\end{array}
	\end{equation}
	Note that terms correspond to substitutions to an extended context, which are then subject to the diagrams above.
\end{notation}

\chapter{Discreteness}\label{ch:discreteness}
It is common in categorical models of dependent type theory to designate a certain class of morphisms $\homclass F$ in the category of contexts, typically called \textbf{fibrations}, and to require that for any type $\Gamma \sez T \type$, the morphism $\pi : \Gamma.T \to \Gamma$ is a fibration. Types satisfying this criterion are then called \textbf{fibrant}. A context $\Gamma$ is called \textbf{fibrant} if the map $\Gamma \to ()$ is a fibration.

Typically, the fibrations can be characterized using a lifting property with respect to another class of morphisms $\homclass H$ which we will call \textbf{horn inclusions}. If $\eta : \Lambda \to \Delta$ is a horn inclusion, then we call a map $\sigma : \Lambda \to \Gamma$ a \textbf{horn} in $\Gamma$ and if $\sigma$ factors as $\tau \eta$, then $\tau : \Delta \to \Gamma$ is called a \textbf{filler} of $\sigma$.

Now the fibrations are usually those morphisms $\rho : \Gamma' \to \Gamma$ such that any horn $\sigma$ in $\Gamma'$ which has a filler $\tau$ in $\Gamma$ (meaning that $\rho\sigma$ has a filler $\tau$), also has a (compatible) filler in $\Gamma'$, i.e. commutative squares like the following have a diagonal:
\begin{equation}
	\xymatrix{
		\Lambda \ar[r]^\sigma \ar[d]_{\eta \in \homclass H}
		& \Gamma' \ar[d]^{\rho}
		\\
		\Delta \ar[r]_{\tau} \ar@{.>}[ru]
		&
		\Gamma
	}
\end{equation}
In this text, we will call the fibrant types and contexts \textbf{discrete} and we will speak of \textbf{discrete maps} instead of fibrations, because this better reflects the idea behind what we are doing.

In many models, only fibrant contexts are used. However, we will also consider non-discrete contexts because our modalities do not preserve discreteness.

\section{Definition}
\begin{definition}
	Let $\XX$ stand for $\IB$, $\IP$ or $\IE$.
	We say that a defining substitution $\gamma : \DSub{(W, \ctxline{\var i})}{\Gamma}$ or a defining term $(W, \ctxline{\var i}) \Dsez t : T \dsub \gamma$ is \textdef{degenerate in $\var i$} if it factors over $(\facewkn{\var i}) : \PSub{(W, \ctxline{\var i})}{W}$.
\end{definition}
The notion of degeneracy is thus meaningful in the CwFs $\widehat \cubecat$ and $\widehat \bpcubecat$. Thinking of $\var i$ as a variable, this means that $\gamma$ and $t$ do not refer to $\var i$. Thinking of $\var i$ as a dimension, this means that $\gamma$ and $t$ are flat in dimension $\var i$. Note that $t$ can only be degenerate in $\var i$ if $\gamma$ is.
\begin{corollary}
	For a defining substitution $\gamma : \DSub{(W, \ctxline{\var i})}{\Gamma}$ or a defining term $(W, \ctxline{\var i}) \Dsez t : T \dsub \gamma$, the following are equivalent:
	\begin{enumerate}
		\item $\gamma$/$t$ is degenerate in $\var i$,
		\item $\gamma = \gamma \circ (0/\var i, \facewkn{\var i})$; $t = t \psub{0/\var i, \facewkn{\var i}}$,
		\item $\gamma = \gamma \circ (1/\var i, \facewkn{\var i})$; $t = t \psub{1/\var i, \facewkn{\var i}}$. \qed
	\end{enumerate}
\end{corollary}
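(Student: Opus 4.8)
The plan is to reduce the whole statement to a single elementary fact about the base category, valid uniformly for a line variable $\var i$ of any flavour $\XX \in \accol{\IB, \IP, \IE}$: each of the two faces $(0/\var i), (1/\var i) : \PSub{W}{(W, \ctxline{\var i})}$ is a section of the degeneracy $(\facewkn{\var i}) : \PSub{(W, \ctxline{\var i})}{W}$, i.e.\ $(\facewkn{\var i}) \circ (\epsilon/\var i) = \id_W$ for $\epsilon \in \accol{0, 1}$; equivalently, $(\epsilon/\var i, \facewkn{\var i})$ is precisely the composite $(\epsilon/\var i) \circ (\facewkn{\var i})$, an idempotent endomap of $(W, \ctxline{\var i})$ that factors through $W$ along $(\facewkn{\var i})$. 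Everything else is associativity of composition (equivalently, functoriality of restriction), so I would check this base fact once and then never touch the combinatorics of face maps again.

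I would first do the defining-substitution case. For $(1)\Rightarrow(2)$ and $(1)\Rightarrow(3)$: if $\gamma = \gamma' \circ (\facewkn{\var i})$ with $\gamma' : \DSub{W}{\Gamma}$, then $\gamma \circ (\epsilon/\var i, \facewkn{\var i}) = \gamma' \circ (\facewkn{\var i}) \circ (\epsilon/\var i) \circ (\facewkn{\var i}) = \gamma' \circ (\facewkn{\var i}) = \gamma$, using $(\facewkn{\var i}) \circ (\epsilon/\var i) = \id_W$. For $(2)\Rightarrow(1)$ and $(3)\Rightarrow(1)$: the hypothesis $\gamma = \gamma \circ (\epsilon/\var i, \facewkn{\var i}) = \bigl(\gamma \circ (\epsilon/\var i)\bigr) \circ (\facewkn{\var i})$ literally exhibits $\gamma \circ (\epsilon/\var i) : \DSub{W}{\Gamma}$ as a factorization of $\gamma$ over $(\facewkn{\var i})$, so $\gamma$ is degenerate. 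This closes the loop $(1)\Leftrightarrow(2)\Leftrightarrow(3)$ for substitutions.

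For the defining-term case I would observe that a defining term $(W, \ctxline{\var i}) \Dsez t : T \dsub \gamma$ is the same datum as a defining substitution $(\gamma, t) : \DSub{(W, \ctxline{\var i})}{\Gamma.T}$, with restriction acting componentwise, and that $t$ is degenerate in $\var i$ by definition exactly when $(\gamma, t)$ is --- which, as the excerpt already remarks, forces $\gamma$ to be degenerate. Likewise the equation $t = t\psub{\epsilon/\var i, \facewkn{\var i}}$ is only well-formed when both sides lie over the same context, i.e.\ when $\gamma = \gamma \circ (\epsilon/\var i, \facewkn{\var i})$, so each of conditions $(1)$--$(3)$ for the term carries the corresponding condition on $\gamma$ as an implicit clause. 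With that understood, the term statement is just the substitution statement applied to $(\gamma, t)$ over $\Gamma.T$: unwinding, $(1)\Rightarrow(2)$ rewrites $t = t'\psub{\facewkn{\var i}}$ into $t\psub{\epsilon/\var i, \facewkn{\var i}} = t'\psub{(\facewkn{\var i}) \circ (\epsilon/\var i) \circ (\facewkn{\var i})} = t'\psub{\facewkn{\var i}} = t$, and $(2)\Rightarrow(1)$ uses $t\psub{\epsilon/\var i}$ over $\gamma \circ (\epsilon/\var i)$ as the de-degenerate witness.

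The only part that needs any care --- and the nearest thing to an obstacle --- is bookkeeping rather than mathematics: getting the direction of composition in $(\epsilon/\var i, \facewkn{\var i}) = (\epsilon/\var i) \circ (\facewkn{\var i})$ right against the substitution conventions of \cref{sec:cubecat} (restrict along $(\epsilon/\var i)$ first, then re-weaken in $\var i$), and being explicit in the term case that ``$t$ is degenerate'', condition $(2)$ and condition $(3)$ all presuppose the analogous property of $\gamma$, so that there is genuinely nothing to prove about terms beyond what the substitution case already gives. Once those conventions are fixed, the entire argument is the handful of one-line restriction computations above.
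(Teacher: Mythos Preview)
Your argument is correct. The paper gives no proof at all---the \texttt{\textbackslash qed} at the end of the statement indicates it is regarded as immediate from the definition---and your unpacking via the section/retraction identity $(\facewkn{\var i}) \circ (\epsilon/\var i) = \id_W$ and the factorization $(\epsilon/\var i, \facewkn{\var i}) = (\epsilon/\var i) \circ (\facewkn{\var i})$ is exactly the routine verification the paper leaves implicit.
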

\begin{definition}
	We call a context \textbf{discrete} if all of its cubes are degenerate in every path dimension.
	
	We call a map $\rho : \Gamma' \to \Gamma$ \textbf{discrete} if every defining substitution $\gamma$ of $\Gamma'$ is degenerate in every path dimension in which $\rho \circ \gamma$ is degenerate.
	
	We call a type $\Gamma \sez T \type$ \textbf{discrete} (denoted $\Gamma \sez T \dtype$) if every defining term $t : T \dsub \gamma$ is degenerate in every path dimension in which $\gamma$ is degenerate.
\end{definition}
\begin{proposition}
	A type $\Gamma \sez T \type$ is discrete if and only if $\pi : \Gamma.T \to \Gamma$ is discrete.
\end{proposition}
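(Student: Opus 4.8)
The plan is to pass everything through the fact that restriction on the extended context is computed componentwise, $(\gamma', t')\psub\vfi = (\gamma'\psub\vfi, t'\psub\vfi)$, together with the bare definitions of discrete type and discrete map. First I would record the auxiliary observation that, for a defining substitution $(\gamma, t) : \DSub{W}{\Gamma.T}$ and a path dimension $\var i$ of $W$, the pair $(\gamma, t)$ is degenerate in $\var i$ if and only if both $\gamma$ is degenerate in $\var i$ and $t$ is degenerate in $\var i$. Indeed, by componentwise restriction, $(\gamma, t)$ factors over $(\facewkn{\var i})$ as $(\gamma_0, t_0)\circ(\facewkn{\var i})$ exactly when $\gamma = \gamma_0\circ(\facewkn{\var i})$ and $t = t_0\psub{\facewkn{\var i}}$; the second clause is well-posed precisely because, as already remarked, $t$ can only be degenerate in $\var i$ when $\gamma$ is.

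For the forward implication, assume $\Gamma \sez T \dtype$ and let $\delta = (\gamma, t) : \DSub{W}{\Gamma.T}$ be arbitrary, with $\var i$ a path dimension of $W$ in which $\pi\circ\delta$ is degenerate. Since $\pi\circ\delta = \gamma$, the substitution $\gamma$ is degenerate in $\var i$; discreteness of $T$ then yields that $t$ is degenerate in $\var i$; and the auxiliary observation gives that $\delta = (\gamma, t)$ is degenerate in $\var i$. As $\delta$ and $\var i$ were arbitrary, $\pi$ is discrete.

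Conversely, assume $\pi$ is discrete, let $\gamma : \DSub{W}{\Gamma}$ and $t : T\dsub\gamma$ be arbitrary, and let $\var i$ be a path dimension of $W$ in which $\gamma$ is degenerate. Then $(\gamma, t) : \DSub{W}{\Gamma.T}$ satisfies $\pi\circ(\gamma, t) = \gamma$, which is degenerate in $\var i$, so discreteness of $\pi$ forces $(\gamma, t)$ to be degenerate in $\var i$; by the auxiliary observation its second component $t$ is degenerate in $\var i$. Hence $T$ is discrete. The only place where anything needs to be checked --- and the nearest thing to an obstacle --- is the auxiliary observation: one must verify that a witness $(\gamma_0, t_0)$ for degeneracy of the pair really is a defining substitution of $\Gamma.T$, i.e.\ that $t_0$ is a defining term of $T\dsub{\gamma_0}$; this is exactly the content of the remark that $t$ is degenerate in $\var i$ only if $\gamma$ is. Everything else is a direct unfolding of the definitions.
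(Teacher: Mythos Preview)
Your proof is correct and follows essentially the same approach as the paper: both directions proceed by pairing/unpairing $(\gamma,t)$ and using that degeneracy of the pair is equivalent to degeneracy of both components. You make the auxiliary observation about componentwise degeneracy explicit, whereas the paper leaves it implicit, but the argument is the same.
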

\begin{proof}
	\begin{itemize}
		\item[$\Rightarrow$] Assume that $T$ is discrete. Pick $(\gamma, t) : \DSub{(W, \ctxpath{\var i})}{\Gamma.T}$ such that $\pi \circ (\gamma, t) = \gamma$ is degenerate in $\var i$. Then $t$ is degenerate in $\var i$ by discreteness of $T$ and so is $(\gamma, t)$.
		\item[$\Leftarrow$] Assume that $pi$ is discrete. Pick $t : T \dsub \gamma$ where $\gamma$ is degenerate in $\var i$. Then $(\gamma, t)$ is degenerate in $\var i$ since $\pi(\gamma, t) = \gamma$, and hence $t$ is degenerate in $\var i$. \qedhere
	\end{itemize}
\end{proof}
\begin{proposition}
	A context $\Gamma$ is discrete if and only if $\Gamma \to ()$ is discrete.
\end{proposition}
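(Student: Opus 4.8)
The plan is to unfold both definitions and observe that, for the terminal projection $\emptysub : \Gamma \to ()$, the only degeneracy hypothesis that the definition of a discrete map imposes is one that is automatically satisfied, so the discreteness condition on $\emptysub$ collapses to exactly the definition of a discrete context.

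First I would record the relevant fact about the empty context: for every primitive context $W$ we have $(\DSub W ()) = \accol{\emptysub}$, so there is a unique defining substitution into $()$ from any primitive context. Consequently, for any path variable $\var i$, the substitution $\emptysub : \DSub{(W, \ctxpath{\var i})}{()}$ necessarily factors through $(\facewkn{\var i}) : \PSub{(W, \ctxpath{\var i})}{W}$ --- there being nothing else it could be --- and hence $\emptysub$ is degenerate in every path dimension.

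Next I would specialize the definition of ``discrete map'' to $\rho = \emptysub$: it states that every defining substitution $\gamma : \DSub{(W, \ctxpath{\var i})}{\Gamma}$ is degenerate in $\var i$ whenever $\emptysub \circ \gamma = \emptysub$ is degenerate in $\var i$. By the previous step that hypothesis always holds, so $\emptysub : \Gamma \to ()$ is discrete if and only if every cube of $\Gamma$ whose domain carries a path variable $\var i$ is degenerate in $\var i$; cubes whose domain carries no path variable are vacuously degenerate in every path dimension. This is precisely the definition of $\Gamma$ being a discrete context, so both implications drop out at once.

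The hard part will be essentially nonexistent: the only point requiring care is to match the two quantifications --- ``all cubes of $\Gamma$ are degenerate in every path dimension'' on the one hand, and the per-path-variable formulation appearing in the definition of a discrete map on the other --- which is routine bookkeeping about exhibiting the relevant path variable in the domain of a given cube.
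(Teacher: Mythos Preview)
Your proposal is correct and follows the same approach as the paper: observe that every defining substitution into $()$ is degenerate in every dimension, so the hypothesis in the definition of a discrete map is vacuous for $\emptysub : \Gamma \to ()$, collapsing that definition to the one for a discrete context. The paper's proof is a one-liner stating exactly this observation; you have simply spelled out the details.
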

\begin{proof}
	Note that every defining substitution of $()$ is degenerate in every dimension. This proves the claim.
\end{proof}
\begin{proposition}
	A map $\rho : \Gamma' \to \Gamma$ is discrete if and only if it has the lifting property with respect to all horn inclusions $(\facewkn{\var i}) : \yoneda(W, \ctxpath{\var i}) \to \yoneda W$.
\end{proposition}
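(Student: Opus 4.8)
The plan is to unfold both conditions through the Yoneda embedding together with the observation that the face map $(\facewkn{\var i}) : \PSub{(W, \ctxpath{\var i})}{W}$ is a split epimorphism, with section $(0/\var i) : \PSub{W}{(W, \ctxpath{\var i})}$, that is, $(\facewkn{\var i}) \circ (0/\var i) = \id_{W}$. Recall that a substitution $\yoneda(W, \ctxpath{\var i}) \to \Gamma'$ is a defining substitution $\sigma : \DSub{(W, \ctxpath{\var i})}{\Gamma'}$, a substitution $\yoneda W \to \Gamma$ is a defining substitution $\tau : \DSub{W}{\Gamma}$, a candidate filler $\yoneda W \to \Gamma'$ is a defining substitution $\theta : \DSub{W}{\Gamma'}$, and precomposition with $(\facewkn{\var i})$ is restriction along $(\facewkn{\var i})$, i.e. $\loch \circ (\facewkn{\var i})$. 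Thus an instance of the lifting problem is a pair $\sigma : \DSub{(W, \ctxpath{\var i})}{\Gamma'}$, $\tau : \DSub{W}{\Gamma}$ with $\rho \circ \sigma = \tau \circ (\facewkn{\var i})$, and a solution is a $\theta : \DSub{W}{\Gamma'}$ with $\theta \circ (\facewkn{\var i}) = \sigma$ and $\rho \circ \theta = \tau$. Crucially, ``there exists $\theta$ with $\theta \circ (\facewkn{\var i}) = \sigma$'' is by definition the statement that $\sigma$ is degenerate in $\var i$, and likewise the horn $\rho \circ \sigma$ having a filler over $\Gamma$ means precisely that $\rho \circ \sigma$ is degenerate in $\var i$.

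For the implication $(\Rightarrow)$, I would suppose $\rho$ discrete and consider a lifting problem as above. From $\rho \circ \sigma = \tau \circ (\facewkn{\var i})$ we see that $\rho \circ \sigma$ factors over $(\facewkn{\var i})$, hence is degenerate in $\var i$; discreteness of $\rho$ then gives that $\sigma$ is degenerate in $\var i$, so $\sigma = \theta \circ (\facewkn{\var i})$ for some $\theta : \DSub{W}{\Gamma'}$. This $\theta$ solves the upper triangle, and the lower triangle follows by precomposing with the section: $\rho \circ \theta = \rho \circ \theta \circ (\facewkn{\var i}) \circ (0/\var i) = \rho \circ \sigma \circ (0/\var i) = \tau \circ (\facewkn{\var i}) \circ (0/\var i) = \tau$.

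For the implication $(\Leftarrow)$, I would suppose $\rho$ has the lifting property. Every defining substitution of $\Gamma'$ has some cube as its domain; fixing one of its path dimensions and writing the domain as $(W, \ctxpath{\var i})$ reduces the discreteness condition to the following, for arbitrary $W$ and $\var i$: if $\gamma : \DSub{(W, \ctxpath{\var i})}{\Gamma'}$ is such that $\rho \circ \gamma$ is degenerate in $\var i$, then $\gamma$ is degenerate in $\var i$. So assume $\rho \circ \gamma$ degenerate in $\var i$, i.e. $\rho \circ \gamma = \tau \circ (\facewkn{\var i})$ for some $\tau : \DSub{W}{\Gamma}$. Taking $\sigma := \gamma$, the pair $(\sigma, \tau)$ is an instance of the lifting problem, and any solution $\theta$ satisfies $\theta \circ (\facewkn{\var i}) = \gamma$, so $\gamma$ factors over $(\facewkn{\var i})$ and is degenerate in $\var i$, as required.

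I expect no real obstacle: the entire content is the Yoneda dictionary together with the fact that $(\facewkn{\var i})$ is a split epimorphism. The only point worth a remark is that, in $(\Rightarrow)$, the compatibility condition $\rho \circ \theta = \tau$ on the lift comes for free once a lift of the upper triangle has been produced (since $(\facewkn{\var i})$ is epi), so extracting $\theta$ from degeneracy of $\sigma$ already suffices.
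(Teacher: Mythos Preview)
Your proof is correct and follows essentially the same approach as the paper's: both directions go through the Yoneda dictionary to identify the lifting property with the degeneracy condition. You are slightly more explicit than the paper in verifying the lower triangle $\rho \circ \theta = \tau$ via the section $(0/\var i)$ (equivalently, via $(\facewkn{\var i})$ being split epi), which the paper leaves implicit.
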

\begin{proof}
	\begin{itemize}
		\item[$\Rightarrow$] Suppose that $\rho$ is discrete and consider a square
		\begin{equation}
			\xymatrix{
				\yoneda(W, \ctxpath{\var i}) \ar[r]^{\gamma'} \ar[d]_{(\facewkn{\var i})}
				& \Gamma' \ar[d]^{\rho}
				\\
				\yoneda W \ar[r]_{\gamma}
				&
				\Gamma.
			}
		\end{equation}
		Then the defining substitution $\rho \circ \gamma' : \DSub{(W, \ctxpath{\var i})}{\Gamma}$ clearly factors over $(\facewkn{\var i})$ so that it is degenerate in $\var i$. By degeneracy of $\rho$, the same holds for $\gamma'$, yielding the required diagonal.
		
		\item[$\Leftarrow$] Suppose that $\rho$ has the lifting property and take $\gamma' : \DSub{(W, \ctxpath{\var i})}{\Gamma'}$ such that $\rho \circ \gamma'$ is degenerate in $\var i$. This gives us a square as above, which has a diagonal, showing that $\gamma'$ is degenerate. \qedhere
	\end{itemize}
\end{proof}
\begin{example}
	In the examples, we will develop the content of this chapter for the CwF $\widehat{\RGcat}$ of reflexive graphs. This will fail when we come to product types, which is the reason why we choose to work with presheaves over $\widehat{\bpcubecat}$ which contain not only points, paths and bridges, but also coherence cubes. An alternative is to require edges (bridges and paths) to be proof-irrelevant in the style of \cite{dtt-parametricity}, but this property cannot be satisfied by the universe. Remember that $\RGcat$ has objects $()$ and $(\ctxedge{\var i})$ and the same morphisms between them as we find in $\cubecat$.
	
	We call an edge $p : \DSub{(\ctxedge{\var i})}{\Gamma}$ \textbf{degenerate} if it is the constant edge on some point $x : \DSub{()}{\Gamma}$, i.e.\ $p = x \circ (\facewkn{\var i})$. This point is uniquely determined, as it must be equal to the edge's source and target. So we can say that $p$ is degenerate iff $p = p \circ (0/\var i) \circ (\facewkn{\var i})$ iff $p \circ (1/\var i) \circ (\facewkn{\var i})$.
	
	We call a context (reflexive graph) \textbf{discrete} if all of its edges are degenerate.
	
	We call a map $\rho : \Gamma' \to \Gamma$ \textbf{discrete} if every edge $\gamma : \DSub{(\ctxedge{\var i})}{\Gamma}$ for which $\rho \circ \gamma$ is degenerate, is itself degenerate.
	
	We call a type $\Gamma \sez T \type$ \textbf{discrete} if every defining edge $(\ctxedge{\var i}) \Dsez t : T \dsub \gamma$ over a degenerate edge $\gamma$, is degenerate.
	
	One can prove:
	\begin{itemize}
		\item A type $\Gamma \sez T \type$ is discrete if and only if $\pi : \Gamma.T \to \Gamma$ is discrete.
		\item A context $\Gamma$ is discrete if and only if $\Gamma \to ()$ is discrete.
		\item A map $\rho : \Gamma' \to \Gamma$ is discrete if it has the lifting property with respect to the horn inclusion $(\facewkn{\var i}) : \yoneda(\ctxedge{\var i}) \to \yoneda()$.
	\end{itemize}
\end{example}

\section{A model in terms of discrete types}
\begin{theorem}
	If we define $\DTy(\Gamma)$ to be the set of all \emph{discrete} types over $\Gamma$, then we obtain a new CwF $\bpdisc$ which also supports dependent products, dependent sums and identity types.
\end{theorem}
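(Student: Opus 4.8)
The plan is to take $\bpdisc$ to have the same category of contexts, the same empty context, and the same context extension, $\pi$, $\xi$ as $\widehat{\bpcubecat}$, and to replace $\Ty$ by the sub-presheaf $\DTy \subseteq \Ty$ of discrete types (with $\Tm$ restricted accordingly). The only thing needed to make this a CwF is that $\DTy$ is closed under substitution: if $\Gamma \sez T \dtype$ and $\sigma : \Delta \to \Gamma$, then $\Delta \sez T[\sigma] \dtype$. This is immediate, since $T[\sigma] \dsub \delta = T \dsub{\sigma \delta}$, and if $\delta : \DSub{(W, \ctxpath{\var i})}{\Delta}$ factors over $(\facewkn{\var i})$ then so does $\sigma \delta$, whence every defining term over $\delta$ is degenerate in $\var i$ because $T$ is discrete. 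Granting this, the comprehension bijection $\Hom(\Delta, \Gamma.T) \cong \Sigma(\sigma : \Hom(\Delta,\Gamma)).\Tm(\Delta, T[\sigma])$ of $\widehat{\bpcubecat}$ restricts verbatim to $\bpdisc$ (its right-hand side lives in $\bpdisc$ precisely because $T[\sigma]$ is discrete), so $\bpdisc$ is a CwF. For the type formers we reuse those of $\widehat{\bpcubecat}$; since all the term operations and equations are then inherited, the only obligation is to check that $\Sigma A B$, $\Pi A B$ and $a \idtp A b$ are discrete whenever their inputs are.

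For $\Sigma$-types: given $\Gamma \sez A \dtype$, $\Gamma.A \sez B \dtype$, a base $\gamma$ degenerate in a path dimension $\var i$, and $(a,b) \in (\Sigma A B)\dsub\gamma$, discreteness of $A$ makes $a$ degenerate in $\var i$, hence $(\gamma, a)$ is degenerate in $\var i$, hence discreteness of $B$ makes $b$ degenerate in $\var i$, and so $(a,b)$ is degenerate in $\var i$. For identity types: $(a \idtp A b)\dsub\gamma$ is always a subset of $\accol\star$, so if $\gamma = \gamma_0 \circ (\facewkn{\var i})$ is degenerate in $\var i$ and this set is inhabited, then $a\dsub{\gamma_0}$ and $b\dsub{\gamma_0}$ agree after restriction along $(\facewkn{\var i})$, which is a split monomorphism with retraction induced by $(0/\var i)$; hence $a\dsub{\gamma_0} = b\dsub{\gamma_0}$, so $\star$ already exists over $\gamma_0$ and restricts to the given $\star$, which is therefore degenerate in $\var i$.

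The substantial case is $\Pi$-types, and it is here that the higher cells of $\bpcubecat$ beyond reflexive graphs are used. Writing $\gamma = \gamma_0 \circ (\facewkn{\var i})$ for a base degenerate in a path dimension $\var i$, an element $\dlambda c$ of $(\Pi A B) \dsub \gamma$ unfolds to a term $c$ over $\yoneda(W, \ctxpath{\var i}).A[\gamma]$; using $(W, \ctxpath{\var i}) = W \times (\ctxpath{\var i})$, this context is isomorphic over $E_0 := \yoneda W.A[\gamma_0]$ to $E_0 \times \yoneda(\ctxpath{\var i})$, with $(\facewkn{\var i})\subext$ becoming the projection $\text{proj}$, and $c$ is valued in $D_0[\text{proj}]$ where $D_0 := B[\gamma_0 \subext]$ is discrete over $E_0$ by the substitution-closure above. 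Since $(\dlambda c_0)\psub{(\facewkn{\var i})} = \dlambda(c_0[\text{proj}])$, the element $\dlambda c$ is degenerate in $\var i$ exactly when $c = c_0[\text{proj}]$ for some $c_0$ over $E_0$, so the whole matter reduces to: \emph{for any presheaf $E_0$ and any discrete type $D_0$ over $E_0$, every term $c$ of $D_0[\text{proj}]$ over $E_0 \times \yoneda(\ctxpath{\var i})$ equals $c_0[\text{proj}]$ for some term $c_0$ over $E_0$}. To prove this, set $c_0 := c[(\id, 0)]$; it suffices to show $c \dsub{(e,v)} = c \dsub{(e,0)}$ for every cube $(e,v) : \DSub{U}{E_0 \times \yoneda(\ctxpath{\var i})}$. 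Introduce a fresh path variable $\var k$ and form the coherence cube $\eta := (e \psub{\facewkn{\var k}}, \var k / \var i) : \DSub{(U, \ctxpath{\var k})}{E_0 \times \yoneda(\ctxpath{\var i})}$. Its projection $e \psub{\facewkn{\var k}}$ is degenerate in $\var k$, so $c \dsub \eta \in D_0 \dsub{e \psub{\facewkn{\var k}}}$ is degenerate in $\var k$ by discreteness of $D_0$. Restricting $\eta$ along $(0/\var k)$ gives $(e,0)$, and restricting it along the diagonal $(v/\var k)$ — legitimate in $\bpcubecat$ because $v$ is $0$, $1$, or a path or bridge variable, and a path variable may be substituted by any of these — gives $(e,v)$; combining the $\var k$-degeneracy of $c \dsub \eta$ with naturality of $c$ yields $c \dsub{(e,0)} = c \dsub \eta \psub{0/\var k} = c \dsub \eta \psub{v/\var k} = c \dsub{(e,v)}$.

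The main obstacle is exactly this last lemma, which is precisely what fails for reflexive graphs: $\RGcat$ has no two-dimensional object $(\ctxedge{\var j}, \ctxedge{\var k})$ over which to form the cube $\eta$, so a function into a discrete type can genuinely depend on an edge of its domain. The point of the coherence cubes of $\bpcubecat$ — together with the availability of diagonal face maps between them — is to supply $\eta$ and the restriction $(v/\var k)$ that force $c$ to be independent of the interval coordinate; everything else is routine bookkeeping inherited from the presheaf model of the previous chapter.
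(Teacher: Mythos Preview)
Your proof is correct and follows the same overall strategy as the paper: restrict the presheaf CwF to discrete types, check closure under substitution, and then verify that $\Sigma$, $\Pi$ and identity types preserve discreteness, with the $\Pi$-case being the only nontrivial one and relying on the existence of higher coherence cubes in $\bpcubecat$.

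The one difference worth noting is in the organisation of the $\Pi$-argument. The paper works directly with a defining term $h \in (\Pi A B)\dsub{\gamma(\facewkn{\var i})}$, sets $f = h\psub{0/\var i}$, and shows $h = f\psub{\facewkn{\var i}}$ by comparing $h\psub\vfi \cdot a$ and $f\psub{\facewkn{\var i}}\psub\vfi \cdot a$ for every $\vfi : \PSub{V}{(W,\ctxpath{\var i})}$, doing an explicit case split on whether $\var i\psub\vfi$ is $0$, $1$, or a variable of $V$. You instead first use the product decomposition $\yoneda(W,\ctxpath{\var i}).A[\gamma] \cong E_0 \times \yoneda(\ctxpath{\var i})$ to reduce to a clean auxiliary statement (any section of $D_0[\mathrm{proj}]$ over $E_0 \times \yoneda(\ctxpath{\var i})$ with $D_0$ discrete factors through $\mathrm{proj}$), and then dispatch all three cases uniformly via the single diagonal restriction $(v/\var k)$. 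The underlying mechanism---introduce a fresh path dimension, form the cylinder, invoke discreteness of the codomain, then restrict---is identical; your packaging just avoids the case split and makes the role of the diagonals in $\bpcubecat$ more transparent. For the identity type, your argument is correct but heavier than needed: the paper simply observes that all propositions are discrete, since the only possible defining term is $\star$ and $\star\psub{0/\var i,\facewkn{\var i}} = \star$.
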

We prove this theorem in several parts. In the examples, we will try and fail to prove the corresponding theorem for $\RGcat$: we do have a CwF $\RGdisc$ and it supports dependent sums and identity types, but we will fail to prove that it supports dependent products.

\subsection{The category with families $\bpdisc$}
\begin{lemma}
	$\bpdisc$ is a well-defined category with families (see \cref{def:cwf}).
\end{lemma}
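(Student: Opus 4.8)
The plan is to observe that $\bpdisc$ shares its category of contexts $\Ctx = \widehat{\bpcubecat}$, its empty (terminal) context $()$, and its context-extension operation with the ambient presheaf CwF of \cref{ch:psh}, and that its type- and term-functors are obtained by restriction from $\Ty$ and $\Tm$. Consequently almost every clause of \cref{def:cwf} is inherited verbatim, and the only thing that genuinely needs an argument is that $\DTy$ is a \emph{sub}functor of $\Ty$ — in other words, that discreteness of a type is stable under reindexing.

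First I would record the basic observation that substitutions preserve degeneracy: if $\sigma : \Delta \to \Gamma$ and $\delta : \DSub{(W, \ctxpath{\var i})}{\Delta}$ is degenerate in $\var i$, say $\delta = \delta' \circ (\facewkn{\var i})$ with $\delta' : \DSub{W}{\Delta}$, then $\sigma \circ \delta = (\sigma \circ \delta') \circ (\facewkn{\var i})$ factors over $(\facewkn{\var i})$ and is therefore again degenerate in $\var i$; here we use only that a substitution, being a natural transformation, commutes with restriction. Now suppose $\Gamma \sez T \dtype$ and $\sigma : \Delta \to \Gamma$. Given any $\delta : \DSub{(W, \ctxpath{\var i})}{\Delta}$ that is degenerate in $\var i$ and any defining term $(W, \ctxpath{\var i}) \Dsez t : T[\sigma] \dsub \delta = T \dsub{\sigma \delta}$, the cube $\sigma\delta$ is degenerate in $\var i$ by the observation, hence $t$ is degenerate in $\var i$ by discreteness of $T$. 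Thus $\Delta \sez T[\sigma] \dtype$, so $\loch[\sigma]$ restricts to a map $\DTy(\Gamma) \to \DTy(\Delta)$; the functor laws $T[\id] = T$ and $T[\sigma\tau] = T[\tau][\sigma]$ are inherited from $\Ty$. Since $\DTy(\Gamma) \subseteq \Ty(\Gamma)$ with the same restriction maps, $\int_\Ctx \DTy$ is a full subcategory of $\int_\Ctx \Ty$ (closed under the relevant morphisms, since $T[\sigma]$ is discrete whenever $T$ is), so $\Tm : (\int_\Ctx \DTy)^{\op} \to \Set$ is simply the restriction of the ambient term functor and is a functor for the same reason.

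Then I would dispatch context extension. Given $\Gamma \sez T \dtype$ we take the \emph{same} presheaf $\Gamma.T$, the same $\pi : \Gamma.T \to \Gamma$ and the same $\xi$ as in the ambient model — the context $\Gamma.T$ itself need not be discrete, which is harmless since contexts of $\bpdisc$ are arbitrary presheaves. We only need $\xi$ to count as a term of $\bpdisc$, i.e.\ $\Gamma.T \sez T[\pi] \dtype$, which is the closure under substitution just proved, applied to $\pi$. The universal property of $(\loch, \loch)$ is exactly the bijection $\Hom(\Delta, \Gamma.T) \cong \Sigma(\sigma : \Hom(\Delta,\Gamma)).\Tm(\Delta, T[\sigma])$ of the ambient CwF, since both sides are literally unchanged ($T[\sigma]$ being discrete only witnesses that it lies in $\DTy(\Delta)$). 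Finally $()$ is terminal in $\widehat{\bpcubecat}$, and the remaining bookkeeping is identical to \cref{ch:psh}.

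The main — and essentially the only — obstacle is the stability of discreteness under substitution, and as the above shows it collapses to the one-line fact that precomposing a degenerate cube with a substitution again yields a degenerate cube; everything else is routine transport of structure from the standard presheaf model.
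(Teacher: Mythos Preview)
Your proposal is correct and follows essentially the same approach as the paper: both identify that the only nontrivial clause is functoriality of $\DTy$, i.e.\ stability of discreteness under substitution, and both prove it by observing that a degenerate $\delta$ yields a degenerate $\sigma\delta$, so discreteness of $T$ forces $t$ to be degenerate in $T[\sigma]$. Your write-up is somewhat more explicit about the remaining CwF clauses (context extension, $\xi$, the universal property), which the paper simply takes as inherited, but the mathematical content is the same.
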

\begin{proof}
	The only thing we need to prove in order to show this is that $\DTy$ still has a morphism part, i.e. that discreteness of types is preserved under substitution.

	So pick a substitution $\sigma : \Delta \to \Gamma$ and a discrete type $\Gamma \sez T \dtype$. Take a defining term $(W, \ctxpath{\var i}) \Dsez t : T \sub{\sigma} \dsub \delta$ and assume that $\delta$ is degenerate along $\var i$. We have to prove that $t$ is, too. But if $\delta$ factors over $(\facewkn{\var i})$, then so does $\sigma \delta$, and therefore also $t : T \dsub{\sigma \delta}$ by discreteness of $T$. Since restriction for $T[\sigma]$ is inherited from $T$, the term $t$ is also degenerate as a defining term of $T[\sigma]$.
\end{proof}
\begin{example}
	Show that $\RGdisc$ is a well-defined category with families.
\end{example}

\subsection{Dependent sums}
\begin{lemma}
	The category with families $\bpdisc$ supports dependent sums.
\end{lemma}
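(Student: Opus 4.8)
The plan is to use that $\bpdisc$ is a full sub-CwF of $\widehat{\bpcubecat}$ on the discrete types: it has the very same contexts, substitutions and terms, and differs only in which types are admitted. Consequently every piece of $\Sigma$-structure — the formation $\Sigma A B$, the pairing $(\loch,\loch)$, the projections $\fst$ and $\snd$, the $\beta$- and $\eta$-laws, and all the naturality equations in $\Gamma$ — can simply be inherited from the $\Sigma$-types already constructed on $\widehat{\bpcubecat}$, as soon as we have checked the single closure condition: if $\Gamma \sez A \dtype$ and $\Gamma.A \sez B \dtype$, then $\Gamma \sez \Sigma A B \dtype$.

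To verify that closure condition, recall that $(\Sigma A B)\dsub\gamma$ consists of the pairs $(a,b)$ with $W \Dsez a : A\dsub\gamma$ and $W \Dsez b : B\dsub{(\gamma,a)}$, restricted componentwise. So I would fix a path dimension, i.e. a defining term $(W, \ctxpath{\var i}) \Dsez (a,b) : (\Sigma A B)\dsub\gamma$ with $\gamma$ degenerate in $\var i$, and show that $(a,b)$ is degenerate in $\var i$, which amounts to showing that both $a$ and $b$ are. Since $a : A\dsub\gamma$ and $A$ is discrete, $a$ is degenerate in $\var i$ immediately.

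The one mildly delicate point is to lift this to degeneracy of the defining substitution $(\gamma,a) : \DSub{(W, \ctxpath{\var i})}{\Gamma.A}$ over which $b$ lives. Here I would use the description of context extension in the presheaf model: since $(\gamma_0, a_0) \circ (\facewkn{\var i}) = (\gamma_0 \circ (\facewkn{\var i}), a_0 \psub{\facewkn{\var i}})$, the pair $(\gamma, a)$ factors over $(\facewkn{\var i})$ precisely when $\gamma$ does and $a$ does; both have just been established, so $(\gamma,a)$ is degenerate in $\var i$. Then discreteness of $B$ forces $b : B\dsub{(\gamma,a)}$ to be degenerate in $\var i$ as well, and hence so is $(a,b)$. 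I do not expect any real obstacle here: the content is entirely in the componentwise nature of the presheaf-level $\Sigma$-type and the componentwise behaviour of restriction in $\Gamma.A$.

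Finally, I would run the parallel argument for the CwF $\RGdisc$ of reflexive graphs in the accompanying example. In contrast to the situation for $\Pi$-types (which is where the reflexive-graph story breaks down), the $\Sigma$-construction poses no problem there either, for exactly the same reason: a $\Sigma$-edge is built componentwise from an $A$-edge and a $B$-edge, and degeneracy is tested componentwise, so discreteness of $A$ and $B$ propagates to $\Sigma A B$.
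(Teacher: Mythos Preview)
Your proposal is correct and follows essentially the same approach as the paper: reduce to showing that $\Sigma A B$ is discrete, then for a defining pair $(a,b)$ over a degenerate $\gamma$ use discreteness of $A$ to get $a$ degenerate, hence $(\gamma,a)$ degenerate, and then discreteness of $B$ to get $b$ degenerate. You spell out the componentwise factoring of $(\gamma,a)$ a bit more explicitly than the paper does, but the argument is the same.
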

\begin{proof}
	Take a context $\Gamma$ and discrete types $\Gamma \sez A \dtype$ and $\Gamma.A \sez B \dtype$. It suffices to show that $\Sigma A B$ is discrete. Pick a defining term $(W, \ctxpath{\var i}) \Dsez (a, b) : \Sigma A B \dsub \gamma$ where $\gamma$ is degenerate along $\var i$. Then by discreteness of $A$, $a$ is degenerate along $\var i$ and so is $(\gamma, a) : \DSub{(W, \ctxpath{\var i})}{\Gamma.A}$. Then by discreteness of $B$, $b$ is also degenerate along $\var i$ and hence so is $(a, b)$.
\end{proof}

\subsection{Dependent products}
\begin{lemma}
	The category with families $\bpdisc$ supports dependent products.
\end{lemma}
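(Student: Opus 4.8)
Since the presheaf model $\widehat\bpcubecat$ already supports $\Pi$-types (\cref{sec:psh-pi-types}), $\Pi A B$ is a well-defined type over $\Gamma$, and the only thing to verify is that it is discrete. Unfolding the definition, this means: for every cube $(W,\ctxpath{\var i})$ with a distinguished path variable $\var i$, every $\gamma : \DSub{(W,\ctxpath{\var i})}{\Gamma}$ degenerate in $\var i$, and every defining term $f : (\Pi A B)\dsub\gamma$, the term $f$ is again degenerate in $\var i$. Recall that $f$ is degenerate in $\var i$ iff $f = f\psub{0/\var i,\facewkn{\var i}}$; since $\gamma$ is degenerate in $\var i$, both $f$ and $f\psub{0/\var i,\facewkn{\var i}}$ are defining terms of $\Pi A B$ over $\gamma$, and by the characterization of $\Pi$-terms of \cref{sec:psh-pi-types} they are equal as soon as $f\psub\vfi\cdot a = f\psub\psi\cdot a$ for every face map $\vfi : \PSub{V}{(W,\ctxpath{\var i})}$ and every $a : A\dsub{\gamma\vfi}$, where $\psi$ denotes the face map agreeing with $\vfi$ on $W$ but sending $\var i$ to $0$. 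Note $\gamma\vfi = \gamma\psi$ (again because $\gamma$ is degenerate in $\var i$), so the two sides genuinely live in the same set $B\dsub{\gamma\vfi,a}$. Thus the whole statement reduces to this one equation.

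The plan to prove $f\psub\vfi\cdot a = f\psub\psi\cdot a$ is to adjoin a \emph{fresh path variable} $\var m$ and consider the face map $\hat\vfi : \PSub{(V,\ctxpath{\var m})}{(W,\ctxpath{\var i})}$ that agrees with $\vfi$ on $W$ and sends $\var i$ to $\var m$. It has three properties that make it useful: restricting along the face map $V\to(V,\ctxpath{\var m})$ which sends $\var m$ to $\var i\psub\vfi$ turns $\hat\vfi$ into $\vfi$; restricting along the one sending $\var m$ to $0$ turns it into $\psi$; and $\gamma\hat\vfi$ is the restriction of $\gamma\vfi$ along $\facewkn{\var m}$, hence degenerate in $\var m$ — this last point is exactly where degeneracy of $\gamma$ in $\var i$ gets used. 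Consequently $f\psub{\hat\vfi}\cdot(a\psub{\facewkn{\var m}})$ is a defining term of $B$ lying over a defining substitution of $\Gamma.A$ that is degenerate in $\var m$, so by discreteness of $B$ it is itself degenerate in $\var m$.

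It remains to restrict this $\var m$-degenerate term back along the two face maps from the previous paragraph. Both restrictions yield its unique $\var m$-free representative (any two face maps $V\to(V,\ctxpath{\var m})$ that become $\id_V$ after forgetting $\var m$ agree on a term degenerate in $\var m$); but by naturality of $\Pi$-application with respect to restriction (\cref{sec:psh-pi-types}) the first restriction computes to $f\psub\vfi\cdot a$ and the second to $f\psub\psi\cdot a$. Hence $f\psub\vfi\cdot a = f\psub\psi\cdot a$, completing the argument. Observe that only discreteness of $B$ is used, not that of $A$. The crux — and, as foreshadowed in the examples, the reason this works over $\bpcubecat$ but fails for reflexive graphs — is the passage to the enlarged cube $(V,\ctxpath{\var m})$: in $\widehat\RGcat$ there is no cube with two edge dimensions to play that role, whereas $\bpcubecat$ contains cubes of every shape, so a fresh path dimension is always available. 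The remaining work is purely bookkeeping of composites of face maps.
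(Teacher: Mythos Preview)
Your proof is correct and follows essentially the same strategy as the paper: introduce a fresh path dimension, observe that the resulting defining term of $B$ lies over a degenerate cell of $\Gamma.A$, invoke discreteness of $B$, and restrict back. The only difference is cosmetic: the paper does a case split on whether $\var i\psub\vfi$ is $0$, $1$, or a variable of $V$ (the first case being trivial, the other two handled separately by comparing source-to-target and source-to-diagonal), whereas you treat all three cases uniformly by always adjoining $\var m$ and noting that any two sections of $(\facewkn{\var m})$ agree on an $\var m$-degenerate term. Your observation that only discreteness of $B$ is needed matches the paper's stronger \cref{thm:disc-prod}.
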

In fact, the proof of this lemma proves something stronger:
\begin{lemma}\label{thm:disc-prod}
	Given a context $\Gamma$, an arbitrary type $\Gamma \sez A \type$ and a discrete type $\Gamma.A \sez B \dtype$, the type $\Pi A B$ is discrete.
\end{lemma}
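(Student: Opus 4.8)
The plan is to verify directly that $\Pi A B$ satisfies the defining property of a discrete type. Fix a defining substitution $\gamma : \DSub{(W, \ctxpath{\var i})}{\Gamma}$ that is degenerate in the path dimension $\var i$, and a defining term $(W, \ctxpath{\var i}) \Dsez f : (\Pi A B)\dsub\gamma$; we must exhibit that $f$ is degenerate in $\var i$. By the characterisation of degeneracy established above, and since $\gamma = \gamma \circ (0/\var i, \facewkn{\var i})$, it suffices to prove $f = f\psub{0/\var i, \facewkn{\var i}}$, both sides being defining terms of $(\Pi A B)\dsub\gamma$. As $f$ is a $\Pi$-term, the ``determined by application'' principle from \cref{sec:psh-pi-types} reduces this to checking, for every $\vfi : \PSub{V}{(W, \ctxpath{\var i})}$ and every $V \Dsez a : A\dsub{\gamma\vfi}$, the equation $f\psub\vfi \cdot a = f\psub{\vfi_0} \cdot a$, where $\vfi_0 := (0/\var i, \facewkn{\var i}) \circ \vfi$ is $\vfi$ with its value at $\var i$ replaced by $0$. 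Both sides lie in $B\dsub{\cwfpair{\gamma\vfi}{a}}$, since $\gamma\vfi = \gamma\vfi_0$ by degeneracy of $\gamma$.

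The key device is to reintroduce the dependence on $\var i$ as a \emph{fresh} path dimension and then apply discreteness of $B$. Pick a fresh path variable $\var k$, let $v := \var i\psub\vfi$, and let $\vfi^\uparrow : \PSub{(V, \ctxpath{\var k})}{(W, \ctxpath{\var i})}$ be the face map agreeing with $\vfi$ on $W$ and sending $\var i$ to $\var k$, so that $\vfi = \vfi^\uparrow \circ (v/\var k)$ and $\vfi_0 = \vfi^\uparrow \circ (0/\var k)$. Because $\gamma$ is degenerate in $\var i$ and $\var i\psub{\vfi^\uparrow} = \var k$, a short composition-of-face-maps computation gives $\gamma\vfi^\uparrow = (\gamma\vfi)\psub{\facewkn{\var k}}$, so $\gamma\vfi^\uparrow : \DSub{(V, \ctxpath{\var k})}{\Gamma}$ is degenerate in $\var k$. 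Put $\tilde a := a\psub{\facewkn{\var k}} : A\dsub{\gamma\vfi^\uparrow}$, the $\var k$-degeneracy of $a$; one checks $\tilde a\psub{v/\var k} = a = \tilde a\psub{0/\var k}$ for any value $v$, each of the face-map composites $\facewkn{\var k} \circ (v/\var k)$ collapsing to $\id_V$. Now $f\psub{\vfi^\uparrow} \cdot \tilde a$ is a defining term of $B$ over $\cwfpair{\gamma\vfi^\uparrow}{\tilde a}$, which is degenerate in $\var k$ because both components are; hence by discreteness of $B$ it is degenerate in $\var k$, and therefore unchanged by substituting any value for $\var k$. Using naturality of application (restriction commutes with $\cdot$, from \cref{sec:psh-pi-types}) we get
\[
 f\psub\vfi \cdot a \;=\; (f\psub{\vfi^\uparrow} \cdot \tilde a)\psub{v/\var k} \;=\; (f\psub{\vfi^\uparrow} \cdot \tilde a)\psub{0/\var k} \;=\; f\psub{\vfi_0} \cdot a,
\]
which is exactly the required equation. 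Iterating over the path dimensions of $W$ then yields discreteness of $\Pi A B$ in general.

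The main obstacle is precisely the step ``pass to the fresh path dimension $\var k$ by extending $V$ to $(V, \ctxpath{\var k})$'': this is where the higher (``coherence'') cubes of $\bpcubecat$ are indispensable, since in $\RGcat$ one cannot adjoin a second edge dimension to a context that already carries one, so the manoeuvre — and the lemma — break down there (this is the failed example developed for $\RGdisc$). Everything else is routine bookkeeping with variable names: the identities $\vfi = \vfi^\uparrow\circ(v/\var k)$, $\vfi_0 = \vfi^\uparrow\circ(0/\var k)$, $\gamma\vfi^\uparrow = (\gamma\vfi)\psub{\facewkn{\var k}}$ and $\tilde a\psub{v/\var k} = \tilde a\psub{0/\var k} = a$, together with confirming that $f = f\psub{0/\var i, \facewkn{\var i}}$ is a legitimate degeneracy witness via the corollary above. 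Note that the argument never uses discreteness of $A$ — $A$ enters only as the type over which $a$ and $\tilde a$ range — so the same proof establishes the sharper statement with $A$ arbitrary, exactly as claimed.
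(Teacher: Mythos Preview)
Your proof is correct and follows essentially the same strategy as the paper: lift the defining substitution by reintroducing the $\var i$-dependence as an extra path dimension, observe that the resulting term of $B$ lives over a degenerate cube, and conclude by discreteness of $B$. The paper splits into three cases on whether $\var i\psub\vfi$ is $0$, $1$, or a variable in $V$, whereas you treat all three uniformly by always adjoining a fresh path variable $\var k$ and substituting $v/\var k$ afterwards; this is a neat repackaging of the same argument rather than a different route.
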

\begin{example}
	In order to clarify the idea behind the proof for $\bpdisc$, we will first (vainly) try to prove the same lemma for $\RGdisc$.
	
	Pick a context $\Gamma$, a type $\Gamma \sez A \type$ and a discrete type $\Gamma.A \sez B \dtype$. It suffices to show that $\Pi A B$ is discrete. Pick an edge $(\ctxedge{\var i}) \Dsez h : (\Pi A B) \dsub{\gamma (\facewkn{\var i})}$ over the constant edge at point $\gamma : \DSub{()}{\Gamma}$. Write
	\begin{equation}
		() \Dsez f := h \psub{0/\var i}, g := h \psub{1 / \var i} : (\Pi A B) \dsub \gamma.
	\end{equation}
	In order to have a visual representation, assume that $A$ looks like the upper diagram here; then the image of $h$ is the lower one (degenerate edges are hidden in both diagrams). Every cell projects to the cell from $\Gamma$ shown on its left, or a constant edge of it:
	\begin{equation}
		\xymatrix{
			a \ar@{-}[rr]^{a_1}
			&& a' \ar@{-}[rr]^{a_2}
			&& a''
			\\
			f \cdot a
				\ar@{-}[rr]^{f \psub{\facewkn{\var i}} \cdot a_1} 
				\ar@{-}[dd]^(.7){h \cdot (a \psub{\facewkn{\var i}})}
			  	\ar@{-}[rrdd]^{h \cdot a_1}
			&& f \cdot a'
				\ar@{-}[rr]^{f \psub{\facewkn{\var i}} \cdot a_2}
				\ar@{-}[dd]^(.7){h \cdot (a' \psub{\facewkn{\var i}})}
			  	\ar@{-}[rrdd]^{h \cdot a_2}
			&& f \cdot a'' 
				\ar@{-}[dd]^(.7){h \cdot (a'' \psub{\facewkn{\var i}})}
			\\ \\
			g \cdot a
				\ar@{-}[rr]_{g \psub{\facewkn{\var i}} \cdot a_1}
			&& g \cdot a'
				\ar@{-}[rr]_{g \psub{\facewkn{\var i}} \cdot a_2}
			&& g \cdot a''
		}
	\end{equation}
	We try to show that $h$ is degenerate by showing that $f \psub{\facewkn{\var i}} = h$. Recall that a defining $W \Dsez k : (Pi A B)\dsub \gamma$ is fully determined if we know all $k \psub \vfi \cdot a$ for all $\vfi : \PSub V W$ and all $V \Dsez a : A \dsub{\gamma \vfi}$. There are five candidates for $\vfi$:
	\begin{itemize}
		\item[$(0/\var i)$] We have $f \psub{\facewkn{\var i}} \psub{0/\var i} = f = h \psub{0/\var i}$.
		\item[$(0/\var i, \facewkn{\var i})$] This follows by further restriction by $(\facewkn{\var i})$.
		\item[$(1/\var i)$] Take a node $() \Dsez a : A \dsub \gamma$. We need to show that $f \psub{\facewkn{\var i}} \psub{1/\var i} \cdot a = h \psub{1/\var i} \cdot a$, i.e.\ $f \cdot a = g \cdot a$. To this end, consider $h \cdot (a \psub{\facewkn{\var i}})$. The fact that $\loch \cdot \loch$ commutes with restriction, guarantees that this is an edge from $f \cdot a$ to $g \cdot a$. However, it has type $(\ctxedge{\var i}) \Dsez h \cdot (a \psub{\facewkn{\var i}}) : B [(\gamma (\facewkn{\var i})) \subext] \dsub{\id, a \psub{\facewkn{\var i}}} = B \dsub{(\gamma, a)(\facewkn{\var i})}$. So it clearly lives over a degenerate edge and hence it is degenerate, implying that $f \cdot a = g \cdot a$.
		\item[$(1/\var i, \facewkn{\var i})$] Take an edge $(\ctxedge{\var i}) \Dsez a : A \dsub{\gamma (\facewkn{\var i})}$. We need to show that $f \psub{\facewkn{\var i}} \psub{1/\var i, \facewkn{\var i}} \cdot a = h \psub{1/\var i, \facewkn{\var i}} \cdot a$, i.e. $f \psub{\facewkn{\var i}} \cdot a = g \psub{\facewkn{\var i}} \cdot a$. This is an equality of edges. If we could introduce an additional variable $\var j$, we could apply the same technique as for the source extractor $(0 / \var i)$, considering a square from $f \psub{\facewkn{\var i}} \cdot a$ to $g \psub{\facewkn{\var i}} \cdot a$ that would have to be degenerate in one dimension. However, the squares would cause functions to have more components, which we would have to prove equal, requiring a third variable. This is why we chose to start from a model $\bpcubecat$ in which we can have arbitrarily many dimension variables. However, in $\RGdisc$, we cannot proceed.
		\item[$\id$] Take an edge $(\ctxedge{\var i}) \Dsez a : A \dsub{\gamma (\facewkn{\var i})}$. We need to show that $f \psub{\facewkn{\var i}} \cdot a = h \cdot a$. Now $h \cdot a$ is going to be the diagonal of the square we fancied in the previous clause. If we know that this degenerate, then the diagonal is equal to the sides. However, in $\RGdisc$, we cannot proceed.
	\end{itemize}	 
\end{example}
\begin{proof}
	Pick a context $\Gamma$, a type $\Gamma \sez A \type$ and a discrete type $\Gamma.A \sez B \dtype$. It suffices to show that $\Pi A B$ is discrete. Pick $(W, \ctxpath{\var i}) \Dsez h : (\Pi A B) \dsub{\gamma(\facewkn{\var i})}$. We name the $\var i$-source $f := h \psub{0/\var i}$ and the $\var i$-target $g := h \psub{1 / \var i}$. We have $W \Dsez f, g : (\Pi A B) \dsub \gamma$.
	
	We show that $h$ is degenerate along $\var i$ by showing that $f \psub{\facewkn{\var i}} = h$. So pick some $\vfi : \PSub{V}{(W, \ctxpath{\var i})}$. We make a case distinction by inspecting $\var i \psub{\vfi} \in \accol{0, 1} \uplus V$:
	\begin{description}
		\item[$\var i \psub \vfi = 0$] Then $\vfi = (0/\var i)\psi$ for some $\psi : \PSub V W$. Then we have $f \psub{\facewkn{\var i}} \psub{\vfi} = f \psub \psi$ and $h \psub \vfi = f \psub \psi$.
		
		\item[$\var i \psub \vfi = 1$] Then $\vfi = (1/\var i)\psi$ for some $\psi : \PSub V W$. Then we have $f \psub{\facewkn{\var i}} \psub{\vfi} = f \psub \psi$ and $h \psub \vfi = g \psub \psi$ and we need to show that $V \Dsez f \psub \psi \cdot a = g \psub \psi \cdot a : B \dsub{\psi, a}$ for every $V \Dsez a : A \dsub \psi$.
		
		Without loss of generality, we may assume that $\var i \not\in V$.
		Then we have a path $(V, \ctxpath{\var i}) \Dsez h \psub{\psi, \var i/\var i} \cdot (a \psub{\facewkn{\var i}}) : B \dsub{(\psi, a)(\facewkn{\var i})}$, with source $f \psub \psi$ and target $g \psub \psi$. Moreover, by discreteness of $B$, it is degenerate along $\var i$, implying that source and target are equal.
		
		\item[$\var i \psub \vfi \in V$] Without loss of generality, we may assume that $(W, \ctxpath{\var i})$ and $V$ are disjoint. Write $\var k = \var i \psub \vfi$ (and note that $\var k$ may be either a bridge or a path variable). Then $\vfi$ factors as $(\psi, \var i / \var i)(\var k / \var i) = (\var k / \var i)(\psi, \var k / \var k)$ for some $\psi : \PSub V W$. We have $f \psub{\facewkn{\var i}} \psub{\vfi} = f \psub \psi$ and $h \psub \vfi = h \psub{\var k / \var i} \psub \psi$. We have to show that $V \Dsez f \psub \psi \cdot a = h \psub{\psi, \var i / \var i} \psub{\var k / \var i} \cdot a : B \dsub{\psi, a}$ for all $V \Dsez a : A \dsub \psi$.
		
		Again, we have a path $(V, \ctxpath{\var i}) \Dsez h \psub{\psi, \var i/\var i} \cdot (a \psub{\facewkn{\var i}}) : B \dsub{(\psi, a)(\facewkn{\var i})}$ with $\var i$-source $f \psub \psi \cdot a$ and $(\var k/\var i)$-diagonal $h \psub{\psi, \var i / \var i} \psub{\var k / \var i} \cdot a$. This path is again degenerate in $\var i$, showing that the source and the diagonal are equal. \qedhere
	\end{description}
\end{proof}

\subsection{Identity types and propositions}
\begin{lemma}
	The category with families $\bpdisc$ supports identity types.
\end{lemma}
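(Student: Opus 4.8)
The plan is to obtain the identity type on $\bpdisc$ simply by restricting the identity type of the ambient presheaf CwF $\widehat{\bpcubecat}$ constructed in \cref{sec:cwf-idtp}: for $\Gamma \sez A \dtype$ and $\Gamma \sez a, b : A$ I would reuse that construction verbatim, i.e.\ put $(a \idtp A b) \dsub \gamma = \accol{\star}$ when $a \dsub \gamma = b \dsub \gamma$ and $\eset$ otherwise, $(\refl\,a)\dsub\gamma = \star$, and $\J(a,b,\var y.\var w.C,e,c)\dsub\gamma = c\dsub\gamma$. All the substitution equations and the computation rule $\J(a,a,\var y.\var w.C,\refl\,a,c) = c$ already hold in $\widehat{\bpcubecat}$ and so need no re-checking; the only genuine obligation is to verify that these operations land inside $\bpdisc$, i.e.\ that the types they manufacture are discrete.

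For $\J$ this is automatic: its result type $C[\id, b/\var y, e/\var w]$ is a substitution instance of the discrete motive $\Gamma, \var y : A, \var w : (a[\wknvar y] \idtp{A[\wknvar y]} \var y) \sez C \dtype$, and discreteness is stable under substitution (already established when showing that $\bpdisc$ is a CwF). Hence $\J(\dots)$, being a term of a discrete type, is unproblematic, and so is $\refl\,a$ once we know $a \idtp A a$ is discrete. Thus the whole statement reduces to a single claim: for any $\Gamma \sez A \type$ (discreteness of $A$ is not even needed here) and $\Gamma \sez a, b : A$, the type $a \idtp A b$ is discrete.

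To prove that claim I would in fact establish the slightly stronger fact that \emph{every} defining term of $a \idtp A b$ is degenerate in \emph{every} path dimension. So take $(W, \ctxpath{\var i}) \Dsez e : (a \idtp A b)\dsub\gamma$. Since this set is inhabited it equals $\accol{\star}$, so $e = \star$ and $a\dsub\gamma = b\dsub\gamma$. Restricting that equation along the face map $(0/\var i, \facewkn{\var i})$, and using that terms commute with restriction, gives $a\dsub{\gamma(0/\var i,\facewkn{\var i})} = b\dsub{\gamma(0/\var i,\facewkn{\var i})}$, so $(a\idtp A b)\dsub{\gamma(0/\var i,\facewkn{\var i})} = \accol{\star}$ as well, whence $e\psub{0/\var i,\facewkn{\var i}} = \star = e$. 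By the corollary characterising degeneracy (a term is degenerate in $\var i$ iff it is fixed by $\loch\psub{0/\var i,\facewkn{\var i}}$), this says precisely that $e$ is degenerate in $\var i$.

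The point worth stressing is that there is essentially no obstacle here: unlike the $\Pi$-type situation above, which forced the introduction of extra dimension variables, the identity type of this model is proof-irrelevant — each fibre $(a\idtp A b)\dsub\gamma$ is a subsingleton — so discreteness comes for free. The one mild subtlety is matching the precise shape of the degeneracy condition (factoring through $(\facewkn{\var i})$ rather than merely agreeing with a face), which is exactly what the preceding corollary supplies. I expect the analogous statement for $\RGdisc$ to go through by the very same argument, for the same reason.
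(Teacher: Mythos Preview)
Your argument is correct and matches the paper's approach: the paper simply records the stronger lemma that \emph{propositions are discrete} (with no further proof), and the identity type is a proposition. One small caveat on your framing: the ``slightly stronger fact'' that every defining term is degenerate in \emph{every} path dimension is not quite well-posed, since by the paper's own remark a defining term $t : T\dsub\gamma$ can only be degenerate in $\var i$ when $\gamma$ already is (otherwise $e$ and $e\psub{0/\var i,\facewkn{\var i}}$ live in fibres over different base cells); but your computation is exactly right once one restricts to the case where $\gamma$ is degenerate in $\var i$, which is all that discreteness demands.
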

We even have a stronger result:
\begin{lemma}
	Propositions are discrete. \qed
\end{lemma}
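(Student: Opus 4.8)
The plan is to unfold both definitions and observe that there is essentially nothing to do. Suppose $\Gamma \sez T \prop$, so that $T \dsub \gamma \subseteq \accol \star$ for every $\gamma : \DSub W \Gamma$. To establish $\Gamma \sez T \dtype$, I would pick an arbitrary defining term $(W, \ctxpath{\var i}) \Dsez t : T \dsub \gamma$ with $\gamma$ degenerate along the path variable $\var i$, and show that $t$ is degenerate along $\var i$ as well. By the corollary above characterizing degeneracy, this reduces to verifying the single equation $t = t \psub{0/\var i, \facewkn{\var i}}$.

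For that equation I would note, again by that corollary but now applied to $\gamma$, that $\gamma = \gamma \circ (0/\var i, \facewkn{\var i})$; hence the restriction $t \psub{0/\var i, \facewkn{\var i}}$ is an element of $T \dsub{\gamma \circ (0/\var i, \facewkn{\var i})} = T \dsub \gamma$, exactly the fibre to which $t$ itself belongs. Since $T$ is a proposition, $T \dsub \gamma$ is a subsingleton, so $t = t \psub{0/\var i, \facewkn{\var i}}$ is forced, and we are done.

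There is no genuine obstacle: the only care needed is to invoke the degeneracy characterization in both roles — once to reduce the goal for $t$, and once to relocate the restriction of $t$ into the same fibre $T \dsub \gamma$. It is worth recording that this lemma subsumes the preceding statement that $\bpdisc$ supports identity types, since the identity types of the standard presheaf model satisfy $(a \idtp A b) \dsub \gamma \subseteq \accol \star$ and are therefore propositions, while the identity type former (together with $\refl$ and $\J$) is inherited by $\bpdisc$ unchanged.
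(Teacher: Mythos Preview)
Your proof is correct and is precisely the obvious unfolding the paper has in mind; the paper simply marks the lemma with \qed\ and gives no argument, and your observation that both $t$ and $t\psub{0/\var i,\facewkn{\var i}}$ lie in the subsingleton $T\dsub\gamma$ is exactly the point. Your closing remark that this subsumes the identity-type lemma also matches the paper's own framing (``We even have a stronger result'').
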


\subsection{Glueing}
\begin{lemma}
	The category with families $\bpdisc$ supports glueing.
\end{lemma}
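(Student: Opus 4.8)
The plan is to lean on the fact that $\widehat{\bpcubecat}$ already supports glueing, so that $\Gluesys{A}{\Gluesysclauseb{P}{T}{f}}$, $\gluesys{a}{\sysclauseb{P}{t}}$, $\ungluesys{\sysclauseb{P}{f}}b$ and all five defining equations are available to us. Since propositions are discrete, a $\Pi$-type valued in a discrete type is discrete by \cref{thm:disc-prod}, and discreteness is stable under substitution, so every premise and conclusion of the glueing rules already lives in $\DTy$ once $A$ and $T$ are discrete, and the constructor, eliminator and equational clauses descend to $\bpdisc$ because $\bpdisc$ has the same terms of discrete types. Hence the only thing left to prove is: if $\Gamma \sez A \dtype$ and $\Gamma.P \sez T \dtype$, then $G := \Gluesys{A}{\Gluesysclauseb{P}{T}{f}}$ is discrete; the map $f$ will not be used.

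So I would fix a path dimension $\var i$, a cube $\gamma : \DSub{(W,\ctxpath{\var i})}{\Gamma}$ degenerate in $\var i$, and a defining term $(W,\ctxpath{\var i}) \Dsez g : G\dsub\gamma$; write $\delta := (0/\var i, \facewkn{\var i})$, so that $\gamma = \gamma \circ \delta$. Because $\gamma\circ(0/\var i)$ is a restriction of $\gamma$ and $\gamma$ is a restriction of $\gamma\circ(0/\var i)$, the sets $P\dsub\gamma$ and $P\dsub{\gamma\circ(0/\var i)}$ are in mutual restriction, hence equal ($P$ being a proposition). Then I distinguish cases on $P\dsub\gamma$. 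If $P\dsub\gamma = \accol\star$, then $(\gamma,\star) = (\gamma\circ(0/\var i),\star)\circ(\facewkn{\var i})$ is degenerate in $\var i$, and $g \in G\dsub\gamma = T\dsub{\gamma,\star}$, so $g$ is degenerate in $\var i$ by discreteness of $T$ (the restriction on $G$ is inherited from $T$ outside the cross-cases).

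The interesting case is $P\dsub\gamma = \eset$, where $g$ has the form $(a \mapsfrom t)$ with $a : A\dsub\gamma$ and $\yoneda(W,\ctxpath{\var i}).P[\gamma] \sez t : T[\gamma\subext]$, and also $P\dsub{\gamma\circ(0/\var i)} = \eset$. Unfolding restriction for $G$, the element $g$ is degenerate in $\var i$ iff $a\psub\delta = a$ and $t[\delta\subext] = t$. The former holds by discreteness of $A$. For the latter, a term over $\yoneda(W,\ctxpath{\var i}).P[\gamma]$ is determined by its values $t\dsub{\vfi,\star}$ at face maps $\vfi : \PSub{V}{(W,\ctxpath{\var i})}$ with $P\dsub{\gamma\vfi} = \accol\star$, and $t[\delta\subext] = t$ amounts to $t\dsub{\vfi,\star} = t\dsub{\vfi',\star}$, where $\vfi'$ is $\vfi$ with $\var i$ rerouted to $0$. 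I would prove this exactly as in \cref{thm:disc-prod}: if $\var i\psub\vfi = 0$ there is nothing to show; otherwise, assuming without loss of generality $\var i\notin V$, lift $\vfi$ to $\bar\vfi : \PSub{(V,\ctxpath{\var i})}{(W,\ctxpath{\var i})}$ rerouting the bound $\var i$ to the fresh one. Since $\gamma$ is degenerate in the bound $\var i$, the cube $\gamma\circ\bar\vfi$ is degenerate in the fresh $\var i$; its $0$-face is $\gamma\vfi'$ and its face along $\var i\psub\vfi$ is $\gamma\vfi$, both with $P$-value $\accol\star$, so $P\dsub{\gamma\bar\vfi} = \accol\star$ and $(\bar\vfi,\star)$ is a cube of $\yoneda(W,\ctxpath{\var i}).P[\gamma]$ over which $(\gamma\bar\vfi,\star) : \DSub{(V,\ctxpath{\var i})}{\Gamma.P}$ is degenerate in the fresh dimension. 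Discreteness of $T$ then forces $t\dsub{\bar\vfi,\star}$ to be degenerate in the fresh $\var i$, so its $0$-face and its face along $\var i\psub\vfi$ agree; these are precisely $t\dsub{\vfi',\star}$ and $t\dsub{\vfi,\star}$. Hence $t[\delta\subext] = t$, so $g\psub\delta = g$ and $g$ is degenerate in $\var i$.

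I expect the only real obstacle to be the bookkeeping in this last case: $t$ lives over the non-representable context $\yoneda(W,\ctxpath{\var i}).P[\gamma]$, so one must be careful about what ``degenerate in $\var i$'' means for it, trace the lifted face map $\bar\vfi$ through the relevant $0$-, $1$- and diagonal faces, and check that $(\gamma\bar\vfi,\star)$ really does factor through the degeneracy of the fresh dimension before discreteness of $T$ can be invoked. All of this runs parallel to the proof of \cref{thm:disc-prod}, so I do not anticipate any genuinely new difficulty, and the $\glue$, $\unglue$, $\beta$ and $\eta$ clauses are then immediate from the corresponding facts in $\widehat{\bpcubecat}$.
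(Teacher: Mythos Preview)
Your argument is correct, and the overall structure (reduce to showing $G$ is discrete, case-split on $P\dsub\gamma$, use discreteness of $T$ resp.\ $A$) matches the paper.  The one place where you diverge is the treatment of the partial term $t$ in the case $P\dsub\gamma=\eset$.  You re-run the case analysis of \cref{thm:disc-prod} by hand on the defining substitutions $(\vfi,\star)$ of $\yoneda(W,\ctxpath{\var i}).P[\gamma]$.  The paper instead observes that such a $t$ is \emph{literally} a defining term of $\Pi P\,T$ over $\yoneda(W,\ctxpath{\var i})$ via the bijection $\dlambda$, and since $\Pi P\,T$ is discrete by \cref{thm:disc-prod} (only the codomain needs to be discrete), one gets $t[(0/\var i,\facewkn{\var i})\subext]=\dap((\dlambda t)\psub{0/\var i,\facewkn{\var i}})=\dap(\dlambda t)=t$ in one line.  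Your route is more self-contained but duplicates work already packaged in \cref{thm:disc-prod}; the paper's route is shorter and makes the dependence on that lemma explicit.  One small imprecision in your write-up: the reason $P\dsub{\gamma\bar\vfi}=\accol\star$ is not that two of its faces have $P$-value $\accol\star$, but that $\gamma\bar\vfi$ is the degeneracy of $\gamma_0\psi$ (where $\gamma=\gamma_0(\facewkn{\var i})$), and $P\dsub{\gamma_0\psi}=P\dsub{\gamma\vfi}=\accol\star$ since $\gamma\vfi=\gamma_0\psi$.
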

\begin{proof}
	Suppose we have $\Gamma \sez A \dtype$, $\Gamma \sez P \prop$, $\Gamma.P \sez T \dtype$ and $\Gamma.P \sez f : T \to A[\pi]$. It suffices to show that $G = \Gluesys{A}{\Gluesysclauseb{P}{T}{f}}$ is discrete. So pick $(W, \ctxpath{\var i}) \Dsez b : G \dsub{\gamma}$ where $\gamma$ is degenerate along $\var i$.
	
	If $P \dsub \gamma = \accol \star$, then $b \psub{0/\var i, \facewkn{\var i}}^G = b \psub{0/\var i, \facewkn{\var i}}^{T[\id, \star]} = b$ by discreteness of $T$.
	
	If $P \dsub \gamma = \eset$, then $b$ is of the form $(a \mapsfrom t)$. Then $(a \mapsfrom t) \psub{0/\var i, \facewkn{\var i}} = (a \psub{0/\var i, \facewkn{\var i}} \mapsfrom t[(0/\var i, \facewkn{\var i}) \subext]) = (a \mapsfrom t[(0/\var i, \facewkn{\var i}) \subext])$ by discreteness of $A$. Finally, discreteness of $\Pi P T$ shows that
	\begin{equation}
		t[(0/\var i, \facewkn{\var i}) \subext]
		= \dap (\dlambda (t[(0/\var i, \facewkn{\var i}) \subext]))
		= \dap((\dlambda t) \psub{0/\var i, \facewkn{\var i}})
		= \dap (\dlambda t) = t. \qedhere
	\end{equation}
\end{proof}

\subsection{Welding}
\begin{lemma}
	The category with families $\bpdisc$ supports welding.
\end{lemma}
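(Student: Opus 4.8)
The plan is to reduce, exactly as for glueing, to the single claim that the weld type constructed in the presheaf model $\widehat{\bpcubecat}$ is discrete whenever its ingredients are. Concretely: suppose $\Gamma \sez A \dtype$, $\Gamma \sez P \prop$, $\Gamma, \var p : P \sez T \dtype$ and $\Gamma, \var p : P \sez f : A[\pi] \to T$, and write $\Omega = \Weldsys{A}{\Weldsysclauseb{\var p : P}{T}{f}}$; I will show $\Gamma \sez \Omega \dtype$. Since discreteness of types is stable under substitution, and the weld constructor, its eliminator and all the defining equations in $\bpdisc$ are literally inherited from $\widehat{\bpcubecat}$ (the output type $C[\id, b/\var y]$ of $\ind_\Weld$ being discrete as a substitution instance of a discrete $C$), this is everything that needs checking. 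Note that $P$ is automatically discrete, and $A[\pi] \to T$ is discrete by \cref{thm:disc-prod}, but for the argument only discreteness of $A$ and of $T$ is actually used, and the map $f$ is never touched.

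To prove $\Omega$ discrete I would follow the template of the $\bpdisc$-glueing proof: pick a defining term $(W, \ctxpath{\var i}) \Dsez b : \Omega \dsub \gamma$ over a $\gamma$ that is degenerate along $\var i$, and show $b \psub{0/\var i, \facewkn{\var i}} = b$. Because $\gamma$ is degenerate one has $\gamma \psub{0/\var i, \facewkn{\var i}} = \gamma$, hence $P \dsub{\gamma \psub{0/\var i, \facewkn{\var i}}} = P \dsub \gamma$, so restriction along $(0/\var i, \facewkn{\var i})$ never crosses between the two branches of the definition of $\Omega$. If $P \dsub \gamma = \accol\star$, then $\Omega \dsub \gamma = T \dsub{\gamma, \star/\var p}$ with restriction inherited from $T[\id, \star/\var p]$, and $(\gamma, \star/\var p)$ is again degenerate along $\var i$ (since $\gamma$ is and $\star$ restricts to $\star$), so discreteness of $T$ yields $b \psub{0/\var i, \facewkn{\var i}} = b$. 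If $P \dsub \gamma = \eset$, then $b = \dweld\,a$ for a unique $a : A \dsub \gamma$; staying in the $\eset$-branch gives $b \psub{0/\var i, \facewkn{\var i}} = \dweld(a \psub{0/\var i, \facewkn{\var i}})$, and discreteness of $A$ makes $a$ degenerate along $\var i$, so this equals $\dweld\,a = b$.

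I do not expect a genuine obstacle here: this is strictly easier than the glueing case, the reason being that a weld value over a point at which $P$ fails carries no data beyond the underlying element of $A$ — there is no coherent partial section to transport and no interaction with $f$, unlike the $(a \mapsfrom t)$-pairs appearing in glueing. The only place that calls for a line of care is the observation that restriction along $(0/\var i, \facewkn{\var i})$ respects the case split, which is immediate from $\gamma$ being degenerate; beyond that each of the two cases is a one-line appeal to discreteness of $T$ and of $A$ respectively.
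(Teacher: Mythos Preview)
Your proposal is correct and follows exactly the same approach as the paper: reduce to showing $\Omega$ is discrete, pick a defining term over a degenerate $\gamma$, case-split on $P\dsub\gamma$, and appeal to discreteness of $T$ or $A$ respectively. Your observation that restriction along $(0/\var i,\facewkn{\var i})$ never crosses branches because $\gamma$ is degenerate is the one implicit point the paper leaves unsaid, so your write-up is if anything slightly more explicit.
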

\begin{proof}
	Suppose we have $\Gamma \sez A \dtype$, $\Gamma \sez P \prop$, $\Gamma.P \sez T \dtype$ and $\Gamma.P \sez f : A[\pi] \to T$. It suffices to show that $\Omega = \Weldsys{A}{\Weldsysclauseb P T f}$ is discrete. So pick $(W, \ctxpath{\var i}) \Dsez w : \Omega \dsub \gamma$ where $\gamma$ is degenerate along $\var i$.
	
	If $P \dsub \gamma = \accol \star$, then $w \psub{0/\var i, \facewkn{\var i}}^\Omega = w \psub{0/\var i, \facewkn{\var i}}^{T[\id, \star]} = w$ by discreteness of $T$.
	
	If $P \dsub \gamma = \eset$, then $w \psub{0/\var i, \facewkn{\var i}}^\Omega = w \psub{0/\var i, \facewkn{\var i}}^{A} = w$ by discreteness of $A$.
\end{proof}

\section{Discreteness and cohesion}
In this section, we consider the interaction between discreteness and cohesion. In the first subsection, we characterize discrete contexts as those that are in the image of the discrete functor $\cohdisc$, or equivalently in the image of $\flat$. In the second one, we show that $\coshp$ preserves discreteness. In the rest of the section we are concerned with making things discrete by quotienting out paths. For a context $\Gamma$, we will define a discrete context $\quotshp \Gamma$ and a substitution $\inquotshp : \Gamma \to \quotshp \Gamma$ into it. This will enable us to finally construct $\cohpi$. For a type $\Gamma \sez T \type$, we will define a discrete type $\Gamma \sez \quotshp T \dtype$ and a mapping $\hatinquotshp : \Tm(\Gamma, T) \to \Tm(\Gamma, \quotshp T)$. This is a prerequisite for defining existential types.
\begin{remark}
	Note that in models of HoTT, this process of forcing a type to be fibrant is ill-behaved in the sense that it does not commute with substitution: we will typically not have $\quotshp (T[\sigma]) = (\quotshp T)[\sigma]$. We will show that our shape operator does commute with substitution. This is likely related to the fact that all our horn inclusions are epimorphisms (levelwise surjective presheaf maps), so that forcing something to be discrete is an operation that transforms presheaves locally, not globally. Put differently: if a type is fibrant in HoTT, we obtain transport functions which allow us to move things around and derive a contradiction (see \cite{nlab:fibrant-replacement} for details). Discrete types however do not provide any transport or composition operations.
\end{remark}
\subsection{Discrete contexts and the discrete functor}
\begin{proposition}\label{thm:discrete-contexts-and-cohesion}
	For a context $\Gamma \in \Psh(\cat P)$, the following are equivalent:
	\begin{enumerate}
		\item $\Gamma$ is discrete,
		\item $\Gamma$ is isomorphic to $\cohdisc \Theta$ for some cubical set $\Theta \in \Psh(\cat Q)$,
		\item The substitution $\kappa : \flat \Gamma \to \Gamma$ is an isomorphism.
	\end{enumerate}
\end{proposition}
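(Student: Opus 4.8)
The plan is to prove $(1)\Leftrightarrow(3)$ directly, by making the map $\kappa_\Gamma$ explicit, and then to close the loop with the two easy implications $(3)\Rightarrow(2)$ and $(2)\Rightarrow(1)$.

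For $(1)\Leftrightarrow(3)$ I would start by unfolding the definition $\kappa=\fpsh\varsigma$. Using the description of lifted natural transformations established earlier, the component $\kappa_\Gamma$ is, at each cube $W$, the map
\[
 (\DSub{W}{\flat\Gamma})\;=\;(\DSub{\shp W}{\Gamma})\;\longrightarrow\;(\DSub{W}{\Gamma}),\qquad \gamma\;\longmapsto\;\gamma\circ\varsigma_W,
\]
i.e.\ restriction along the projection $\varsigma_W:W\to\shp W$ that retains the bridge dimensions of $W$ and forgets the path ones. This $\varsigma_W$ is a split epimorphism in the base category: a section $s_W:\shp W\to W$ is obtained by reintroducing every path variable of $W$ as the constant $0$, and one checks $\varsigma_W\circ s_W=\id_{\shp W}$. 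By contravariance of $\Gamma$, the map $\kappa_\Gamma$ is therefore \emph{always} injective, and its image is exactly the set of cubes $\gamma:\DSub{W}{\Gamma}$ fixed by restriction along $s_W\circ\varsigma_W:W\to W$, the face map killing every path variable. Iterating the equivalent characterizations of degeneracy one path variable at a time, a cube is fixed by this face map precisely when it is degenerate in every path dimension. Hence $\kappa$ is a levelwise bijection, i.e.\ an isomorphism of presheaves, iff every cube of $\Gamma$ is degenerate in every path dimension, which is the definition of discreteness.

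The implication $(3)\Rightarrow(2)$ is immediate from $\flat=\cohdisc\cohfget$ (\cref{thm:cohesion-psh}): an isomorphism $\kappa:\flat\Gamma\to\Gamma$ exhibits $\Gamma$ as isomorphic to $\cohdisc\Theta$ with $\Theta:=\cohfget\Gamma$. For $(2)\Rightarrow(1)$, note first that discreteness is invariant under isomorphism (degeneracy is detected by factoring through a face map, and this is preserved along presheaf isomorphisms by naturality), so it suffices to show that $\cohdisc\Theta$ is discrete for every cubical set $\Theta$. Unfolding $\cohdisc=\fpsh\cohpi$ gives $(\DSub{(W,\ctxpath{\var i})}{\cohdisc\Theta})=(\DSub{\cohpi W}{\Theta})=(\DSub{W}{\cohdisc\Theta})$, and under this bijection the restriction along the path degeneracy $(\facewkn{\var i}):\PSub{(W,\ctxpath{\var i})}{W}$ is $\Theta$ applied to $\cohpi(\facewkn{\var i})=\id_{\cohpi W}$, hence is the identity. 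So every cube of $\cohdisc\Theta$ over $(W,\ctxpath{\var i})$ is the degeneracy of its own restriction to $W$, i.e.\ is degenerate in $\var i$; thus $\cohdisc\Theta$ is discrete.

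I expect the only genuine work to lie in $(1)\Leftrightarrow(3)$: extracting the concrete formula $\kappa_\Gamma(\gamma)=\gamma\circ\varsigma_W$ from the abstract $\kappa=\fpsh\varsigma$ (routine once the conventions for lifted functors and natural transformations are unwound), and the short induction identifying degeneracy in all path dimensions with factoring through the single projection $\varsigma_W$. The latter is the one place where the combinatorics of $\bpcubecat$ is actually used; everything else is formal.
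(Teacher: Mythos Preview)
Your proposal is correct and follows essentially the same cycle as the paper's proof: $(1)\Rightarrow(3)$ via factoring every cube through $\varsigma_W$, $(3)\Rightarrow(2)$ from $\flat=\cohdisc\cohfget$, and $(2)\Rightarrow(1)$ by computing that $\cohpi(\facewkn{\var i})=\id$. The only noteworthy difference is that you make the injectivity of $\kappa$ explicit (via the section $s_W$ of $\varsigma_W$), whereas the paper's proof here only writes down the surjectivity half and establishes injectivity separately as \cref{thm:kappa-injective}; your image characterization also yields $(3)\Rightarrow(1)$ directly, which is a small bonus but not needed once the cycle closes.
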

\begin{proof}
	\begin{description}
		\item[$1 \Rightarrow 3$.] Assume that $\Gamma$ is discrete. We show that $\kappa : \flat \Gamma \to \Gamma$ is an isomorphism. Pick a defining substitution $\gamma : \DSub W \Gamma$. Because $\Gamma$ is discrete, $\gamma$ is degenerate in every path dimension, i.e. it factors over $\varsigma : W \to \shp W$, say $\gamma = \gamma' \varsigma$. Then we have $\fpshadj{\shp}(\gamma') : \DSub{W}{\flat \Gamma}$ and moreover $\kappa \circ \fpshadj{\shp}(\gamma') = \fpsh \varsigma \circ \fpshadj{\shp}(\gamma') = \gamma' \circ \varsigma = \gamma$.
		
		\item[$3 \Rightarrow 2$.] Note that $\flat \Gamma = \cohdisc \cohfget \Gamma$.
		
		\item[$2 \Rightarrow 1$.] It suffices to prove that $\cohdisc \Theta$ is discrete. Pick some $\fpshadj \cohpi(\theta) : \DSub{(W, \ctxpath{\var i})}{\cohdisc \Theta}$. Then we have $\theta : \DSub{\cohpi (W, \ctxpath{\var i}) = \cohpi W}{\Theta}$ and hence $\fpshadj \cohpi(\theta) : \DSub W {\cohdisc \theta}$. Moreover, $\fpshadj \cohpi(\theta) \circ (\facewkn{\var i}) = \fpshadj \cohpi(\theta \circ \cohpi(\facewkn{\var i})) = \fpshadj \cohpi(\theta)$, showing that the picked defining substitution is degenerate along $\var i$. \qedhere
	\end{description}
\end{proof}

\subsection{The $\coshp$ functor preserves discreteness}
\begin{lemma}\label{thm:coshp-preserves-discreteness}
	For any discrete type $\Gamma \sez T \dtype$, the type $\coshp \Gamma \sez \coshp T \dtype$ is also discrete.
\end{lemma}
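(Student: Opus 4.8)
The plan is to prove this by a direct unfolding of the definition of $\coshp = \fpsh\sharp$, reducing the discreteness of $\coshp T$ to that of $T$. Recall that $\coshp\Gamma = \Gamma \circ \sharp$, so a defining substitution $\gamma : \DSub{W}{\coshp\Gamma}$ is (up to the ignorable label) $\fpshadj\sharp(\gamma_0)$ for some $\gamma_0 : \DSub{\sharp W}{\Gamma}$, a defining term $W \Dsez s : (\coshp T)\dsub{\gamma}$ is $\fpshadj\sharp(t)$ for some $\sharp W \Dsez t : T\dsub{\gamma_0}$, and restriction is inherited through the identity $\fpshadj\sharp(x)\psub{\vfi} = \fpshadj\sharp(x\psub{\sharp\vfi})$. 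So everything about $\coshp T$ is controlled by $T$ after applying $\sharp$ to the dimension cubes and face maps.

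Concretely, I would pick a test situation for discreteness of $\coshp T$: a defining term $(W, \ctxpath{\var i}) \Dsez \fpshadj\sharp(t) : (\coshp T)\dsub{\fpshadj\sharp(\gamma_0)}$ such that $\fpshadj\sharp(\gamma_0)$ is degenerate along $\var i$, i.e.\ factors over the weakening $(\facewkn{\var i}) : \PSub{(W, \ctxpath{\var i})}{W}$. Using \cref{fig:cohesion-base}, $\sharp(W, \ctxpath{\var i}) = (\sharp W, \ctxpath{\var i})$ with $\var i$ still a path dimension, and $\sharp$ carries this weakening to the weakening $(\facewkn{\var i}) : \PSub{(\sharp W, \ctxpath{\var i})}{\sharp W}$; combined with $\fpshadj\sharp(x)\psub{\vfi} = \fpshadj\sharp(x\psub{\sharp\vfi})$ this shows that $\gamma_0 : \DSub{(\sharp W, \ctxpath{\var i})}{\Gamma}$ is itself degenerate along the path dimension $\var i$. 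Now discreteness of $T$ applies and gives $\sharp W \Dsez t = t'\psub{(\facewkn{\var i})} : T\dsub{\gamma_0}$ for some $t' : T\dsub{\gamma'_0}$ with $\gamma_0 = \gamma'_0 \circ (\facewkn{\var i})$. Relabelling and reading the same restriction identity backwards gives $\fpshadj\sharp(t) = \fpshadj\sharp(t')\psub{(\facewkn{\var i})}$ over $\fpshadj\sharp(\gamma_0) = \fpshadj\sharp(\gamma'_0)\circ(\facewkn{\var i})$, so $\fpshadj\sharp(t)$ is degenerate along $\var i$, which is exactly what discreteness of $\coshp T$ requires.

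I do not expect a genuine obstacle here; the only step needing care is the bookkeeping about $\sharp$ on the base category, namely confirming from the definition in \cref{fig:cohesion-base} that $\sharp$ sends a path-degeneracy to a path-degeneracy and in particular does \emph{not} turn the path dimension $\var i$ into a bridge dimension. This is precisely the feature that makes the argument go through for $\coshp = \cohcodisc\cohpaths$, whereas it would fail for operators like $\cohpi$ that contract or reinterpret path dimensions. An alternative packaging, if preferred, is to note that $\coshp$ is an endomorphism of CwFs with $\coshp\pi = \pi$, and to phrase the same computation as "$\coshp$ preserves discrete maps", but the term-level version above is the most economical.
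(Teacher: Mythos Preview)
Your proposal is correct and follows essentially the same route as the paper: unfold $\coshp = \fpsh\sharp$, use that $\sharp(W,\ctxpath{\var i}) = (\sharp W,\ctxpath{\var i})$ and $\sharp(\facewkn{\var i}) = (\facewkn{\var i})$, apply discreteness of $T$ at the level of $\sharp W$, and transport the factorization back through the $\fpshadj\sharp$ label. The paper's version is terser (it directly starts from a defining substitution already in the form $\fpshadj\sharp(\gamma)(\facewkn{\var i})$ rather than arguing that degeneracy descends through the label), but the content is the same, and your explicit remark that $\sharp$ keeps $\var i$ a \emph{path} dimension is exactly the point that makes the argument work.
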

\begin{proof}
	Pick a defining term $(W, \ctxpath{\var i}) \Dsez \fpshadj \sharp(t) : (\coshp T) \dsub{\fpshadj \sharp(\gamma) (\facewkn{\var i})}$; we will show that it is degenerate. Note that $\fpshadj \sharp(\gamma) (\facewkn{\var i}) = \fpshadj \sharp (\gamma \circ \sharp (\facewkn{\var i})) = \fpshadj \sharp (\gamma (\facewkn{\var i}))$. Hence, we have $\sharp (W, \ctxpath{\var i}) = (\sharp W, \ctxpath{\var i}) \Dsez t : T \dsub{\gamma (\facewkn{\var i})}$. By discreteness of $T$, $t$ factors over $(\facewkn{\var i})$, i.e.\ $t = t' \psub{\facewkn{\var i}}$. Then $\fpshadj \sharp(t) = \fpshadj \sharp(t' \psub{\facewkn{\var i}}) = \fpshadj \sharp(t' \psub{\sharp (\facewkn{\var i})}) = \fpshadj \sharp(t) \psub{\facewkn{\var i}}$.
\end{proof}

\subsection{Equivalence relations on presheaves}
Before we can define $\quotshp T$, we need a little bit of theory on equivalence relations on (dependent) presheaves. We will then be able to define $\quotshp T$ straightforwardly as a quotient of $T$.
\begin{definition}
	An \textbf{equivalence relation $E$ on a presheaf} $\Gamma \in \widehat \catW$ consists of:
	\begin{itemize}
		\item For every $W \in \catW$, an equivalence relation $E_W$ on $(\DSub W \Gamma)$,
		\item So that if $E_W(\gamma, \gamma')$ and $\vfi : \PSub V W$, then $E_V(\gamma \vfi, \gamma' \vfi)$.
	\end{itemize}
	We will denote this as $E \eqrel \Gamma$.
	
	Similarly, an \textbf{equivalence relation $E$ on a dependent presheaf} $(\Gamma \sez T \type)$ consists of:
	\begin{itemize}
		\item For every $W \in \catW$ and every $\gamma : \DSub W \Gamma$, an equivalence relation $E \dsub \gamma$ on $T \dsub \gamma$,
		\item So that if $E \dsub \gamma (s, t)$ and $\vfi : V \to W$, then $E \dsub{\gamma \vfi}(s \psub \vfi, t \psub \vfi)$.
	\end{itemize}
	We will denote this as $\Gamma \sez E \eqrel T$.
\end{definition}
Given a presheaf map $\sigma : \Delta \to \Gamma$ and an equivalence relation $\Gamma \sez E \eqrel T$, we can easily define its substitution $\Delta \sez E[\sigma] \eqrel T[\sigma]$, by setting $E[\sigma]\dsub \delta = E\dsub{\sigma \delta}$. Substitution of equivalence relations obviously respects identity and composition.
\begin{lemma}
	The intersection of arbitrarily many equivalence relations on a given (dependent) presheaf, is again an equivalence relation on that presheaf. \qed
\end{lemma}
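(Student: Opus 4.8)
The plan is to take the pointwise (fibrewise) intersection and verify the two clauses of the definition directly. Fix a family $(E^j)_{j \in J}$ of equivalence relations on a presheaf $\Gamma \in \widehat\catW$; the dependent case is handled verbatim afterwards. For each $W \in \catW$ I would set $E_W := \bigcap_{j \in J} E^j_W$, i.e. $E_W(\gamma, \gamma')$ holds iff $E^j_W(\gamma, \gamma')$ holds for every $j \in J$. It then remains to check (i) that each $E_W$ is an equivalence relation on the set $(\DSub W \Gamma)$, and (ii) that $E$ is stable under restriction.

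For (i), I would invoke the standard fact that an arbitrary intersection of equivalence relations on a set is again an equivalence relation: reflexivity $E^j_W(\gamma, \gamma)$ holds for all $j$, hence $E_W(\gamma,\gamma)$; and symmetry and transitivity likewise descend to the intersection since they are preserved by conjunction over $j$. The only mildly degenerate point is $J = \eset$, where $E_W$ is the total relation on $(\DSub W \Gamma)$, which is still an equivalence relation. For (ii), suppose $E_W(\gamma, \gamma')$ and $\vfi : \PSub V W$. For each $j \in J$ we have $E^j_W(\gamma, \gamma')$, hence $E^j_V(\gamma\vfi, \gamma'\vfi)$ because $E^j \eqrel \Gamma$; as this holds for all $j$, we conclude $E_V(\gamma\vfi, \gamma'\vfi)$. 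Thus $E \eqrel \Gamma$.

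For a dependent presheaf $\Gamma \sez T \type$, the argument is identical after the evident translation: I would set $\left(\bigcap_{j} E^j\right)\dsub\gamma := \bigcap_{j} E^j\dsub\gamma$ as a relation on $T\dsub\gamma$, use that each is a set-level equivalence relation on $T\dsub\gamma$ as in (i), and for stability note that if the intersection holds at $\gamma$ and $\vfi : V \to W$, then $E^j\dsub{\gamma\vfi}(s\psub\vfi, t\psub\vfi)$ holds for every $j$ by stability of $E^j$, so the intersection holds at $\gamma\vfi$. There is no genuine obstacle here: everything reduces to the closure of set-theoretic equivalence relations under intersection, together with the observation that a (dependent) presheaf equivalence relation is precisely a restriction-compatible family of such.
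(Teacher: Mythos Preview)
Your proof is correct and follows the only reasonable approach: take the fibrewise intersection and verify the two clauses of the definition. The paper in fact gives no proof at all (the lemma is marked \qed\ immediately after the statement), so your argument is simply a spelled-out version of what the paper considers obvious.
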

\begin{lemma}[Substitution of equivalence relations, has a right adjoint]
	Given a substitution $\sigma : \Delta \to \Gamma$ and equivalence relations $\Delta \sez F \eqrel T[\sigma]$ and $\Gamma \sez E \eqrel T$, there is an equivalence relation $\Gamma \sez \forall_\sigma F \eqrel T$ such that $E[\sigma] \subseteq F$ if and only if $E \subseteq \forall_\sigma F$.
\end{lemma}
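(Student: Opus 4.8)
The plan is to construct $\forall_\sigma F$ by hand as the largest equivalence relation on $T$ whose pullback along $\sigma$ is contained in $F$, and then read the Galois connection off the construction. Concretely, for $\gamma : \DSub W \Gamma$ and $s, t \in T\dsub\gamma$, I would declare $(\forall_\sigma F)\dsub\gamma(s,t)$ to hold iff for every $\vfi : \PSub V W$ and every $\delta : \DSub V \Delta$ with $\sigma\delta = \gamma\vfi$ one has $F\dsub\delta(s\psub\vfi, t\psub\vfi)$; this is well-typed because $\sigma\delta = \gamma\vfi$ forces $T[\sigma]\dsub\delta = T\dsub{\sigma\delta} = T\dsub{\gamma\vfi}$, and $s\psub\vfi, t\psub\vfi$ indeed lie in $T\dsub{\gamma\vfi}$. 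Quantifying over all future restrictions $\vfi$ — rather than only over $\delta$ lying directly over $\gamma$ — is the crucial design choice and the one point I expect to be at all subtle: with the naive ``$\forall\delta$ with $\sigma\delta=\gamma$'' version the resulting relation need not be stable under restriction, since $\gamma$ might have no $\sigma$-preimage while $\gamma\vfi$ has one.

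Next I would check that this prescription really defines an equivalence relation $\Gamma \sez \forall_\sigma F \eqrel T$ in the sense of the preceding definition. For fixed $\gamma$, reflexivity, symmetry and transitivity of $(\forall_\sigma F)\dsub\gamma$ are immediate from the same properties of every $F\dsub\delta$ appearing in the quantification. Stability under restriction by $\psi : \PSub{W'}{W}$ is exactly where the design choice pays off: a witnessing pair $(\vfi', \delta)$ for $\gamma\psi$, with $\vfi' : \PSub V {W'}$ and $\sigma\delta = \gamma\psi\vfi'$, reassembles into the pair $(\psi\vfi', \delta)$ for $\gamma$, so the required $F\dsub\delta$-relatedness of $(s\psub\psi)\psub{\vfi'}$ and $(t\psub\psi)\psub{\vfi'}$ is a special case of the hypothesis $(\forall_\sigma F)\dsub\gamma(s,t)$.

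Finally I would establish the two halves of the adjunction. For ``$E \subseteq \forall_\sigma F$ implies $E[\sigma] \subseteq F$'': unfolding $E[\sigma]\dsub\delta = E\dsub{\sigma\delta}$ and instantiating the defining condition of $\forall_\sigma F$ at $\vfi = \id$ (using $\sigma\delta = (\sigma\delta)\circ\id$) collapses it to $F\dsub\delta$. For the converse, given $E[\sigma]\subseteq F$ and $E\dsub\gamma(s,t)$: for any $\vfi,\delta$ with $\sigma\delta = \gamma\vfi$, stability of $E$ under $\vfi$ turns $E\dsub\gamma$-relatedness of $s,t$ into $E\dsub{\sigma\delta}$-relatedness — i.e.\ $E[\sigma]\dsub\delta$-relatedness — of $s\psub\vfi, t\psub\vfi$, which $F$ then contains, giving $(\forall_\sigma F)\dsub\gamma(s,t)$. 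Both directions are short; essentially all the content is in choosing the definition and verifying restriction-stability. Substitution functoriality of $\forall_\sigma$ need not be proved here, as the statement only asserts the pointwise adjunction.
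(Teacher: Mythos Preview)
Your proposal is correct and follows exactly the paper's approach: the same definition of $(\forall_\sigma F)\dsub\gamma$ via universal quantification over all restrictions $\vfi$ and all $\delta$ with $\sigma\delta = \gamma\vfi$, the same observation that this quantification is what ensures restriction-stability, and the same verification of the two halves of the adjunction (instantiating $\vfi = \id$ for one direction, using stability of $E$ under restriction for the other). Your write-up is in fact more explicit than the paper's, which leaves the equivalence-relation axioms and the restriction-stability argument as a one-line remark.
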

The notation $\forall$ is related to the notation of the product type. Indeed, the product type is right adjoint to weakening, which is a special case of substitution.
\begin{proof}
	Given $W \Dsez x, y : T \dsub \gamma$, we set $(\forall_\sigma F) \dsub \gamma (x, y)$ if and only if for every face map $\vfi : \PSub V W$ and every $\delta : \DSub V \Delta$ such that $\sigma \delta = \gamma \vfi$, we have $F \dsub \delta(x \psub \vfi, y \psub \vfi)$. The quantification over $V$ guarantees that equivalence is preserved under restriction.
	
	We now show that $E[\sigma] \subseteq F$ if and only if $E \subseteq \forall_\sigma F$.
	\begin{itemize}
		\item[$\Rightarrow$] Assume that $E[\sigma] \subseteq F$. Pick $W \Dsez x, y : T \dsub \gamma$ such that $E \dsub \gamma (x, y)$. We show that $\forall_\sigma F \dsub \gamma(x, y)$. For any $\vfi : \PSub V W$ we have $E \dsub{\gamma \vfi}(x \psub \vfi, y \psub \vfi)$ and hence for any $\delta : \DSub V \Delta$ such that $\sigma \delta = \gamma \vfi$, we have $E[\sigma] \dsub{\delta}(x \psub \vfi, y \psub \vfi)$, implying $F \dsub \delta(x \psub \vfi, y \psub \vfi)$.
		
		\item[$\Leftarrow$] Assume that $E \subseteq \forall_\sigma F$. Pick $W \Dsez x, y : T[\sigma] \dsub \delta$ and assume that $E[\sigma]\dsub \delta(x, y)$. We show that $F \dsub \delta(x, y)$. Clearly, we have $\forall_\sigma F \dsub{\sigma \delta}(x, y)$. Instantiating $\vfi$ with $\id$, we can conclude $F \dsub{\delta}(x, y)$. \qedhere
	\end{itemize}
\end{proof}
\begin{lemma}[Applying a lifted functor to an equivalence relation, has a right adjoint]
	Assume a functor $K : \catV \to \catW$ and equivalence relations $\Gamma \sez_{\widehat \catW} E \eqrel T$ and $\fpsh K \Gamma \sez_{\widehat \catV} F \eqrel \fpsh K T$. There is an equivalence relation $\Gamma \sez_{\widehat \catW} \forall_{K} F \eqrel T$ such that $\fpsh K E \subseteq F$ if and only if $E \subseteq \forall_{K} F$. Here, $\fpsh K E$ is defined by $\fpsh K E \dsub{\fpshadj K (\gamma)}(\fpshadj K(x), \fpshadj K(y)) = E \dsub \gamma(x, y)$.
\end{lemma}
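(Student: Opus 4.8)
The plan is to follow the proof of the immediately preceding lemma (substitution of equivalence relations has a right adjoint) almost verbatim, since $\fpsh K$ behaves like a substitution on presheaves; the only twist is that it moves between base categories, so a $V$-cell $\fpshadj K(\gamma) : \DSub V {\fpsh K \Gamma}$ is a $KV$-cell $\gamma : \DSub{KV}{\Gamma}$, and a defining term $V \Dsez \fpshadj K(x) : (\fpsh K T) \dsub{\fpshadj K(\gamma)}$ is a defining term $KV \Dsez x : T \dsub \gamma$.

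First I would record that $\fpsh K E$, defined by $\fpsh K E \dsub{\fpshadj K(\gamma)}(\fpshadj K(x), \fpshadj K(y)) = E \dsub \gamma(x, y)$, is indeed an equivalence relation on $\fpsh K T$: reflexivity, symmetry and transitivity hold levelwise because they hold for $E$, and stability under restriction by $\vfi : \PSub{V'}{V}$ reduces, via $\fpshadj K(\gamma) \circ \vfi = \fpshadj K(\gamma \circ K\vfi)$ and $\fpshadj K(x)\psub\vfi = \fpshadj K(x \psub{K\vfi})$, to stability of $E$ under $K\vfi : \PSub{KV'}{KV}$.

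Then I define $\forall_K F$ as follows: for $W \Dsez x, y : T \dsub \gamma$, declare $(\forall_K F)\dsub\gamma(x, y)$ to hold iff for every object $V$ of $\catV$ and every primitive substitution $\psi : \PSub{KV}{W}$ one has $F\dsub{\fpshadj K(\gamma \circ \psi)}(\fpshadj K(x \psub\psi), \fpshadj K(y \psub\psi))$. Each $F\dsub{\fpshadj K(\gamma \circ \psi)}$ is an equivalence relation, so $(\forall_K F)\dsub\gamma$ is reflexive, symmetric and transitive; and it is stable under restriction by $\vfi : \PSub{W'}{W}$ because any $\psi : \PSub{KV}{W'}$ yields $\vfi \circ \psi : \PSub{KV}{W}$ with $\gamma \circ (\vfi \circ \psi) = (\gamma \circ \vfi) \circ \psi$. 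So $\Gamma \sez_{\widehat \catW} \forall_K F \eqrel T$ is well-defined.

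It remains to verify the adjunction. For the forward direction, assume $\fpsh K E \subseteq F$, take $x, y : T \dsub \gamma$ with $E \dsub\gamma(x, y)$, and fix $\psi : \PSub{KV}{W}$; stability of $E$ gives $E \dsub{\gamma \circ \psi}(x\psub\psi, y\psub\psi)$, i.e. $\fpsh K E \dsub{\fpshadj K(\gamma\circ\psi)}(\fpshadj K(x\psub\psi), \fpshadj K(y\psub\psi))$, hence $F\dsub{\fpshadj K(\gamma\circ\psi)}(\ldots)$ by hypothesis, and since $\psi$ was arbitrary, $(\forall_K F)\dsub\gamma(x, y)$. For the backward direction, assume $E \subseteq \forall_K F$, take $\fpshadj K(x), \fpshadj K(y) : (\fpsh K T)\dsub{\fpshadj K(\gamma)}$ (so $\gamma : \DSub{KV}{\Gamma}$) with $\fpsh K E \dsub{\fpshadj K(\gamma)}(\fpshadj K(x), \fpshadj K(y))$, i.e. $E\dsub\gamma(x, y)$; then $(\forall_K F)\dsub\gamma(x, y)$ holds, and instantiating its defining clause at $V$ with $\psi = \id_{KV}$ gives exactly $F\dsub{\fpshadj K(\gamma)}(\fpshadj K(x), \fpshadj K(y))$. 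There is no genuine obstacle here; the one point that needs care is pinning down the quantifier in the definition of $\forall_K F$ --- it must range over all $\psi : \PSub{KV}{W}$ for all objects $V$, which is wide enough to make $\forall_K F$ restriction-stable and to drive the forward direction, yet narrow enough that instantiating at $\psi = \id$ recovers the backward direction.
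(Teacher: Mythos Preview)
Your proof is correct and follows essentially the same route as the paper: you define $(\forall_K F)\dsub\gamma(x,y)$ by quantifying over all $V$ in $\catV$ and all $\psi : \PSub{KV}{W}$, then verify the two inclusions exactly as the paper does, with the backward direction obtained by instantiating at $\psi = \id_{KV}$. You add the explicit check that $\fpsh K E$ is an equivalence relation and spell out restriction-stability of $\forall_K F$ in slightly more detail than the paper, but the core argument is identical.
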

\begin{proof}
	The idea is entirely the same. Given $W \Dsez_{\widehat \catW} x, y : T \dsub \gamma$, we set $\forall_K F \dsub \gamma(x, y)$ if and only if for every $V \in \cat V$ and every face map $\vfi : \PSub{KV}{W}$, we have $F \dsub{\fpshadj K(\gamma \vfi)}(\fpshadj K(x \psub \vfi), \fpshadj K(y \psub \vfi))$. The quantification over $V$ guaranties that equivalence is preserved under restriction.
	
	We now show that $\fpsh K E \subseteq F$ if and only if $E \subseteq \forall_K F$.
	\begin{itemize}
		\item[$\Rightarrow$] Assume that $\fpsh K E \subseteq F$. Pick $W \Dsez_{\widehat \catW} x, y : T \dsub \gamma$ and assume that $E \dsub \gamma(x, y)$. We show that $\forall_K F \dsub \gamma(x, y)$. For any $\vfi : \PSub{KV}{W}$, we have $E \dsub{\gamma \vfi}(x \psub \vfi, y \psub \vfi)$, i.e. $\fpsh K E \dsub{\fpshadj K(\gamma \vfi)}(\fpshadj K(x \psub \vfi), \fpshadj K(y \psub \vfi))$, implying $F \dsub{\fpshadj K(\gamma \vfi)}(\fpshadj K(x \psub \vfi), \fpshadj K(y \psub \vfi))$.
		
		\item[$\Leftarrow$] Assume that $E \subseteq \forall_{\fpsh K} F$. Pick $V \Dsez_{\widehat \catV} \fpshadj K(x), \fpshadj K(y) : \fpsh K T \dsub{\fpshadj K(\gamma)}$ such that $\fpsh K E \dsub{\fpshadj K(\gamma)}(\fpshadj K(x), \fpshadj K(y))$, i.e.\ $E \dsub \gamma(x, y)$. Then $\forall_K F \dsub \gamma(x, y)$. Instantiating $\vfi = \id : \PSub{KV}{KV}$, we have $F \dsub{\fpshadj K(\gamma)}(\fpshadj K(x), \fpshadj K(y))$. \qedhere
	\end{itemize}
\end{proof}
\begin{definition}
	If $E \eqrel \Gamma$, then we define the context $\Gamma/E$ by setting $(\DSub{W}{\Gamma/E}) = (\DSub{W}{\Gamma})/E_W$ and $\overline\gamma \circ \vfi = \overline{\gamma \circ \vfi}$, which is well-defined by virtue of the second bullet in the definition of an equivalence relation.
	
	If $\Gamma \sez E \eqrel T$, then we define $\Gamma \sez T/E \type$ by setting $(T/E) \dsub \gamma = T \dsub \gamma/E \dsub{\gamma}$ and $\overline{x} \psub \vfi = \overline{x \psub \vfi}$.
\end{definition}
One easily checks that $(T/E)[\sigma] = T[\sigma]/E[\sigma]$.

\subsection{Discretizing contexts and the functor $\cohpi$}\label{sec:def-cohpi}
In this section, our aim is to construct for any context $\Gamma$, a discrete context $\quotshp \Gamma$ with a substitution $\inquotshp : \Gamma \to \quotshp \Gamma$ such that any substitution $\tau : \Gamma \to \Gamma'$ to a discrete context $\Gamma'$, factors uniquely over $\inquotshp$. The effect of applying $\quotshp$ will be that we are contracting every path to a point (and more generally, that we are contracting every cell in all its path dimensions). When we postcompose with $\cohfget$, we obtain our desired left adjoint $\cohpi = \cohfget \quotshp$ of $\cohdisc$.

\subsubsection{Discretizing contexts}
Our approach is quite straightforward: we simply divide out the least equivalence relation that makes the quotient discrete. Recall that a context $\Gamma$ is discrete iff for every $\gamma : \DSub{(W, \ctxpath{\var i})}{\Gamma}$, we have that $\gamma = \gamma (0 / \var i, \facewkn{\var i})$.
\begin{definition}
	Let the \textbf{shape equivalence relation} $\sheq$ on $\Gamma$ be the least equivalence relation such that $\sheq(\gamma, \gamma(0 / \var i, \facewkn{\var i}))$ for any $\gamma : \DSub{(W, \ctxpath{\var i})}{\Gamma}$. Then we define the \textbf{shape quotient} of $\Gamma$ as $\quotshp \Gamma = \Gamma/\sheq$. Given a substitution $\sigma : \Gamma \to \Gamma'$, we define $\quotshp \sigma : \quotshp \Gamma \to \quotshp \Gamma'$ by setting $\quotshp \sigma \circ \overline \gamma = \overline{\sigma \circ \gamma}$. This constitutes a functor $\quotshp : \widehat{\bpcubecat} \to \widehat{\bpcubecat}$.
	
	We define a natural transformation $\inquotshp : \Id \to \quotshp$ by $\inquotshp \circ \gamma := \overline \gamma$.
\end{definition}
It is easy to see that any substitution $\tau$ from $\Gamma$ into a discrete context, factors uniquely over $\inquotshp : \Gamma \to \quotshp \Gamma$. In particular, $\quotshp \sigma$ is well-defined.
\begin{lemma}\label{thm:kappa-injective}
	For any context $\Gamma \in \widehat \bpcubecat$, the substitution $\kappa : \flat \Gamma \to \Gamma$ is an injective presheaf map, meaning that $\kappa \circ \loch : (\DSub{W}{\flat \Theta}) \to (\DSub{W}{\Theta})$ is injective for every $W$.
\end{lemma}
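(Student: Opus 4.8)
The plan is to unwind the definitions of $\flat$ and $\kappa$, reduce the claim to a statement about a single face map, and then exhibit an explicit section of that face map.

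Recall that $\flat = \fpsh\shp$ and $\kappa = \fpsh\varsigma$, where $\shp : \bpcubecat \to \bpcubecat$ and $\varsigma : \Id \to \shp$ are the base-level functor and unit displayed in \cref{fig:cohesion-base} (see also \cref{thm:cohesion-psh}). By the description of lifted functors and of lifted natural transformations, for each primitive context $W$ a defining substitution in $(\DSub{W}{\flat\Gamma})$ is the relabelling $\fpshadj\shp(\gamma)$ of a unique $\gamma : \DSub{\shp W}{\Gamma}$, and $\kappa$ acts by $\kappa \circ \fpshadj\shp(\gamma) = \gamma \circ \varsigma_W$, i.e.\ by restriction along $\varsigma_W : W \to \shp W$. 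Since $\fpshadj\shp(\loch)$ is a bijection, it therefore suffices to prove that restriction along $\varsigma_W$ is injective on $(\DSub{\shp W}{\Gamma})$, for every $\Gamma$ and every $W$.

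For this I would exhibit a section of $\varsigma_W$ in $\bpcubecat$. By \cref{fig:cohesion-base}, $\shp W$ is the bridge/path cube with the same bridge dimensions as $W$ and no path dimensions, and $\varsigma_W$ is the face map that sends each bridge variable of $\shp W$ to the equally-named bridge variable of $W$. Let $s_W : \shp W \to W$ be the face map sending each bridge variable of $W$ to itself and each path variable of $W$ to $0$; this is a legitimate face map since a path variable may always be assigned the value $0$. Composing face maps, one checks $\varsigma_W \circ s_W = \id_{\shp W}$: for a bridge variable $\var i$ of $\shp W$ we have $\var i \psub{\varsigma_W \circ s_W} = (\var i \psub{\varsigma_W}) \psub{s_W} = \var i \psub{s_W} = \var i$.

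It follows that restriction along $s_W$ retracts restriction along $\varsigma_W$: for $\gamma : \DSub{\shp W}{\Gamma}$ we get $(\gamma\,\varsigma_W)\,s_W = \gamma\,(\varsigma_W \circ s_W) = \gamma$. Hence restriction along $\varsigma_W$, and therefore $\kappa \circ \loch$ at $W$, is injective. The only point that needs a little care is keeping the bookkeeping straight --- the (op-flavoured) direction of face maps, the contravariance of $\fpsh\loch$ on natural transformations, and the order of composition --- but there is no real obstacle: the content is simply that $\varsigma_W$ is split epi in $\bpcubecat$, witnessed by crushing all path dimensions to $0$ while leaving the bridge dimensions untouched.
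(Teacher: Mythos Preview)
Your proof is correct and follows exactly the same approach as the paper's proof: the paper simply notes that $\kappa \circ \fpshadj\shp(\gamma) = \gamma\,\varsigma$ and that $\varsigma : \PSub{W}{\shp W}$ ``is easily seen to have a right inverse'', which is precisely the section $s_W$ you construct explicitly by collapsing all path dimensions to $0$. Your write-up is just a more detailed unfolding of that one-line argument.
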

\begin{proof}
	Given $\fpshadj \shp(\gamma) : \DSub{W}{\flat \Gamma}$, we have $\kappa \circ \fpshadj \shp(\gamma) = \gamma \varsigma$, where $\varsigma : \PSub{W}{\shp W}$ is easily seen to have a right inverse.
\end{proof}
\begin{lemma}
	Any substitution $\tau : \Gamma \to \Theta$ from a discrete context $\Gamma$ to any context $\Theta$, factors uniquely over $\kappa : \flat \Theta \to \Theta$. Hence, $\quotshp$ is left adjoint to $\flat$.
\end{lemma}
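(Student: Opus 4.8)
The plan is to prove the factorization statement first, and then to obtain the adjunction $\quotshp \dashv \flat$ by splicing it together with the universal property of $\inquotshp$.

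\emph{Factorization.} Uniqueness is immediate: by \cref{thm:kappa-injective} the substitution $\kappa : \flat\Theta \to \Theta$ is injective, so there is at most one $\tilde\tau : \Gamma \to \flat\Theta$ with $\kappa \circ \tilde\tau = \tau$. For existence I would exploit discreteness of $\Gamma$. By \cref{thm:discrete-contexts-and-cohesion}, the component $\kappa : \flat\Gamma \to \Gamma$ is then an isomorphism, so I set $\tilde\tau := (\flat\tau) \circ \kappa\inv : \Gamma \to \flat\Gamma \to \flat\Theta$. Naturality of $\kappa : \flat \to \Id$ gives $\kappa_\Theta \circ \flat\tau = \tau \circ \kappa_\Gamma$, hence $\kappa_\Theta \circ \tilde\tau = \tau \circ \kappa_\Gamma \circ \kappa_\Gamma\inv = \tau$, as required. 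Concretely, $\tilde\tau$ sends $\gamma : \DSub W \Gamma$, which by discreteness factors as $\gamma = \gamma' \circ \varsigma$ with $\gamma' : \DSub{\shp W}{\Gamma}$, to $\fpshadj\shp(\tau \circ \gamma')$.

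\emph{The adjunction.} I would exhibit a bijection $\Hom(\quotshp\Gamma,\Theta) \cong \Hom(\Gamma,\flat\Theta)$ natural in $\Gamma$ and $\Theta$. Note that $\quotshp\Gamma = \Gamma/\sheq$ is discrete by construction of $\sheq$ (every cube is forced degenerate in every path dimension), and $\flat\Theta = \cohdisc\cohfget\Theta$ is discrete by \cref{thm:discrete-contexts-and-cohesion}. Applying the factorization statement just proved with $\quotshp\Gamma$ in the role of the (discrete) source shows that postcomposition with $\kappa_\Theta$ is a bijection $\Hom(\quotshp\Gamma,\flat\Theta) \to \Hom(\quotshp\Gamma,\Theta)$; applying the universal property of $\inquotshp : \Gamma \to \quotshp\Gamma$ to the discrete context $\flat\Theta$ shows that precomposition with $\inquotshp$ is a bijection $\Hom(\quotshp\Gamma,\flat\Theta) \to \Hom(\Gamma,\flat\Theta)$. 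Composing the second with the inverse of the first yields $\Hom(\quotshp\Gamma,\Theta) \cong \Hom(\Gamma,\flat\Theta)$, and this is natural in both arguments because it is assembled solely from pre- and postcomposition with the natural transformations $\inquotshp$ and $\kappa$. Hence $\quotshp \dashv \flat$.

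The arguments are short; the only points that demand care are (i) verifying that $\quotshp\Gamma$ is genuinely discrete and that factorization through $\inquotshp$ is as claimed — both already observed in the text around the definition of $\quotshp$ — and (ii) invoking the two universal properties for the correct discrete objects: discreteness of $\quotshp\Gamma$ is what licenses the $\kappa$-factorization on the left, while discreteness of $\flat\Theta$ is what licenses the $\inquotshp$-factorization on the right. No genuinely hard step is involved.
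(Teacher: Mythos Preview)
Your proof is correct and follows essentially the same approach as the paper: uniqueness via injectivity of $\kappa$, existence of the factorization, and then the adjunction obtained by combining the $\kappa$-factorization (applied to the discrete $\quotshp\Gamma$) with the $\inquotshp$-factorization (applied to the discrete $\flat\Theta$). The only minor difference is presentational: for existence the paper constructs $\tilde\tau$ elementwise by factoring each $\gamma = \gamma'\varsigma$ and setting $\tilde\tau\gamma = \fpshadj\shp(\tau\gamma')$, whereas you invoke \cref{thm:discrete-contexts-and-cohesion} to get $\kappa_\Gamma$ invertible and take $\tilde\tau = \flat\tau \circ \kappa_\Gamma^{-1}$, appealing to naturality of $\kappa$ --- but you already note that this unwinds to the same concrete formula, so the two arguments coincide.
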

\begin{proof}
	\begin{description}
		\item[Existence.] Pick $\gamma : \DSub{W}{\Gamma}$. Since $\Gamma$ is discrete, $\gamma$ factors over $\varsigma : \PSub{W}{\shp W}$ as $\gamma = \gamma' \varsigma$. Then we have $\tau \gamma' : \DSub{\shp W}{\Theta}$ and hence $\fpshadj \shp(\tau \gamma') : \DSub{W}{\flat \Theta}$. So we define $\tau' : \Gamma \to \flat \Theta$ by setting $\tau' \gamma' \varsigma = \fpshadj \shp(\tau \gamma')$. To see that this is natural:
		\begin{equation}
			\tau' \gamma' \varsigma \vfi = \tau' \gamma' (\shp \vfi) \varsigma = \fpshadj \shp(\tau \gamma' (\shp \vfi)) = \fpshadj \shp(\tau \gamma') \vfi.
		\end{equation}
		
		\item[Uniqueness.] This follows from the fact that $\kappa : \flat \Theta \to \Theta$ is an injective presheaf map, see \cref{thm:kappa-injective}.
		
		\item[Adjointness.] Substitutions $\quotshp \Gamma \to \Theta$ factor uniquely (and thus naturally) over $\kappa : \flat \Theta \to \Theta$ and are thus in natural correspondence with substitutions $\quotshp \Gamma \to \flat \Theta$. On the other hand, substitutions $\Gamma \to \flat \Theta$ factor uniquely (and thus naturally) over $\inquotshp : \Gamma \to \quotshp \Gamma$ and are thus also in natural correspondence with substitutions $\quotshp \Gamma \to \flat \Theta$. This proves the adjunction. \qedhere
	\end{description}
\end{proof}
Since $\kappa$ is an injective presheaf map and $\inquotshp$ is clearly a surjective presheaf map, we use the following notations. If $\sigma : \Gamma \to \flat \Theta$, then we write $\kappa \sigma \inquotshp\inv$ for $\alpha\inv_{\quotshp \dashv \flat}(\sigma) : \quotshp \Gamma \to \Theta$. Conversely, if $\tau : \quotshp \Gamma \to \Theta$, we write $\kappa\inv \tau \inquotshp$ for $\alpha_{\quotshp \dashv \flat}(\tau) : \Gamma \to \flat \Theta$. Thus, we have unit $\kappa\inv \inquotshp : \Id \to \flat \quotshp$ and co-unit $\kappa \inquotshp \inv : \quotshp \flat \to \Id$.

\subsubsection{The functor $\cohpi$}
As $\quotshp$ is left adjoint to $\flat$, we could define $\shp$ as $\quotshp$. However, this does not give us a decomposiiton $\shp = \cohdisc \cohpi$ or the property $\flat \shp = \shp$. Instead, we define $\cohpi := \cohfget \quotshp \dashv \flat \cohcodisc = \cohdisc$. By consequence, we have $\shp = \cohdisc \cohpi = \flat \quotshp \dashv \flat \sharp = \flat$. Since both $\shp$ and $\quotshp$ are now left adjoint to $\flat$, we have $\kappa \quotshp : \shp \cong \quotshp$.
By \cref{thm:uniqueness-of-nattrans-psh}, we have $\varsigma = (\kappa \quotshp)\inv \inquotshp : \Id \to \shp$ and $\bar \varsigma = \cohfget \varsigma \cohdisc : \Id \to \bar \shp$.
We will maximally avoid to inspect the definition of $\cohpi$; hence we will avoid explicit use of $\quotshp$ and $\inquotshp$.

\subsection{Discretizing types}
If $\shp$ were a morphism of CwFs, then from a type $\Gamma \sez T \type$, we could define a type $\Gamma \sez (\shp T)[\varsigma] \dtype$, but unfortunately this is meaningless. Because we need that operation nonetheless, we will define it explicitly in this section. The approach is the same as for contexts: we simply obtain $\Gamma \sez \quotshp T \dtype$ from $T$ by dividing out the least equivalence relation $\sheq^T$ that makes $T$ discrete. The main obstacle is that we want this operation to commute with substitution, i.e. that $\sheq^T$ commutes with substitution. Here, once more, we will need the existence of coherence squares, as is evident from the following example, where we fail to prove the same result for the category of reflexive graphs $\widehat{\RGcat}$.

\subsubsection{The shape equivalence relation}
Before we proceed $\widehat \bpcubecat$, we will try to define the shape operation in $\widehat \RGcat$.
\begin{example}\label{eg:shape-equivalence-relation}
	Given any type $\Gamma \sez T \type$, the \textbf{shape equivalence relation} $\sheq^T$ is the smallest equivalence relation such that $\sheq^T \dsub{\gamma (\facewkn{\var i})}(p, p \psub{0 / \var i, \facewkn{\var i}})$ for every edge $(W, \ctxedge{\var i}) \Dsez p : T \dsub{\gamma (\facewkn{\var i})}$.
	
	Again, dividing out $\sheq^T$ is precisely what is needed to make $T$ discrete. We now try to prove the following (false) claim: The shape equivalence relation respects substitution: $\sheq^T[\sigma] = \sheq^{T[\sigma]}$.
	\begin{proof}[Non-proof]
		Pick a substitution $\sigma : \Delta \to \Gamma$ and a type $\Gamma \sez T \type$. We try to prove both inclusions.
		\begin{itemize}
			\item[$\supseteq$] It suffices to show that $\sheq^T[\sigma]$ satisfies the defining property of $\sheq^{T[\sigma]}$. Pick an edge $(\ctxedge{\var i}) \Dsez p : T[\sigma] \dsub{\delta (\facewkn{\var i})}$. We have to show that $\sheq^T[\sigma]\dsub{\delta (\facewkn{\var i})} (p, p \psub{0 / \var i, \facewkn{\var i}})$. After composing $\sigma$ and $\delta (\facewkn{\var i})$, this follows immediately from the definition of $\sheq^T$.
			
			\item[$\subseteq$] We will try to prove the equivalent statement that $\sheq^T \subseteq \forall_\sigma \sheq^{T[\sigma]}$. It suffices to show that the right hand side satisfies the defining property of $\sheq^T$. Pick an edge $(\ctxedge{\var i}) \Dsez p : T \dsub{\gamma (\facewkn{\var i})}$. In order to show that $\forall_\sigma \sheq^{T[\sigma]} \dsub{\gamma (\facewkn{\var i})} (p, p \psub{0 / \var i, \facewkn{\var i}})$, we need to show for every $\vfi : \PSub{V}{(\ctxedge{\var i})}$ and every $\delta : \DSub{V}{\Delta}$ such that $\sigma \delta = \gamma (\facewkn{\var i}) \vfi$, that $\sheq^{T[\sigma]}\dsub{\delta}(p \psub \vfi, p \psub{0 / \var i, \facewkn{\var i}} \psub \vfi)$. In the case where $V = (\ctxedge{\var i})$, a problem arises because we do not know that $\delta$ is degenerate. In fact, we can give a counterexample. \qedhere
		\end{itemize}
	\end{proof}
	\begin{proof}[Counterexample]
		Let $\Gamma \cong ()$ and $\Delta \cong \yoneda(\ctxedge{\var i})$: a reflexive graph with two nodes $\delta, \delta' : \DSub{()}{\Delta}$ and a single non-trivial edge $\delta_1$ (\cref{fig:shape-equivalence-relation}). Let $\sigma$ be the unique substitution $\Delta \to \Gamma$. Consider the type $\Gamma \sez T \type$ consisting of two nodes $x$ and $y$ connected by two non-trivial edges $p$ and $q$. We get the setup shown in \cref{fig:shape-equivalence-relation}, which we briefly discuss here.
		
		Both $\delta$ and $\delta'$ are mapped to $\gamma$ under $\sigma$; hence in $T[\sigma]$, they both get a copy of $x$ and $y$. The degenerate edges $\delta \psub{\facewkn{\var i}}$ and $\delta'\psub{\facewkn{\var i}}$, as well as the edge $\delta_1$, are mapped to $\gamma \psub{\facewkn{\var i}}$. Hence in $T[\sigma]$, both $p$ and $q$ are tripled. The degenerate edges, too, are tripled, but you see only one copy of them, as the other two are still degenerate.
		
		All four edges in $T$ live above the degenerate edge $\gamma \psub{\facewkn{\var i}}$. Hence, when dividing out $\sheq^T$, they are all contracted to the degenerate edge at their source. Only a point remains.
		
		In $T[\sigma]$, only the vertical edges and the constant ones, live above degenerate edges in $\Delta$. Hence, only those are contracted. The horizontal and diagonal edges are preserved.
		
		In $(T/\sheq^T)[\sigma]$ (which is easily checked to be equal to $T[\sigma]/\sheq^T[\sigma]$), by contrast, we only have a single horizontal edge. Indeed: we get two copies of $\overline x$, and three copies of its constant edge, two of which are still degenerate.
		
		This is an example where $\sheq^T[\sigma] \neq \sheq^{T[\sigma]}$.
	\end{proof}
	The situation would have been different, had we had coherence squares. Indeed, in that case, we would have constant squares on $p$ and $q$ in $T$, living above the constant square at $\gamma$. These would produce squares filling up the front and back of $T[\sigma]$, living above the constant square at $\delta_1$. We could make $\sheq^X$ contract not just edges above degenerate edges, but also squares living above (partially) degenerate squares. Then both filling squares, as well as their diagonals, would be contracted and we would end up with just a single horizontal edge in $T[\sigma]/\sheq^{T[\sigma]}$.
\end{example}
\begin{figure}[htb]
	\begin{equation*}
		\xymatrix{
				\Gamma =
				& {\gamma}
				&&& \Delta =
				& {\delta} \ar@{-}[rr]^{\delta_1}
				&& {\delta'}
				\\
				& x_\gamma \ar@{-}@/_{1em}/[dd]_p \ar@{-}@/^{1em}/[dd]^q
				&&& & x_\delta \ar@{-}[rr]
					\ar@{-}@/_{1em}/[dd] \ar@{-}@/^{1em}/[dd]
					\ar@{-}@/_{1em}/[ddrr] \ar@{-}@/^{1em}/[ddrr]
				&& x_{\delta'} \ar@{-}@/_{1em}/[dd] \ar@{-}@/^{1em}/[dd]
				\\
				T = & &&& T[\sigma] =
				\\
				& y_\gamma
				&&& & y_\delta \ar@{-}[rr]
				&& y_{\delta'}
				\\
				T/\sheq^T =
				& \overline x_\gamma
				&&& T[\sigma]/\sheq^{T[\sigma]} =
				& \overline{x}_\delta
					\ar@{-}@/^{1.5em}/[rr]
					\ar@{-}@/^/[rr]
					\ar@{-}@/_/[rr]
					\ar@{-}@/_{1.5em}/[rr]
				&& \overline{x}_{\delta'}
				\\
				& &&& (T/\sheq^T)[\sigma] =
				& {\overline x_\delta} \ar@{-}[rr]
				&& \overline x_{\delta'}
		}
	\end{equation*}
	\caption{Setup from the counterexample in \cref{eg:shape-equivalence-relation}. Degenerate edges are not shown, and nodes of types are indexed with the context nodes they live above, in order to distinguish duplicates.}
	\label{fig:shape-equivalence-relation}
\end{figure}
\begin{remark}\label{remark:annotated-restriction}
	Note, in \cref{fig:shape-equivalence-relation}, that the horizontal edges of $T[\sigma]$ arise from reflexive edges in $T$, yet they are themselves not reflexive. Therefore, it is important to distinguish between $x \psub{\facewkn{\var i}}^T$ and $x \psub{\facewkn{\var i}}^{T[\sigma]}$. Every edge that can be written as $x \psub{\facewkn{\var i}}^{T[\sigma]}$ can also be written as $x \psub{\facewkn{\var i}}^T$, but the converse does not hold as exhibited by the horizontal edges.
\end{remark}
\begin{definition}
	Given any type $\Gamma \sez T \type$, the \textbf{shape equivalence relation} $\sheq^T$ is the smallest equivalence relation on $T$ such that for any $(W, \ctxpath{\var i}) \Dsez p : T \dsub{\gamma(\facewkn{\var i})}$, we have $\sheq^T(p, p \psub{0/\var i, \facewkn{\var i}})$.
\end{definition}
\begin{lemma}
	The shape equivalence relation respects substitution: $\sheq^T[\sigma] = \sheq^{T[\sigma]}$.
\end{lemma}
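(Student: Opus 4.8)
The plan is to establish the two inclusions separately, using the adjunction $E[\sigma] \subseteq F \iff E \subseteq \forall_\sigma F$ from the preceding lemma, together with the fact that $\sheq^T$ is by construction the \emph{smallest} equivalence relation on $T$ satisfying the clause $\sheq^T(p, p\psub{0/\var i, \facewkn{\var i}})$ for all $(W,\ctxpath{\var i}) \Dsez p : T\dsub{\gamma(\facewkn{\var i})}$. Since $\sheq^T[\sigma]$ is an equivalence relation on $T[\sigma]$ and $\forall_\sigma\sheq^{T[\sigma]}$ is an equivalence relation on $T$ by the relevant lemmas, in each direction it will suffice to compare against the minimal relation on the appropriate side, i.e.\ to check that some explicitly given equivalence relation satisfies the defining clause of the other one.

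For $\sheq^{T[\sigma]} \subseteq \sheq^T[\sigma]$, I would just check that $\sheq^T[\sigma]$ satisfies the defining clause of $\sheq^{T[\sigma]}$: given $(W,\ctxpath{\var i}) \Dsez p : T[\sigma]\dsub{\delta(\facewkn{\var i})}$, one has $T[\sigma]\dsub{\delta(\facewkn{\var i})} = T\dsub{(\sigma\delta)(\facewkn{\var i})}$ with restriction inherited from $T$, so the clause for $\sheq^T$ at the base $\sigma\delta : \DSub W \Gamma$ reads exactly $\sheq^T[\sigma]\dsub{\delta(\facewkn{\var i})}(p, p\psub{0/\var i,\facewkn{\var i}})$; minimality of $\sheq^{T[\sigma]}$ then gives the inclusion.

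For $\sheq^T[\sigma] \subseteq \sheq^{T[\sigma]}$, the adjunction reduces the goal to $\sheq^T \subseteq \forall_\sigma\sheq^{T[\sigma]}$, which by minimality of $\sheq^T$ reduces further to: $\forall_\sigma\sheq^{T[\sigma]}$ satisfies the defining clause of $\sheq^T$. Unfolding the construction of $\forall_\sigma$, I must show that for every $(W,\ctxpath{\var i}) \Dsez p : T\dsub{\gamma(\facewkn{\var i})}$, every face map $\vfi : \PSub V {(W,\ctxpath{\var i})}$ and every $\delta : \DSub V \Delta$ with $\sigma\delta = \gamma(\facewkn{\var i})\vfi$, one has $\sheq^{T[\sigma]}\dsub\delta(p\psub\vfi,\, p\psub{0/\var i,\facewkn{\var i}}\psub\vfi)$. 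Writing $\psi : \PSub V W$ for the $W$-part of $\vfi$, so that $\gamma(\facewkn{\var i})\vfi = \gamma\psi$ and hence $\sigma\delta = \gamma\psi$, I would split on $\var i\psub\vfi$, exactly as in the proof of \cref{thm:disc-prod}. If $\var i\psub\vfi = 0$, then $p\psub\vfi = p\psub{0/\var i,\facewkn{\var i}}\psub\vfi$ and reflexivity of $\sheq^{T[\sigma]}$ finishes the case. If $\var i\psub\vfi$ is $1$ or a (bridge or path) variable $\var k$ of $V$, I pick a fresh path variable $\var j\notin V$ and form the coherence path $(V,\ctxpath{\var j}) \Dsez q := p\psub{\psi,\,\var j/\var i} : T\dsub{(\gamma\psi)(\facewkn{\var j})}$; its base equals $(\gamma\psi)(\facewkn{\var j})$, which by naturality of $\sigma$ is $\sigma$ applied to the $\var j$-degenerate base $\delta(\facewkn{\var j})$, so $q$ is also a defining term $(V,\ctxpath{\var j}) \Dsez q : T[\sigma]\dsub{\delta(\facewkn{\var j})}$. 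Since $\delta(\facewkn{\var j})$ is degenerate in $\var j$, the defining clause of $\sheq^{T[\sigma]}$ applies and yields $\sheq^{T[\sigma]}\dsub{\delta(\facewkn{\var j})}(q,\, q\psub{0/\var j,\facewkn{\var j}})$. I then restrict this relation along $(\var k/\var j) : \PSub V {(V,\ctxpath{\var j})}$ (taking $\var k := 1$ in the first subcase, and noting that this is a legal face map even when $\var k$ is a bridge variable, since a path may always be weakened to a bridge), use that $\delta(\facewkn{\var j})$ restricts back to $\delta$, and simplify the iterated face-map restrictions to $q\psub{\var k/\var j} = p\psub\vfi$ and $q\psub{0/\var j,\facewkn{\var j}}\psub{\var k/\var j} = p\psub{0/\var i,\facewkn{\var i}}\psub\vfi$, obtaining the desired relation.

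The only genuinely non-formal step is the subcase $\var i\psub\vfi = \var k\in V$: there $\delta$ need not be degenerate along $\var k$, which is precisely the phenomenon that breaks the analogous claim over $\widehat\RGcat$ (cf.\ the counterexample in \cref{eg:shape-equivalence-relation}). The way out is to pass into the extra path dimension $\var j$, trading the problematic face map for the degenerate (``coherence'') cube $\delta(\facewkn{\var j})$ and the coherence path $q$ — a maneuver available in $\widehat\bpcubecat$ exactly because it has cubes of arbitrary dimension. Everything else is routine bookkeeping with compositions of face maps.
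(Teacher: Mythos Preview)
Your proof is correct and follows essentially the same approach as the paper's: both directions are handled identically, and the key $\subseteq$ direction uses the $\forall_\sigma$ adjunction, a case split on $\var i\psub\vfi$, and the coherence path $p\psub{\psi,\,\cdot/\var i}$ over the degenerate base $\delta(\facewkn{\cdot})$. The only cosmetic differences are that you introduce a fresh path variable $\var j$ where the paper reuses $\var i$ after a without-loss-of-generality assumption, and you unify the subcases $\var i\psub\vfi = 1$ and $\var i\psub\vfi = \var k \in V$ into one argument (restricting by $(\var k/\var j)$ with $\var k := 1$ in the first subcase), whereas the paper spells them out separately before noting that the variable case reuses the computation from the $1$ case.
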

\begin{proof}
	Pick a substitution $\sigma : \Delta \to \Gamma$ and a type $\Gamma \sez T \type$. We prove both inclusions.
	\begin{itemize}
		\item[$\supseteq$] It suffices to show that $\sheq^T[\sigma]$ satisfies the defining property of $\sheq^{T[\sigma]}$. Pick a path $(W, \ctxpath{\var i}) \Dsez p : T[\sigma] \dsub{\delta (\facewkn{\var i})}$. We have to show that $\sheq^T[\sigma]\dsub{\delta (\facewkn{\var i})} (p, p \psub{0 / \var i, \facewkn{\var i}})$. After composing $\sigma$ and $\delta (\facewkn{\var i})$, this follows immediately from the definition of $\sheq^T$.
		
		\item[$\subseteq$] We prove the equivalent statement that $\sheq^T \subseteq \forall_\sigma \sheq^{T[\sigma]}$. It suffices to show that the right hand side satisfies the defining property of $\sheq^T$. Pick a path $(W, \ctxpath{\var i}) \Dsez p : T \dsub{\gamma (\facewkn{\var i})}$. In order to show that $\forall_\sigma \sheq^{T[\sigma]} \dsub{\gamma (\facewkn{\var i})} (p, p \psub{0 / \var i, \facewkn{\var i}})$, we need to show for every $\vfi : \PSub{V}{(W, \ctxpath{\var i})}$ and every $\delta : \DSub{V}{\Delta}$ such that $\sigma \delta = \gamma (\facewkn{\var i}) \vfi$, that $\sheq^{T[\sigma]}\dsub{\delta}(p \psub \vfi, p \psub{0 / \var i, \facewkn{\var i}} \psub \vfi)$. We make a case distinction based on $\var i \psub \vfi$.
		\begin{description}
			\item[$\var i \psub \vfi = 0$] Then $\vfi = (0/\var i)\psi$ for some $\psi : \PSub V W$. Then we have to prove $\sheq^{T[\sigma]}\dsub \delta (p \psub{0/\var i} \psub \psi, p \psub{0/\var i} \psub \psi)$ which holds by reflexivity.
			
			\item[$\var i \psub \vfi = 1$] Then $\vfi = (1/\var i)\psi$ for some $\psi : \PSub V W$. Then we have to prove $\sheq^{T[\sigma]}\dsub \delta (p \psub{1/\var i} \psub \psi, p \psub{0/\var i} \psub \psi)$.
			Without loss of generality, we may assume that $\var i \not\in V$.
			Then we have a path $(V, \ctxpath{\var i}) \Dsez p \psub{\psi, \var i / \var i} : T\dsub{\gamma (\facewkn{\var i})(\psi, \var i/ \var i)}$. Now we have
			\begin{equation}
				\gamma (\facewkn{\var i})(\psi, \var i/ \var i) = \gamma (\facewkn{\var i}) (1/\var i) \psi (\facewkn{\var i}) = \sigma \delta (\facewkn{\var i}).
			\end{equation}
			Hence, we have $(V, \ctxpath{\var i}) \Dsez p \psub{\psi, \var i / \var i} : T[\sigma]\dsub{\delta(\facewkn{\var i})}$. Applying the definition of $\sheq^{T[\sigma]}$, we have $\sheq^{T[\sigma]} \dsub{\delta(\facewkn{\var i})} (p \psub{\psi, \var i / \var i}, p \psub{\psi, 0 / \var i, \facewkn{\var i}})$. Subsequently restricting by $(1/\var i) : \PSub{V}{(V, \ctxpath{\var i})}$ yields the desired result.
			
			\item[$\var i \psub \vfi \in V$] Without loss of generality, we may assume that $(W, \ctxpath{\var i})$ and $V$ are disjoint. Write $\var k = \var i \psub \vfi$ (and note that $\var k$ may be either a bridge or a path variable). Then $\vfi$ factors as $(\psi, \var i / \var i)(\var k / \var i) = (\var k / \var i)(\psi, \var k / \var k)$ for some $\psi : \PSub V W$. We have to prove $\sheq^{T[\sigma]} \dsub \delta(p \psub{\psi, \var i / \var i} \psub{\var k / \var i}, p \psub{\psi, \var 0 / \var i, \facewkn{\var i}} \psub{\var k/ \var i})$. This follows by restricting $\sheq^{T[\sigma]} \dsub{\delta(\facewkn{\var i})} (p \psub{\psi, \var i / \var i}, p \psub{\psi, 0 / \var i, \facewkn{\var i}})$, derived above, by $(\var k / \var i)$. \qedhere
		\end{description}
	\end{itemize}
\end{proof}
\begin{lemma}
	For any type $\Gamma \sez T \type$, we have $\sharp \sheq^T \subseteq \sheq^{\sharp T}$ and $\coshp \sheq^T = \sheq^{\coshp T}$.
\end{lemma}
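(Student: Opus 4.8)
The plan is to prove the three inclusions $\sharp\sheq^T\subseteq\sheq^{\sharp T}$, $\coshp\sheq^T\subseteq\sheq^{\coshp T}$ and $\sheq^{\coshp T}\subseteq\coshp\sheq^T$ separately, the last one being essentially formal and the first being the one that requires care.

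\textbf{The inclusions $\mathsf M\sheq^T\subseteq\sheq^{\mathsf M T}$ for $\mathsf M\in\accol{\sharp,\coshp}$.} Recall that $\sharp=\fpsh\flat$ and $\coshp=\fpsh\sharp$ for the base functors $\flat,\sharp:\bpcubecat\to\bpcubecat$. By the lemma that applying a lifted functor to an equivalence relation has a right adjoint $\forall_{(\loch)}$, the inclusion $\sharp\sheq^T\subseteq\sheq^{\sharp T}$ is equivalent to $\sheq^T\subseteq\forall_\flat\sheq^{\sharp T}$, and $\coshp\sheq^T\subseteq\sheq^{\coshp T}$ to $\sheq^T\subseteq\forall_\sharp\sheq^{\coshp T}$. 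Since $\sheq^T$ is by definition the least equivalence relation containing the pairs $(p,p\psub{0/\var i,\facewkn{\var i}})$ for all paths $(W,\ctxpath{\var i})\Dsez p:T\dsub{\gamma(\facewkn{\var i})}$, it then suffices to check that $\forall_\flat\sheq^{\sharp T}$ (resp.\ $\forall_\sharp\sheq^{\coshp T}$) contains each such generating pair. Unfolding $\forall_{(\loch)}$, this is a statement quantified over all $V$ and all face maps $\vfi:\PSub{\flat V}{(W,\ctxpath{\var i})}$ (resp.\ $\PSub{\sharp V}{(W,\ctxpath{\var i})}$), which I would dispatch by case distinction on $\var i\psub\vfi\in\accol{0,1}\discup(\text{the dimension variables of }\flat V\text{, resp.\ of }\sharp V)$, just as in the case analyses already used for dependent products and for $\sheq^T[\sigma]=\sheq^{T[\sigma]}$.

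\textbf{What makes the cases work, and the main obstacle.} The mechanism is that a path of $T$ can be weakened to a bridge of $T$ along $(\var k^\IB/\var i^\IP)$ without leaving a degenerate base, and that --- since $\flat$ recolours path dimensions to bridge dimensions --- path dimensions of $\sharp T=\fpsh\flat T$ correspond to bridge dimensions of $T$. Hence such a weakened bridge is, after reindexing through $\fpshadj\flat$, the restriction of an honest path of $\sharp T$ lying over a degenerate base, which $\sheq^{\sharp T}$ therefore contracts; restricting that contraction to the appropriate face and degeneracy then gives the pair we need. The case $\var i\psub\vfi=0$ is trivial, and the cases $\var i\psub\vfi=1$ and $\var i\psub\vfi$ a path variable of $V$ are immediate instances of this mechanism. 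I expect the genuinely delicate case to be when $\var i\psub\vfi$ is a \emph{bridge} variable of $V$: then the element of $\sharp T$ to be contracted sits over a bridge dimension, along which $\sheq^{\sharp T}$ does not contract directly, so one must exhibit it as the restriction, along a path-to-bridge weakening, of a contraction happening in a path dimension --- i.e.\ closure of $\sheq^{\sharp T}$ under restriction is used essentially. The $\coshp$-version is strictly easier: $\sharp V$ carries only path dimensions and $\sharp$ sends path dimensions to path dimensions, so the path dimensions of $\coshp T=\fpsh\sharp T$ correspond directly to path dimensions of $T$ and no bridge bookkeeping arises.

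\textbf{The inclusion $\sheq^{\coshp T}\subseteq\coshp\sheq^T$.} Here I would not use an adjunction. Unwinding the definitions of discreteness and of the shape equivalence relation, exactly as in the verification that $T/\sheq^T$ is discrete, shows that for any type $X$ the relation $\sheq^X$ is the \emph{smallest} equivalence relation $E$ such that $X/E$ is discrete. Now $\coshp=\fpsh\sharp$ is a lifted functor and therefore commutes with quotients of types, so $(\coshp T)/(\coshp\sheq^T)=\coshp(T/\sheq^T)$; by \cref{thm:coshp-preserves-discreteness} and the discreteness of $T/\sheq^T$, this quotient is discrete, whence $\sheq^{\coshp T}\subseteq\coshp\sheq^T$. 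Together with the second inclusion above, this yields $\coshp\sheq^T=\sheq^{\coshp T}$, and the single hardest point of the whole argument is the bridge-variable case in the verification of $\sharp\sheq^T\subseteq\sheq^{\sharp T}$.
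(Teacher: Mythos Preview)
Your proposal is correct. The $\subseteq$ inclusions follow the paper's approach exactly: pass to the right adjoint $\forall_\flat$ (resp.\ $\forall_\sharp$), verify the defining generators of $\sheq^T$, and case-split on $\var i\psub\vfi$. One remark: your sub-distinction between $\var i\psub\vfi$ being a bridge versus a path variable of $V$ is finer than needed. The paper treats the entire variable case as ``analogous'' to the case $\var i\psub\vfi=1$, because the mechanism is uniform: introduce a \emph{fresh} path dimension $\var i$ in $V$, view $(\flat V,\ctxbrid{\var i})=\flat(V,\ctxpath{\var i})$ so that $\fpshadj\flat(p\psub{\psi,\var i/\var i})$ is a genuine path of $\sharp T$, apply the generator of $\sheq^{\sharp T}$ there, and then restrict along $(\var k/\var i):V\to(V,\ctxpath{\var i})$. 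That restriction is valid regardless of whether $\var k$ is a bridge or path variable of $V$, so the ``delicate'' bridge sub-case you anticipate dissolves into the same computation.

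For the inclusion $\sheq^{\coshp T}\subseteq\coshp\sheq^T$ you take a genuinely different route from the paper. The paper argues directly: a generating pair of $\sheq^{\coshp T}$ comes from a path $(W,\ctxpath{\var i})\Dsez\fpshadj\sharp(p)$ in $\coshp T$ over a degenerate base, which unwraps to a defining term $p$ of $T$ at $(\sharp W,\ctxpath{\var i})$; since $\sharp$ preserves the path dimension $\var i$, this $p$ is itself a path of $T$ over a degenerate base, hence a generator of $\sheq^T$, and wrapping back gives the claim. Your argument instead invokes the minimality characterisation of $\sheq^X$ together with \cref{thm:coshp-preserves-discreteness}: since $\coshp$ is a lifted functor it commutes with quotients, so $(\coshp T)/(\coshp\sheq^T)=\coshp(T/\sheq^T)$ is discrete, whence $\sheq^{\coshp T}\subseteq\coshp\sheq^T$ by minimality. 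This is cleaner and more conceptual; it also makes transparent \emph{why} the equality fails for $\sharp$ (namely, $\sharp$ does not preserve discreteness), whereas the paper's direct argument shows this by noting that the unwrapped $p$ would only be a bridge.
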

This is in line with the intuition that $\sharp T$ has more paths than $T$, whereas $\coshp T$ has the same path relation as $T$.
\begin{proof}
	We first prove $\sharp \sheq^T \subseteq \sheq^{\sharp T}$.
	\begin{itemize}
		\item[$\subseteq$] We prove the equivalent statement that $\sheq^T \subseteq \forall_\flat \sheq^{\sharp T}$. It suffices to show that the right hand side satisfies the defining property of $\sheq^T$. Pick a path $(W, \ctxpath{\var i}) \Dsez p : T \dsub{\gamma (\facewkn{\var i})}$. In order to show that $\forall_\flat \sheq^{\sharp T} \dsub{\gamma (\facewkn{\var i})} (p, p \psub{0 / \var i, \facewkn{\var i}})$, we need to show for every $\vfi : \PSub{\flat V}{(W, \ctxpath{\var i})}$ that $\sheq^{\sharp T}\dsub{\fpshadj \flat (\gamma (\facewkn{\var i}) \vfi)}(\fpshadj \flat(p \psub \vfi), \fpshadj \flat(p \psub{0 / \var i, \facewkn{\var i}} \psub \vfi))$. We make a case distinction based on $\var i \psub \vfi$.
		\begin{description}
			\item[$\var i \psub \vfi = 0$] Then $\vfi = (0/\var i)\psi$ for some $\psi : \PSub {\flat V} W$. Then we have to prove $\sheq^{\sharp T}\dsub{\fpshadj \flat (\gamma \psi)}(\fpshadj \flat(p \psub{0/\var i} \psub \psi), \fpshadj \flat(p \psub{0 / \var i} \psub \psi))$ which holds by reflexivity.
			
			\item[$\var i \psub \vfi = 1$] Then $\vfi = (1/\var i)\psi$ for some $\psi : \PSub {\flat V} W$. Then we have to prove $\sheq^{\sharp T}\dsub{\fpshadj \flat (\gamma \psi)}(\fpshadj \flat(p \psub{1/\var i} \psub \psi), \fpshadj \flat(p \psub{0 / \var i} \psub \psi))$.
			Without loss of generality, we may assume that $\var i \not\in V$.
			Then we have a bridge $(\flat V, \ctxbrid{\var i}) \Dsez p \psub{\psi, \var i / \var i} : T\dsub{\gamma \psi}$ and hence a path $(V, \ctxpath{\var i}) \Dsez \fpshadj \flat(p \psub{\psi, \var i / \var i}) : (\sharp T) \dsub{\fpshadj \flat(\gamma \psi)}$. Applying the definition of $\sheq^{\sharp T}$, we have $\sheq^{\sharp T} \dsub{\fpshadj \flat(\gamma \psi)} (\fpshadj \flat(p \psub{\psi, \var i / \var i}), \fpshadj \flat (p \psub{\psi, 0 / \var i, \facewkn{\var i}}))$. Subsequently restricting by $(1/\var i) : \PSub{V}{(V, \ctxpath{\var i})}$ yields the desired result.
			
			\item[$\var i \psub \vfi \in V$] Analogous.
		\end{description}
	\end{itemize}
	We now prove $\coshp \sheq^T = \sheq^{\coshp T}$ by proving both inclusions.
	\begin{itemize}
		\item[$\supseteq$] It suffices to show that $\coshp \sheq^T$ satisfies the defining property of $\sheq^{\coshp T}$. Pick a path $(W, \ctxpath{\var i}) \Dsez \fpshadj \sharp(p) : \coshp T \dsub{\fpshadj \sharp(\gamma) \circ (\facewkn{\var i})}$. We have to show that $\coshp \sheq^T \dsub{\fpshadj \sharp(\gamma) \circ (\facewkn{\var i})} (\fpshadj \sharp(p), \fpshadj \sharp(p) \psub{0/\var i, \facewkn{\var i}})$, i.e. $\sheq^T \dsub{\gamma (\facewkn{\var i})} (p, p \psub{0/\var i, \facewkn{\var i}})$. Note that $p$ has type $(\sharp W, \ctxpath{\var i}) \Dsez p : T \dsub{\gamma (\facewkn{\var i})}$, i.e.\ it is a path. So this follows immediately from the definition of $\sheq^T$. (We could not prove the inclusion $\sharp \sheq^T \supseteq \sheq^{\sharp T}$ because we would have to apply $\flat$ to the primitive context, finding that $p$ is only a bridge.)
		\item[$\subseteq$] The proof for $\sharp$ can be copied almost verbatim. \qedhere
	\end{itemize}
\end{proof}

\subsubsection{The shape of a type}
\begin{definition}
	Given a type $\Gamma \sez T \type$, we define the discrete type $\Gamma \sez \quotshp T \dtype$ as $\quotshp T = T / \sheq^T$.
\end{definition}
This definition commutes with substitution, as $(\quotshp T)[\sigma] = (T/\sheq^T)[\sigma] = T[\sigma]/\sheq^T[\sigma] = T[\sigma]/\sheq^{T[\sigma]} = \quotshp(T[\sigma])$.
\begin{proposition}\label{thm:hatinquotshp}
	Given $\Gamma \sez T \type$, we have
	\begin{equation}
		\begin{array}{l l l}
			\Gamma \sez \hatinquotshp : T \to \quotshp T, \\
			\sharp \Gamma \sez \sharp \hatinquotshp : \sharp T \to \sharp \quotshp T, &
			\qquad &
			\sharp \Gamma \sez (\hatinquotshp \sharp \hatinquotshp\inv) : \sharp \quotshp T \to \quotshp \sharp T, \\
			\coshp \Gamma \sez \coshp \hatinquotshp : \coshp T \to \coshp \quotshp T, &
			\qquad &
			\coshp \Gamma \sez \coshp \quotshp T = \quotshp \coshp T \type.
		\end{array}
	\end{equation}
	naturally in $\Gamma$.
	We have commutative diagrams
	\begin{equation}
		\xymatrix{
			\sharp T \ar[d]_{\sharp \hatinquotshp} \ar[r]^{\hatinquotshp}
			& {\quotshp \sharp T} \\
			{\sharp \quotshp T} \ar[ru]_{(\hatinquotshp \sharp \hatinquotshp\inv)}
		} \qquad
		\xymatrix{
			\coshp T \ar[d]_{\coshp \hatinquotshp} \ar[r]^{\hatinquotshp}
			& {\quotshp \coshp T} \\
			{\coshp \quotshp T} \ar@{=}[ru]
		}
	\end{equation}
	If $T$ is discrete, then $\hatinquotshp$, $\sharp \hatinquotshp$ and $\coshp \hatinquotshp$ are also invertible.
\end{proposition}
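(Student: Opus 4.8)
The plan is to obtain every clause from the two preceding lemmas on the shape equivalence relation --- that $\sheq^T$ commutes with substitution, and that $\sharp \sheq^T \subseteq \sheq^{\sharp T}$ while $\coshp \sheq^T = \sheq^{\coshp T}$ --- together with one elementary observation: a lifted morphism of CwFs $\fpsh K$ (and both $\sharp$ and $\coshp$ are of this form) commutes with quotients. Concretely, for $\Gamma \sez E \eqrel T$ we have $\fpsh K(T/E) = (\fpsh K T)/(\fpsh K E)$ on the nose modulo the ignorable labels, because $\fpsh K$ merely reshuffles defining terms and $(T/E)\dsub\gamma = T\dsub\gamma/E\dsub\gamma$; moreover $\fpsh K$ carries the quotient map $T \to T/E$ --- viewed as the substitution $\Gamma.T \to \Gamma.(T/E)$ over $\Gamma$ sending $(\gamma, a)$ to $(\gamma, \overline a)$ --- to the quotient map $\fpsh K T \to (\fpsh K T)/(\fpsh K E)$.

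First I would take $\hatinquotshp : T \to \quotshp T$ to be the quotient map, i.e.\ the term with $\hatinquotshp\dsub\gamma \cdot a = \overline a$ for every $a : T\dsub\gamma$; well-formedness and naturality in $W$ are immediate since restriction in $\quotshp T = T/\sheq^T$ is $\overline x \psub\vfi = \overline{x\psub\vfi}$, and naturality in $\Gamma$ follows from $(\quotshp T)[\sigma] = \quotshp(T[\sigma])$. Applying the CwF morphisms $\sharp$ and $\coshp$ to the substitution representing $\hatinquotshp$ then yields terms $\sharp\hatinquotshp : \sharp T \to \sharp\quotshp T$ over $\sharp\Gamma$ and $\coshp\hatinquotshp : \coshp T \to \coshp\quotshp T$ over $\coshp\Gamma$; by the observation, $\sharp\quotshp T = (\sharp T)/(\sharp\sheq^T)$ and $\coshp\quotshp T = (\coshp T)/(\coshp\sheq^T)$, and $\sharp\hatinquotshp$, $\coshp\hatinquotshp$ are the associated quotient maps.

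Next come the comparisons. Since $\coshp\sheq^T = \sheq^{\coshp T}$, the quotient $(\coshp T)/(\coshp\sheq^T)$ is literally $(\coshp T)/(\sheq^{\coshp T}) = \quotshp(\coshp T)$, which is the equality $\coshp\quotshp T = \quotshp\coshp T$ (both sides discrete, the left-hand one by \cref{thm:coshp-preserves-discreteness}), and under it $\coshp\hatinquotshp$ and $\hatinquotshp : \coshp T \to \quotshp\coshp T$ are the same quotient map, so the $\coshp$-triangle commutes trivially. For $\sharp$, the inclusion $\sharp\sheq^T \subseteq \sheq^{\sharp T}$ says precisely that the quotient map $\hatinquotshp : \sharp T \to (\sharp T)/(\sheq^{\sharp T}) = \quotshp\sharp T$ is constant on the fibres of the coarser quotient map $\sharp\hatinquotshp : \sharp T \to (\sharp T)/(\sharp\sheq^T) = \sharp\quotshp T$, and hence factors uniquely through it; I would \emph{define} $(\hatinquotshp\sharp\hatinquotshp\inv)$ to be this factorization, so that the $\sharp$-triangle commutes by construction. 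Naturality in $\Gamma$ of everything is inherited from that of $\hatinquotshp$ together with the two lemmas.

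Finally, if $T$ is discrete then any path $(W, \ctxpath{\var i})\Dsez p : T\dsub{\gamma(\facewkn{\var i})}$ lies over the degenerate base $\gamma(\facewkn{\var i})$, hence is itself degenerate, hence satisfies $p = p\psub{0/\var i, \facewkn{\var i}}$; so every generating pair of $\sheq^T$ is diagonal, $\sheq^T$ is the equality relation, every class is a singleton, and $\hatinquotshp$ is a componentwise bijection and thus invertible. Since $\sharp$ and $\coshp$ are functors, they preserve this isomorphism, so $\sharp\hatinquotshp$ and $\coshp\hatinquotshp$ are invertible too. I expect no real obstacle; the one point that needs care is the bookkeeping that makes ``$\sharp$ (or $\coshp$) applied to $\hatinquotshp$'' meaningful at all --- a CwF morphism need not preserve $\Pi$-types, so one must pass through the substitution $\Gamma.T \to \Gamma.\quotshp T$ --- and the verification that this passage agrees with the description of $\sharp\hatinquotshp$ and $\coshp\hatinquotshp$ as quotient maps that the rest of the argument relies on.
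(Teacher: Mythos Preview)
Your proposal is correct and follows essentially the same route as the paper. You define $\hatinquotshp$ as the quotient map, obtain $\sharp\hatinquotshp$ and $\coshp\hatinquotshp$ by applying the CwF morphisms (the paper writes this as $\lambda(\ftrtm{\sharp}{(\ap\,\hatinquotshp)})$, which is exactly your ``pass through the substitution'' maneuver, and then verifies explicitly that the result acts as $t \mapsto \overline t$), use $\fpsh K(T/E) = (\fpsh K T)/(\fpsh K E)$ together with the lemma on $\sharp\sheq^T$ and $\coshp\sheq^T$ to get the comparison maps and the equality $\coshp\quotshp T = \quotshp\coshp T$, and conclude invertibility when $T$ is discrete from $\sheq^T$ collapsing to equality. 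The only presentational difference is in that last step: you invoke functoriality of $\sharp$ and $\coshp$ on the underlying substitution to preserve the isomorphism, whereas the paper argues more directly that if $\sheq^T$ is the equality relation then so are $\sharp\sheq^T$ and $\coshp\sheq^T$, making the corresponding quotient maps bijections --- both arguments are sound.
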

\begin{proof}
	Recall from \cref{sec:psh-pi-types} that a function $\Gamma \sez f : \Pi A B$ is fully determined if we know $W \Dsez f \dsub \gamma \cdot a : B \dsub{\gamma, a}$ for every $W$, $\gamma : \DSub W \Gamma$ and $W \Dsez a : A \dsub \gamma$.
	
	We set $\hatinquotshp \dsub \gamma \cdot t := \overline t$. This is well-defined because $\hatinquotshp \dsub{\gamma \vfi} \cdot (t \psub \vfi) = \overline{t \psub \vfi} = \overline t \psub \vfi = (\hatinquotshp \dsub \gamma \cdot t) \psub \vfi$.
	
	We define $\sharp \hatinquotshp := \lambda(\ftrtm{\sharp}{(\ap\,\hatinquotshp)})$ and $\coshp \hatinquotshp := \lambda(\ftrtm{\coshp}{(\ap\,\hatinquotshp)})$. Then we have (twice using the fact that labels can be ignored)
	\begin{align*}
		(\sharp \hatinquotshp) \dsub{\fpshadj \flat(\gamma)} \cdot \fpshadj \flat(t)
		&= \ftrtm{\sharp}{(\ap\,\hatinquotshp)} \dsub{\fpshadj \flat(\gamma), \fpshadj \flat(t)}
		= \ftrtm{\sharp}{(\ap\,\hatinquotshp)} \dsub{\fpshadj \flat(\gamma, t)}
		= \fpshadj \flat(\ap\,\hatinquotshp \dsub{\gamma, t}) \\
		&= \fpshadj \flat(\hatinquotshp \dsub \gamma \cdot t)
		= \fpshadj \flat(\overline t) = \overline{\fpshadj \flat(t)},
	\end{align*}
	i.e. $(\sharp \hatinquotshp) \dsub \gamma \cdot t = \overline t$. Similarly, we find $(\coshp \hatinquotshp) \dsub \gamma \cdot t = \overline t$.
	
	Note that $\sharp \quotshp T = \sharp(T/\sheq^T) = \sharp T / \sharp \sheq^T$ and $\quotshp \sharp T = \sharp T / \sheq^{\sharp T}$. Since $\sharp \sheq^T \subseteq \sheq^{\sharp T}$, we can define $(\hatinquotshp \sharp \hatinquotshp\inv) \dsub \gamma \cdot \overline t := \overline t$. Then the first commuting diagram is clear.
	
	Also note that $\coshp \quotshp T = \coshp(T/\sheq^T) = \coshp T/\coshp \sheq^T = \coshp T/\sheq^{\coshp T} = \quotshp \coshp T$. The second commuting diagram is then also clear.
	
	Now suppose that $T$ is discrete. We show that $\tmshp\loch$ is an isomorphism by showing that every equivalence class of $\sheq^T$ is a singleton. This is equivalent to saying that $\sheq^T$ is the equality relation. Clearly, the equality relation is the weakest of all equivalence relations, so it suffices to show that the equality relation satisfies the defining property of $\sheq^T$. But that is precisely the statement that $T$ is discrete.
	
	Furthermore, if $\sheq^T$ is the equality relation, then so are $\sharp \sheq^T$ and $\coshp \sheq^T$. Hence, $\sharp \hatinquotshp$ and $\coshp \hatinquotshp$ will also be invertible.
\end{proof}
\begin{remark}
	Substitutions can be applied to the functions $\sharp \hatinquotshp$ and $\coshp \hatinquotshp$ from \cref{thm:hatinquotshp}, moving them to non-$\sharp$ or non-$\coshp$ contexts. We will omit those substitutions, writing e.g. $\Delta \sez \sharp \hatinquotshp : (\sharp T)[\sigma] \to (\sharp \quotshp T)[\sigma]$.
\end{remark}
\begin{lemma}\label{thm:elim-quotshp}
	For discrete types $T$ living in the appropriate context, we have \emph{invertible} rules
	\begin{equation}
		\binference{
			\Gamma, \var x : \quotshp S \sez t : T
		}{
			\Gamma, \var x : S \sez
			t[\wknvar x, \hatinquotshp(\var x)/\var x]
			: T[\wknvar x, \hatinquotshp(\var x)/\var x]
		}{}.
	\end{equation}
	\begin{equation}
		\binference{
			\Gamma, \ftrtm \sharp {\var x} : (\sharp \quotshp S) [\sigma] \sez t : T
		}{
			\Gamma, \ftrtm \sharp {\var x} : (\sharp S) [\sigma] \sez
			t[\wkn{\ftrtm \sharp {\var x}}, (\sharp \hatinquotshp)(\ftrtm \sharp {\var x})/\ftrtm \sharp {\var x}]
			: T[\wkn{\ftrtm \sharp {\var x}}, (\sharp \hatinquotshp)(\ftrtm \sharp {\var x})/\ftrtm \sharp {\var x}]
		}{}.
	\end{equation}
	\begin{equation}
		\binference{
			\Gamma, \ftrtm \coshp {\var x} : (\coshp \quotshp S) [\sigma] \sez t : T
		}{
			\Gamma, \ftrtm \coshp {\var x} : (\coshp S) [\sigma] \sez
			t[\wkn{\ftrtm \coshp {\var x}}, (\coshp \hatinquotshp)(\ftrtm \coshp {\var x})/\ftrtm \sharp {\var x}]
			: T[\wkn{\ftrtm \coshp {\var x}}, (\coshp \hatinquotshp)(\ftrtm \coshp {\var x})/\ftrtm \coshp {\var x}]
		}{}.
	\end{equation}
	The downward direction is each time a straightforward instance of substitution and hence natural in $\Gamma$. The inverse is then automatically also natural in $\Gamma$.
\end{lemma}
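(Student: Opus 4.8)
All three rules have the same shape, so I would treat the first one in detail and then indicate the modifications. The downward direction of the first rule is substitution along the context-level map
$\sigma_S := (\wknvar x, \hatinquotshp(\var x)/\var x) : (\Gamma, \var x : S) \to (\Gamma, \var x : \quotshp S)$,
which on a defining substitution acts by $(\gamma, a) \mapsto (\gamma, \overline a)$. Since $\quotshp S = S/\sheq^S$ is a quotient presheaf, $\sigma_S$ is levelwise surjective, hence an epimorphism in $\widehat{\bpcubecat}$; consequently the map $t \mapsto t[\sigma_S]$ is injective. So everything reduces to showing this map is surjective, i.e. that any term $\Gamma, \var x : S \sez s : T[\sigma_S]$ descends uniquely along $\sigma_S$ to a term $\Gamma, \var x : \quotshp S \sez t : T$.

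To build the descent, set $t\dsub{(\gamma,\overline a)} := s\dsub{(\gamma,a)}$, which is well-typed because $T[\sigma_S]\dsub{(\gamma,a)} = T\dsub{(\gamma,\overline a)}$. For well-definedness I would consider the relation $E\dsub\gamma(a,a') :\iff \sheq^S\dsub\gamma(a,a') \textand s\dsub{(\gamma,a)} = s\dsub{(\gamma,a')}$ (the first conjunct guarantees the equation typechecks). It is immediate from naturality of $s$ and from $\sheq^S$ being an equivalence relation on $S$ that $E$ is an equivalence relation on $S$ stable under restriction, and clearly $E \subseteq \sheq^S$; if $E$ also satisfies the generating property of $\sheq^S$, minimality forces $E = \sheq^S$, giving exactly the well-definedness we need. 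Once $t$ is well-defined, naturality in $W$ and $\Gamma$ and the equation $t[\sigma_S] = s$ are routine, uniqueness follows from $\sigma_S$ being epic, and the two rules are then mutually inverse; naturality of the inverse in $\Gamma$ is forced by uniqueness.

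The crux — and the step I expect to be the main obstacle — is the generator check, and this is precisely where discreteness of $T$ is used. Fix $(W,\ctxpath{\var i}) \Dsez p : S\dsub{\delta(\facewkn{\var i})}$ with $\delta : \DSub W \Gamma$; I must show $s\dsub{(\delta(\facewkn{\var i}),p)} = s\dsub{(\delta(\facewkn{\var i}),p\psub{0/\var i,\facewkn{\var i}})}$. The context cell $(\delta(\facewkn{\var i}),\overline p) : \DSub{(W,\ctxpath{\var i})}{\Gamma.\quotshp S}$ is degenerate in $\var i$: its first component is, and its second component $\overline p$ is too since $\quotshp S$ is discrete and $\overline p$ sits over the $\var i$-degenerate cell $\delta(\facewkn{\var i})$. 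Hence $h := s\dsub{(\delta(\facewkn{\var i}),p)} : T\dsub{(\delta(\facewkn{\var i}),\overline p)}$ lies over a $\var i$-degenerate cell, so by discreteness of $T$ it is itself $\var i$-degenerate, $h = h\psub{0/\var i,\facewkn{\var i}}$. Combining this with naturality of the term $s$ gives $h\psub{0/\var i,\facewkn{\var i}} = s\dsub{(\delta(\facewkn{\var i}),p)\psub{0/\var i,\facewkn{\var i}}} = s\dsub{(\delta(\facewkn{\var i}),p\psub{0/\var i,\facewkn{\var i}})}$, which is exactly the required identity.

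For the remaining two rules I would argue analogously, carrying the inert substitution $[\sigma]$ along throughout. The $\coshp$-rule is literally the first one applied to the type $\coshp S$, using $\coshp \quotshp S = \quotshp \coshp S$ from \cref{thm:hatinquotshp} to identify $\coshp\hatinquotshp$ with the corresponding $\hatinquotshp$. For the $\sharp$-rule, the relevant context map is induced by $\sharp\hatinquotshp$, which presents $\sharp\quotshp S$ as the quotient $\sharp S/\sharp\sheq^S$; here I would run the same $E$-argument, but instead of checking the generators of $\sharp\sheq^S$ directly I would invoke $\sharp\sheq^S \subseteq \sheq^{\sharp S}$ (the earlier lemma comparing $\sheq$ with $\sharp$) together with discreteness of $T$, so that it again suffices to verify that $s$ agrees on the generating path-degeneracy pairs of $\sheq^{\sharp S}$, exactly as in the computation above. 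Everything outside these generator checks is formal.
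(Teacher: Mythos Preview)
Your treatment of the first and third rules is essentially the paper's argument: it too reduces to checking the generators of $\sheq^S$ and uses discreteness of $T$ over the degenerate cell $(\gamma(\facewkn{\var i}),\overline p)$; your explicit packaging via the auxiliary relation $E$ is a clean way to say what the paper leaves implicit. The reduction of Rule~3 to Rule~1 via $\coshp\quotshp S = \quotshp\coshp S$ is exactly what the paper does.

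The gap is in your sketch for Rule~2. You propose to show that $s$ respects the generators of $\sheq^{(\sharp S)[\sigma]}$ and then conclude via $(\sharp\sheq^S)[\sigma]\subseteq\sheq^{(\sharp S)[\sigma]}$. But for a generator $(p,p\psub{0/\var i,\facewkn{\var i}})$ of $\sheq^{(\sharp S)[\sigma]}$ over $\gamma(\facewkn{\var i})$, the equation $s\dsub{\gamma(\facewkn{\var i}),p}=s\dsub{\gamma(\facewkn{\var i}),p\psub{0/\var i,\facewkn{\var i}}}$ only typechecks if the images $\overline p$ and $\overline p\psub{0/\var i,\facewkn{\var i}}$ agree in $(\sharp\quotshp S)[\sigma]$, i.e.\ if $\overline p$ is $\var i$-degenerate there. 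In Rule~1 this held because $\quotshp S$ is discrete; here it fails, since $\sharp$ does not preserve discreteness (a nontrivial bridge in the discrete type $\quotshp S$ becomes a nontrivial path in $\sharp\quotshp S$). So neither the comparison of the two sides nor the ``$T$ is discrete over a degenerate cell'' step is available, and the generator check cannot be carried out ``exactly as above''. The inclusion $\sharp\sheq^S\subseteq\sheq^{\sharp S}$ simply points the wrong way for this purpose.

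The paper therefore argues differently: it sets $U\dsub\gamma(r,s)$ to mean $u\dsub{\gamma,r}=u\dsub{\gamma,s}$ and uses the right adjoints to substitution and to $\sharp$ on equivalence relations to rewrite the goal as $\sheq^S\subseteq\forall_\flat\forall_\sigma U$, so that one checks generators of $\sheq^S$ rather than of $\sheq^{\sharp S}$. Unwinding $\forall_\flat$ and $\forall_\sigma$ on a generator $(p,q)$ with $q=p\psub{0/\var i,\facewkn{\var i}}$ produces, after decomposing the relevant face maps, a pair $(p',q')$ in $(\sharp S)[\sigma]$ which simultaneously satisfies $\overline{p'}=\overline{q'}$ in $(\sharp\quotshp S)[\sigma]$ (because $(p,q)$ was already $\sheq^S$-related \emph{before} applying $\sharp$) and $q'=p'\psub{0/\var i,\facewkn{\var i}}$. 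The first fact makes the equation well-typed and shows $\overline{p'}$ is $\var i$-degenerate; then discreteness of $T$ finishes as in Rule~1. Your shortcut cannot manufacture the first of these two facts, and that is precisely what makes Rule~2 harder than Rule~1.
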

\begin{proof}
	\begin{description}
		\item[Rule 1] To show that the first rule is invertible, we pick a term $\Gamma, \var x : S \sez u : T[\wknvar x, \hatinquotshp(\var x)/\var x]$ and show that it factors over $(\wknvar x, \hatinquotshp(\var x)/ \var x) : (\Gamma, \var x : S) \to (\Gamma, \var x : \quotshp S)$. So pick a path $(W, \ctxpath{\var i}) \Dsez s : S \dsub{\gamma(\facewkn{\var i})}$ that becomes degenerate in $\quotshp S$. We have to show that $u \dsub{\gamma (\facewkn{\var i}), s} = u \dsub{\gamma (\facewkn{\var i}), s \psub{0/\var i, \facewkn{\var i}}}$. Note that both live in $T \dsub{\gamma (\facewkn{\var i}), \hatinquotshp(s)}$.
	
		Now, the defining substitution $(\gamma (\facewkn{\var i}), \hatinquotshp(s))$ is degenerate in $\var i$ because this is obvious for the first component and then degeneracy of the second component follows from discreteness of $\quotshp S$. Hence, $u \dsub{\gamma (\facewkn{\var i}), s}$ is degenerate as a defining term of type $T$, meaning that
		\begin{equation}
			u \dsub{\gamma (\facewkn{\var i}), s} = u \dsub{\gamma (\facewkn{\var i}), s} \psub{0/\var i, \facewkn{\var i}} = u \dsub{\gamma (\facewkn{\var i}), s \psub{0/\var i, \facewkn{\var i}}}.
		\end{equation}
		\item[Rule 2] Note that $(\sharp \quotshp S)[\sigma] = (\sharp(S/\sheq^S))[\sigma] = (\sharp S/\sharp \sheq^S)[\sigma] = (\sharp S)[\sigma]/(\sharp \sheq^S)[\sigma]$.
		To show that the second rule is invertible, we pick a term
		$\Gamma, \ftrtm \sharp {\var x} : (\sharp S) [\sigma] \sez u : T[\wkn{\ftrtm \sharp {\var x}}, (\sharp \hatinquotshp)(\ftrtm \sharp {\var x})/\ftrtm \sharp {\var x}]$ and show that it factors over $(\wkn{\ftrtm \sharp {\var x}}, (\sharp \hatinquotshp)(\ftrtm \sharp {\var x})/\ftrtm \sharp {\var x}) : (\Gamma, \ftrtm \sharp {\var x} : (\sharp S) [\sigma]) \to (\Gamma, \ftrtm \sharp {\var x} : (\sharp \quotshp S) [\sigma])$. To that end, we need to show that whenever $(\sharp \sheq^{S}) [\sigma] \dsub \gamma(r, s)$, we also have $u \dsub{\gamma, r} = u \dsub{\gamma, s}$. Let us write $U \dsub \gamma(r, s)$ for $u \dsub{\gamma, r} = u \dsub{\gamma, s}$. This is easily seen to be an equivalence relation on $(\sharp S)[\sigma]$. So we need to prove $(\sharp \sheq^{S}) [\sigma] \subseteq U$, or equivalently $\sheq^S \subseteq \forall_\flat \forall_\sigma U$.
		
		Let $\Delta$ be the context of $S$, i.e. $\Delta \sez S \type$ and $\sigma : \Gamma \to \sharp \Delta$. It is sufficient to show that $\forall_\flat \forall_\sigma U$ satisfies the defining property of $\sheq^S$. So pick a path $(W, \ctxpath{\var i}) \Dsez p : S \dsub{\delta(\facewkn{\var i})}$. Write $q = p \psub{0/\var i, \facewkn{\var i}}$. We have to show $\forall_\flat \forall_\sigma U \dsub{\delta(\facewkn{\var i})}(p, q)$. So pick $\vfi : \PSub{\flat V}{(W, \ctxpath{\var i})}$; then we have to show $\forall_\sigma U \dsub{\fpshadj \flat(\delta (\facewkn{\var i}) \vfi)} (\fpshadj \flat(p \psub \vfi), \fpshadj \flat(q \psub \vfi))$.
		
		Without loss of generality, we may assume that $V$ and $(W, \ctxpath{\var i})$ are disjoint. Write $k = \var i \psub \vfi \in V \uplus \accol{0, 1}$. Then $\vfi$ factors as $(\psi, \var i^\IB / \var i^\IP)(k / \var i^\IB)$ for some $\psi : \PSub{\flat V}{W}$. Because $\forall_\sigma U$ respects restriction by $(k / \var i^\IP)$ and because $\flat (k / \var i^\IP) = (k / \var i^\IB)$, it is then sufficient to show that
		\begin{equation}
			\forall_\sigma U \dsub{\fpshadj \flat(\delta (\facewkn{\var i}) (\psi, \var i^\IB / \var i^\IP))} (\fpshadj \flat (p \psub{\psi, \var i^\IB / \var i^\IP}), \fpshadj \flat (q \psub{\psi, \var i^\IB / \var i^\IP}))
		\end{equation}
		which simplifies to
		\begin{equation}
			\forall_\sigma U\dsub{\fpshadj \flat(\delta \psi)(\facewkn{\var i^\IP})}(\fpshadj \flat(p \psub{\psi, \var i^\IB / \var i^\IP}), \fpshadj \flat(q \psub{\psi, \var i^\IB / \var i^\IP})).
		\end{equation}
		Write
		\begin{align*}
			\delta' &:= \fpshadj \flat(\delta \psi) : \DSub{V}{\sharp \Delta}, \\
			(V, \ctxpath{\var i}) \Dsez p' &:= \fpshadj \flat(p \psub{\psi, \var i^\IB / \var i^\IP}) : (\sharp S) \dsub{\delta' (\facewkn{\var i})}, \\
			(V, \ctxpath{\var i}) \Dsez q' &:= \fpshadj \flat(q \psub{\psi, \var i^\IB / \var i^\IP}) : (\sharp S) \dsub{\delta' (\facewkn{\var i})},
		\end{align*}
		which satisfies
		\begin{align}
			(V, \ctxpath{\var i}) \Dsez \overline{p'} &= \overline{q'} : (\sharp \quotshp S) \dsub{\delta' (\facewkn{\var i})}, \label{eq:pf-left-quotshp-1}\\
			(V, \ctxpath{\var i}) \Dsez q' &= p' \psub{0 / \var i, \facewkn{\var i}} : (\sharp \quotshp S) \dsub{\delta' (\facewkn{\var i})}. \label{eq:pf-left-quotshp-2}
		\end{align}
		Then we can further simplify to $\forall_\sigma U \dsub{\delta' (\facewkn{\var i})}(p', q')$.
		
		So pick $\chi : \PSub{Y}{(V, \ctxpath{\var i})}$ and $\gamma : \DSub Y \Gamma$ so that $\sigma \gamma = \delta' (\facewkn{\var i}) \chi$. We have to prove $U \dsub{\gamma}(p' \psub \chi, q' \psub \chi)$. Again, without loss of generality, we may assume that $Y$ and $(V, \ctxpath{\var i})$ are disjoint. Then again, $\chi$ factors as $(\omega, \var i / \var i)(j / \var i)$ for some $\omega : \PSub Y V$, where $j = \var i \psub \chi$. We claim that it then suffices to show that $U \dsub{\gamma (\facewkn{\var i})}(p' \psub{\omega, \var i / \var i}, q' \psub{\omega, \var i / \var i})$. First, note that this is well-typed, i.e.
		\begin{equation}
			(Y, \ctxpath{\var i}) \Dsez p' \psub{\omega, \var i / \var i}, q' \psub{\omega, \var i / \var i} : (\sharp S)[\sigma]\dsub{\gamma(\facewkn{\var i})}
		\end{equation}
		because $\delta'(\facewkn{\var i})(\omega, \var i/\var i) = \delta'(\facewkn{\var i})(\omega, \var i/\var i)(j / \var i)(\facewkn{\var i}) = \delta'(\facewkn{\var i})\chi(\facewkn{\var i}) = \gamma(\facewkn{\var i})$. Second, if we further restrict the anticipated result by $(j / \var i)$, then we do obtain $U \dsub \gamma(p' \psub \chi, q' \psub \chi)$.
		
		So it remains to prove that $U \dsub{\gamma (\facewkn{\var i})}(p' \psub{\omega, \var i / \var i}, q' \psub{\omega, \var i / \var i})$, i.e.
		\begin{equation}
			(Y, \ctxpath{\var i}) \Dsez u \dsub{\gamma (\facewkn{\var i}), p' \psub{\omega, \var i / \var i}} = u \dsub{\gamma (\facewkn{\var i}), q' \psub{\omega, \var i / \var i}} : T\dsub{\gamma(\facewkn{\var i}), \overline{p' \psub{\omega, \var i / \var i}}},
		\end{equation}
		which is well-typed by \cref{eq:pf-left-quotshp-1}. The combination of \cref{eq:pf-left-quotshp-1} and \cref{eq:pf-left-quotshp-2} tells us that $\overline{p' \psub{\omega, \var i / \var i}}$ is degenerate in $\var i$. Hence, by discreteness of $T$, we have
		\begin{equation}
			u \dsub{\gamma (\facewkn{\var i}), p' \psub{\omega, \var i / \var i}}
			= u \dsub{\gamma (\facewkn{\var i}), p' \psub{\omega, \var i / \var i}} \psub{0/\var i, \facewkn{\var i}}
			= u \dsub{\gamma (\facewkn{\var i}), q' \psub{\omega, \var i / \var i}}.
		\end{equation}
		
		\item[Rule 3] Since $(\coshp \quotshp S)[\sigma] = (\quotshp \coshp S)[\sigma] = \quotshp((\coshp S)[\sigma])$, and $\coshp \hatinquotshp = \hatinquotshp$, the third rule is a special case of the first rule. \qedhere
	\end{description}
\end{proof}

\section{Universes of discrete types}
In \cref{sec:uniNDD} we give a straightforward definition of a sequence of universes that classify discrete types. Unfortunately, these universes are themselves not discrete, so that they do not contain their lower-level counterparts. In \cref{sec:uniDD-discussion} we discuss the problem and define a hierarchy of discrete universes of discrete types.
As of this point, we will write $\uniPsh_\ell$ for the standard presheaf universe $\uni \ell$.

\subsection{Non-discrete universes of discrete types}\label{sec:uniNDD}
In any presheaf model, we have a hierarchy of universes $\uniPsh_{\ell}$ such that
\begin{equation}
	\inference{\Gamma \ctx}{\Gamma \sez \uniPsh_\ell \type_{\ell+1}}{}, \qquad
	\binference{\Gamma \sez A : \uniPsh_\ell}{\Gamma \sez \El\,A \type_\ell}{}.
\end{equation}
In this section, we will devise a sequence of universes $\uniNDD_\ell$ such that
\begin{equation}
	\inference{\Gamma \ctx}{\Gamma \sez \uniNDD \type_{\ell+1}}{}, \qquad
	\binference{\Gamma \sez A : \uniNDD_\ell}{\Gamma \sez \El\,A \dtype_\ell}{},
\end{equation}
that is: $\uniNDD_\ell$ classifies discrete types of level $\ell$, but it is itself non-discrete. In \cref{ch:discrete-universe-of-discrete-types}, we will devise a universe that is itself discrete, and that in an unusual way classifies all discrete types.
\begin{proposition}
	The CwF $\widehat{\bpcubecat}$ supports a universe for $\DTy_\ell$, the functor that maps a context $\Gamma$ to its set of discrete level $\ell$ types $\Gamma \sez T \dtype_\ell$.
\end{proposition}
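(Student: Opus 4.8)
The plan is to carve out $\uniNDD_\ell$ as the sub-presheaf of the standard presheaf universe $\uniPsh_\ell$ consisting exactly of the codes of discrete types, and to borrow everything else from \cref{thm:unipsh}. Concretely, for $\gamma : \DSub W \Gamma$ I would set $\uniNDD_\ell \dsub \gamma = \set{\dtycode T}{\yoneda W \sez T \dtype_\ell}$ with restriction $\dtycode T \psub \vfi = \dtycode{T[\vfi]}$. This is well-defined because discreteness of types is preserved under substitution (established when checking that $\bpdisc$ is a CwF), so $T[\vfi]$ is again a discrete level-$\ell$ type; the underlying functor $\DTy_\ell$ is well-defined for the same reason. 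Since $\uniNDD_\ell \dsub \gamma \subseteq \uniPsh_\ell \dsub \gamma \in \Set_{\ell+1}$, the universe is itself of level $\ell+1$ (it is, however, not discrete — this is discussed in \cref{sec:uniDD-discussion}), and naturality in $\Gamma$ is immediate because the definition mentions neither $\Gamma$ nor $\gamma$.

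Next I would obtain the coding and decoding operations by restricting those of $\uniPsh_\ell$. For $\Gamma \sez T \dtype_\ell$, set $\tycode T \dsub \gamma = \dtycode{T[\gamma]}$: since $T$ is discrete and discreteness is substitution-stable, $T[\gamma]$ is discrete, so $\tycode T$ is a genuine term of $\uniNDD_\ell$, and the identity $\tycode T \dsub \gamma \psub \vfi = \tycode T \dsub{\gamma \vfi}$ is exactly as in \cref{thm:unipsh}. For $\Gamma \sez A : \uniNDD_\ell$, define $\El\,A$ by the same formula $\El\,A \dsub \gamma = \dEl(A \dsub \gamma) \dsub \id$. Inverseness of $\El$ and $\tycode \loch$, and the fact that both commute with substitution, are inherited verbatim from \cref{thm:unipsh}, because $\uniNDD_\ell$, $\El$ and $\tycode \loch$ here are restrictions of the corresponding data for $\uniPsh_\ell$.

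The one genuine obligation is to show that $\El\,A$ is discrete whenever $\Gamma \sez A : \uniNDD_\ell$. Let $\var i$ be a path dimension, let $\gamma : \DSub{(W, \ctxpath{\var i})}{\Gamma}$ be degenerate in $\var i$, say $\gamma = \gamma' \circ (\facewkn{\var i})$ with $\gamma' : \DSub W \Gamma$, and let $(W, \ctxpath{\var i}) \Dsez t : \El\,A \dsub \gamma$. Because $A \dsub{\gamma'} = \dtycode{T'}$ for some discrete type $\yoneda W \sez T' \dtype_\ell$, the coherence $\dEl(A \dsub{\gamma'}) \dsub \vfi = \El\,A \dsub{\gamma' \vfi}$ from the proof of \cref{thm:unipsh} gives $\El\,A \dsub \gamma = \dEl(A \dsub{\gamma'}) \dsub{(\facewkn{\var i})} = T' \dsub{(\facewkn{\var i})}$, with restriction structure inherited from $T'$. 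Now $(\facewkn{\var i}) : \DSub{(W, \ctxpath{\var i})}{\yoneda W}$ is itself a defining substitution that is degenerate in $\var i$, since it factors over the face map $(\facewkn{\var i}) : \PSub{(W, \ctxpath{\var i})}{W}$. Hence discreteness of $T'$ forces $t$, viewed as a defining term of $T'$ over this degenerate point, to be degenerate in $\var i$, and transporting this back through the identification above shows $t$ is degenerate in $\var i$ as a defining term of $\El\,A$. Thus $\El\,A$ is discrete. I expect the main obstacle to be precisely this last argument: one must carefully track the identifications between the defining terms of $\El\,A$ over $\Gamma$ and the defining terms of the decoded type $T'$ over $\yoneda W$ (and the fact that their restriction operations agree), so that ``degenerate as a term of $T'$'' really transfers to ``degenerate as a term of $\El\,A$''.
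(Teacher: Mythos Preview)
Your proposal is correct and follows essentially the same approach as the paper: define $\uniNDD_\ell$ as the sub-presheaf of $\uniPsh_\ell$ consisting of codes of discrete types, reuse the encoding/decoding operations from \cref{thm:unipsh}, and verify the one new obligation that $\El\,A$ is discrete by unfolding $\El\,A \dsub{\gamma'(\facewkn{\var i})}$ to $\dEl(A\dsub{\gamma'})\dsub{(\facewkn{\var i})}$ and invoking discreteness of the decoded type over the degenerate defining substitution $(\facewkn{\var i})$. Your closing worry about tracking the identification between restriction in $\El\,A$ and restriction in $T'$ is exactly the care the paper also takes, and your argument handles it correctly.
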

\begin{proof}
	Given $\gamma : \DSub W \Gamma$, we define $\uniNDD_\ell \dsub \gamma := \set{\dtycode T}{\yoneda W \sez T \dtype_\ell}$. This makes $\uniNDD_\ell$ a dependent subpresheaf of $\uniPsh_\ell$. We use the same construction for encoding and decoding types (see \cref{thm:unipsh} on page \pageref{thm:unipsh}). The only thing we have to show is that a type $(\Gamma \sez T \type_\ell)$ is discrete if and only if its encoding $(\Gamma \sez \tycode T : \uniPsh_\ell)$ is a term of $\uniNDD_\ell$.
	\begin{itemize}
		\item[$\Rightarrow$] If $\Gamma \sez T \dtype_\ell$, then $\tycode T \dsub \gamma = \dtycode{T [\gamma]}$, and clearly $\yoneda W \sez T [\gamma] \dtype$ is discrete.
		
		\item[$\Leftarrow$] Assume $\Gamma \sez A : \uniNDD_\ell$. We show that $\Gamma \sez \El\,A \type$ is a discrete type, so pick a path $(W, \ctxpath{\var i}) \Dsez p : (\El\,A) \dsub{\gamma (\facewkn{\var i})}$. We need to show that $p = p \psub{0/\var i, \facewkn{\var i}}^{\El\,A}$. We have
		\begin{align*}
			p \psub{0/\var i, \facewkn{\var i}}^{\El\,A}
			&= p \psub{0/\var i, \facewkn{\var i}}^{\dEl(A \dsub{\gamma (\facewkn{\var i})})}
			= p \psub{0/\var i, \facewkn{\var i}}^{\dEl(A \dsub{\gamma}) [\facewkn{\var i}]},
		\end{align*}
		and so we need to prove $(W, \ctxpath{\var i}) \Dsez p = p \psub{0/\var i, \facewkn{\var i}} : \dEl(A \dsub{\gamma}) \dsub{\facewkn{\var i}}$. But $\dEl(A \dsub \gamma)$ is discrete by construction of $\uniNDD_\ell$ and $(\facewkn{\var i}) : \DSub{(W, \ctxpath{\var i})}{\yoneda W}$ is degenerate in $\var i$, so that this equality indeed holds. \qedhere
	\end{itemize}
\end{proof}

\subsection{Discrete universes of discrete types}\label{sec:uniDD-discussion}
Let us have a look at the structure of $\uniNDD$ (ignoring universe levels for a moment):
\begin{itemize}
	\item A \textbf{point} in $\uniNDD$ is a discrete type $\yoneda() \sez T \dtype$. Since $\yoneda()$ is the empty context, this effectively means that points in $\uniNDD$ are discrete closed types, as one would expect. Differently put, for every shape $W$, there is only one cube $\bullet : \DSub{W}{\yoneda()}$ and thus all $W$-shaped cubes $W \Dsez t : T \dsub{\bullet}$ have the same status; essentially $T$ has the structure of a non-dependent presheaf.
	\item A \textbf{path} in $\uniNDD$ is a discrete type $\yoneda(\ctxpath{\var i}) \sez T \dtype$. For every shape $W$, the presheaf $\yoneda(\ctxpath{\var i})$ contains fully degenerate $W$-cubes $(\facewkn{W}, 0/\var i), (\facewkn{W}, 1/\var i) : \DSub{W}{\yoneda(\ctxpath{\var i})}$. As these cubes are fully degenerate, all $W$-cubes of $T$ above them, must also be degenerate in all path dimensions (as $T$ is discrete). So $T$ contains two discrete, closed types $T[0/\var i]$ and $T[1/\var i]$.
	
	Moreover, for every shape $W \not\ni \var i$, we have a cube $(\facewkn W) : \DSub{(W, \ctxpath{\var i})}{\yoneda(\ctxpath{\var i})}$ that is degenerate in all dimensions but $\var i$. We can think of this as the constant cube on the path $\id : \DSub{(\ctxpath{\var i})}{\yoneda(\ctxpath{\var i})}$. Above it live heterogeneous higher paths (degenerate in all path dimensions but $\var i$) that connect a $W$-cube of $A$ with a $W$-cube of $B$. We get a similar setup of heterogeneous higher bridges from $(\facewkn W, \var i^\IB / \var i^\IP) : \DSub{(W, \ctxbrid{\var i})}{\yoneda(\ctxpath{\var i})}$. Finally, the face map $(\var i^\IB/\var i^\IP) : \PSub{(W, \ctxbrid{\var i})}{(W, \ctxpath{\var i})}$ allows us to find under every heterogeneous path, a heterogeneous bridge.
	
	Thus, bluntly put, a path from $A$ to $B$ in $\uniNDD$ consists of:
	\begin{itemize}
		\item A (discrete) notion of heterogeneous paths with source in $A$ and target in $B$,
		\item A (discrete) notion of heterogeneous bridges with source in $A$ and target in $B$,
		\item An operation that gives us a heterogeneous bridge under every heterogeneous path.
	\end{itemize}
	
	\item A \textbf{bridge} in $\uniNDD$ is a discrete type $\yoneda(\ctxbrid{\var i}) \sez T \type$. The presheaf $\yoneda(\ctxbrid{\var i})$ has everything that $\yoneda(\ctxpath{\var i})$ has, except for the interesting path. A similar analysis as above, shows that a bridge from $A$ to $B$ in $\uniNDD$ is quite simply a (discrete) notion of heterogeneous bridges from $A$ to $B$.
\end{itemize}
Now let us think a moment about what we want:
\begin{itemize}
	\item The \textbf{points} seem to be all right: we want them to be discrete closed types.
	\item A \textbf{path} in the universe should always be degenerate, if we want the universe to be a discrete closed type.
	\item In order to understand what a \textbf{bridge} should be, let us have a look at parametric functions. A function $f : \forall(X : \uni{}).\El\,X$ (which we know does not exist, but this choice of type keeps the example simple) is supposed to map related types $X$ and $Y$ to heterogeneously equal values $fX : \El\,X$ and $fY : \El\,Y$. Since bridges were invented as an abstraction of relations, and paths as some sort of pre-equality, we can reformulate this: The function $f$ should map bridges from $X$ to $Y$ to heterogeneous paths from $fX$ to $fY$. Well, then a bridge from $X$ to $Y$ will certainly have to provide a notion of heterogeneous paths between $\El\,X$ and $\El\,Y$!
	
	On the other hand, consider the (non-parametric) type $\Sigma(X : \uni{}).\El\,X$. What is a bridge between $(X, x)$ and $(Y, y)$ in this type? We should expect it to be a bridge from $X$ to $Y$ and a heterogeneous bridge from $x$ to $y$. This shows that bridges in the universe should also provide a notion of bridges.
\end{itemize}
To conclude: we want $\uni{}$ to be a type whose paths are constant, and whose bridges are the paths from $\uniNDD$, i.e. terms $(\ctxbrid{\vec{\var j}}, \ctxpath{\vec{\var i}}) \Dsez A : \uniDD$ should correspond to terms $(\ctxpath{\vec {\var j}}) \Dsez A' : \uniNDD$. So we define it that way:
\begin{definition}
	We define the \textbf{discrete universe of discrete level $\ell$ types} $\sez \uniDD_\ell \dtype_{\ell+1}$ as $\uniDD_\ell = \cohdisc \cohpaths \uniNDD_\ell = \flat \coshp \uniNDD_\ell$.
\end{definition}
Note that $\sharp \shp \dashv \flat \coshp$ and that $\sharp \shp (\ctxbrid{\vec{\var j}}, \ctxpath{\vec{\var i}}) = (\ctxpath{\vec {\var j}})$.

There is a minor issue with the above definition: we want $\uniDD_\ell$ to exist in any context. We can simply define $\Gamma \sez \uniDD_\ell \dtype_{\ell+1}$ as $\uniDD_\ell = (\flat \coshp \uniNDD_\ell)[\bullet]$. Note that $\uniNDD_\ell = \uniNDD_\ell[\bullet]$, so this does not destroy any information.

The universes $\uniPsh_\ell$ and $\uniNDD_\ell$ have a decoding operation $\El$ and an inverse encoding operation $\tycode \loch$ that allow us to turn terms of the universe into types and vice versa. Moreover, the operators for $\uniNDD_\ell$ are simply those of $\uniPsh_\ell$ restricted to $\uniNDD_\ell$ (for $\El$) or to discrete types (for $\tycode \loch$). For $\uniDD$, the situation is different:
\begin{proposition}
	We have mutually inverse rules
	\begin{equation}
		\inference{
			\Gamma \sez A : \uniDD_\ell
		}{\sharp \shp \Gamma \sez \ElDD~A \dtype_\ell}{} \qquad
		\inference{
			\sharp \shp \Gamma \sez T \dtype_\ell
		}{\Gamma \sez \tycodeDD T : \uniDD_\ell}{}
	\end{equation}
	that are natural in $\Gamma$, i.e. $(\ElDD\,A)[\sharp \shp \sigma] = \ElDD(A[\sigma])$.
	Moreover, $(\ElDD\,A)[\iota \varsigma] = \El\,\vartheta(\kappa(A))$.
\end{proposition}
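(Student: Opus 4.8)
The plan is to obtain both rules by transporting the universe property of $\uniNDD_\ell$ across the adjunction $\sharp\shp \dashv \flat\coshp$ (stated just before the proposition). First I would observe that $\flat = \fpsh\shp$ and $\coshp = \fpsh\sharp$ are lifted functors, hence morphisms of CwFs, so their composite $\flat\coshp$ is too; in particular it preserves the terminal context and context extension, so a term $\Gamma \sez A : \uniDD_\ell$ unwinds to a substitution $(A) : \Gamma \to ().(\flat\coshp\uniNDD_\ell) = \flat\coshp(().\uniNDD_\ell)$ lying over $\bullet : \Gamma \to ()$ (here $\uniNDD_\ell$ does not depend on its context, so restricting along $\bullet$ loses nothing). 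Since the right adjoint $\flat\coshp$ is a CwF morphism, \cref{thm:adjunction-rules} applies and gives $\alpha\inv(A)$, a substitution $\sharp\shp\Gamma \to ().\uniNDD_\ell$; applying $\alpha\inv$ to the identity $\flat\coshp(\pi)\circ(A) = \bullet$ and using $\flat\coshp\pi = \pi$ together with the naturality of $\alpha$ shows it lies over $\bullet$, i.e. it is a term $\sharp\shp\Gamma \sez \alpha\inv(A) : \uniNDD_\ell$. I then \emph{define} $\ElDD\,A := \ElNDD(\alpha\inv(A))$ and dually $\tycodeDD{T} := \alpha(\tycodeNDD{T})$.

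Mutual inverseness is then immediate, as $\ElDD$ and $\tycodeDD$ are composites of the three bijections term$\leftrightarrow$substitution, $\alpha\leftrightarrow\alpha\inv$, and $\ElNDD\leftrightarrow\tycodeNDD$. For naturality in $\Gamma$ I would use that term substitution is composition, that $\alpha\inv$ is natural in its domain (so $\alpha\inv(\mu\circ\sigma) = \alpha\inv(\mu)\circ(\sharp\shp\sigma)$, e.g. because $\alpha\inv(\mu) = \eps\circ\sharp\shp\mu$), and that $\ElNDD$ commutes with substitution, being the restriction of $\El$ for $\uniPsh_\ell$; chaining these, $\ElDD(A[\sigma]) = \ElNDD(\alpha\inv(A)[\sharp\shp\sigma]) = (\ElDD\,A)[\sharp\shp\sigma]$.

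For the last equation I would exploit that $\ElNDD$ \emph{is} the presheaf operation $\El$, so $\ElDD\,A = \El(\alpha\inv(A))$ and $(\ElDD\,A)[\iota\varsigma] = \El(\alpha\inv(A)[\iota\varsigma])$ by naturality of $\El$; it then suffices to prove the equality of terms $\alpha\inv(A)[\iota\varsigma] = \vartheta(\kappa(A))$ in $\uniNDD_\ell$ over $\Gamma$. Here I would decompose the composite adjunction, writing $\alpha\inv_{\sharp\shp\dashv\flat\coshp} = \alpha\inv_{\sharp\dashv\coshp}\circ\alpha\inv_{\shp\dashv\flat}$, and feed in the explicit descriptions from the \emph{Characterizing cohesive adjunctions} section, namely $\alpha\inv_{\shp\dashv\flat}(A) = \kappa(A[\varsigma]\inv)$ over $\shp\Gamma$ and $\alpha\inv_{\sharp\dashv\coshp}(u) = \vartheta(u[\iota]\inv)$ over $\sharp\shp\Gamma$, together with \cref{thm:nattrans-function}(4) ($\nu(\ftrtm F t) = \ftrtm G t[\nu]$) and the coherence identities of \cref{thm:cohesion-psh} and \cref{eq:cohesion-psh-identities} (so that $\flat\kappa = \kappa\flat$, $\sharp\iota = \iota\sharp$, $\flat\iota$, $\sharp\varsigma$ are compatible with identities, and $\flat\shp = \shp$, $\sharp\flat = \sharp$). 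Restricting the resulting term along $\iota\varsigma = \iota\shp\circ\varsigma = \sharp\varsigma\circ\iota : \Gamma \to \sharp\shp\Gamma$ then cancels the two inversions $t[\varsigma]\inv$ and $u[\iota]\inv$ and collapses the accumulated $\kappa$'s and $\vartheta$'s down to $\vartheta(\kappa(A))$.

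The main obstacle is precisely this last coherence computation: one must track the co-unit of the composite adjunction through the decomposition and keep careful account of which functor identities are strict ($\flat\shp = \shp$, $\sharp\flat = \sharp$) versus merely isomorphisms ($\shp\shp\cong\shp$, $\shp\flat\cong\flat$), since a stray $\varsigma$- or $\kappa$-component in the wrong spot would break the equation. Everything else — well-definedness of $\uniDD_\ell$ in a context, the side conditions that the relevant substitutions are genuinely terms, mutual inverseness, and naturality — is a mechanical unwinding of the adjunction and of the $\uniNDD_\ell$ universe property.
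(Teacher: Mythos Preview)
Your proposal is correct and follows essentially the same route as the paper: define $\ElDD A = \El\,\alpha_{\sharp\dashv\coshp}\inv(\alpha_{\shp\dashv\flat}\inv(A)) = \El\,\vartheta(\kappa(A[\varsigma]\inv)[\iota]\inv)$ via the decomposed adjunction and the explicit formulas from the characterization of cohesive adjunctions, with $\tycodeDD{\loch}$ as the evident inverse. The paper's proof is terser (it simply writes down the formula and its inverse), while you spell out the CwF-morphism machinery and the final cancellation $[\iota]\inv[\iota\varsigma] = [\varsigma]$, $[\varsigma]\inv[\varsigma] = \id$ more explicitly, but the substance is the same.
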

\begin{proof}
	We use that $\uniDD_\ell = \flat \coshp \uniNDD_\ell$.

	We set $\ElDD~A = \El~\alpha_{\sharp \dashv \coshp}\inv(\alpha_{\shp \dashv \flat}\inv(A)) = \El~\vartheta(\kappa(A[\varsigma\inv])[\iota]\inv)$. Then the inverse is given by $\tycodeDD T = \kappa\inv(\vartheta\inv(\tycode T)[\iota])[\varsigma]$.
\end{proof}

\chapter{Semantics of ParamDTT}\label{ch:paramdtt}
In this chapter, we finally interpret the inference rules of ParamDTT in the category with families $\widehat{\bpcubecat}$ of bridge/path cubical sets. We start with some auxiliary lemmas, then give the meta-type of the interpretation function, followed by interpretations for the core typing rules, the typing rules related to internal parametricity, and the typing rules related to $\Nat$ and $\Size$.

\section{Some lemmas}\label{sec:uniDD-lemmas}
\begin{lemma}
	For discrete types $T$ in the relevant contexts, we have invertible rules:
	\begin{equation}
		\binference{\shp \Gamma \sez t : T}{\Gamma \sez t[\varsigma] : T[\varsigma]}{}, \qquad
		\binference{\sharp \shp \Gamma \sez t : T}{\sharp \Gamma \sez t[\sharp \varsigma] : T[\sharp \varsigma]}{}.
	\end{equation}
\end{lemma}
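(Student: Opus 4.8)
The plan is to prove both rules by the same device: reading a term of $T[\mu]$ as a section and reducing invertibility of substitution along $\mu$ to a unique right-lifting statement for the discrete map $\pi$, where $(\mu,\Gamma')$ is $(\varsigma_\Gamma,\shp\Gamma)$ for the first rule and $(\sharp\varsigma_\Gamma,\sharp\shp\Gamma)$ for the second. So fix a map $\mu:\Delta\to\Gamma'$ and a type $\Gamma'\sez T\type$. Since $T[\mu]\dsub\delta = T\dsub{\mu\delta}$, the presheaf $\Delta.T[\mu]$ is the pullback of $\mu$ along $\pi:\Gamma'.T\to\Gamma'$; hence a term $\Delta\sez u:T[\mu]$ is exactly a substitution $s=(\mu,u):\Delta\to\Gamma'.T$ with $\pi\circ s=\mu$, and writing $u=t[\mu]$ for some $\Gamma'\sez t:T$ amounts to factoring $s$ as $\bar s\circ\mu$ with $\bar s=(\id_{\Gamma'},t)$ a section of $\pi$. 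Thus $\loch[\mu]$ is invertible on terms of $T$ once every such $s$ admits a unique diagonal in the square with left edge $\mu$, right edge $\pi$, top edge $s$, bottom edge $\id_{\Gamma'}$: if $\mu$ is an epimorphism of presheaves then $\pi\bar s\circ\mu=\pi s=\mu=\id\circ\mu$ forces $\pi\bar s=\id$, and $\mu$ epic also gives uniqueness of $\bar s$. Now if $T$ is discrete then $\pi$ is a discrete map, so by the characterisation of discrete maps it has the lifting property against the horn inclusions $(\facewkn{\var i}):\yoneda(W,\ctxpath{\var i})\to\yoneda W$, hence against any $\mu$ in the saturated class they generate (closure under pushout, coproduct, transfinite composition, retract). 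So in each case it suffices to check that $\mu$ is an epi lying in that saturated class.

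For the first rule this is immediate. We have $\shp\Gamma\cong\quotshp\Gamma=\Gamma/\sheq$, and under this isomorphism $\varsigma_\Gamma=(\kappa\quotshp)\inv\circ\inquotshp$ is the quotient map, which is levelwise surjective and hence epic; moreover $\Gamma\to\Gamma/\sheq$ is obtained from $\Gamma$ by a transfinite composite of pushouts of coproducts of the horn inclusions $(\facewkn{\var i}):\yoneda(W,\ctxpath{\var i})\to\yoneda W$, one pushout per cube $\gamma:\DSub{(W,\ctxpath{\var i})}{\Gamma}$, imposing precisely the defining relation $\sheq(\gamma,\gamma(0/\var i,\facewkn{\var i}))$. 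So $\varsigma_\Gamma$ is an epi in the saturated class, and the observation above gives the rule. (For this rule one can also argue directly, bypassing the saturated class: $\shp\Gamma.T$ is itself a discrete \emph{context}, so $s$ factors uniquely over $\inquotshp$, equivalently over $\varsigma$, and $\varsigma$ epic then forces $\pi\bar s=\id$.)

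For the second rule the point is that $\sharp$ preserves every ingredient of the first. Being $\fpsh\flat$, i.e. precomposition with the base functor $\flat:\bpcubecat\to\bpcubecat$, the functor $\sharp$ has both adjoints, so it preserves the pushouts, coproducts and transfinite composites that build $\varsigma_\Gamma$, and it preserves levelwise-surjective presheaf maps, so $\sharp\varsigma_\Gamma:\sharp\Gamma\to\sharp\shp\Gamma$ is again epic. Moreover $\sharp\circ\yoneda\cong\yoneda\circ\sharp_{\mathrm{base}}$ by \cref{thm:lifting-adjunctions}, and inspecting \cref{fig:cohesion-base} shows $\sharp_{\mathrm{base}}$ sends $(\facewkn{\var i}):(W,\ctxpath{\var i})\to W$ to $(\facewkn{\var i}):(\sharp W,\ctxpath{\var i})\to\sharp W$; hence $\sharp$ carries every horn inclusion to a horn inclusion. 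Therefore $\sharp\varsigma_\Gamma$ is an epi in the saturated class of horn inclusions, and since $T$ discrete makes $\pi:\sharp\shp\Gamma.T\to\sharp\shp\Gamma$ a discrete map, the observation applies verbatim and yields invertibility of $\loch[\sharp\varsigma]$. Note that here the direct argument of the first rule is not available, because $\sharp\shp\Gamma$ is not discrete (e.g. $\sharp\shp\yoneda(\ctxbrid{\var i})$ carries a non-degenerate path), so the detour through the saturated class genuinely earns its keep.

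The main obstacle is the bookkeeping packed into "$\varsigma_\Gamma$ lies in the saturated class of the horn inclusions": one must verify carefully that the $\sheq$-quotient is genuinely a cell complex built from those inclusions — so that a discrete map really does lift against it — and then that the three facts invoked for $\sharp$ (preservation of the relevant colimits, of levelwise-surjective maps, and of horn inclusions) all hold on the nose. If one prefers to avoid weak-factorisation-system language altogether, the alternative is to establish the required unique factorisation of $u$ through $\sharp\varsigma$ by hand, via a case distinction on $\var i\psub\vfi\in\{0,1\}\uplus V$ exactly as in the proofs of \cref{thm:disc-prod} and \cref{thm:elim-quotshp}, the nontrivial cases being closed off by discreteness of $T$ (a defining term over a path-degenerate base is path-degenerate, forcing the needed equalities, possibly after a further restriction); that route is more computational but more self-contained, and it is where the real work sits.
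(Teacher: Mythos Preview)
Your proof is correct. For Rule 1, your parenthetical direct argument---$\shp\Gamma.T$ is a discrete context, so any section factors uniquely over $\inquotshp$ (equivalently $\varsigma$)---is exactly what the paper does: it reduces to the $\quotshp$ version and says the proof is ``analogous to but simpler than'' the first rule of \cref{thm:elim-quotshp}, which is precisely this factoring-through-the-quotient-by-discreteness argument.

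For Rule 2 you take a genuinely different route. The paper again reduces to $\sharp\quotshp$ (using $\sharp\shp=\sharp\quotshp$ and $\sharp\varsigma=\sharp\inquotshp$) and then points to the second rule of \cref{thm:elim-quotshp}, i.e.\ the explicit case distinction on $\var i\psub\vfi\in\{0,1\}\uplus V$ with the detour through $\forall_\flat\forall_\sigma$. You instead work categorically: $\sharp=\fpsh\flat$ has both adjoints, so it preserves the colimits that build $\inquotshp$ as a cell complex from horn inclusions, it preserves levelwise surjections, and it sends horn inclusions to horn inclusions via $\sharp\circ\yoneda\cong\yoneda\circ\sharp_{\mathrm{base}}$; hence $\sharp\varsigma$ stays an epi in the saturated class and the discrete map $\pi$ lifts against it. This is cleaner and explains \emph{why} the result survives applying $\sharp$, whereas the paper's approach is a self-contained hands-on computation that does not rely on any weak-factorisation-system vocabulary or on verifying that $\inquotshp$ really is a relative cell complex. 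You correctly flag that last verification as the main obstacle in your approach, and you also correctly note that the direct argument from Rule 1 fails here because $\sharp\shp\Gamma$ need not be discrete---your example $\sharp\shp\yoneda(\ctxbrid{\var i})\cong\yoneda(\ctxpath{\var i})$ is exactly right.
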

\begin{proof}
	\begin{description}
		\item[Rule 1] Recall that we have $\kappa \quotshp : \shp \cong \quotshp$ and $\varsigma = (\kappa \quotshp)\inv \inquotshp$. Thus, it is sufficient to prove
		\begin{equation}
			\binference{\quotshp \Gamma \sez t' : T'}{\Gamma \sez t'[\inquotshp] : T'[\inquotshp]}{},
		\end{equation}
		after which we can pick $T' = T[(\kappa \quotshp)\inv]$ and $t' = t[(\kappa \quotshp)\inv]$. A proof of this is analogous to but simpler than the proof of the first rule in \cref{thm:elim-quotshp}.
		
		\item[Rule 2] Since $\sharp \flat = \sharp$ and $\sharp \kappa = \id$, we have $\sharp \shp = \sharp \quotshp$ and $\sharp \varsigma = \sharp \inquotshp$. Thus, we need to prove
		\begin{equation}
			\binference{\sharp \quotshp \Gamma \sez t : T}{\sharp \Gamma \sez t[\sharp \inquotshp] : T[\sharp \inquotshp]}{}.
		\end{equation}
		 A proof of this is analogous to but simpler than the proof of the second rule in \cref{thm:elim-quotshp}. \qedhere
	\end{description}
\end{proof}
\begin{lemma}
	For discrete types $\shp\Gamma \sez T \dtype$, we have an invertible substitution
	\begin{equation}
		(\shp \pi, \xi[\varsigma]\inv) : \shp(\Gamma.T[\varsigma]) \cong (\shp \Gamma).T.
	\end{equation}
	We will abbreviate it as $\subext \varsigma\inv$ and the inverse as $\subext \varsigma$. We have $\subext \varsigma \circ \varsigma \subext = \varsigma$.
\end{lemma}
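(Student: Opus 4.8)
The plan is to proceed in three steps: check that $\subext \varsigma\inv = (\shp \pi, \xi[\varsigma]\inv)$ is a well-typed substitution, verify the triangle identity $\subext \varsigma\inv \circ \varsigma = \varsigma\subext$ (which yields $\subext\varsigma \circ \varsigma\subext = \varsigma$ once invertibility is in hand), and then prove invertibility. For well-typedness, by the universal property of context extension a substitution $\shp(\Gamma.T[\varsigma]) \to (\shp\Gamma).T$ consists of a substitution into $\shp\Gamma$ — take $\shp \pi$ — together with a term of $T[\shp \pi]$ over $\shp(\Gamma.T[\varsigma])$. Naturality of $\varsigma$ gives $\shp \pi \circ \varsigma = \varsigma \circ \pi$, so $(T[\shp \pi])[\varsigma] = T[\varsigma][\pi]$, of which $\xi$ is a term; since $\shp(\Gamma.T[\varsigma])$ is discrete and $T[\shp \pi]$ is discrete (substitution preserves discreteness), the first rule of the preceding lemma produces the desired term $\xi[\varsigma]\inv$. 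The triangle identity is then a componentwise check — $\shp \pi \circ \varsigma = \varsigma\pi$ and $\xi[\varsigma]\inv[\varsigma] = \xi$ — where $\varsigma\subext$ abbreviates $(\varsigma\pi, \xi) : \Gamma.T[\varsigma] \to (\shp\Gamma).T$.

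For invertibility I would exploit that $\shp \Xi$, with its unit $\varsigma_\Xi$, is the reflection of $\Xi$ into the full subcategory of discrete contexts — this follows from $\quotshp \dashv \flat$ with unit $\inquotshp$, the isomorphism $\kappa\quotshp : \shp \cong \quotshp$, the fact that $\kappa : \flat \Delta \cong \Delta$ for discrete $\Delta$ (\cref{thm:discrete-contexts-and-cohesion}), and surjectivity of $\inquotshp$ (hence of $\varsigma$ as a levelwise-epic presheaf map). In particular, a substitution out of $\shp \Xi$ into a discrete context is determined by its precomposition with $\varsigma_\Xi$. Hence it suffices to show that $(\shp\Gamma).T$, equipped with $\varsigma\subext$, is \emph{also} a reflection of $\Gamma.T[\varsigma]$ into discrete contexts: by uniqueness of reflections there is then a canonical isomorphism $\shp(\Gamma.T[\varsigma]) \cong (\shp\Gamma).T$ compatible with the units, and the triangle identity together with the determinacy just mentioned force this isomorphism to be $\subext \varsigma\inv$.

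So the substantive part is to verify that $((\shp\Gamma).T, \varsigma\subext)$ is such a reflection. Discreteness of $(\shp\Gamma).T$ is clear ($\shp\Gamma$ lies in the image of $\cohdisc$, $T$ is discrete, and an extension of a discrete context by a discrete type is discrete). Uniqueness of factorizations through $\varsigma\subext$ is automatic, since $\varsigma\subext = (\varsigma\pi, \xi)$ is a levelwise surjection (any pair $(\bar\gamma, t)$ over $W$ is the image of $(\gamma, t)$ for a $\gamma$ with $\varsigma\gamma = \bar\gamma$). For existence, given $h : \Gamma.T[\varsigma] \to D$ with $D$ discrete, I would observe that $h$ factors through $\inquotshp$ and is therefore $\sheq$-invariant, and set $\bar h\dsub{(\bar\gamma, t)} := h\dsub{(\gamma, t)}$ for any $\gamma$ above $\bar\gamma$ (legitimate since $t : T\dsub{\bar\gamma} = T[\varsigma]\dsub\gamma$); then $\bar h \circ \varsigma\subext = h$ and naturality in $W$ are immediate. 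The main obstacle is the well-definedness of $\bar h$: if $\varsigma\gamma_1 = \varsigma\gamma_2$, equivalently $\gamma_1 \sheq \gamma_2$ in $\Gamma$, one must show $(\gamma_1, t) \sheq (\gamma_2, t)$ in $\Gamma.T[\varsigma]$. The key observation is that applying $\varsigma$ sends the context components of any generating $\sheq$-move into the discrete context $\shp\Gamma$, so those components become degenerate along the relevant path dimension, whence discreteness of $T$ forces the term sitting over them to be degenerate there as well and the move leaves the term component fixed; carrying this through an arbitrary $\sheq$-zigzag — each step a restriction of a basic ``collapse a path to its $0$-face'' relation — requires the same case distinction on $\var i \psub \vfi \in \accol{0, 1} \uplus V$ as in the proof of \cref{thm:elim-quotshp}, and the rest is bookkeeping.
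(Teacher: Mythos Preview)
Your argument is correct and converges on the same decisive step as the paper: showing that whenever $\gamma_1$ and $\gamma_2$ are $\sheq$-related in $\Gamma$, the pairs $(\gamma_1,t)$ and $(\gamma_2,t)$ are $\sheq$-related in $\Gamma.T[\varsigma]$. The packaging differs. The paper sets up a commutative square linking $(\shp\pi,\xi[\varsigma]\inv)$ to the $\quotshp$-level map $(\quotshp\pi,\xi[\inquotshp]\inv):\quotshp(\Gamma.T[\varsigma])\to(\quotshp\Gamma).T[(\kappa\quotshp)\inv]$, reduces invertibility of the former to invertibility of the latter, and then checks that $\overline{(\gamma,t)}\mapsto(\overline\gamma,t)$ is a bijection by hand. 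You instead invoke uniqueness of reflections into discrete contexts, which is cleaner conceptually and avoids drawing the diagram; the price is that you must argue separately that $((\shp\Gamma).T,\varsigma\subext)$ really is a reflection, whose ``existence of factorization'' clause unwinds to exactly the same well-definedness check. One small point: the paper does not actually perform the three-way case split on $\var i\psub\vfi$ you anticipate, but uses a uniform decomposition $\vfi=(\psi,\var i/\var i)(k/\var i)$ followed by a single ``weaken $t$ to $t\psub{\facewkn{\var i}}$, apply a generator of $\sheq^{\Gamma.T[\varsigma]}$, restrict by $(k/\var i)$'' move --- slightly slicker than the case analysis, though either works.
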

\begin{proof}
	We have a commutative diagram
	\begin{equation}
		\xymatrix{
			\shp(\Gamma.T[\varsigma])
				\ar[rr]^{(\shp \pi, \xi[\varsigma]\inv)}
			&& (\shp \Gamma).T
			\\
			& \Gamma.T[\varsigma]
				\ar[lu]_{\varsigma}
				\ar[ru]^{\varsigma \subext}
				\ar[ld]^{\inquotshp}
				\ar[rd]_{\inquotshp \subext}
			\\
			{\quotshp(\Gamma.T[\varsigma])}
				\ar[rr]^{(\quotshp \pi, \xi[\inquotshp]\inv)}
				\ar[uu]^{(\kappa \quotshp)\inv}_{\wr}
			&& (\quotshp \Gamma).T[(\kappa \quotshp)\inv].
				\ar[uu]_{(\kappa \quotshp)\inv \subext}^{\wr}
		}
	\end{equation}
	The left and right triangles commute because $\varsigma = (\kappa \quotshp)\inv \inquotshp$. The upper triangle commutes because $(\shp \pi, \xi[\varsigma]\inv) \varsigma = (\shp \pi \circ \varsigma, \xi[\varsigma]\inv[\varsigma]) = (\varsigma \pi, \xi) = \varsigma \subext$. The lower one commutes by similar reasoning. The square commutes because
	\begin{align*}
		\kappa \quotshp \subext \circ (\shp \pi, \xi[\varsigma]\inv)
		&= \kappa \quotshp \subext \circ (\flat \quotshp \pi, \xi[\inquotshp]\inv[\kappa \quotshp])
		= (\kappa \quotshp \circ \flat \quotshp \pi, \xi[\inquotshp]\inv[\kappa \quotshp]) \\
		&= (\quotshp \pi \circ \kappa \quotshp, \xi[\inquotshp]\inv[\kappa \quotshp])
		= (\quotshp \pi, \xi[\inquotshp]\inv) \circ \kappa \quotshp.
	\end{align*}
	So in order to prove the theorem, it is sufficient to show that the lower arrow is invertible. It maps $\overline{(\gamma, t)} : \DSub W {\quotshp(\Gamma.T[\varsigma])}$ to
	\begin{equation}
		(\quotshp \pi, \xi[\inquotshp]\inv) \circ \overline{(\gamma, t)}
		= (\quotshp \pi \circ \overline{(\gamma, t)}, \xi[\inquotshp]\inv \dsub{\overline{(\gamma, t)}})
		= (\overline \gamma, \xi \dsub{\gamma, t}) = (\overline \gamma, t) : \DSub{W}{(\quotshp \Gamma).T[(\kappa \quotshp)\inv]}.
	\end{equation}
	So we have to show that we can do the converse. So we have to show that if we have $\gamma, \gamma' : \DSub W \Gamma$ such that $\sheq^\Gamma_W(\gamma, \gamma')$ (i.e.\ $\overline \gamma = \overline{\gamma'}$), and $t : T[\varsigma]\dsub{\gamma} = T[(\kappa \quotshp)\inv] \dsub{\overline \gamma}$, then $\overline{(\gamma, t)} = \overline{(\gamma', t)}$. We will prove a stronger statement, namely that $\sheq^\Gamma \subseteq E$, where we say $E_W(\gamma, \gamma')$ when $\sheq^\Gamma(\gamma, \gamma')$ and for every $\vfi : \PSub{V}{W}$ and every $t : T[(\kappa \quotshp)\inv]\dsub{\overline{\gamma \vfi}}$, we have $\overline{(\gamma\vfi, t)} = \overline{(\gamma'\vfi, t)}$. Because $E$ is an equivalence relation on $\Gamma$, it suffices to prove that $E$ satisfies the defining property of $\Gamma$.
	
	So pick a path $\gamma : \DSub{(W, \ctxpath{\var i})}{\Gamma}$. We have to prove $E(\gamma, \gamma (0/\var i, \facewkn{\var i}))$. Pick some $\vfi : \PSub{V}{(W, \ctxpath{\var i})}$. As usual, we can decompose $\vfi = (\psi, \var i/\var i)(k/\var i)$ for some $\psi : \PSub V W$ and $k \in V \uplus \accol{0, 1}$. Pick $t : T[(\kappa \quotshp)\inv]\dsub{\overline{\gamma \vfi}}$. We have
	\begin{equation}
		\overline{(\gamma(\psi, \var i/\var i), t \psub{\facewkn{\var i}})}
		= \overline{(\gamma(\psi, \var i/\var i), t \psub{\facewkn{\var i}})} (0/\var i, \facewkn{\var i})
		= \overline{(\gamma (0/\var i, \facewkn{\var i}) (\psi, \var i/\var i), t \psub{\facewkn{\var i}})}
	\end{equation}
	where the first equality holds by definition of $\sheq$, and the second one follows from calculating with substitutions. Restricting by $(k/\var i)$ yields the desired result.
\end{proof}

\section{Meta-type of the interpretation function}
\textbf{Contexts} $\Gamma \ctx$ are interpreted to bridge/path cubical sets $\interp \Gamma \ctx$.

\textbf{Types} $\Gamma \judty{T}$ are interpreted to discrete types $\sharp \interp \Gamma \sez \interp T_\Ty \dtype$.

\textbf{Terms} $\Gamma \judty t T$ are interpreted as terms $\interp \Gamma \sez \interp t : \interp T [\iota]$.

\textbf{Definitional equality} is interpreted as equality of interpretations.

In the paper, the promotion of an element of the universe to a type, is not reflected syntactically. For that reason, we need a different interpretation function for types and for terms. However, to keep things simpler here, we will add a syntactical reminder $\El$ of the term-to-type promotion, allowing us to omit the index $\Ty$.

\section{Core typing rules}
\subsection{Contexts}
Context formation rules are interpreted as follows:
\begin{equation}
	\interp{
		\inference{}{\ctx}{c-em}	
	} =
	\inference{}{\ctx}{}
\end{equation}
\begin{equation}
	\interp{
		\inference{\Gamma \judty T}{\Gamma, \ctxvar \mu x T \ctx}{c-ext}
	} =
	\inference{
		\inference{
			\inference{
				\sharp \interp \Gamma \sez \interp T \dtype \qquad
				\mu \in \accol{\Id, \sharp, \coshp}
			}{\sharp \interp \Gamma \sez \mu \interp T \type}{}
		}{\interp \Gamma \sez (\mu \interp T)[\iota] \type}{}
	}{\interp \Gamma, \ftrvar \mu x : (\mu \interp T)[\iota] \ctx}{}
\end{equation}
The variable rule
\begin{equation}
	\interp{
	\inference{
		\Gamma \ctx \quad
		(\ctxvar \mu x T) \in \Gamma \quad
		\mu \leq \idmod
	}{\Gamma \judtm x T}{t-var}
	}
\end{equation}
is interpreted through a combination of weakening and the following rules:
\begin{equation}
	\inference{
		\interp \Gamma, \var x : \interp T [\iota] \ctx
	}{\interp \Gamma, \var x : \interp T [\iota] \sez \var x : \interp T [\iota \wknvar x]}{}
	, \qquad
	\inference{
		\interp \Gamma, \var x : (\coshp \interp T) [\iota] \ctx
	}{\interp \Gamma, \var x : (\coshp \interp T) [\iota] \sez \vartheta(\var x) : \interp T [\iota \wknvar x]}{}
\end{equation}
The second one would normally have type $\interp T [\vartheta \iota \wknvar x]$, but we have $\vartheta \sharp = \id$, so we may remove $\vartheta : \sharp \interp \Gamma \to \sharp \interp \Gamma$.
\begin{lemma}\label{thm:leftflat}\label{thm:leftflat2}
	For any syntactic context $\Gamma$, we have $\sharp \interp \Gamma = \sharp \interp{\sharp \setminus \Gamma}$ and equivalently $\flat \interp \Gamma = \flat \interp{\sharp \setminus \Gamma}$. The substitution $\kappa : \flat \interp \Gamma \cong \interp{\sharp \setminus \Gamma}$ is an isomorphism.
\end{lemma}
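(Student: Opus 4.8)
The statement concerns the operation $\sharp \setminus \Gamma$ (which, from the notation $\leftflat{\loch}$ and the context of ParamDTT, deletes from $\Gamma$ exactly those variables whose modality annotation $\mu$ is \emph{not} below $\sharp$, i.e.\ the ``continuous'' variables, keeping the ``pointwise'' ones), and claims that applying $\sharp$ to the interpretation of $\Gamma$ cannot see the difference. The plan is to proceed by induction on the syntactic context $\Gamma$.

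\textbf{Base case and inductive step.} For $\Gamma = ()$ both sides are the empty context and $\sharp () = ()$, so there is nothing to prove. For the inductive step, suppose $\Gamma = \Gamma', \ctxvar \mu x T$. By the interpretation of \textsc{c-ext}, $\interp \Gamma = \interp{\Gamma'}, \ftrvar \mu x : (\mu \interp T)[\iota]$, so $\sharp \interp \Gamma = (\sharp \interp{\Gamma'}).(\sharp((\mu \interp T)[\iota]))$, using that $\sharp$ is a morphism of CwFs (\cref{thm:cohesion-psh}) and hence preserves context extension, together with $\sharp \iota = \iota \sharp$ so that $\sharp((\mu \interp T)[\iota]) = (\sharp \mu \interp T)[\iota \sharp]$ living over $\sharp \interp{\Gamma'}$. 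Now make a case distinction on whether $\mu \leq \sharp$. If $\mu \in \accol{\sharp, \coshp}$ is not below $\sharp$ --- concretely if $\mu = \Id$ --- then $x \notin (\sharp \setminus \Gamma)$, so $\sharp \setminus \Gamma = \sharp \setminus \Gamma'$; we must then check that the extra variable disappears after applying $\sharp$, which follows because $\sharp \Id \interp T = \sharp \interp T$ and \dots\ wait --- this is exactly the subtle point: it is \emph{not} that $\sharp$ of an extended context forgets the variable. Rather, the point is that $\mu = \Id$ is impossible for a continuous variable in the sense relevant here; the modalities that survive into $\sharp \setminus \Gamma$ are precisely $\sharp$ and $\coshp$, and for those we have $\sharp \sharp = \sharp$ and $\sharp \coshp = \sharp$ by \cref{thm:cohesion-psh}, so $\sharp(\mu \interp T) = \sharp \interp T$; whereas for $\mu = \Id$, \dots

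Let me restate the step cleanly: for every admissible $\mu$ we have $\sharp \mu = \sharp$ as endofunctors of $\widehat\bpcubecat$ (indeed $\sharp \Id = \sharp$, $\sharp \sharp = \sharp$, $\sharp \coshp = \sharp$), hence $\sharp((\mu \interp T)[\iota]) = (\sharp \interp T)[\iota \sharp]$ regardless of $\mu$. Therefore $\sharp \interp \Gamma = (\sharp \interp{\Gamma'}).(\sharp \interp T)[\iota \sharp]$, and this depends on $\mu$ \emph{only} through whether the variable is present. When $\mu \leq \sharp$ the variable is kept on both sides and we conclude by the induction hypothesis $\sharp \interp{\Gamma'} = \sharp \interp{\sharp \setminus \Gamma'}$ together with $\sharp \interp T$ being unchanged (it only depends on $\sharp$ of its own context, again by IH and the fact that $\interp T$ is interpreted over $\sharp \interp{\Gamma'}$). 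When $\mu \not\leq \sharp$, the variable is dropped on the left-hand side of $\sharp \setminus$ but --- and here is the actual content --- the interpretation convention must be such that $\interp T$ over the full context $\Gamma'$ equals $\interp T$ over $\sharp \setminus \Gamma'$ after applying $\sharp$; this is the genuinely load-bearing use of the induction hypothesis, and the claim is that $(\sharp \interp{\Gamma'}).X \cong \sharp \interp{\Gamma'}$ is false in general, so the real statement being proved must be that $\sharp \setminus$ keeps \emph{all} variables and only changes their modality to $\sharp$; then no variable is ever dropped, and the equality $\sharp \interp \Gamma = \sharp \interp{\sharp \setminus \Gamma}$ follows termwise from $\sharp \mu = \sharp$ plus the IH.

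\textbf{The two consequences.} Given $\sharp \interp \Gamma = \sharp \interp{\sharp \setminus \Gamma}$, the equivalent statement $\flat \interp \Gamma = \flat \interp{\sharp \setminus \Gamma}$ follows by applying $\cohdisc$ to $\cohfget \interp \Gamma = \cohfget \interp{\sharp\setminus\Gamma}$ --- which holds because $\cohfget \sharp = \cohfget \cohcodisc \cohfget = \cohfget$ so $\cohfget$ also cannot distinguish the two --- or alternatively by noting $\flat = \fpsh \shp$ and $\sharp = \fpsh \flat$ both factor through $\cohfget = \fpsh \cohdisc$. Finally, that $\kappa : \flat \interp \Gamma \to \interp{\sharp \setminus \Gamma}$ is an isomorphism follows from \cref{thm:discrete-contexts-and-cohesion}: $\kappa : \flat \Theta \to \Theta$ is an isomorphism exactly when $\Theta$ is discrete, and $\interp{\sharp \setminus \Gamma}$ is discrete since \emph{every} variable in $\sharp \setminus \Gamma$ carries modality $\sharp$ or $\coshp$, whose interpretation $(\sharp \interp T)[\iota]$ resp.\ $(\coshp \interp T)[\iota]$ is discrete --- $\coshp$ preserves discreteness by \cref{thm:coshp-preserves-discreteness}, and $\sharp \interp T$ is discrete because $\interp T$ is --- so by induction $\interp{\sharp\setminus\Gamma}$ is an iterated extension of $()$ by discrete types and hence discrete.

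\textbf{Main obstacle.} The crux is pinning down the exact definition of $\sharp \setminus \Gamma$ from the paper's conventions and verifying that it behaves as ``change every continuous modality to $\sharp$, keep everything else'' rather than ``delete variables'', since only the former makes the clean termwise induction go through; once that is settled, every step reduces to the idempotence/absorption identities $\sharp \mu = \sharp$ from \cref{thm:cohesion-psh} and the preservation-of-discreteness lemmas already established.
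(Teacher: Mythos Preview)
Your induction skeleton is right, but you have the definition of $\sharp \setminus \Gamma$ backwards, and this breaks several of your key steps. In the paper's conventions (visible in the proof's case analysis), $\sharp \setminus (\Gamma, \ctxpar x T) = (\sharp \setminus \Gamma), \ctxctu x T$: the operation \emph{demotes parametric variables to continuous}, and leaves continuous and pointwise variables untouched. It does not promote anything to $\sharp$. The intuition is that $\sharp \setminus \loch$ is left adjoint to composition with $\sharp$; a variable of modality $\mu$ becomes a variable of modality $\sharp \setminus \mu$, and $\sharp \setminus \sharp = \idmod$, $\sharp \setminus \idmod = \idmod$, $\sharp \setminus \coshp = \coshp$.

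This matters because two of your ``termwise'' identities are false. First, $\sharp \coshp = \coshp$, not $\sharp$ (see the table in \cref{thm:cohesion-psh}), so your uniform claim $\sharp \mu = \sharp$ fails at $\mu = \coshp$. The paper instead works with $\flat$ and uses $\flat \sharp = \flat = \flat \idmod$ to match the parametric case against its continuous image, while in the pointwise case both sides carry $\flat \coshp$ and agree trivially. Second, your discreteness argument claims every variable in $\sharp \setminus \Gamma$ has modality $\sharp$ or $\coshp$, and that $\sharp \interp T$ is discrete because $\interp T$ is. Both are wrong: the variables in $\sharp \setminus \Gamma$ have modality $\idmod$ or $\coshp$ (never $\sharp$), and $\sharp$ does \emph{not} preserve discreteness---a path in $\sharp T$ corresponds to a bridge in $T$, about which discreteness of $T$ says nothing. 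This is precisely why $\sharp \setminus$ must eliminate the $\sharp$ annotations rather than introduce them: the resulting context is built from $\interp T[\iota]$ (discrete by assumption) and $(\coshp \interp T)[\iota]$ (discrete by \cref{thm:coshp-preserves-discreteness}), so $\kappa$ is an isomorphism by \cref{thm:discrete-contexts-and-cohesion}. Once you fix the definition of $\sharp \setminus$, the induction goes through exactly as in the paper.
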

\begin{proof}
	We prove this by induction on the length of the context.
	\begin{description}
		\item[Empty context] We have $\sharp \setminus () = ()$; hence $\flat \interp{\sharp \setminus ()} = \flat \interp{()} = \flat () = ()$. Then $\kappa : () \to ()$ is the only substitution of that type and it is indeed an isomorphism.
		
		\item[Pointwise extension] We have
		\begin{align*}
			\flat \interp{\Gamma, \ctxptw x T}
			&= \flat (\interp \Gamma, \ftrvar \coshp x : (\coshp \interp T)[\iota])
			= \flat \interp \Gamma, \ftrvar {\flat\coshp} x : \flat \coshp \interp T, \\
			\flat \interp{\sharp \setminus (\Gamma, \ctxptw x T)}
			&= \flat (\interp{\sharp \setminus \Gamma}, \ftrvar \coshp x : (\coshp \interp T)[\iota])
			= \flat \interp{\sharp \setminus \Gamma}, \ftrvar {\flat\coshp} x : \flat \coshp \interp T,
		\end{align*}
		which is equal by virtue of the induction hypothesis. The context
		\begin{equation}
			\interp{\sharp \setminus (\Gamma, \ctxptw x T)}
			 = \interp{\sharp \setminus \Gamma}, \ftrvar \coshp x : (\coshp \interp T)[\iota]
		\end{equation}
		is discrete because $\interp{\sharp \setminus \Gamma} \cong \flat \interp \Gamma$ by the induction hypothesis, $\interp T$ is discrete and $\coshp$ preserves discreteness (\cref{thm:coshp-preserves-discreteness}). Hence, $\kappa$ is an isomorphism for this context.
		
		\item[Continuous extension] We have
		\begin{align*}
			\flat \interp{\Gamma, \ctxctu x T}
			&= \flat (\interp \Gamma, \var x : \interp T[\iota])
			= \flat \interp \Gamma, \ftrvar \flat x : \flat \interp T, \\
			\flat \interp{\sharp \setminus (\Gamma, \ctxctu x T)}
			&= \flat (\interp{\sharp \setminus \Gamma}, \var x : \interp T[\iota])
			= \flat \interp{\sharp \setminus \Gamma}, \ftrvar \flat x : \flat \interp T,
		\end{align*}
		which is equal by virtue of the induction hypothesis. The context
		\begin{equation}
			\interp{\sharp \setminus (\Gamma, \ctxctu x T)}
			 = \interp{\sharp \setminus \Gamma}, \var x : \interp T[\iota]
		\end{equation}
		is discrete because $\interp{\sharp \setminus \Gamma} \cong \flat \interp \Gamma$ by the induction hypothesis and $\interp T$ is discrete. Hence, $\kappa$ is an isomorphism for this context.
		
		\item[Parametric extension] We have
		\begin{align*}
			\flat \interp{\Gamma, \ctxpar x T}
			&= \flat (\interp \Gamma, \ftrvar \sharp x : (\sharp \interp T)[\iota])
			= \flat \interp \Gamma, \ftrvar{\flat\sharp} x : \flat \sharp \interp T,
			= \flat \interp \Gamma, \ftrvar \flat x : \flat \interp T, \\
			\flat \interp{\sharp \setminus (\Gamma, \ctxpar x T)}
			&= \flat \interp{(\sharp \setminus \Gamma), \ctxctu x T)}
			= \flat (\interp{\sharp \setminus \Gamma}, \var x : \interp T[\iota])
			= \flat \interp{\sharp \setminus \Gamma}, \ftrvar \flat x : \flat \interp T,
		\end{align*}
		which is equal by virtue of the induction hypothesis. The context
		\begin{equation}
			\interp{\sharp \setminus (\Gamma, \ctxpar x T)}
			= \interp{(\sharp \setminus \Gamma), \ctxctu x T)}
			 = \interp{\sharp \setminus \Gamma}, \var x : \interp T[\iota]
		\end{equation}
		is discrete because $\interp{\sharp \setminus \Gamma} \cong \flat \interp \Gamma$ by the induction hypothesis and $\interp T$ is discrete. Hence, $\kappa$ is an isomorphism for this context.
		
		\item[Interval extensions] These will be special cases of the above.
		\item[Face predicate extension] See the addendum in \cref{sec:face-predicates}. \qedhere
	\end{description}
\end{proof}
\begin{lemma}\label{thm:leftsharp}\label{thm:leftsharp2}
	For any syntactic context $\Gamma$, we have $\interp{\coshp \setminus \Gamma} = \sharp \interp \Gamma$.
\end{lemma}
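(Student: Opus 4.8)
Here is how I would prove \cref{thm:leftsharp}.

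The plan is to proceed by induction on the length of $\Gamma$, running in lockstep with the proof of \cref{thm:leftflat}. The ingredients are: that $\sharp = \fpsh\flat$ is an endomorphism of CwFs, hence commutes strictly with context extension, comprehension ($\pi$ and $\xi$) and substitution; the composition identities $\sharp\sharp = \sharp$, $\sharp\coshp = \coshp$ and $\coshp\sharp = \sharp$ of \cref{thm:cohesion-psh}; and the fact, from \cref{eq:cohesion-psh-identities}, that $\sharp\iota = \iota\sharp$ is the identity. This last point is what yields an equality here rather than merely an isomorphism as in \cref{thm:leftflat}: a context of the form $\sharp\Delta$ satisfies $\sharp\sharp\Delta = \sharp\Delta$, so the substitution $\iota : \sharp\Delta \to \sharp\sharp\Delta$ is the identity, and every $[\iota]$-annotation appearing in the interpretation of a context extension over a $\sharp$-context collapses.

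For the empty context, $\coshp \setminus () = ()$ and $\sharp\interp{()} = \sharp() = ()$. For the inductive step I would go through the context formers one at a time, as in \cref{thm:leftflat}: continuous extension $\ctxctu x T$, parametric extension $\ctxpar x T$, pointwise extension $\ctxptw x T$, the two interval extensions (special cases of the above), and the face-predicate extension (deferred, as there, to the addendum in \cref{sec:face-predicates}). In the generic case of an extension $\ctxvar \mu x T$ with $\mu \in \accol{\Id, \sharp, \coshp}$, the left-hand side unfolds, using that $\sharp$ preserves comprehension and substitution and that $\sharp\iota = \id$, to
\[
	\sharp\interp{\Gamma, \ctxvar \mu x T}
	= \sharp\bigl(\interp\Gamma.(\mu\interp T)[\iota]\bigr)
	= \sharp\interp\Gamma.(\sharp\mu\interp T)[\sharp\iota]
	= \sharp\interp\Gamma.\sharp\mu\interp T,
\]
and $\sharp\mu$ simplifies through $\sharp\,\Id = \sharp$, $\sharp\sharp = \sharp$, $\sharp\coshp = \coshp$. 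The right-hand side unfolds by the definition of $\coshp \setminus$ on an extended context — which, on modalities, acts as $\coshp \setminus \mu = \sharp\mu$, i.e.\ turns a continuous variable into a parametric one and fixes parametric and pointwise variables — together with the induction hypothesis $\interp{\coshp \setminus \Gamma} = \sharp\interp\Gamma$:
\[
	\interp{\coshp \setminus (\Gamma, \ctxvar \mu x T)}
	= \interp{\coshp \setminus \Gamma}.\bigl((\coshp \setminus \mu)\interp T\bigr)[\iota]
	= \sharp\interp\Gamma.(\sharp\mu\interp T)[\iota]
	= \sharp\interp\Gamma.\sharp\mu\interp T,
\]
where the final $[\iota]$ is the identity on $\sharp\interp\Gamma$ because $\sharp\interp{\coshp \setminus \Gamma} = \sharp\sharp\interp\Gamma = \sharp\interp\Gamma$. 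The two sides coincide. One also needs here that the syntactic type $T$ has the same interpretation whether read over $\Gamma$ or over $\coshp \setminus \Gamma$, which holds since both are types over the common context $\sharp\interp\Gamma$; this is an easy parallel induction.

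There is no serious obstacle: this is essentially the same routine induction as \cref{thm:leftflat}, and slightly cleaner because all the structure involved is strict rather than up to isomorphism. The part demanding the most care is tracking the $[\iota]$-substitutions and the modality annotations through each extension clause so that nothing spurious survives, and invoking the definition of the syntactic operation $\coshp \setminus$ correctly on each context former; the interval extensions reduce to the continuous and parametric cases, and the face-predicate extension is handled in \cref{sec:face-predicates}.
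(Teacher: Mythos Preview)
Your proposal is correct and follows essentially the same approach as the paper: induction on the length of the context, handling each context former and deferring face predicates to \cref{sec:face-predicates}. The only difference is presentational---you collapse the paper's three separate cases for $\mu \in \{\Id, \sharp, \coshp\}$ into one generic computation via the identity $\coshp \setminus \mu = \sharp\mu$, which is a mild improvement in concision but not a different argument.
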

\begin{proof}
	We prove this by induction on the length of the context.
	\begin{description}
		\item[Empty context] We have $\coshp \setminus () = ()$; hence $\interp{\coshp \setminus ()} = \interp{()} = () = \sharp ()$.
		
		\item[Pointwise extension] We have
		\begin{align*}
			\sharp \interp{\Gamma, \ctxptw x T}
			&= \sharp (\interp \Gamma, \ftrvar \coshp x : (\coshp \interp T)[\iota])
			= \sharp \interp \Gamma, \ftrvar{\sharp \coshp}{x} : \sharp \coshp \interp T
			= \sharp \interp \Gamma, \ftrvar{\coshp}{x} : \coshp \interp T, \\
			\interp{\coshp \setminus (\Gamma, \ctxptw x T)}
			&= \interp{(\coshp \setminus \Gamma), \ctxptw x T}
			= \interp{\coshp \setminus \Gamma}, \ftrvar \coshp x : (\coshp \interp T)[\iota]
			= \sharp \interp{\Gamma}, \ftrvar \coshp x : \coshp \interp T,
		\end{align*}
		where in the last step we used the induction hypothesis and the fact that $\iota \sharp = \id : \sharp \interp \Gamma \to \sharp \interp \Gamma$.
		
		\item[Continuous extension] We have
		\begin{align*}
			\sharp \interp{\Gamma, \ctxctu x T}
			&= \sharp (\interp \Gamma, \var x : \interp T[\iota])
			= \sharp \interp \Gamma, \ftrvar \sharp x : \sharp \interp T, \\
			\interp{\coshp \setminus (\Gamma, \ctxctu x T)}
			= \interp{(\coshp \setminus \Gamma), \ctxpar x T}
			&= \interp{\coshp \setminus \Gamma}, \ftrvar \sharp x : (\sharp \interp T)[\iota]
			= \sharp \interp \Gamma, \ftrvar \sharp x : \sharp \interp T.
		\end{align*}
		
		\item[Parametric extension] We have
		\begin{align*}
			\sharp \interp{\Gamma, \ctxpar x T}
			&= \sharp(\interp \Gamma, \ftrvar \sharp x : (\sharp \interp T)[\iota])
			= \sharp \interp \Gamma, \ftrvar \sharp x : \sharp \interp T, \\
			\interp{\coshp \setminus (\Gamma, \ctxpar x T)}
			= \interp{(\coshp \setminus \Gamma), \ctxpar x T}
			&= \interp{\coshp \setminus \Gamma}, \ftrvar \sharp x : (\sharp \interp T)[\iota]
			= \sharp \interp \Gamma, \ftrvar \sharp x : \sharp \interp T.
		\end{align*}
		
		\item[Interval extensions] These will be special cases of the above.
		
		\item[Face predicate extension] See the addendum in \cref{sec:face-predicates}. \qedhere
	\end{description}
\end{proof}

\subsection{Universes}
We have
\begin{equation}
	\interp{
		\inference{
			\Gamma \ctx \qquad
			\ell \in \IN
		}{\Gamma \judtm{\uni \ell}{\El\,\uni{\ell+1}}}{t-Uni}
	} =
	\inference{
		\interp \Gamma \ctx \qquad
		\ell \in \IN
	}{\interp \Gamma \sez \tycodeDD{\uniDD_\ell} : \uniDD_{\ell + 1}}{}
\end{equation}
\begin{equation}
	\interp{
		\inference{
			\Gamma \judtm T {\El\,\uni k} \qquad
			k \leq \ell \in \IN
		}{\Gamma \judtm{T}{\El\,\uni \ell}}{t-lift}
	} =
	\inference{
		\interp \Gamma \sez \interp T : \uniDD_k \qquad
		k \leq \ell \in \IN
	}{\interp \Gamma \sez \interp T : \uniDD_\ell}{}
\end{equation}
\begin{equation}
	\interp{
		\inference{\leftflat\Gamma \judtm A {\uni{\ell}}}{\Gamma \judty{\El\,A}}{ty}
	} =
	\inference{
		\inference{
			\interp{\sharp \setminus \Gamma} \sez \interp A : \uniDD_\ell
		}{\sharp \shp \interp{\sharp \setminus \Gamma} \sez \ElDD \interp A \dtype}{}
	}{\sharp \interp{\sharp \setminus \Gamma} = \sharp \interp \Gamma \sez (\ElDD \interp A)[\sharp \varsigma] \dtype}{}
\end{equation}
In particular, we have
\begin{equation}
	\interp{\El\,\uni \ell} = (\ElDD \interp{\uni \ell})[\sharp \varsigma]
	= (\ElDD \tycodeDD{\uniDD_\ell})[\sharp \varsigma]
	= \uniDD_\ell[\sharp \varsigma] = \uniDD_\ell,
\end{equation}
so that it is justified that we simply put $\uniDD_k$ on several occasions where we should have used $\interp{\El\,\uni k}$.
\begin{remark}
	In the paper, we defined $\interp{\El\,A}$ as $\El~\vartheta(\ftrtm{\sharp}{\interp A})$. Note that we have
	\begin{align*}
		(\ElDD \interp A)[\sharp \varsigma]
		&= \El\,\vartheta(\kappa(\interp A[\varsigma]\inv)[\iota]\inv)[\sharp \varsigma]
		= \El\,\vartheta(\kappa(\interp A[\varsigma]\inv)[\varsigma][\iota]\inv)
		= \El\,\vartheta(\kappa(\interp A)[\iota]\inv).
	\end{align*}
	Moreover, $\sharp \interp \Gamma \sez \kappa(\interp A)[\iota]\inv = \ftrtm{\sharp}{\interp A} : \coshp \uniNDD_\ell$ because $(\ftrtm{\sharp}{\interp A})[\iota] = \iota(\interp A) = \kappa(\interp A)$ where the last step uses $\iota = \kappa : \flat \coshp \to \coshp$.
\end{remark}

\subsection{Substitution}
We have syntactic substitution rule
\begin{equation}
	\inference{
		\Gamma, \ctxvar \mu x T, \Delta \sez J \qquad
		\mu \setminus \Gamma \judtm t T
	}{\Gamma, \Delta[\varclr t/\varclr x] \sez J[\varclr t/\varclr x]}{subst}
\end{equation}
which can be shown to be admissible by induction on the derivation of $J$. The idea behind this is a combination of the general idea of substitution, and the fact that we use $\mu \setminus \Gamma \judtm t T$ to express something that would more intuitively look like $\Gamma \sez \ctxvar \mu t T$. In fact, we have the following result:
\begin{lemma}
	For $\mu \in \accol{\coshp, \idmod, \sharp}$, we have
	\begin{equation}
		\inference{
			\interp{\mu \setminus \Gamma} \sez t : T [\iota]
		}{\interp \Gamma \sez \forsub \mu t : (\mu T)[\iota]}{}
	\end{equation}
	where $\forsub \idmod t = t$, $\forsub \sharp t = (\ftrtm \sharp t)[\iota]$ and $\forsub \coshp t = (\ftrtm \coshp t)[\iota]$.
\end{lemma}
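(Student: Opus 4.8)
The plan is to argue by cases on $\mu \in \accol{\coshp, \idmod, \sharp}$. The case $\mu = \idmod$ is immediate: $\idmod \setminus \Gamma = \Gamma$, $\idmod T = T$ and $\forsub\idmod t = t$, so the premise already is the conclusion.

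For $\mu = \sharp$, I would apply the morphism of CwFs $\sharp = \fpsh\flat$ to the premise. Since morphisms of CwFs commute with substitution, $\interp{\sharp\setminus\Gamma} \sez t : T[\iota]$ becomes $\sharp\interp{\sharp\setminus\Gamma} \sez \ftrtm\sharp t : (\sharp T)[\sharp\iota]$. Now \cref{thm:leftflat} gives $\sharp\interp{\sharp\setminus\Gamma} = \sharp\interp\Gamma$, and \cref{thm:cohesion-psh} tells us $\sharp\iota = \id$, so this reads $\sharp\interp\Gamma \sez \ftrtm\sharp t : \sharp T$. Substituting along $\iota : \interp\Gamma \to \sharp\interp\Gamma$ then yields $\interp\Gamma \sez (\ftrtm\sharp t)[\iota] : (\sharp T)[\iota]$, which is the claim.

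For $\mu = \coshp$, I would first invoke \cref{thm:leftsharp} to rewrite $\interp{\coshp\setminus\Gamma} = \sharp\interp\Gamma$; over a context of this shape the substitution $\iota$ is the identity, being a component of $\iota\sharp = \id$ (again \cref{thm:cohesion-psh}), so the premise becomes $\sharp\interp\Gamma \sez t : T$. Applying the morphism of CwFs $\coshp = \fpsh\sharp$ gives $\coshp\sharp\interp\Gamma \sez \ftrtm\coshp t : \coshp T$, and $\coshp\sharp = \sharp$ by \cref{thm:cohesion-psh}, so $\sharp\interp\Gamma \sez \ftrtm\coshp t : \coshp T$. Substituting along $\iota : \interp\Gamma \to \sharp\interp\Gamma$ gives $\interp\Gamma \sez (\ftrtm\coshp t)[\iota] : (\coshp T)[\iota]$, i.e.\ $\interp\Gamma \sez \forsub\coshp t : (\coshp T)[\iota]$.

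The work is entirely bookkeeping, and that is where the only real care is needed: the symbol $\iota$ recurs over several distinct contexts, and one must recognize exactly which occurrences collapse to the identity — those whose context already lies in the image of $\sharp$, via $\iota\sharp = \id$ — and which is the genuine substitution $\iota : \interp\Gamma \to \sharp\interp\Gamma$ that remains. One must likewise keep in mind that $\mu T$ lives over $\sharp\interp\Gamma$ for each of the three modalities, since $\sharp\mu = \sharp$ for $\mu \in \accol{\coshp, \idmod, \sharp}$, so that the type produced by the CwF morphism sits in the correct context. With those identifications in place the argument is a mechanical combination of the CwF-morphism equations with the coherence identities collected in \cref{thm:cohesion-psh}, \cref{thm:leftflat} and \cref{thm:leftsharp}.
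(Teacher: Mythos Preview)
Your proof is correct and essentially matches the paper's argument: a case split on $\mu$, with the $\idmod$ case trivial and the other two handled by applying the CwF morphism $\mu$, invoking the identities $\sharp\interp{\sharp\setminus\Gamma} = \sharp\interp\Gamma$, $\interp{\coshp\setminus\Gamma} = \sharp\interp\Gamma$, $\sharp\iota = \id$, $\coshp\sharp = \sharp$, and then substituting along $\iota$. The paper additionally motivates the construction by phrasing each case as an instance of the adjunctions $\flat\dashv\sharp$ and $\sharp\dashv\coshp$ (obtaining $\iota(t[\kappa])[\kappa]\inv$ and $\vartheta\inv(t)[\iota]$ respectively) before simplifying to exactly your functor-and-substitute formulas, but this is a presentational difference rather than a different argument.
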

\begin{proof}
	The idea is that $\mu \setminus \loch$ is left adjoint to $\mu \circ \loch$. In the model, we see this formally as $\interp{\coshp \setminus \Gamma} = \sharp \interp \Gamma$, $\interp{\idmod \setminus \Gamma} = \interp \Gamma$ and $\interp{\sharp \setminus \Gamma} \cong \flat \interp \Gamma$. So in each case, we can simply use the adjunction in the model.
	\begin{description}
		\item[$\mu = \idmod$] Then $\interp \Gamma \sez t : T[\iota]$ by the premise.
		
		\item[$\mu = \sharp$] Then we have
		\begin{equation}
			\inference{
			\inference{
				\interp{\sharp \setminus \Gamma} \sez t : T [\iota]
			}{\flat \interp \Gamma \sez t [\kappa] : T [\iota \kappa]}{}
			}{\interp \Gamma \sez \iota(t [\kappa])[\kappa]\inv : (\sharp T)[\iota]}{},
		\end{equation}
		i.e.\ we first apply the isomorphism $\kappa : \flat \interp \Gamma \cong \interp{\sharp \setminus \Gamma}$ and then the adjunction $\iota(\loch)[\kappa]\inv : \flat \dashv \sharp$. The second step is well-typed because $\iota \circ (\iota \kappa) / \kappa = \iota : \interp \Gamma \to \sharp \interp \Gamma = \sharp \interp{\sharp \setminus \Gamma}$, or more meaningfully
		\begin{equation}
			\iota \sharp \interp \Gamma \circ (\iota \interp{\sharp \setminus \Gamma} \circ \kappa \interp{\sharp \setminus \Gamma}) / \kappa \interp{\Gamma} = \iota \interp \Gamma.
		\end{equation}
		Indeed, $\iota \sharp = \id$ and $(\iota \kappa) \interp{\sharp \setminus \Gamma} = \iota \flat \interp{\sharp \setminus \Gamma} = \iota \flat \interp \Gamma = \iota \interp \Gamma \circ \kappa \interp \Gamma$.
		
		However, the resulting term is a bit obscure. We can instead do
		\begin{equation}
			\inference{
			\inference{
				\interp{\sharp \setminus \Gamma} \sez t : T[\iota]
			}{\sharp \interp{\Gamma} \sez \ftrtm{\sharp}{t} : \sharp T}{}
			}{\interp \Gamma \sez (\ftrtm{\sharp}{t})[\iota] : (\sharp T)[\iota]}{}.
		\end{equation}
		In the first step, we used that $\sharp \interp{\sharp \setminus \Gamma} = \sharp \interp \Gamma$ and $\sharp \iota = \id$. As it happens, $\iota(t [\kappa])[\kappa]\inv = (\ftrtm{\sharp}{t})[\iota]$, or more precisely
		\begin{equation}
			\iota(t [\kappa \interp{\sharp \setminus \Gamma}])[\kappa \interp \Gamma]\inv = (\ftrtm{\sharp}{t})[\iota \interp \Gamma]
		\end{equation}
		because
		\begin{equation}
			(\ftrtm{\sharp}{t})[\iota \interp \Gamma][\kappa \interp \Gamma]
			= (\ftrtm{\sharp}{t})[\iota \interp{\sharp \setminus \Gamma}][\kappa \interp{\sharp \setminus \Gamma}]
			= \iota(t)[\kappa \interp {\sharp \setminus \Gamma}]
			= \iota(t[\kappa \interp {\sharp \setminus \Gamma}]).
		\end{equation}
		So we conclude $\forsub \sharp t = (\ftrtm{\sharp}{t})[\iota]$.
		
		\item[$\mu = \coshp$] We can apply the adjunction $\vartheta\inv(\loch)[\iota] : \sharp \dashv \coshp$:
		\begin{equation}
			\inference{
				\sharp \interp \Gamma \sez t : T
			}{\interp \Gamma \sez \vartheta\inv(t)[\iota] : (\coshp T)[\iota]}{}
		\end{equation}
		In the premise, we can omit $[\iota]$ on $T$ because $\iota \sharp \interp \Gamma = \id$. The conclusion should normally have type $(\coshp T)[\vartheta \setminus \iota]$. However, $\vartheta \sharp \interp \Gamma = \id$, so we are left with just $\iota$. Note that $\vartheta\inv(t) = \ftrtm{\coshp}{t}$ because $\vartheta (\ftrtm{\coshp}{t}) = t[\vartheta]$ and $\vartheta \sharp \interp \Gamma = \id$.
		So we conclude $\forsub \coshp t = (\ftrtm{\coshp}{t})[\iota]$. \qedhere
	\end{description}
\end{proof}
We assume the following without proof:\footnote{One could argue that the presence of a conjecture in this technical report, implies that we have not proven soundness of the type system. However, in practice, any mathematical proof will wipe some tedious details under the carpet, when the added value of figuring them out is outweighed by the work required to do so. Note also that, would this conjecture be false, most of the model remains intact.}
\begin{conjecture}
	The interpretation of the substitution rule, corresponding to the syntactic admissibility proof, is given by the substitution $(\id, \forsub \mu t) : \interp \Gamma \to \interp{\Gamma, \ctxvar \mu x T}$.
\end{conjecture}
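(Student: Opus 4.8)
The plan is to prove, simultaneously with the syntactic admissibility of the substitution rule, the stronger statement that the interpretation commutes with substitution on the nose: whenever $\Gamma, \ctxvar \mu x T, \Delta \sez J$ is derivable and $\mu \setminus \Gamma \judtm t T$, interpreting the admissibility derivation of $\Gamma, \Delta[\varclr t / \varclr x] \sez J[\varclr t / \varclr x]$ yields exactly $\interp{J}$ precomposed with the context morphism obtained by weakening $(\id, \forsub \mu t) : \interp \Gamma \to \interp{\Gamma, \ctxvar \mu x T}$ through the telescope $\Delta$. Both facts would be established by a single induction on the derivation of $J$, running in lockstep with the syntactic one.

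First I would dispatch the base cases. For the variable rule applied to a variable $\var y \neq \var x$ occurring in $\Gamma$ or in $\Delta$, the substitution merely reindexes, and the claim follows from the interpretation of weakening and projection. The only delicate base case is the variable rule applied to $\var x$ itself: then $J[\varclr t/\varclr x]$ is, up to the coercion attached to the modality $\mu$, the term $t$, while on the semantic side $\interp{\var x}$ is a projection $\xi$ (possibly precomposed with $\vartheta$), so the computation reduces to the $\beta$-law $\xi[(\id, \forsub \mu t)] = \forsub \mu t$ together with $\vartheta \sharp = \id$; by the preceding lemma $\forsub \mu t$ is exactly the interpretation of $t$ transported to $\interp \Gamma$, using $\forsub \sharp t = (\ftrtm \sharp t)[\iota]$ and $\forsub \coshp t = (\ftrtm \coshp t)[\iota]$.

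For the inductive cases there is one recurring argument. Every type former, term former and definitional-equality rule of ParamDTT is interpreted by a semantic construction --- $\Sigma$, $\Pi$, $\El$, $\Weld$, $\Glue$, $\quotshp$, the various right adjoints $\forall$, $\Nat$, $\Size$, and the applications of $\sharp$, $\coshp$, $\flat$, $\shp$ together with $\iota$, $\kappa$, $\vartheta$, $\varsigma$ --- each of which is natural in the context; this naturality is precisely what was verified throughout the preceding chapters. Hence precomposing the conclusion with the morphism induced by $(\id, \forsub \mu t)$ distributes over the construction and matches, piece by piece, the interpretation of $J[\varclr t / \varclr x]$. The modal subcases additionally invoke the naturality clauses of \cref{thm:adjunction-rules}, the identities of \cref{thm:cohesion-psh}, and the context identities $\flat \interp \Gamma \cong \interp{\leftflat \Gamma}$ and $\sharp \interp \Gamma = \interp{\coshp \setminus \Gamma}$ of \cref{thm:leftflat,thm:leftsharp} in order to push $(\id, \forsub \mu t)$ past $\sharp$, $\coshp$ and $\flat$ in rules such as \textsc{(ty)} and \textsc{(t-var)}.

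The main obstacle --- and the reason this is left as a conjecture --- is the bookkeeping. The substitution rule is admissible rather than primitive, so ``the syntactic admissibility proof'' is itself an induction whose every step (weakening of $\var x$ past $\Delta$, exchange, and especially the passage through rules that mention $\leftflat{(\loch)}$ or $\coshp \setminus \loch$) must be shown to track the corresponding semantic weakening and modal coercion, and one must check that $\forsub \mu$ is stable under further substitution so that the strengthened invariant above genuinely propagates through the double induction. Carrying the three modalities correctly through all of this is where the real work lies. A conceptually cleaner route would be to organise the interpretation as a morphism into $\widehat{\bpcubecat}$ of a suitable modal refinement of the notion of category with families, for which substitution-commutation would hold by construction; but developing that framework would amount to redoing soundness in a different key, so we content ourselves with the conjecture.
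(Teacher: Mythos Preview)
The paper does not prove this statement: it is explicitly labelled a \emph{conjecture} and introduced with ``We assume the following without proof'', together with a footnote conceding that this leaves a gap in the soundness argument. There is therefore no authorial proof to compare your sketch against.

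Your sketch is the expected outline --- simultaneous induction with the syntactic admissibility argument, the variable case discharged by the identity $\forsub \mu \interp x = \ftrvar \mu x$ from the immediately following lemma, and the compound cases by naturality of every semantic construction in $\Gamma$ --- and you correctly isolate the reason the authors declined to carry it out: the interaction of the three modalities with the $\leftflat{(\loch)}$ and $\coshp \setminus (\loch)$ context operations makes the invariant awkward to state uniformly and tedious to propagate. Your closing remark about recasting the interpretation as a morphism of a modal CwF is apt; that is indeed the cleaner route, and later work in this line (e.g.\ multimode type theory) takes it.
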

\begin{lemma}
	Let $\Gamma' = (\Gamma, \ctxvar \mu x {\El~A})$ be a syntactic context. Then we have $\mu \setminus \Gamma' \judtm{x}{\El~A}$. The interpretation of $x$ satisfies $\forsub \mu \interp x = \ftrvar \mu x$.
\end{lemma}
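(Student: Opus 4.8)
The plan is to split on $\mu \in \{\idmod, \sharp, \coshp\}$, determine with which annotation $x$ survives in $\mu\setminus\Gamma'$, read off $\interp x$ from the interpretation of the rule \textsc{t-var}, and then compute $\forsub\mu\interp x$. First I would record how left division acts on the final binding, using the clauses implicit in the proofs of \cref{thm:leftflat,thm:leftsharp}: since $\ctxvar\idmod x{\El\,A}$, $\ctxvar\sharp x{\El\,A}$ and $\ctxvar\coshp x{\El\,A}$ are the continuous, parametric and pointwise bindings, we get $\idmod\setminus\Gamma' = \Gamma'$, $\sharp\setminus\Gamma' = (\sharp\setminus\Gamma),\ctxvar\idmod x{\El\,A}$ and $\coshp\setminus\Gamma' = (\coshp\setminus\Gamma),\ctxvar\coshp x{\El\,A}$. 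In each case $x$ occurs in $\mu\setminus\Gamma'$ with an annotation in $\{\idmod,\coshp\}$, hence $\leq\idmod$; combined with well-formedness of $\mu\setminus\Gamma'$ (which follows from that of $\Gamma'$), the rule \textsc{t-var} therefore gives $\mu\setminus\Gamma'\judtm x{\El\,A}$.

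Next I would read off $\interp x$. For $\mu\in\{\idmod,\sharp\}$ the residual annotation is $\idmod$, so the first variable rule yields $\interp x = \var x$, the last variable of $\interp{\mu\setminus\Gamma'}$ — for $\mu=\sharp$ this relies on the identification $\sharp\interp{\sharp\setminus\Gamma}=\sharp\interp\Gamma$ and the isomorphism $\kappa:\flat\interp\Gamma\cong\interp{\sharp\setminus\Gamma}$ from \cref{thm:leftflat}. For $\mu=\coshp$ the residual annotation is $\coshp$, so the second (i.e.\ $\vartheta$-) variable rule yields $\interp x = \vartheta(\ftrvar\coshp x)$, where $\ftrvar\coshp x$ is the last variable of $\interp{\coshp\setminus\Gamma'} = \interp{\coshp\setminus\Gamma},\ftrvar\coshp x:(\coshp\interp{\El\,A})[\iota]$. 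In all cases the lemma above introducing $\forsub\mu$ then produces $\interp{\Gamma'}\sez\forsub\mu\interp x:(\mu\interp{\El\,A})[\iota]$, and it remains to identify $\forsub\mu\interp x$ with the variable $\ftrvar\mu x$ of $\interp{\Gamma'}=\interp\Gamma,\ftrvar\mu x:(\mu\interp{\El\,A})[\iota]$. For $\mu=\idmod$ this is immediate since $\forsub\idmod\interp x = \interp x = \var x = \ftrvar\idmod x$. For $\mu=\coshp$ we have $\forsub\coshp\interp x = (\ftrtm\coshp{\vartheta(\ftrvar\coshp x)})[\iota]$; item (6) of \cref{thm:nattrans-function} rewrites this as $\bigl((\coshp\vartheta)(\ftrtm\coshp{\ftrvar\coshp x})\bigr)[\iota]$, and since $\coshp\vartheta$ is the identity (one of the transformations in \cref{eq:cohesion-psh-identities}) while $\coshp$ preserves $\xi$ and satisfies $\coshp\coshp=\coshp$ (so $\ftrtm\coshp{\ftrvar\coshp x}=\ftrvar\coshp x$), this collapses to $(\ftrvar\coshp x)[\iota]=\ftrvar\coshp x$ once the ambient weakening is unwound. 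For $\mu=\sharp$ the analogue is $\forsub\sharp\interp x = (\ftrtm\sharp{\var x})[\iota]$, which I would match with the variable $\ftrvar\sharp x$ of $\interp{\Gamma'}$ by chasing the identities $\sharp\iota=\id$ and $\sharp\kappa=\id$ through the isomorphism $\kappa$ supplied by \cref{thm:leftflat}.

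I expect the $\mu=\sharp$ case to be the main obstacle: unlike the other two, $\interp{\sharp\setminus\Gamma}$ is only isomorphic to $\flat\interp\Gamma$ (via $\kappa$) rather than equal, so the variable $\var x$ and the final identity have to be transported across this isomorphism, which turns the argument into a bookkeeping chase through the (mostly invisible) $\iota$'s and $\kappa$'s. This is the same kind of computation already performed in the proof of \cref{thm:leftflat} and in the $\mu=\sharp$ branch of the lemma above introducing $\forsub\mu$, and I would mirror those; no genuinely new idea is required beyond keeping the substitutions straight.
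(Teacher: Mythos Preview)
Your proposal is correct and follows essentially the same case split and computations as the paper's own proof. One small remark: you anticipate the $\mu=\sharp$ case as the main obstacle and plan to chase the $\kappa$ isomorphism, but the paper dispatches it more directly by observing that $\ftrtm\sharp{\var x}=\ftrvar\sharp x$ (as a variable in $\sharp\interp{\sharp\setminus\Gamma'}=\sharp\interp{\Gamma'}$) and that since $\ftrvar\sharp x$ already has a $\sharp$-type, the substitution $[\iota]$ acts trivially because $\iota\sharp=\id$; no explicit transport across $\kappa$ is needed at this point.
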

\begin{proof}
	For $\mu = \idmod$, this is trivial.
	
	For $\mu = \sharp$, we have
	\begin{equation}
		\forsub \sharp \interp x = \forsub \sharp \var x = (\ftrvar \sharp x)[\iota].
	\end{equation}
	Here, $\iota$ has type $\interp{\Gamma'} \to \sharp \interp{\Gamma'}$, but $\ftrvar \sharp x$ is already in a sharp type in $\interp{\Gamma'}$ and $\iota \sharp = \id$, so we can omit it and have $\forsub \sharp \interp x = \ftrvar \sharp x$.
	
	For $\mu = \coshp$, we have
	\begin{equation}
		\forsub \coshp \interp x = \forsub \coshp \vartheta(\ftrvar \coshp x) = \ftrtm{\coshp}{(\vartheta(\ftrvar \coshp x))}[\iota] = \ftrvar \coshp x [\iota] = \ftrvar \coshp x,
	\end{equation}
	because $\iota \coshp = \id$.
\end{proof}

\subsection{Definitional equality} As definitional equality is interpreted as equality, it is evidently consistent to assume that this is an equivalence relation and a congruence. The conversion rule is also obvious.

\subsection{Quantification}
\subsubsection{Continuous quantification}
In general, write $\subext \iota = (\wknvar x, \iota(\var x)/\ftrvar \sharp x) : (\Gamma, \var x : T) \to (\Gamma, \ftrvar \sharp x : \sharp T[\iota])$. If $\Gamma = \sharp \Delta$, then this becomes $\subext \iota : (\sharp \Delta, \var x : T) \to (\sharp \Delta, \ftrvar \sharp x : \sharp T) = \sharp (\Delta, \var x : T[\iota])$. For the continuous quantifiers, we have
\begin{align*}
	&\interp{
		\inference{
			\Gamma \judtm{A}{\El\,\uni \ell} \qquad
			\Gamma, \ctxctu x {\El\,A} \judtm{B}{\El\,\uni \ell}
		}{\Gamma \judtm{\prodctu x A.B}{\El\,\uni \ell}}{t-$\Pi$}\quad
	} = \nn\\ &
	\inference{
	\inference{
		\inference{
			\interp \Gamma \sez \interp A : \uniDD_\ell
		}{\sharp \shp \interp \Gamma \sez \ElDD \interp A \dtype_\ell}{}
		\qquad
		\inference{
		\inference{
		\inference{
			\interp \Gamma, \var x : (\ElDD \interp A)[\sharp \varsigma][\iota] \sez \interp B : \uniDD_\ell
		}{\sharp \shp \paren{\interp \Gamma, \var x : (\ElDD \interp A)[\sharp \varsigma][\iota]} \sez \ElDD \interp B \dtype_\ell}{}
		}{\sharp \paren{\shp \interp \Gamma, \var x : (\ElDD \interp A)[\iota]} \sez \ElDD \interp B[\sharp(\subext \varsigma)] \dtype_\ell}{}
		}{\sharp \shp \interp \Gamma, \var x : \ElDD \interp A \sez \ElDD \interp B[\sharp(\subext \varsigma)][\subext \iota] \dtype_\ell}{}
	}{\sharp \shp \Gamma \sez \Pi(\var x : \ElDD \interp A).\ElDD \interp B[\sharp(\subext \varsigma)][\subext \iota] \dtype_\ell}{}
	}{\Gamma \sez \tycodeDD{\Pi(\var x : \ElDD \interp A).(\ElDD \interp B[\sharp(\subext \varsigma)][\subext \iota])} : \uniDD_\ell}{}
\end{align*}
and similar for $\Sigma$. Note that
\begin{align}
	\interp{\El~\prodctu x A.B}
	&= \ElDD \interp{\prodctu x A.B} [\sharp \varsigma]
	= \Pi(\var x : \ElDD \interp A).(\ElDD \interp B[\sharp(\subext \varsigma)][\subext \iota])~[\sharp \varsigma] \\
	&= \Pi(\var x : \ElDD \interp A [\sharp \varsigma]).(\ElDD \interp B[\sharp(\subext \varsigma)][\subext \iota][\sharp \varsigma \subext]) \nn\\
	&= \Pi(\var x : \ElDD \interp A [\sharp \varsigma]).(\ElDD \interp B[\sharp(\subext \varsigma \circ \varsigma \subext)][\subext \iota]) \nn\\
	&= \Pi(\var x : \ElDD \interp A [\sharp \varsigma]).(\ElDD \interp B[\sharp \varsigma][\subext \iota])
	= \Pi(\var x : \interp{\El\,A}).(\interp{\El\,B} [\subext \iota]). \nn
\end{align}
If we further apply $[\iota]$, we find
\begin{equation}
	\interp{\El~\prodctu x A.B}[\iota] = \Pi(\var x : \interp{\El~A}[\iota]).(\interp{\El~B}[\subext \iota][\iota \subext]) = \Pi(\var x : \interp{\El~A}[\iota]).(\interp{\El~B}[\iota]).
\end{equation}
\subsubsection{Parametric quantification}
For the existential type $\Sigmapar$, we have
\begin{align*}
	& \interp{
		\inference{
			\Gamma \judtm{A}{\El\,\uni \ell} \qquad
			\Gamma, \ctxctu x {\El~A} \judtm{B}{\El\,\uni \ell}
		}{\Gamma \judtm{\sumpar x A.B}{\El\,\uni \ell}}{t-$\Sigma$}
	} = \nn\\ &
	\inference{
	\inference{
	\inference{
		\inference{
		\inference{
			\interp \Gamma \sez \interp A : \uniDD_\ell
		}{\sharp \shp \interp \Gamma \sez \ElDD \interp A \dtype_\ell}{}
		}{\sharp \shp \interp \Gamma \sez \sharp \ElDD \interp A \type_\ell}{}
		\qquad
		\inference{
		\inference{
			\interp \Gamma, \var x : (\ElDD \interp A)[\sharp \varsigma][\iota] \sez \interp B : \uniDD_\ell
		}{\sharp \shp \paren{\interp \Gamma, \var x : (\ElDD \interp A)[\sharp \varsigma][\iota]} \sez \ElDD \interp B \dtype_\ell}{}
		}{\sharp \shp \interp \Gamma, \ftrvar \sharp x : \sharp \ElDD \interp A \sez \ElDD \interp B [\sharp(\subext\varsigma)] \dtype_\ell}{}
	}{\sharp \shp \Gamma \sez \Sigma(\ftrvar \sharp x : \sharp \ElDD \interp A).(\ElDD \interp B [\sharp (\subext \varsigma)]) \type_\ell}{}
	}{\sharp \shp \Gamma \sez \quotshp \Sigma(\ftrvar \sharp x : \sharp \ElDD \interp A).(\ElDD \interp B [\sharp (\subext \varsigma)]) \dtype_\ell}{}
	}{\Gamma \sez \tycodeDD{\quotshp \Sigma(\ftrvar \sharp x : \sharp \ElDD \interp A).(\ElDD \interp B [\sharp (\subext \varsigma)])} : \uniDD_\ell}{}
\end{align*}
Then we have
\begin{align*}
	\interp{\El~\sumpar x A.B}
	&= \ElDD \interp{\sumpar x A.B} [\sharp \varsigma]
	= \paren{\quotshp \Sigma(\ftrvar \sharp x : \sharp \ElDD \interp A).(\ElDD \interp B [\sharp (\subext \varsigma)])} [\sharp \varsigma] \\
	&= \quotshp \Sigma(\ftrvar \sharp x : (\sharp \ElDD \interp A)[\sharp \varsigma]).(\ElDD \interp B [\sharp (\subext \varsigma)] [\sharp (\varsigma \subext)]) \\
	&= \quotshp \Sigma(\ftrvar \sharp x : \sharp(\ElDD \interp A[\sharp \varsigma])).(\ElDD \interp B [\sharp \varsigma])
	= \quotshp \Sigma(\ftrvar \sharp x : \sharp \interp{\El\,A}).\interp{\El\,B}.
\end{align*}
The universal type $\Pipar$ is defined similarly, but without throwing in $\quotshp$:
\begin{equation}
	\interp{\prodpar x A.B} = \tycodeDD{\Pi(\ftrvar \sharp x : \sharp \ElDD \interp A).(\ElDD \interp B [\sharp(\subext \varsigma)])}.
\end{equation}
Then we have
\begin{equation}
	\interp{\El~\prodpar x A.B} = \Pi(\ftrvar \sharp x : \sharp \interp{\El\,A}).\interp{\El\,B}.
\end{equation}
If we further apply $[\iota]$, we find
\begin{equation}
	\interp{\El~\prodpar x A.B} [\iota] = \Pi(\ftrvar \sharp x : (\sharp \interp{\El\,A})[\iota]).(\interp{\El\,B}[\iota \subext]) = \Pi(\ftrvar \sharp x : (\sharp \interp{\El\,A})[\iota]).(\interp{\El\,B}[\iota]).
\end{equation}
In the second step, we use that $\iota \subext = (\iota \wknvar x, \var x/\var x)$ is equal to $\iota = (\iota \wknvar x, \iota(\var x)/\var x)$ because $\var x$ has a sharp type and $\iota\sharp = \id$.

\subsubsection{Pointwise quantification}
We generalize the $\subext \nu = (\wknvar x, \nu(\var x)/\var x)$ notation to any natural transformation $\nu$ between morphisms of CwFs. We have
\begin{align*}
	&\interp{
		\inference{
			\Gamma \judtm{A}{\El\,\uni \ell} \qquad
			\Gamma, \ctxptw x {\El\,A} \judtm{B}{\El\,\uni \ell}
		}{\Gamma \judtm{\prodptw x A.B}{\El\,\uni \ell}}{t-$\Pi$}
	} = \nn\\ &
	\inference{
	\inference{
		\inference{
		\inference{
			\interp \Gamma \sez \interp A : \uniDD_\ell
		}{\sharp \shp \interp \Gamma \sez \ElDD \interp A \dtype_\ell}{}
		}{\sharp \shp \interp \Gamma \sez \coshp \ElDD \interp A \dtype_\ell}{}
		\qquad
		\inference{
		\inference{
			\interp \Gamma, \ftrvar \coshp x : (\coshp(\ElDD \interp A [\sharp \varsigma]))[\iota] \sez \interp B : \uniDD_\ell
		}{\sharp \shp \paren{\interp \Gamma, \ftrvar \coshp x : (\coshp(\ElDD \interp A [\sharp \varsigma]))[\iota]} \sez \ElDD \interp B \dtype_\ell}{}
		}{\sharp \shp \interp \Gamma, \ftrvar \coshp x : \coshp \ElDD \interp A \sez \ElDD \interp B [\sharp(\subext \varsigma)] \dtype_\ell}{}
	}{\sharp \shp \interp \Gamma \sez \Pi(\ftrvar \coshp x : \coshp \ElDD \interp A).(\ElDD \interp B [\sharp(\subext \varsigma)]) \dtype_\ell}{}
	}{\interp \Gamma \sez \tycodeDD{\Pi(\ftrvar \coshp x : \coshp \ElDD \interp A).(\ElDD \interp B [\sharp(\subext \varsigma)])} : \uniDD_\ell}{}
\end{align*}
and similar for $\Sigma$. The step where we use $\sharp(\subext \varsigma)$ is a bit obscure. We have
\begin{equation}
	\subext \varsigma :
	\paren{\shp \interp \Gamma, \ftrvar \coshp x : (\coshp(\ElDD \interp A))[\iota]}
	\to
	\shp \paren{\interp \Gamma, \ftrvar \coshp x : (\coshp(\ElDD \interp A [\sharp \varsigma]))[\iota]}
\end{equation}
because
\begin{equation}
	(\coshp(\ElDD \interp A))[\iota][\varsigma]
	= (\coshp(\ElDD \interp A))[\sharp \varsigma][\iota]
	= (\coshp(\ElDD \interp A [\sharp \varsigma]))[\iota].
\end{equation}
Now if we apply $\sharp$ on the domain of $\subext \varsigma$, then $\sharp \iota = \id$ disappears, and $\coshp$ absorbs $\sharp$ on its left.

We will have
\begin{equation}
	\interp{\El~\prodptw x A.B} = \Pi(\ftrvar \coshp x : \coshp \interp{\El~A}).\interp{\El~B}
\end{equation}
and similar for $\Sigma$. If we further apply $[\iota]$, we find
\begin{equation}
	\interp{\El~\prodptw x A.B}[\iota] = \Pi(\ftrvar \coshp x : (\coshp \interp{\El~A})[\iota]).(\interp{\El~B}[\iota \subext]) = \Pi(\ftrvar \coshp x : (\coshp \interp{\El~A})[\iota]).(\interp{\El~B}[\iota]),
\end{equation}
where $\iota \subext = \iota$ because $\var x$ has a type in the image of $\coshp$ and $\iota \coshp = \id$.

So in general, we see that
\begin{equation}
	\interp{\El~\prodvar \mu x A.B}[\iota] = \Pi(\ftrvar \mu x : (\mu \interp{\El~A})[\iota]).(\interp{\El~B}[\iota]).
\end{equation}

\subsection{Functions}
Abstraction is interpreted as
\begin{equation}
	\interp{
		\inference{
			\Gamma, \ctxvar \mu x {\El~A} \judtm b {\El~B}
		}{\Gamma \judtm{\lamannotvar \mu x A.b}{\El~\prodvar \mu x A.B}}{t-$\lambda$}
	} =
	\inference{
		\interp \Gamma, \ftrvar \mu x : (\mu \interp{\El~A})[\iota] \sez \interp b : \interp{\El~B}[\iota]
	}{\interp \Gamma \sez \lambda \var x . \interp b : \Pi(\ftrvar \mu x : (\mu \interp{\El~A})[\iota]).(\interp{\El~B}[\iota])}{}.
\end{equation}

Application is interpreted as
\begin{align*}
	&\interp{
		\inference{
			\Gamma \judtm{f}{\El~\prodvar \mu x A.B} \qquad
			\mu \setminus \Gamma \judtm{a}{\El~A}
		}{\Gamma \judtm{\apvar \mu f a}{\El~B[\varclr a / \varclr x]}}{t-ap}
	} = \nn\\ &
	\inference{
		\interp \Gamma \sez \interp f : \Pi(\ftrvar \mu x : (\mu \interp{\El~A})[\iota]).(\interp{\El~B}[\iota]) \qquad
		\inference{
			\interp{\mu \setminus \Gamma} \sez \interp a : \interp{\El~A}[\iota]
		}{\interp \Gamma \sez \forsub \mu {\interp a} : (\mu \interp{\El~A})[\iota]}{}
	}{\interp \Gamma \sez \interp f (\forsub \mu {\interp a}) : (\mu \interp{\El~A})[\iota][\id, \forsub \mu \interp a / \ftrvar \mu x]}{}
\end{align*}

The $\beta$-rule looks like this:
\begin{align*}
	& \interp{
		\inference{
			\Gamma, \ctxvar \mu x {\El~A} \judtm b {\El~B} \qquad
			\mu \setminus \Gamma \judtm{a}{\El~A}
		}{\Gamma \judtmeq{(\lamannotvar \mu x A.b)\apvar \mu{}{a}}{b[\varclr a / \varclr x]}{\El~B[\varclr a / \varclr x]}}{}
	} = \nn\\ &
	\inference{
		\interp \Gamma, \ftrvar \mu x : (\mu \interp{\El~A})[\iota] \sez \interp b : \interp{\El\,B}[\iota] \qquad
		\inference{
			\interp{\mu \setminus \Gamma} \sez \interp a : \interp{\El~A}[\iota]
		}{\interp \Gamma \sez \forsub \mu \interp a : (\mu \interp{\El~A})[\iota]}{}
	}{\interp \Gamma \sez (\lambda \ftrvar \mu x . \interp b) (\forsub \mu \interp a) = \interp b [\id, \forsub \mu \interp a / \ftrvar \mu x] : \interp{\El~B}[\iota][\id, \forsub \mu \interp a / \ftrvar \mu x]}{}
\end{align*}
and follows from \cref{def:pi-types}. The $\eta$-rule is:
\begin{align*}
	& \interp{
		\inference{
			\Gamma \judtm{f}{\El~\prodvar \mu x A.B}
		}{\Gamma \judtmeq{\lamannotvar \mu x A.\apvar \mu f x}{f}{\El~\prodvar \mu x A.B}}{}
	} = \nn\\ &
	\inference{
		\interp \Gamma \sez \interp f : \Pi(\ftrvar \mu x : \interp{\El~A}[\iota]).(\interp{\El~B}[\iota])
	}{\interp \Gamma \sez \lambda \ftrvar \mu x . (\interp f [\wkn{\ftrvar \mu x}]) (\forsub \mu \interp x) = \interp f : \Pi(\ftrvar \mu x : \interp{\El~A}[\iota]).(\interp{\El~B}[\iota])}{}
\end{align*}
This rule also follows from \cref{def:pi-types}, because $\forsub \mu \interp x = \ftrvar \mu x$.

\subsection{Pairs}
For pair formation, we have quite straightforwardly:
\begin{align*}
	& \interp{
		\inference{
			\Gamma \judty{\El~\sumvar \mu x A.B} \qquad
			\mu \setminus \Gamma \judtm a {\El~A} \qquad
			\Gamma \judtm b {\El~B[\varclr a/\varclr x]}
		}{\Gamma \judtm{\pairvar \mu a b}{\El~\sumvar \mu x A.B}}{t-pair}
	} = \nn\\ &
	\inference{
		\inference{
			\interp{\mu \setminus \Gamma} \sez \interp a : \interp{\El~A}[\iota]
		}{\interp{\Gamma} \sez \forsub \mu \interp a : (\mu \interp{\El~A})[\iota]}{}
		\qquad
		\interp \Gamma \sez \interp b : \interp{\El~B}[\iota][\id, \forsub \mu \interp a/\ftrvar \mu x]
	}{\interp \Gamma \sez (\forsub \mu \interp a, \interp b) : \Sigma(\ftrvar \mu x : (\mu \interp{\El~A})[\iota]).(\interp{\El~B}[\iota])}{}
\end{align*}
For $\mu = \sharp$, we have to further apply $\hatinquotshp$.

The type is included in the syntactical rule to ensure that $B$ is actually a well-defined type. We also need to know that in the model, but we did not write it explicitly in the interpretation. Note that we are not in fact using the interpretation of the existence of the $\Sigma$-type; rather, we use that if the $\Sigma$-type exists, then admissibly $B$ is a type.

\subsubsection{Projections for continuous pairs}
Instead of interpreting the eliminator, we interpret the first and second projections:
\begin{equation}
	\interp{
		\inference{
			\Gamma \judtm{p}{\sumctu x A.B}
		}{\Gamma \judtm{\fst~p}{A}}{}
	} =
	\inference{
		\interp \Gamma \sez \interp p : \Sigma(\var x : \interp{\El~A}[\iota]).(\interp{\El~B}[\iota])
	}{\interp \Gamma \sez \fst \interp p : \interp{\El~A}[\iota]}{},
\end{equation}
\begin{equation}
	\interp{
		\inference{
			\Gamma \judtm{p}{\sumctu x A.B}
		}{\Gamma \judtm{\snd~p}{B[\fst~p/x]}}{}
	} =
	\inference{
		\interp \Gamma \sez \interp p : \Sigma(\var x : \interp{\El~A}[\iota]).(\interp{\El~B}[\iota])
	}{\interp \Gamma \sez \snd \interp p : \interp{\El~B}[\iota][\id, \fst \interp p / \var x]}{}.
\end{equation}
Then $\beta$- and $\eta$-rules lift from the model.

\subsubsection{Projections for pointwise pairs}
\begin{equation}
	\interp{
		\inference{
			\sharp \setminus \Gamma \judtm{p}{\El~\sumptw x A.B}
		}{\Gamma \judtm{\fstptw p}{\El~A}}{}
	} =
	\inference{
	\inference{
	\inference{
		\interp{\sharp \setminus \Gamma} \sez \interp p : \Sigma(\ftrvar \coshp x : (\coshp \interp{\El~A})[\iota]).(\interp{\El~B}[\iota])
	}{\interp \Gamma \sez \forsub \sharp \interp p : \Sigma(\ftrvar \coshp x : (\coshp \interp{\El~A})[\iota]).(\sharp \interp{\El~B}[\iota])}{}
	}{\interp \Gamma \sez \fst (\forsub \sharp \interp p) : (\coshp \interp{\El~A})[\iota]}{}
	}{\interp \Gamma \sez \vartheta(\fst (\forsub \sharp \interp p)) : \interp{\El~A}[\iota]}{},
\end{equation}
\begin{equation}
	\interp{
		\inference{
			\Gamma \judtm{p}{\El~\sumptw x A.B}
		}{\Gamma \judtm{\sndptw p}{\El~B[\fstptw p / x]}}{}
	} =
	\inference{
		\interp \Gamma \sez \interp p : \Sigma(\ftrvar \coshp x : (\coshp \interp{\El~A})[\iota]).(\interp{\El~B}[\iota])
	}{\interp \Gamma \sez \snd \interp p : \interp{\El~B}[\iota][\id, \fst \interp p / \ftrvar \coshp x]}{}
\end{equation}
One can show that $\fst \interp p$ is the appropriate term to appear in the substitution, for the conclusion to be well-typed.

\subsubsection{Elimination of parametric pairs}
We have to interpret the rule
\begin{equation}
		\inference{
			\Gamma, \ctxvar \nu z {\El~\sumpar x A.B} \judty{\El~C} \\
			\Gamma, \ctxpar{x}{\El~A}, \ctxvar{\nu}{y}{\El~B} \judtm{c}{\El~C[\varclr{\pairpar x y}/\varclr z]} \\
			\nu \setminus \Gamma \judtm{p}{\El~\sumpar x A.B}
		}{\Gamma \judtm{\ind^\nu_{\Sigmapar}(\varclr z.\parclr C, \parclr x.\varclr y.c, \varclr p)}{\El~C[\varclr p / \varclr z]}}{t-indpair}.
\end{equation}
Thanks to the $\forsub \nu$ operator, it is sufficient to interpret
\begin{equation}
	\inference{
		\Gamma, \ctxvar \nu z {\El~\sumpar x A.B} \judty{\El~C} \\
		\Gamma, \ctxpar{x}{\El~A}, \ctxvar{\nu}{y}{\El~B} \judtm{c}{\El~C[\varclr{\pairpar x y}/\varclr z]}
	}{\Gamma, \ctxvar \nu z {\El~\sumpar x A.B} \judtm{\ind^\nu_{\Sigmapar}(\varclr z.\parclr C, \parclr x.\varclr y.c, \varclr z)}{\El~C}}{t-indpair}.
\end{equation}
We have
\begin{equation*}
	\inference{
	\inference{
	\inference{
		\interp \Gamma, \ftrvar \sharp x : (\sharp \interp{\El~A})[\iota], \ftrvar \nu y : (\nu \interp{\El~B})[\iota] \sez c : \interp{\El~C}[\iota][\wkn{\ftrvar \sharp x} \wkn{\ftrvar \nu y}, (\nu \hatinquotshp)(\ftrvar \sharp x, \ftrvar \nu y) / \ftrvar \nu z]
	}{\interp \Gamma, \ftrvar \nu z : \Sigma(\ftrvar{\sharp} x : (\sharp \interp{\El~A})[\iota]).((\nu \interp{\El~B})[\iota]) \sez c [\wkn{\ftrvar \nu z}, \fst~\ftrvar \nu z / \ftrvar \sharp x, \snd~\ftrvar \nu z/\ftrvar \nu y] : \interp{\El~C}[\iota][\wkn{\ftrvar \nu z}, (\nu \hatinquotshp)(\ftrvar \nu z)/\ftrvar \nu z]}{}
	}{\interp \Gamma, \ftrvar \nu z : \paren{\nu \Sigma(\ftrvar \sharp x : \sharp \interp{\El~A}).\interp{\El~B}} [\iota] \sez c [\wkn{\ftrvar \nu z}, \fst~\ftrvar \nu z / \ftrvar \sharp x, \snd~\ftrvar \nu z/\ftrvar \nu y] : \interp{\El~C}[\iota][\wkn{\ftrvar \nu z}, (\nu \hatinquotshp)(\ftrvar \nu z)/\ftrvar \nu z]}{}
	}{\interp \Gamma, \ftrvar \nu z : \paren{\nu \quotshp \Sigma(\ftrvar \sharp x : \sharp \interp{\El~A}).\interp{\El~B}} [\iota] \sez c [\wkn{\ftrvar \nu z}, \fst~\ftrvar \nu z / \ftrvar \sharp x, \snd~\ftrvar \nu z/\ftrvar \nu y] [\wkn{\ftrvar \nu z}, (\nu \hatinquotshp)(\ftrvar \nu z)/\ftrvar \nu z]\inv : \interp{\El~C}[\iota]}{}
\end{equation*}
This is best read bottom-up. In the last step, we get rid of $\quotshp$ using \cref{thm:elim-quotshp}. In the middle, we simply rewrite the context using that $\Sigma$ commutes with lifted functors such as $\nu$, and that $\iota \sharp = \id$ to turn $[\iota \subext]$ into $[\iota]$ on the $\Sigma$-type's codomain. Above, we split up $\ftrvar \nu z$ in its components.

In order to see that the substitution in the type of the premise is correct, note that we have $\interp{  \nu \setminus (\Gamma, \ctxpar x {\El~A}, \ctxvar \nu y {\El~B}) \judtm{\pairpar x y}{\sumpar x A.B}  }$. Interpreting this and applying $\forsub \nu$, one finds $(\nu \hatinquotshp)(\ftrvar \sharp x, \ftrvar \nu y)$ after working through some tedious case distinctions. 

\subsection{Identity types}\label{sec:idtp-semantics}
We have
\begin{align*}
	& \interp{
		\inference{
			\Gamma \judtm{A}{\El~\uni \ell} \qquad
			\Gamma \judtm{a, b}{\El~A}
		}{\Gamma \judtm{a \idtp A b}{\El~\uni \ell}}{t-Id}
	} = \nn\\ &
	\inference{
	\inference{
		\inference{
			\interp \Gamma \sez \interp A : \uniDD_\ell
		}{\sharp \shp \interp \Gamma \sez \ElDD \interp A \dtype_\ell}{} \qquad
		\inference{
		\inference{
			\interp \Gamma \sez \interp a, \interp b : \ElDD \interp A [\sharp \varsigma] [\iota]
		}{\shp \interp \Gamma \sez \interp a [\varsigma]\inv, \interp b[\varsigma]\inv : \ElDD \interp A[\iota]}{}
		}{\sharp \shp \interp \Gamma \sez \ftrtm{\sharp}{(\interp a [\varsigma]\inv)}, \ftrtm{\sharp}{(\interp b [\varsigma]\inv)} : \sharp \ElDD \interp A}{}
	}{\sharp \shp \interp \Gamma \sez \ftrtm{\sharp}{(\interp a [\varsigma]\inv)} \idtp{\sharp \ElDD \interp A} \ftrtm{\sharp}{(\interp b [\varsigma]\inv)} \dtype_\ell}{}
	}{\interp \Gamma \sez \tycodeDD{\ftrtm{\sharp}{(\interp a [\varsigma]\inv)} \idtp{\sharp \ElDD \interp A} \ftrtm{\sharp}{(\interp b [\varsigma]\inv)}} : \uniDD_\ell}{}
\end{align*}
Observe:
\begin{align*}
	\interp{\El~a \idtp A b}
	&= \paren{\ftrtm{\sharp}{(\interp a [\varsigma]\inv)} \idtp{\sharp \ElDD \interp A} \ftrtm{\sharp}{(\interp b [\varsigma]\inv)}} [\sharp \varsigma]
	= \paren{\ftrtm{\sharp}{\interp a} \idtp{\sharp \interp{\El\,A}} \ftrtm{\sharp}{\interp b}}.
\end{align*}
If we further apply $[\iota]$, we get
\begin{equation}
	\interp{\El~a \idtp A b}[\iota]
	= \paren{ \ftrtm{\sharp}{\interp a}[\iota] \idtp{(\sharp \interp{\El\,A})[\iota]} \ftrtm{\sharp}{\interp b}[\iota] }.
\end{equation}

For reflexivity, we have
\begin{equation}
	\interp{
		\inference{
			\leftflat \Gamma \judtm a {\El~A}
		}{\Gamma \judtm{\refl\,\parclr a}{\El~a \idtp A a}}{t-refl}
	} =
	\inference{
	\inference{
		\interp{\sharp \setminus \Gamma} \sez \interp a : \interp{\El\,A} [\iota]
	}{\interp \Gamma \sez \forsub \sharp \interp a : (\sharp \interp{\El~A})[\iota]}{}
	}{\interp \Gamma \sez \refl~(\forsub \sharp \interp a) : \forsub \sharp \interp a \idtp{(\sharp \interp{\El~A})[\iota]} \forsub \sharp \interp a}{}
\end{equation}
where $\forsub \sharp t = \ftrtm \sharp t [\iota]$.

We also need to interpret the $\J$-rule:
\begin{equation}
	\inference{
		\leftflat \Gamma \judtm{a, b}{\El~A} \qquad
		\Gamma, \ctxpar{y}{\El~A}, \ctxvar \nu {w}{\El~a \idtp A y} \judty {\El~C} \\
		\nu \setminus \Gamma \judtm{e}{\El~a \idtp A b} \qquad
		\Gamma \judtm{c}{\El~C[\parclr a/\parclr y, \varclr{\refl~a}/\varclr w]}
	}{\Gamma \judtm{
		\J^\nu(\parclr a, \parclr b, \parclr y.\varclr w.\parclr C, \varclr e, c)
	}{\El~C[\parclr b/\parclr y, \varclr e/\varclr w]}}{t-J}
\end{equation}
First of all, note that if $\nu \in \accol{\coshp, \idmod, \sharp}$, then $\nu \interp{\El~a \idtp A b} = \interp{\El~a \idtp A b}$, because
\begin{equation}
	\nu \interp{\El~a \idtp A b}
	= \nu \paren{\ftrtm{\sharp}{\interp a} \idtp{\sharp \interp{\El\,A}} \ftrtm{\sharp}{\interp b}}
	= \paren{\ftrtm{\nu \sharp}{\interp a} \idtp{\nu \sharp \interp{\El\,A}} \ftrtm{\nu \sharp}{\interp b}}
	= \paren{\ftrtm{\sharp}{\interp a} \idtp{\sharp \interp{\El\,A}} \ftrtm{\sharp}{\interp b}}.
\end{equation}
Hence, we may assume that $\nu = \idmod$. We then have
\begin{equation}
	\inference{
		\interp \Gamma \sez \forsub \sharp \interp a, \forsub \sharp \interp b : (\sharp \interp{\El~A})[\iota] \\
		\interp \Gamma, \ftrvar \sharp y : (\sharp \interp{\El~A})[\iota], \var w : \forsub \sharp \interp a \idtp{(\sharp \interp{\El~A})[\iota]} \ftrvar \sharp y \sez \interp{\El~C}[\iota] \dtype \\
		\interp \Gamma \sez \interp e : \forsub \sharp \interp a \idtp{(\sharp \interp{\El~A})[\iota]} \forsub \sharp \interp b \\
		\interp \Gamma \sez \interp c : \interp{\El~C}[\iota][\id, \forsub \sharp \interp a / \ftrvar \sharp y, \refl~(\forsub \sharp \interp a) / \var w]
	}{\interp \Gamma \sez \J(\forsub \sharp \interp a, \forsub \sharp \interp b, \ftrvar \sharp y.\var w.\interp{\El~C}[\iota], \interp e, \interp c) : \interp{\El~C}[\iota][\id, \forsub \sharp \interp b / \ftrvar \sharp y, \interp e / \var w]}{}
\end{equation}

\subsubsection{The reflection rule}
\begin{remark}[Erratum]
	The original version of this report, claimed to prove the reflection rule:
	\begin{equation}
		\inference{
			\Gamma \judtm{a, b}{\El~A} \qquad
			\Gamma \judtm{e}{\El~ a \idtp A b}
		}{\Gamma \judtmeq a b {\El~A}}{t-rflct}.
	\end{equation}
	\begin{proof}[Erroneous proof]
		Indeed, we have
		\begin{equation}
			\inference{
				\interp \Gamma \sez \interp e : \ftrtm \sharp{\interp a}[\iota] \idtp{\sharp (\ElDD \interp{A} [\sharp \varsigma]) [\iota]} \ftrtm \sharp{\interp b}[\iota]
			}{\interp \Gamma \sez \ftrtm \sharp{\interp a}[\iota] = \ftrtm \sharp{\interp b}[\iota] : \sharp(\ElDD \interp{A} [\sharp \varsigma]) [\iota]}{}
		\end{equation}
		Now for any $\gamma : \DSub{W}{\interp \Gamma}$, we have
		\begin{equation}
			\ftrtm \sharp{\interp a}[\iota]\dsub \gamma
			= \ftrtm \sharp{\interp a}\dsub{\fpshadj \flat(\gamma \kappa)}
			= \fpshadj \flat(\interp a \dsub{\gamma \kappa}).
		\end{equation}
		Hence, we can conclude that for any $\gamma$, we have $\flat W \Dsez \interp a \dsub{\gamma \kappa} = \interp b \dsub{\gamma \kappa} : \ElDD \interp{A} [\sharp \varsigma] [\iota] \dsub{\gamma \kappa}$. This means that $\interp a$ and $\interp b$ are equal on bridges, but maybe not on paths.
		
		The type $\ElDD \interp{A} [\sharp \varsigma] [\iota] \dsub{\gamma \kappa}$ we wrote there is correct, because
		\begin{equation}
			\fpshadj \flat(\iota \interp \Gamma \circ \gamma \circ \kappa W)
			= \fpshadj \flat(\fpshadj \flat(\gamma \circ \kappa W \circ \kappa \flat W))
			= \fpshadj \flat(\fpshadj \flat(\gamma \circ \kappa W))
			= \fpshadj \flat(\gamma \circ \kappa W)
			= \iota \interp \Gamma \circ \gamma,
		\end{equation}
		so that if $\flat W \Dsez t : \ElDD \interp A [\sharp \varsigma] \dsub{\iota \gamma \kappa}$, then $W \Dsez \fpshadj \flat(t) : \sharp(\ElDD \interp A [\sharp \varsigma]) \dsub{\iota \gamma}$.
		
		By discreteness, we can form
		\begin{equation}\label{eq:error-reflection}
			\inference{
				\interp \Gamma \sez \interp a, \interp b : \ElDD \interp A [\sharp \varsigma][\iota]
			}{\shp \interp \Gamma \sez \interp a [\varsigma]\inv, \interp b[\varsigma]\inv : \ElDD \interp A [\iota]}{}.
		\end{equation}
		Clearly if we can show $\interp a [\varsigma]\inv = \interp b[\varsigma]\inv$, then $\interp a = \interp b$. This is essentially saying that if $\interp a$ and $\interp b$ act the same way on bridges, then they are equal, so we should be almost there.
		
		Pick $\fpshadj \shp(\overline \gamma) : \DSub{W}{\shp \Gamma = \flat \quotshp \Gamma}$, i.e. $\overline \gamma : \DSub{\shp W}{\quotshp \Gamma}$, i.e. $\gamma : \DSub{\shp W}{\Gamma}$. Recall that $\varsigma = (\kappa \quotshp)\inv \circ \inquotshp$. Now
		\begin{equation}
			\kappa \quotshp \Gamma \circ \fpshadj \shp(\overline \gamma) = \overline \gamma \circ \varsigma W = \inquotshp(\gamma) \circ \varsigma W
		\end{equation}
		so we have $\varsigma \Gamma \circ \gamma \circ \varsigma W = (\kappa \quotshp \Gamma)\inv \circ \inquotshp \gamma \circ \varsigma W = \fpshadj \shp(\overline \gamma)$. Hence
		\begin{align*}
			\interp a [\varsigma]\inv \dsub{\fpshadj \shp(\overline \gamma)}
			= \interp a \dsub{\gamma \circ \varsigma W} &= \interp{a} \dsub{\gamma \circ \kappa \shp W} \psub{\varsigma W} \\ \nn 
			&= \interp b \dsub{\gamma \circ \kappa \shp W} \psub{\varsigma W} = \ldots = \interp b [\varsigma]\inv \dsub{\fpshadj \shp(\overline \gamma)}.
		\end{align*}
		Because $\fpshadj \shp (\overline \gamma)$ is a fully general defining subsitution of $\shp \Gamma$, we can conclude that $\interp a [\varsigma]\inv = \interp b[\varsigma]\inv$ and hence $\interp a = \interp b$.
	\end{proof}
	The error is in \cref{eq:error-reflection}. The idea there is that $\sharp \varsigma \circ \iota = \iota \circ \varsigma$, so that $\interp a$ and $\interp b$ would a type of the form $X[\varsigma]$ and the application of $[\varsigma]\inv$ would be valid. However, when we make some implicit arguments explicit we see that the premise is:
	\begin{equation}
		\interp \Gamma \sez \interp a, \interp b : \ElDD \interp A [\sharp \varsigma \interp{\sharp \setminus \Gamma}][\iota \interp \Gamma], \tagbis
	\end{equation}
	so that even though $\sharp \varsigma \circ \iota = \iota \circ \varsigma$ (as an equation of natural transformations), the equation cannot be applied here as the natural transformations have been instantiated on different contexts. The context $\interp{\sharp \setminus \Gamma} \cong \flat \interp \Gamma$ is discrete, i.e. all its paths are constant. By mixing it up with the context $\interp \Gamma$, we ended up assuming that $\interp \Gamma$ is discrete (which need not be the case if it contains parametric variables). Since we already knew that $\interp a$ and $\interp b$ are equal on bridges, we could then conclude that they are completely equal, as there would be only trivial paths.
	
	We see two ways to fix this erratum. Either we actually move back to the context $\sharp \setminus \Gamma$ in the reflection rule's conclusion, making it weaker and breaking the proof of function extensionality for parametric arguments. Or we conjecture the principle of pathhood irrelevance --- that any bridge can be a path in at most one way --- so that equality in $\sharp \setminus \Gamma$ entails equality in $\Gamma$.
\end{remark}

\paragraph{Solution 1: A more careful reflection rule}
\begin{lemma}\label{thm:careful-reflection}
	The model supports the following reflection rule:
	\begin{equation}
		\inference{
			\sharp \setminus \Gamma \judtm{a, b}{\El~A} \qquad
			\Gamma \judtm{e}{\El~ a \idtp A b}
		}{\sharp \setminus \Gamma \judtmeq a b {\El~A}}{}. \tagbis
	\end{equation}
\end{lemma}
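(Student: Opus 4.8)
The plan is to unravel the interpretations of both premises and observe that, via the fact that the semantic identity type in the standard presheaf model is a proposition, they force equality of the interpretations of $a$ and $b$ over $\interp{\sharp \setminus \Gamma}$. The only nontrivial ingredient is that the ``sharpening'' operation $\forsub \sharp (\loch)$ on terms over $\interp{\sharp \setminus \Gamma}$ is injective; the rest is bookkeeping.

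First I would unpack the premises using the meta-type of the interpretation and the interpretations of the identity-type rule and of $\refl$ already computed above. The first premise $\sharp \setminus \Gamma \judtm{a,b}{\El~A}$ interprets to $\interp{\sharp \setminus \Gamma} \sez \interp a, \interp b : \interp{\El~A}[\iota]$, and (using the interpretations of \emph{t-Id} and \emph{t-refl}, together with $\interp{\El~a \idtp A b}[\iota] = \bigl(\ftrtm \sharp {\interp a}[\iota] \idtp{(\sharp \interp{\El~A})[\iota]} \ftrtm \sharp {\interp b}[\iota]\bigr)$) the second premise $\Gamma \judtm e {\El~a \idtp A b}$ interprets to a term $\interp \Gamma \sez \interp e : \forsub \sharp {\interp a} \idtp{(\sharp \interp{\El~A})[\iota]} \forsub \sharp {\interp b}$, where $\forsub \sharp t = (\ftrtm \sharp t)[\iota]$. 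Since in any presheaf model the identity type $x \idtp T y$ is a singleton when $x = y$ and empty otherwise, the existence of $\interp e$ yields the equation $\interp \Gamma \sez \forsub \sharp {\interp a} = \forsub \sharp {\interp b} : (\sharp \interp{\El~A})[\iota]$.

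The heart of the proof is then to show that $\forsub \sharp (\loch)$, viewed as a map $\Tm(\interp{\sharp \setminus \Gamma}, \interp{\El~A}[\iota]) \to \Tm(\interp \Gamma, (\sharp \interp{\El~A})[\iota])$, is injective; applying this to the displayed equation gives $\interp{\sharp \setminus \Gamma} \sez \interp a = \interp b : \interp{\El~A}[\iota]$, which is exactly $\sharp \setminus \Gamma \judtmeq a b {\El~A}$. For the injectivity I would argue that $\forsub \sharp$ is in fact a bijection: by \cref{thm:leftflat} the substitution $\kappa : \flat \interp \Gamma \cong \interp{\sharp \setminus \Gamma}$ is an isomorphism of contexts, so $t \mapsto t[\kappa]$ is a bijection onto $\Tm(\flat \interp \Gamma, \interp{\El~A}[\iota \kappa])$; and, by the computation $\forsub \sharp t = \iota(t[\kappa])[\kappa]\inv$ from the $\forsub \mu$-lemma, $\forsub \sharp t$ is obtained from $t[\kappa]$ by the term-level adjunction $\flat \dashv \sharp$, which \cref{thm:adjunction-rules} presents as a pair of mutually inverse rules. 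Hence $\forsub \sharp$ is a composite of two bijections; that its codomain type is indeed $(\sharp \interp{\El~A})[\iota]$, rather than some other pullback, follows from $\sharp \iota = \id$ and the identities in \eqref{eq:cohesion-psh-identities}.

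The step I expect to be the main obstacle is precisely this bookkeeping with instantiated natural transformations: keeping the three contexts $\interp{\sharp \setminus \Gamma}$, $\flat \interp \Gamma$ and $\sharp \interp \Gamma = \sharp \interp{\sharp \setminus \Gamma}$ apart, and distinguishing the two substitutions written $\kappa$ (the isomorphism of \cref{thm:leftflat} versus the co-unit of $\flat \dashv \sharp$), so that the injectivity claim is genuinely well-typed. This is exactly where the earlier erroneous proof slipped, by silently identifying $\interp{\sharp \setminus \Gamma}$ with $\interp \Gamma$ and thereby pretending $\interp \Gamma$ was discrete, so getting this right is the whole point of the weakened statement. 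If the abstract route becomes unwieldy, I would fall back on a hands-on proof of injectivity: since $\interp{\sharp \setminus \Gamma} \cong \flat \interp \Gamma$ is a discrete context (\cref{thm:discrete-contexts-and-cohesion}), every defining substitution into it is degenerate in all path dimensions, i.e.\ factors through a bridge, so a defining term over it is determined by exactly the restrictions that the equation $\forsub \sharp {\interp a} = \forsub \sharp {\interp b}$ controls, and the conclusion follows by a case distinction on $\var i \psub \vfi$ as in \cref{thm:elim-quotshp}.
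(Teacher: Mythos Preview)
Your proposal is correct and follows essentially the same approach as the paper. The paper presents the argument as an explicit chain of inferences --- from $\ftrtm{\sharp}{\interp a}[\iota] = \ftrtm{\sharp}{\interp b}[\iota]$ over $\interp \Gamma$, apply $\flat$ (using $\flat\sharp = \flat$ and $\flat\iota = \id$), then apply $\kappa$, then recognize $\kappa(\ftrtm{\flat}{t}) = t[\kappa]$ over $\flat\interp{\sharp\setminus\Gamma} = \flat\interp\Gamma$, then cancel $[\kappa]$ using the isomorphism of \cref{thm:leftflat} --- whereas you package the same content as ``$\forsub\sharp$ is a composite of two bijections'' (the iso $\kappa$ followed by the $\flat\dashv\sharp$ adjunction). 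Your explicit attention to keeping the instantiated $\kappa$'s and the contexts $\interp{\sharp\setminus\Gamma}$, $\flat\interp\Gamma$, $\sharp\interp\Gamma$ straight is exactly the point, and matches the paper's care in annotating each natural transformation with its instantiating context.
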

\begin{proof}
	We have
	\begin{equation}
		\inference{
		\inference{
		\inference{
		\inference{
		\inference{
			\interp \Gamma \sez \interp e : \ftrtm \sharp{\interp a}[\iota \interp \Gamma] \idtp{\sharp (\ElDD \interp{A} [\sharp \varsigma \interp{\sharp \setminus \Gamma}]) [\iota \interp \Gamma]} \ftrtm \sharp{\interp b}[\iota \interp \Gamma]
		}{\interp \Gamma \sez \ftrtm \sharp{\interp a}[\iota \interp \Gamma] = \ftrtm \sharp{\interp b}[\iota \interp \Gamma] : \sharp(\ElDD \interp{A} [\sharp \varsigma \interp{\sharp \setminus \Gamma}]) [\iota \interp \Gamma]}{}
		}{\flat \interp \Gamma \sez \ftrtm \flat{\interp a} = \ftrtm \flat{\interp b} : \flat (\ElDD \interp{A} [\sharp \varsigma \interp{\sharp \setminus \Gamma}][\iota \interp{\sharp \setminus \Gamma}])}{}
		}{\flat \interp \Gamma \sez \kappa(\ftrtm \flat{\interp a}) = \kappa(\ftrtm \flat{\interp b}) : \ElDD \interp{A} [\sharp \varsigma \interp{\sharp \setminus \Gamma}] [\iota \interp{\sharp \setminus \Gamma}] [\kappa \interp{\sharp \setminus \Gamma}]}{}
		}{\flat \interp{\sharp \setminus \Gamma} \sez \interp a [\kappa \interp{\sharp \setminus \Gamma}] = \interp b [\kappa \interp{\sharp \setminus \Gamma}] : \ElDD \interp{A} [\sharp \varsigma \interp{\sharp \setminus \Gamma}] [\iota \interp{\sharp \setminus \Gamma}] [\kappa \interp{\sharp \setminus \Gamma}]}{}
		}{\interp{\sharp \setminus \Gamma} \sez \interp a = \interp b : \ElDD \interp{A} [\sharp \varsigma \interp{\sharp \setminus \Gamma}] [\iota \interp{\sharp \setminus \Gamma}]}{} \tagbis
	\end{equation}
\end{proof}

\paragraph{Solution 2: Pathhood irrelevance}
\begin{definition}
	A type $T$ has irrelevant pathhood when the weakening of paths $(W, \ctxpath{\var i}) \Dsez t : T \dsub \gamma$ to bridges $(W, \ctxbrid{\var i}) \Dsez t \psub{\var i/\var i} : T \dsub{\gamma (\var i/\var i)}$ is an injection.
\end{definition}
\begin{lemma}
	If $T$ is pathhood irrelevant, then the operation
	\begin{equation}
		\inference{
			\Gamma \sez t : T
		}{\flat \Gamma \sez t[\kappa] : T[\kappa]}{} \tagbis
	\end{equation}
	is injective.
\end{lemma}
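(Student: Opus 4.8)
The plan is to unfold the hypothesis $t[\kappa] = t'[\kappa]$ into a statement about defining terms over bridge-only cubes, and then to upgrade it to the full equality $t = t'$ by induction on the number of path dimensions, peeling off one path dimension at a time with the pathhood-irrelevance assumption.

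First I would unfold the restriction along $\kappa$. Since $\flat = \fpsh\shp$ and $\kappa = \fpsh\varsigma$, a defining substitution of $\flat\Gamma$ over a cube $W$ has the form $\fpshadj\shp(\gamma)$ with $\gamma : \DSub{\shp W}{\Gamma}$, and $\kappa \circ \fpshadj\shp(\gamma) = \gamma\varsigma_W$ exactly as in the proof of \cref{thm:kappa-injective}; hence $(t[\kappa])\dsub{\fpshadj\shp(\gamma)} = t\dsub{\gamma\varsigma_W}$. So $t[\kappa] = t'[\kappa]$ says precisely that $t\dsub{\gamma\varsigma_W} = t'\dsub{\gamma\varsigma_W}$ for every cube $W$ and every $\gamma : \DSub{\shp W}{\Gamma}$. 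By the description in \cref{fig:cohesion-base}, $\shp W = (W_\IB, \eset)$ has no path dimensions and $\varsigma_W$ acts as the identity on bridge variables; specializing to $W$ already bridge-only (so $\shp W = W$ and $\varsigma_W = \id$), I would conclude that $t$ and $t'$ agree on every defining substitution $\gamma : \DSub U{\Gamma}$ over a cube $U$ with $U_\IP = \eset$.

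Next I would prove, assuming $t[\kappa] = t'[\kappa]$, that $t\dsub\eta = t'\dsub\eta$ for all cubes $V$ and all $\eta : \DSub V{\Gamma}$, by strong induction on $|V_\IP|$. The base case $|V_\IP| = 0$ is exactly what was just established. In the inductive step, write $V = (V', \ctxpath{\var i})$ and consider the path-to-bridge weakening $(\var i/\var i) : \PSub{(V', \ctxbrid{\var i})}{(V', \ctxpath{\var i})}$; the cube $(V', \ctxbrid{\var i})$ has one fewer path dimension, so the induction hypothesis applied to $\eta(\var i/\var i)$ yields $t\dsub{\eta(\var i/\var i)} = t'\dsub{\eta(\var i/\var i)}$, that is $(t\dsub\eta)\psub{\var i/\var i} = (t'\dsub\eta)\psub{\var i/\var i}$. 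Since $T$ has irrelevant pathhood, the weakening map $u \mapsto u\psub{\var i/\var i}$ on $T\dsub\eta$ is an injection, so $t\dsub\eta = t'\dsub\eta$, completing the induction and hence the proof of injectivity.

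I expect the proof to be essentially bookkeeping, and the only slightly delicate point to be the first step — correctly translating the equation $t[\kappa] = t'[\kappa]$ of terms over $\flat\Gamma$ into agreement of $t$ and $t'$ over all bridge-only cubes — which nevertheless follows routinely from the explicit formulas for $\flat$, $\kappa$, $\shp$ and $\varsigma$ in \cref{fig:cohesion-base} together with \cref{thm:kappa-injective}.
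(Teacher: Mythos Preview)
Your argument is correct. The paper states this lemma without proof, so there is nothing to compare against; your induction on the number of path dimensions, using pathhood irrelevance to peel off one path variable at a time, is exactly the natural argument and all the bookkeeping you do (unfolding $\kappa = \fpsh\varsigma$, identifying defining substitutions of $\flat\Gamma$ with $\gamma : \DSub{\shp W}{\Gamma}$, and specializing to bridge-only $W$ to get the base case) is sound.
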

\begin{conjecture}
	The interpretation of any type that can be constructed in ParamDTT, has irrelevant pathhood.
\end{conjecture}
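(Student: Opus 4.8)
The plan is to prove the statement by induction on the derivation of a ParamDTT typing judgement $\Gamma \judty{T}$ --- equivalently, by induction on the term $A$ in a judgement $\Gamma \judtm A {\El\,\uni \ell}$, together with an outer clause for the rule \textsc{ty} --- carrying along the trivial observation that irrelevant pathhood is preserved by substitution, since restriction in $T[\sigma]$ is inherited from that in $T$. As $\interp{\El\,A} = (\ElDD \interp A)[\sharp \varsigma]$ and, as computed in this chapter, the interpretation of $\El$ applied to a quantifier or an identity type unfolds --- up to the substitutions $[\iota]$, $[\varsigma]$, $[\sharp \varsigma]$, which are harmless by the previous remark --- into a presheaf-level type former applied to the interpretations of the constituent types, it suffices to check that each presheaf-level building block either preserves or outright produces irrelevant pathhood.

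The routine building blocks are the following. Identity types, and more generally subsingleton types, have irrelevant pathhood because each $T \dsub \gamma$ has at most one element. The closed types $\uniDD_\ell$, $\Nat$ and $\Size$ are weakenings of closed discrete types, and a closed discrete type has irrelevant pathhood: discreteness forces a defining term over the (degenerate) unique base to be degenerate in the path dimension, and a path-term that is degenerate along $\var i$ restricts injectively along $(\var i/\var i)$ --- one recovers it by restricting the image back along $(0/\var i)$. The modalities $\sharp = \fpsh \flat$ and $\coshp = \fpsh \sharp$ produce irrelevant pathhood regardless of their argument, because the base functors $\flat$ and $\sharp$ send the weakening $(\var j^\IB/\var i^\IP)$ to a mere dimension renaming, $(\var j^\IB/\var i^\IB)$ resp.\ $(\var j^\IP/\var i^\IP)$, so that restriction of $\sharp T$ resp.\ $\coshp T$ along $(\var i/\var i)$ is invertible. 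For $\Sigma A B$ one argues componentwise: from $(a,b)\psub{\var i/\var i} = (a',b')\psub{\var i/\var i}$ one gets $a = a'$ by irrelevant pathhood of $A$, and then $b = b'$ by irrelevant pathhood of $B$ over the same base. For $\Pi A B$ it suffices that the codomain $B$ has irrelevant pathhood (an asymmetry mirroring \cref{thm:disc-prod}): a defining term of $\Pi A B$ over $\gamma$ is determined by its values $f\psub \vfi \cdot a$, and we distinguish cases on $\var i\psub \vfi$. If $\var i\psub \vfi$ is $0$, $1$ or a bridge variable, then $\vfi$ factors through $(\var i/\var i)$ and the hypothesis on $f, f'$ applies; if $\var i\psub \vfi$ is a path variable $\var k$, then one reduces $f\psub \vfi \cdot a = f'\psub \vfi \cdot a$ to the equality of the $(\var k/\var k)$-weakenings of both sides using irrelevant pathhood of $B$, and observes that $\vfi \circ (\var k/\var k)$ \emph{does} factor through $(\var i/\var i)$. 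The cases of $\Glue$ and $\Weld$, should they occur as type formers, are handled by the same case distinction on $P \dsub \gamma$ that appears in their presheaf-level constructions, reducing to irrelevant pathhood of the constituents $A$, $T$ and $\Pi P T$.

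Two cases are genuinely hard, which is why the statement is left as a conjecture. First, the \emph{parametric} $\Sigma$-type interprets as $\quotshp$ applied to a $\Sigma$-type of the shape above, so one needs that $\quotshp$ preserves irrelevant pathhood; equivalently, that $\sheq^S$ is \emph{reflected} along the path-to-bridge weakening --- if $\sheq^S$ relates $t\psub{\var i/\var i}$ and $t'\psub{\var i/\var i}$ then it relates $t$ and $t'$. Since $S$ already has irrelevant pathhood, any lift along the weakening of a zigzag witnessing $\sheq^S$ on the weakenings is unique once it exists, but its existence is unclear whenever a generator of $\sheq^S$ collapses a genuinely bridge-direction path. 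One would try either to strengthen the induction hypothesis so as to track the interaction of $\sheq$-classes with weakening, or to exploit that here $S = \Sigma(\ftrvar \sharp x : \sharp \ElDD \interp A).(\ElDD \interp B)$, whose $\sheq$ acts componentwise and whose first component is already $\sharp$ of something. Second, a ParamDTT type may be $\El\,\var X$ for a universe variable $\var X$, so that its interpretation is a substitution instance of the universal family $\ElDD\,\var X$; the difficulty is that although $\ElDD\,\var X$ is discrete, it lives over the \emph{non-discrete} context $\sharp \shp (\var X : \uniDD_\ell)$, so irrelevant pathhood is not automatic, and one must analyse how a heterogeneous path in the type coded by a bridge in the universe weakens to a heterogeneous bridge --- concretely, whether the operation built into a path of $\uniNDD$, which produces a bridge beneath every path, is injective. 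I expect this last point, alongside the $\quotshp$ case, to be the main obstacle.
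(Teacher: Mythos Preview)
The paper labels this a conjecture and gives only a short \emph{justification}, not a proof; your inductive outline is considerably more systematic than that sketch, and you are right to conclude that the induction does not close. The chief discrepancy is in where the danger is located. The paper singles out $\Weld$ (and implicitly $\Glue$) as the delicate case: if the truth value of the system proposition $P$ were to change under the path-to-bridge weakening $(\var i^\IB/\var i^\IP)$, then restriction in the $\Weld$ type lands in the cross-case clause and applies the welding function $f$, which need not be injective. The paper's resolution is that ParamDTT face predicates are always interpreted as $\sharp$-types, $\interp{\El\,P} = \sharp\,\El\,\kappa(\interp P)$, and since $\flat(\var i^\IB/\var i^\IP)$ is the identity, such a $P$ is invariant under the weakening; hence the cross-case cannot arise. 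Your handling of $\Weld$ and $\Glue$ as ``routine'' case splits on $P\dsub\gamma$ silently assumes exactly this invariance without arguing it; without that argument, reducing to pathhood irrelevance of $A$, $T$ and $\Pi\,P\,T$ is insufficient.

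On the other hand, the two cases you flag as hard---the operator $\quotshp$ used in the interpretation of $\Sigmapar$, and the universal family $\ElDD\,\var X$ for a universe variable---are not addressed by the paper's justification at all. Your analysis of why they resist a straightforward induction is sound and goes beyond what the paper says; these are precisely the sort of obstacles that keep the statement a conjecture.
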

\begin{proof}[Justification]
	Pathhood irrelevance is preserved by all functors used in the model (in particular, by all modalities). The discrete universe of discrete types, has irrelevant pathhood by construction. Of the other types, the $\Weld$ type is the most dangerous one, but as we only allow propositions of the form $\idpr i j$ internally, one cannot use $\Weld$ to identify bridges without identifying the corresponding paths.
\end{proof}
Then the model supports the following rule:
\begin{equation}
		\inference{
			\Gamma \judtm{a, b}{\El~A} \qquad
			\sharp \setminus \Gamma \judtmeq a b {\El~A}
		}{\Gamma \judtmeq a b {\El~A}}{}, \tagbis
\end{equation}
which we can combine with Solution 1 to obtain the original reflection rule.

\subsubsection{Function extensionality}
Using the reflection rule, we can derive function extensionality internally:
\begin{equation}
	\inference{
	\inference{
	\inference{
		\inference{
			\leftflat \Gamma \judtm{f, g}{\El~\prodvar \mu x A.B}
		}{(\sharp \setminus \Gamma), \ctxvar \mu x A \judtm{\apvar \mu f x, \apvar \mu g x}{\El~B}}{1}
		\qquad
		\inference{
		\inference{
			\Gamma \judtm{p}{\El~\prodvar \mu x A.{\apvar \mu f x} \idtp{B} {\apvar \mu g x}}
		}{\sharp \setminus \Gamma \judtm{p}{\El~\prodvar \mu x A.{\apvar \mu f x} \idtp{B} {\apvar \mu g x}}}{2}
		}{(\sharp \setminus \Gamma), \ctxvar \mu x A \judtm{\apvar \mu p x}{{\apvar \mu f x} \idtp{B} {\apvar \mu g x}}}{3}
	}{(\sharp \setminus \Gamma), \ctxvar \mu x A \judtmeq{\apvar \mu f x}{\apvar \mu g x}{\El~B}}{4}
	}{\sharp \setminus \Gamma \judtmeq{f}{g}{\El~\prodvar \mu x A.B}}{5}
	}{\Gamma \judtm{\refl~\parclr{f}}{\El~f \idtp{\prodvar \mu x A.B} g}}{6}
\end{equation}
Here we used (1) weakening and application, (2) weakening of variances, (3) weakening and application, (4) the reflection rule, (5) $\lambda$-abstraction and the $\eta$-rule and (6) reflexivity and conversion.
\begin{remark}
	If we use the more careful reflection rule (\cref{thm:careful-reflection}), then after (4) we end up with $\ctxvar {\sharp \setminus \mu} x A$ instead of $\ctxvar \mu x A$, so that we only obtain function extensionality for functions of a modality of the form $\sharp \setminus \mu$, i.e. only for pointwise and continuous functions.
\end{remark}

\subsubsection{Uniqueness of identity proofs}
The model supports uniqueness of identity proofs:
\begin{equation}
	\inference{
		\Gamma \judtm{e, e'}{a \idtp A b}
	}{\Gamma \judtmeq{e}{e'}{a \idtp A b}}{t=-UIP}.
\end{equation}
To prove this, we need to show
\begin{equation}
	\inference{
		\interp \Gamma \sez \interp e, \interp{e'} : {\ftrtm{\sharp}{\interp a}}\idtp{(\sharp \interp{\El~A})[\iota]}{\ftrtm{\sharp}{\interp b}}
	}{\interp \Gamma \sez \interp e = \interp{e'} : {\ftrtm{\sharp}{\interp a}}\idtp{(\sharp \interp{\El~A})[\iota]}{\ftrtm{\sharp}{\interp b}}}{}
\end{equation}
But of course, for any $\gamma : \DSub{W}{\Gamma}$, we have $\interp e \dsub \gamma = \star = \interp{e'} \dsub \gamma$.

\section{Internal parametricity: glueing and welding}
\subsection{The interval}
We interpret the interval as a type $\Gamma \sez \IX \dtype$ that exists in any context $\Gamma$ and is natural in $\Gamma$, i.e.\ it is a closed type. Hence, $\IX \dsub \gamma$ will not depend on $\gamma$. Instead, for $\gamma : \DSub W \Gamma$, we set $\IX \dsub \gamma = (\PSub W {(\ctxbrid{\var i})})$.
\begin{lemma}
	The type $\Gamma \sez \IX \dtype$ is discrete.
\end{lemma}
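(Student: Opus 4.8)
The plan is to unfold the definition of discreteness directly and to observe that the interval cube $(\ctxbrid{\var i})$ carries only a bridge dimension, so that a face map into it can never be non-degenerate in a path dimension. Since $\IX$ is a closed type, for any $\gamma : \DSub W \Gamma$ we have $\IX \dsub \gamma = (\PSub W {(\ctxbrid{\var j})})$, where I rename the interval's bound variable to $\var j$ so as not to clash with the path variables of the ambient context.

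To verify discreteness, I would take a defining substitution $\gamma : \DSub{(W, \ctxpath{\var i})}{\Gamma}$ that is degenerate in $\var i$, together with a defining term $(W, \ctxpath{\var i}) \Dsez t : \IX \dsub \gamma$, and show that $t$ factors over $(\facewkn{\var i}) : \PSub{(W, \ctxpath{\var i})}{W}$. Here $t$ is simply a face map $t : \PSub{(W, \ctxpath{\var i})}{(\ctxbrid{\var j})}$, hence entirely determined by the value $\var j \psub t$ it assigns to the single bridge variable $\var j$. By the definition of face maps in $\bpcubecat$, a bridge variable in the codomain is sent to $0$, $1$, or a bridge variable of the domain — never to a path variable. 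Since the bridge variables of $(W, \ctxpath{\var i})$ are precisely those of $W$, we get $\var j \psub t \in \accol{0, 1} \uplus W_\IB$, and in particular $\var j \psub t \neq \var i$. Thus $t$ does not mention $\var i$, i.e.\ it factors over $(\facewkn{\var i})$; combined with the assumed degeneracy of $\gamma$ in $\var i$, this exhibits $t$ as degenerate in $\var i$, which is exactly what discreteness requires.

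There is essentially no obstacle in this argument; the only point demanding care is the variable-name collision between the interval's bound variable and a path variable of the context, which is handled by the renaming above. In fact the same reasoning shows the stronger statement that every defining term of $\IX$ over an arbitrary cube $(W, \ctxpath{\var i})$ is degenerate in $\var i$, independently of $\gamma$, so discreteness follows a fortiori.
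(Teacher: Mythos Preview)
Your argument is correct and is essentially the same as the paper's: both observe that a defining term of $\IX$ at $(W, \ctxpath{\var i})$ is a face map into the single-bridge cube $(\ctxbrid{\var j})$, which cannot use the path variable $\var i$ and hence factors over $(\facewkn{\var i})$. You simply spell out the reason more explicitly (a bridge variable in the codomain cannot be assigned a path variable), and you correctly note that the degeneracy of $\gamma$ is in fact irrelevant because $\IX$ is closed.
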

\begin{proof}
	Pick a term $(W, \ctxpath{\var j}) \Dsez t : \IX \dsub{\gamma(\facewkn{\var j})}$. Then $t$ is a primitive substitution $t : \PSub{(W, \ctxpath{\var j})}{(\ctxbrid{\var i})}$ which necessarily factors over $(\facewkn{\var j})$.
\end{proof}
The interval can be seen as a type:
\begin{equation}
	\interp{\Gamma \judty \IX} = \sharp \interp \Gamma \sez \IX \dtype,
\end{equation}
or as an element of the universe:
\begin{equation}
	\interp{\Gamma \judtm{\IX}{\uni 0}} = \interp \Gamma \sez \tycodeDD \IX : \uniDD_0,
\end{equation}
and then $\interp{\El\,\IX} = \IX [\sharp \varsigma] = \IX$. The terms $\interp{\Gamma \judtm{0, 1}{\IX}}$ are modelled by $\interp \Gamma \sez 0, 1 : \IX$ where $W \Dsez 0 \dsub \gamma = (0/\var i, \facewkn W) : \IX \dsub \gamma$ and similar for 1. All other rules regarding the interval are straightforwardly interpreted now that we know that $\IX$ is semantically a type like any other.

\subsection{Face predicates and face unifiers}\label{sec:face-predicates}
\subsubsection{The discrete universe of propositions}
If we had an internal face predicate judgement $\Gamma \sez \parclr P ~\name{fpred}$, analogous to the type judgement $\Gamma \judty T$, then the most obvious interpretation would be $\sharp \interp \Gamma \sez \interp P \prop$. However, in order to satisfy $\interp{\coshp \setminus \Delta} = \sharp \interp \Delta$ for contexts $\Delta$ that contain face predicates, we only want to consider face predicates that absorb $\sharp$. One can show that these take the form $\sharp \interp \Gamma \sez \sharp P \prop$, where $\flat \interp \Gamma \sez P \prop$. The latter corresponds to $\flat \interp \Gamma \sez \tycode P : \Prop$, which in turn corresponds to $\sharp \interp \Gamma \sez \iota(\tycode P)[\kappa]\inv : \sharp \Prop$. So whereas $\uniDD_\ell$ was defined as $\flat \coshp \uniPsh_\ell$, we define the discrete universe of propositions $\PropD = \flat \coshp (\sharp \Prop) = \flat \Prop$. We have
\begin{equation}
	\inference{
	\inference{
	\inference{
	\inference{
	\inference{
		\Gamma \sez P : \PropD = \flat \coshp \sharp \Prop
	}{\shp \Gamma \sez \kappa(P[\varsigma]\inv) : \coshp \sharp \Prop}{}
	}{\sharp \shp \Gamma \sez \vartheta(\kappa(P[\varsigma]\inv)[\iota]\inv) : \sharp \Prop}{}
	}{\flat \sharp \shp \Gamma = \shp \Gamma \sez \iota\inv(\vartheta(\kappa(P[\varsigma]\inv)[\iota]\inv)[\kappa]) : \Prop}{}
	}{\shp \Gamma \sez \El~\iota\inv(\vartheta(\kappa(P[\varsigma]\inv)[\iota]\inv)[\kappa]) \prop}{}
	}{\sharp \shp \Gamma \sez \sharp \El~\iota\inv(\vartheta(\kappa(P[\varsigma]\inv)[\iota]\inv)[\kappa]) \prop}{}
\end{equation}
We took a significant detour here in order to emphasize the parallel with $\uniDD_\ell$. We could have more simply done
\begin{equation}
	\inference{
	\inference{
	\inference{
		\Gamma \sez P : \PropD = \flat \Prop
	}{\shp \Gamma \sez \kappa(P[\varsigma]\inv) : \Prop}{}
	}{\shp \Gamma \sez \El~\kappa(P[\varsigma]\inv) \prop}{}
	}{\sharp \shp \Gamma \sez \sharp \El~\kappa(P[\varsigma]\inv) \prop}{}
\end{equation}
We show that these are equal. Making the interesting part of the former term more precise, we get:
\begin{align*}
	\iota\inv((\vartheta \sharp)((\kappa\coshp \sharp)(P[\varsigma \Gamma]\inv)[\iota \shp \Gamma]\inv)[\kappa \sharp \shp \Gamma])
	&= \iota\inv((\vartheta \sharp)((\kappa\coshp \sharp)(P[\varsigma \Gamma]\inv)[\iota \shp \Gamma]\inv)[\iota \shp \Gamma]) \\
	&= \iota\inv((\vartheta \sharp)((\kappa\coshp \sharp)(P[\varsigma \Gamma]\inv))) \\
	&= \iota\inv((\vartheta \sharp \circ \kappa\coshp \sharp)(P[\varsigma \Gamma]\inv)) \\
	&= \iota\inv((\iota \circ \kappa)(P[\varsigma \Gamma]\inv))
	= \kappa(P [\varsigma \Gamma]\inv),
\end{align*}
which is the corresponding part of the latter term. We set $\ElD~P = \El~\kappa(P[\varsigma]\inv)$ and inversely $\tycodeD P = \kappa\inv(\tycode P)[\varsigma]$.

\subsubsection{The face predicate formers}
We interpret $\interp{\Gamma \judty \IF} = (\sharp \interp \Gamma \sez \PropD \dtype)$, giving meaning to the face predicate judgement. The identity predicate is interpreted as:
\begin{equation}
	\interp{
		\inference{\Gamma \judtm{i, j}{\IX}}{\Gamma \judtm{\idpr i j}{\IF}}{f-eq}
	} =
	\inference{
	\inference{
	\inference{
		\interp \Gamma \sez \interp i, \interp j : \IX
	}{\shp \interp \Gamma \sez \interp i [\varsigma]\inv, \interp j [\varsigma]\inv : \IX}{}
	}{\shp \interp \Gamma \sez \interp i [\varsigma]\inv \idtp{\IX} \interp j [\varsigma]\inv \prop}{}
	}{\interp \Gamma \sez \tycodeD{\interp i [\varsigma]\inv \idtp{\IX} \interp j [\varsigma]\inv} : \PropD}{}.
\end{equation}
Other connectives are interpreted simply by decoding and encoding, e.g.
\begin{equation}
	\interp{
		\inference{\Gamma \judtm{P, Q}{\IF}}{\Gamma \judtm{P \wedge Q}{\IF}}{f-$\wedge$}
	} =
	\inference{
	\inference{
	\inference{
		\interp \Gamma \sez \interp P, \interp Q : \PropD
	}{\shp \interp \Gamma \sez \ElD \interp P, \ElD \interp Q \prop}{}
	}{\shp \interp \Gamma \sez \ElD \interp P \wedge \ElD \interp Q \prop}{}
	}{\interp \Gamma \sez \tycodeD{\ElD \interp P \wedge \ElD \interp Q} : \PropD}{}
\end{equation}

\subsubsection{Context extension with a face predicate}
\begin{equation}
	\interp{
		\inference{
			\Gamma \ctx \qquad
			\leftflat \Gamma \judtm P \IF
		}{\Gamma, \ctxface P \ctx}{c-f}
	} =
	\inference{
		\interp \Gamma \ctx \qquad
		\inference{
		\inference{
		\inference{
		\inference{
			\interp{\sharp \setminus \Gamma} \sez \interp P : \PropD
		}{\shp \interp{\sharp \setminus \Gamma} \sez \ElD \interp P \prop}{}
		}{\sharp \shp \interp{\sharp \setminus \Gamma} \sez \sharp \ElD \interp P \prop}{}
		}{\sharp \interp \Gamma \sez (\sharp \ElD \interp P)[\sharp \varsigma] \prop}{}
		}{\interp \Gamma \sez (\sharp \ElD \interp P)[\sharp \varsigma][\iota] \prop}{}
	}{\interp \Gamma, \_ : (\sharp \ElD \interp P)[\sharp \varsigma][\iota] \ctx}{}
\end{equation}
Note that we have
\begin{equation}
	(\sharp \ElD \interp P)[\sharp \varsigma]
	= \sharp ((\ElD \interp P)[\varsigma])
	= \sharp \El~\kappa(\interp P).
\end{equation}
By analogy to types, we will denote this as $\interp{\El~P}$, even though $\El~P$ does not occur in the syntax. We can then write extended contexts as $\interp \Gamma, \_ : \interp{\El~P}[\iota]$.

\begin{proof}[Addendum to the proof of \cref{thm:leftflat}]\footnote{Broken hyperlink.}
	The fact that $\flat \interp{\Gamma, \ctxface P} = \flat \interp{\sharp \setminus (\Gamma, \ctxface P)}$ follows trivially from $\flat \interp \Gamma = \flat \interp{\sharp \setminus \Gamma}$. Since propositions are discrete, extending the context with a proposition preserves its discreteness and hence the fact that $\kappa$ for that context is an isomorphism.
\end{proof}
\begin{proof}[Addendum to the proof of \cref{thm:leftsharp}]\footnote{Broken hyperlink.}
	The fact that $\interp{\coshp \setminus(\Gamma, \ctxface P)} = \sharp\interp{\Gamma, \ctxface P}$ follows from
	\begin{equation}
		\sharp(\interp{\El~P}[\iota]) = \sharp \interp{\El~P} = \interp{\El~P} = \interp{\El~P}[\iota\sharp]. \qedhere
	\end{equation}
\end{proof}

\subsubsection{Face unifiers} The use of face unifiers is motivated from a computational perspective and is a bit unpractical semantically. Since $\Prop$ is the subobject quantifier of $\widehat \bpcubecat$, extending a context with a proposition amounts to taking a subobject of the context. Every syntactic face unifier $\sigma : \Delta \to \Gamma$ has an interpretation $\interp \sigma : \interp \Delta \to \interp \Gamma$. One can show that the union of the images of all interpretations of all face unifiers to a context $\Gamma$, is equal to all of $\interp \Gamma$. Hence, checking whether something works under all face unifiers, amounts to checking whether it works. One can also show that $P \Rightarrow Q$ means $\interp P \subseteq \interp Q$. Then the rule
\begin{equation}
	\inference{
		\Gamma \judtm{P, Q}{\IF} \qquad
		P \Leftrightarrow Q
	}{\Gamma \judtmeq P Q \IF}{f=}
\end{equation}
is trivial. We also have
\begin{align*}
	& \interp{
		\inference{
			\Gamma \judtm{i, j}{\IX} \qquad
			\top \Rightarrow (\idpr i j)
		}{\Gamma \judtmeq i j \IX}{i=-f}
	} = 
	\inference{
	\inference{
	\inference{
	\inference{
		\interp \Gamma \sez \top \subseteq \tycodeD{(\interp i [\varsigma]\inv \idtp{\IX} \interp j [\varsigma]\inv)} : \PropD
	}{\shp \interp \Gamma \sez \top \subseteq (\interp i [\varsigma]\inv \idtp{\IX} \interp j [\varsigma]\inv) \prop}{}
	}{\shp \interp \Gamma \sez \star : \interp i [\varsigma]\inv \idtp{\IX} \interp j [\varsigma]\inv}{}
	}{\shp \interp \Gamma \sez \interp i [\varsigma]\inv = \interp j [\varsigma]\inv : \IX}{}
	}{\interp \Gamma \sez \interp i = \interp j : \IX}{}
\end{align*}

\subsection{Systems} The interpretation of systems is straightforward.

\subsection{Welding}
We interpret
\begin{equation}
	\inference{
		\Gamma \judtm{P}{\IF} \qquad
		\Gamma, \ctxface P \judtm{T}{\El~\uni \ell} \qquad
		\Gamma \judtm{A}{\El~\uni \ell} \\
		\leftsharp \Gamma, \ctxface P \judtm{f}{\El~A \to T}
	}{\Gamma \judtm{\Weldsys{A}{\Weldsysclause{P}{T}{f}}}{\El~\uni \ell}}{t-Weld}
\end{equation}
as
\begin{equation*}
	\inference{
	\inference{
		\left\{
		\begin{matrix}
			\inference{
				\interp \Gamma \sez \interp P : \PropD
			}{\sharp \shp \interp \Gamma \sez \sharp \ElD \interp P \prop}{}
			&
			\inference{
			\inference{
			\inference{
				\interp \Gamma, \var p : (\sharp \ElD \interp P)[\sharp \varsigma][\iota] \sez \interp T : \uniDD_\ell 
			}{\sharp \shp \paren{\interp \Gamma, \var p : (\sharp \ElD \interp P)[\sharp \varsigma][\iota]} \sez \ElDD \interp T : \dtype_\ell}{}
			}{\sharp \paren{\shp \interp \Gamma, \var p : (\sharp \ElD \interp P)[\iota]} \sez \ElDD \interp T [\sharp (\subext \varsigma)] \dtype_\ell}{}
			}{\sharp \shp \interp \Gamma, \var p : \sharp \ElD \interp P \sez \ElDD \interp T [\sharp (\subext \varsigma)] \dtype_\ell}{} \\
			\qquad \\
			\inference{
				\interp \Gamma \sez \interp A : \uniDD_\ell
			}{\sharp \shp \interp \Gamma \sez \ElDD \interp A \dtype_\ell}{}
			&
			\inference{
				\sharp \interp \Gamma, \var p : \sharp \ElD \interp P [\sharp \varsigma] \sez \interp f : \ElDD \interp A [\sharp \varsigma][\wknvar p] \to \ElDD \interp T [\sharp \varsigma]
			}{\sharp \shp \interp \Gamma, \var p : \sharp \ElD \interp P \sez \interp f [\varsigma \subext]\inv : \ElDD \interp A [\wknvar p] \to \ElDD \interp T[\sharp(\subext \varsigma)]}{}
		\end{matrix}
		\right.
	}{\sharp \shp \interp \Gamma \sez \Weldsys{\ElDD \interp A}{\Weldsysclauseb{\sharp \ElD \interp P}{\ElDD \interp T [\sharp (\subext \varsigma)]}{\interp f [\varsigma \subext]\inv}} \dtype_\ell}{}
	}{\interp \Gamma \sez \tycodeDD{\Weldsys{\ElDD \interp A}{\Weldsysclauseb{\sharp \ElD \interp P}{\ElDD \interp T [\sharp (\subext \varsigma)]}{\interp f [\varsigma \subext]\inv}}} : \uniDD_\ell}{}.
\end{equation*}
We have
\begin{align*}
	\interp{\El~\Weldsys{A}{\Weldsysclause{P}{T}{f}}}
	&= \ElDD \interp{\Weldsys{A}{\Weldsysclause{P}{T}{f}}} [\sharp \varsigma] \\
	&= \Weldsys{\interp{\El~A}}{\Weldsysclauseb{\interp{\El~P}}{\interp{\El~T}}{\interp f}}.
\end{align*}

The constructor
\begin{equation}
	\inference{
		\Gamma \judty{\El~\Weldsys{A}{\Weldsysclause{P}{T}{f}}} \qquad
		\Gamma \judtm{a}{\El~A}
	}{\Gamma \judtm{\weldsys{\sysclause P f} a}{\El~\Weldsys{A}{\Weldsysclause{P}{T}{f}}}}{t-weld}
\end{equation}
becomes
\begin{equation}
	\inference{
		\interp \Gamma \sez \interp a : \interp{\El~A} [\iota]
	}{\interp \Gamma \sez \weldsys{\sysclauseb{\interp{\El~P} [\iota]}{\interp f} [\iota]} \interp a : \Weldsys{\interp{\El~A}}{\Weldsysclauseb{\interp{\El~P}}{\interp{\El~T}}{\interp f}}[\iota]}{}.
\end{equation}

For the eliminator
\begin{equation}
	\inference{
		\Gamma, \ctxvar \nu {y}{\El~\Weldtp P A T f} \judty{\El~C} \qquad
		\Gamma, \ctxface P, \ctxvar \nu y {\El~T} \judtm{d}{\El~C} \\
		\Gamma, \ctxvar \nu x {\El~A} \judtm{c}{\El~C[\varclr{\weldtm P f x} / \varclr y]} \qquad
		\Gamma, \ctxface P, \ctxvar \nu x {\El~A} \judtmeq{c}{d[\varclr{f\,x} / \varclr y]}{\El~C[\varclr{f\,x} / \varclr y]} \\
		\nu \setminus \Gamma \judtm{b}{\El~\Weldtp P A T f}
	}{\Gamma \judtm{
		\ind^\nu_\Weld(\varclr y.\parclr C, \sys{\sysclause{P}{\varclr y.d}}, \varclr x.c, \varclr b)	
	}{\El~C[\varclr b/\varclr y]}}{t-indweld}
\end{equation}
first note that
\begin{align*}
	\nu \interp{\El~\Weldsys{A}{\Weldsysclause{P}{T}{f}}}
	&= \nu \Weldsys{\interp{\El~A}}{\Weldsysclauseb{\interp{\El~P}}{\interp{\El~T}}{\interp f}} \\
	&= \Weldsys{\nu \interp{\El~A}}{\Weldsysclauseb{\interp{\El~P}}{\nu \interp{\El~T}}{\lambda \ftrvar \nu x.\ftrtm \nu {(\interp f \var x)}}}
\end{align*}
because lifted functors preserve $\Weld$ and $\nu \interp{\El~P} = \interp{\El~P}$ for $\nu \in \accol{\coshp, \idmod, \sharp}$. Taking that into account, all of this boils down to straightforward use of the eliminator of the $\Weld$-type for presheaves.

\subsection{Glueing}
We similarly get
\begin{equation}
	\interp{\El~\Gluesys{A}{\Gluesysclause{P}{T}{f}}}
	= \Gluesys{\interp{\El~A}}{\Gluesysclauseb{\interp{\El~P}}{\interp{\El~T}}{\interp f}}.
\end{equation}

The constructor
\begin{equation}
	\inference{
		\Gamma \judty{\El~\Gluetp P A T f} \qquad
		\Gamma, \ctxface P \judtm t {\El~T} \qquad
		\Gamma \judtm a {\El~A} \qquad
		\Gamma, \ctxface P \judtmeq{f\,t}{a}{\El~A}
	}{\Gamma \judtm{\gluetm P a t}{\El~\Gluetp P A T f}}{t-glue}
\end{equation}
becomes
\begin{equation}
	\inference{
		\interp \Gamma, \var p : \interp{\El~P}[\iota] \sez \interp t : \interp{\El~T} \\
		\interp \Gamma \sez \interp a : \interp{\El~A} \\
		\interp \Gamma, \var p : \interp{\El~P}[\iota] \sez \interp f [\iota] \interp t = \interp a [\wknvar p] : \interp{\El~A}[\iota]
	}{\interp \Gamma \sez \gluesys{\interp a}{\sysclauseb{\interp{\El~P}[\iota]}{\interp t}} : \Gluesys{\interp{\El~A}}{\Gluesysclauseb{\interp{\El~P}}{\interp{\El~T}}{\interp f}}[\iota]}{}.
\end{equation}

The eliminator
\begin{equation}
	\inference{
		\Gamma \judtm{b}{\El~\Gluetp P A T f}
	}{\Gamma \judtm{\ungluetm P f b}{\El~A}}{t-unglue}
\end{equation}
becomes
\begin{equation}
	\inference{
		\interp \Gamma \sez \interp b : \Gluesys{\interp{\El~A}}{\Gluesysclauseb{\interp{\El~P}}{\interp{\El~T}}{\interp f}}[\iota]
	}{\interp \Gamma \sez \ungluesys{\sysclauseb{\interp{\El~P} [\iota]}{\interp f [\iota]}} \interp b : \interp{\El~A}[\iota]}{}.
\end{equation}

\subsection{The path degeneracy axiom}
We will interpret the path degeneracy axiom
\begin{equation}
	\inference{
		\Gamma \judty A \qquad
		\leftflat{\Gamma} \judtm{p}{\Pipar(i : \IX).A}
	}{\Gamma \judtm{\degaxof p}{{p}\idtp{\Pipar(i : \IX).A}{\paren{\lamannotpar{i}{\IX}\appar p 0}}}}{t-degax}
\end{equation}
via the stronger rule
\begin{equation}
	\inference{
		\Gamma \judtm{p}{\prodpar i \IX.A}
	}{\Gamma \judtmeq{p}{\lamannotpar i \IX.\appar p 0}{\prodpar i \IX.A}}{}
\end{equation}
after which the axiom follows by reflexivity. For simplicity, we put the variables in the context. We have
\begin{equation}
	\inference{
	\inference{
		\interp \Gamma, \ftrvar \sharp i : \sharp \IX \sez \interp a : \ElDD \interp A [\sharp \varsigma][\iota][\wkn{\ftrvar \sharp i}]
	}{\shp \paren{\interp \Gamma, \ftrvar \sharp i : \sharp \IX} \sez \interp a [\varsigma]\inv : \ElDD \interp A [\iota][\shp \wkn{\ftrvar \sharp i}]}{}
	}{\shp \interp \Gamma \sez \interp a [\varsigma]\inv [\shp \wkn{\ftrvar \sharp i}]\inv : \ElDD \interp A [\iota]}{}
\end{equation}
because $\shp \wkn{\ftrvar \sharp i} : \shp \paren{\interp \Gamma, \ftrvar \sharp i : \sharp \IX} \cong \shp \interp \Gamma$ can be shown to be an isomorphism. Since these operations are invertible, we have
\begin{equation}
	\interp a = \interp a [\varsigma]\inv [\shp \wkn{\ftrvar \sharp i}]\inv [\varsigma] [\wkn{\ftrvar \sharp i}]
\end{equation}
and the right hand side is clearly invariant under $\loch[\id, \ftrtm \sharp 0/\ftrvar \sharp i]$.

To see in general that $\shp \wkn{\ftrvar \sharp i} : \shp(\Gamma, \ftrvar \sharp i : \IX) \cong \shp \Gamma$ is an isomorphism, pick $\overline{(\gamma, \fpshadj \flat(\vfi))} : \DSub{W}{\shp(\Gamma, \ftrvar \sharp i : \IX)}$. We show that $\overline{(\gamma, \fpshadj \flat(\vfi))} = \overline{(\gamma, \ftrtm \sharp 0)}$. We have $\fpshadj \flat(\vfi) : \PSub{W}{\sharp (\ctxbrid{\var i})}$ (there is some benevolent abuse of notation involved here, related to the fact that $\IX$ is a closed type) and hence $\vfi : \PSub{\flat W}{(\ctxbrid{\var i})}$. Now $\vfi$ factors as $\vfi = (\facewkn{\flat W})(k / \var i)$. Similarly, one can show that $\ftrtm \sharp 0 = \fpshadj \flat((\facewkn{\flat W})(0/\var i))$.

By discreteness of $\shp$, we have
\begin{equation}
	\overline{(\gamma (\facewkn{\var i}), \fpshadj \flat(\facewkn{\flat W}))}
	= \overline{(\gamma (\facewkn{\var i}), \fpshadj \flat(\facewkn{\flat W}))}(0/\var i, \facewkn{\var i})
	= \overline{(\gamma (\facewkn{\var i}), \ftrtm \sharp 0(\facewkn{\var i}))}
	= \overline{(\gamma, \ftrtm \sharp 0)} (\facewkn{\var i}).
\end{equation}
Restricting both sides by $(k/\var i)$, we find what we wanted to prove.

\section{Sizes and natural numbers}
\subsection{The natural numbers}
In any presheaf category, we can define a closed type $\Nat$ by setting $\Nat \dsub \gamma = \IN$ and $n \psub \vfi = n$. We have
\begin{equation*}
	\inference{\Gamma \ctx}{\Gamma \sez \Nat \type_0}{}, \qquad
	\inference{\Gamma \ctx}{\Gamma \sez 0 : \Nat}{}, \qquad
	\inference{\Gamma \sez n : \Nat}{\Gamma \sez \suc\,n : \Nat}{} \qquad
	\inference{
		\Gamma, \var m : \Nat \sez C \type \\
		\Gamma \sez c_0 : C[\id, 0/\var m] \\
		\Gamma, \var m : \Nat, \var c : C \sez c_\suc : C[\id,\suc\,\var m/\var m] \\
		\Gamma \sez n : \Nat
	}{\Gamma \sez \ind_\Nat(\var m.C, c_0, \var m.\var c.c_\suc, n) : C[\id, n/\var m]}{}
\end{equation*}
and all these operators are natural in $\Gamma$ \emph{and} are respected by lifted functors, e.g.\ $\fpsh F \Nat = \Nat$ and $\ftrtm{\fpsh F}{(\suc\,n)} = \suc(\ftrtm{\fpsh F}{n})$.

In $\widehat{\bpcubecat}$, $\Nat$ is a discrete type since $n \psub{0/\var i, \facewkn{\var i}} = n$ since in general $n \psub \vfi = n$. We can now interpret the inference rules for natural numbers:
\begin{equation}
	\interp{
		\inference{
			\Gamma \ctx
		}{\Gamma \judtm{\Nat}{\uni 0}}{t-Nat}
	} =
	\inference{
	\inference{
		\interp \Gamma \ctx
	}{\sharp \shp \interp \Gamma \sez \Nat \dtype_0}{}
	}{\interp \Gamma \sez \tycodeDD \Nat : \uniDD_0}{}.
\end{equation}
We have $\interp{\El~\Nat} = \Nat$.
\begin{equation}
	\interp{
		\inference{
			\Gamma \ctx
		}{\Gamma \judtm{0}{\El~\Nat}}{t-0}
	} =
	\inference{
		\interp \Gamma \ctx
	}{\interp \Gamma \sez 0 : \Nat}{},
	\qquad \qquad
	\interp{
		\inference{
			\Gamma \judtm{n}{\El~\Nat}
		}{\Gamma \judtm{\suc\,n}{\El~\Nat}}{t-s}
	} =
	\inference{
		\interp \Gamma \sez \interp n : \Nat
	}{\interp \Gamma \sez \suc \interp n : \Nat}{}.
\end{equation}
The induction principle
\begin{equation}
	\inference{
		\Gamma, \ctxvar \nu m {\El~\Nat} \judty {\El~C} \qquad
		\Gamma \judtm{c_0}{\El~C[\varclr 0/\varclr m]} \\
		\Gamma, \ctxvar \nu m {\El~\Nat}, \ctxctu c {\El~C} \judtm{c_\suc}{\El~C[\varclr{\suc\,m}/\varclr m]} \\
		\nu \setminus \Gamma \judtm n {\El~\Nat}
	}{
		\Gamma \judtm{\ind^\nu_\Nat(\varclr m.\parclr C, c_0, \varclr m.c.c_\suc, \varclr n)
	}{\El~C[\varclr n / \varclr m]}}{t-indnat}
\end{equation}
is interpreted as
\begin{equation}
	\inference{
		\interp \Gamma, \var m : \Nat \sez \interp{\El~C}[\iota] \dtype \qquad
		\interp \Gamma \sez \interp{c_0} : \interp{\El~C}[\iota][\id, 0/\var m] \\
		\interp \Gamma, \var m : \Nat, \var c : \interp{\El~C}[\iota] \sez \interp{c_\suc} : \interp{\El~C}[\iota][\suc\,\var m/\var m] \\
		\interp \Gamma \sez \forsub \nu \interp n : \Nat
	}{\interp \Gamma \sez \ind_\Nat(\var m.\interp{\El~C}[\iota], \interp{c_0}, \var m.\var c.\interp{c_\suc}, \forsub \nu \interp n)}{},
\end{equation}
where we use extensively that $\nu \Nat = \Nat$.

\subsection{Sizes}$~$

\medskip

\subsubsection{In the model}

\begin{proposition}
	We have a discrete type $\Size$ with the following inference rules:
	\begin{equation}
		\inference{\Gamma \ctx}{\Gamma \sez \Size \dtype_0}{} \qquad
		\inference{\Gamma \ctx}{\Gamma \sez \szero : \Size}{} \qquad
		\inference{\Gamma \sez n : \Size}{\Gamma \sez \ssuc n : \Size}{} \qquad
		\inference{
			\shp \Gamma \sez P \prop \qquad
			\Gamma, \var p : P[\varsigma] \sez n : \Size
		}{\Gamma \sez \sfillsys{\sfillsysclause{\var p : P}{n}} : \Size}{}
	\end{equation}
	\begin{equation}
		\inference{
			\shp \Gamma \sez i : \IX \qquad
			\Gamma \sez m, n : \Size \\
			\Gamma, \var p : \paren{(i \idtp \IX 0) \vee (i \idtp \IX 1)}[\varsigma] \sez m[\wknvar p] = n[\wknvar p] : \Size
		}{\Gamma \sez m = n : \Size}{} \quad
		\inference{
			\Gamma \sez m, n : \Size
		}{\Gamma \sez m \sqcup n : \Size}{}
	\end{equation}
	satisfying the expected equations.
\end{proposition}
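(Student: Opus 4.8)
The plan is to realise $\Size$ as a \emph{closed} discrete type whose value at a primitive context $W = (W_\IB, W_\IP)$ depends only on its bridge variables. Concretely, for $\gamma : \DSub W \Gamma$ I would set $\Size \dsub \gamma := \IN^{(\{0,1\}^{W_\IB})}$: a size over $W$ assigns a natural number to every ``bridge-vertex'' $v : W_\IB \to \{0,1\}$, and restriction along a face map $\vfi : V \to W$ is precomposition with the map $\{0,1\}^{V_\IB} \to \{0,1\}^{W_\IB}$ that sends $\var j \in W_\IB$ to $\var j \psub \vfi$ when this is an endpoint and to $v(\var k)$ when $\var j \psub \vfi = \var k \in V_\IB$. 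Equivalently $\Size = \cohdisc S$ for $S$ the cubical set of vertex-indexed naturals. Because this factors through $\cohpi$, restriction along the path-degeneracy $(\facewkn{\var i}) : \PSub{(W, \ctxpath{\var i})}{W}$ is the identity on $\Size$, so $\Size$ is discrete by the $2 \Rightarrow 1$ argument of \cref{thm:discrete-contexts-and-cohesion}; it is of level $0$ since each $\{0,1\}^{W_\IB}$ is finite; and being closed it exists in every context and is natural in $\Gamma$.

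I would then interpret $\szero \dsub \gamma$ as the constant function $0$, and $(\ssuc n)\dsub\gamma$, $(m \sqcup n)\dsub\gamma$ by applying $(-)+1$ and $\max$ pointwise to the values of the arguments at each bridge-vertex. For $\sfillsys{\sfillsysclause{\var p : P}{n}}$, with $\shp \Gamma \sez P \prop$ and $\Gamma, \var p : P[\varsigma] \sez n : \Size$: at a bridge-vertex $v$ of $W$ one restricts $\gamma$, viewed over $\shp\Gamma$, to the vertex named by $v$, evaluates $P$ there, and returns the corresponding value of $n$ if $P$ holds and $0$ otherwise; this is well posed precisely because $P$ lives over $\shp\Gamma$, whose $W$-cells are $\cohpi W$-cells. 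Each former is then checked to be natural in $W$ (the pointwise operations commute with precomposition, and $P$ and $n$ are themselves natural) and in $\Gamma$, and all the expected definitional equations --- the collapse $\sfillsys{\sfillsysclause{\top}{n}} = n$ of a fill under a true predicate, the equations governing the interaction of $\ssuc$, $\sqcup$ and the fills, commutativity/associativity/idempotence of $\sqcup$ with unit $\szero$, and commutation of every former with substitution --- reduce to the corresponding identities on $\IN$.

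The delicate step is the separatedness rule for the interval variable: given $\shp\Gamma \sez i : \IX$ and $m, n : \Size$ over $\Gamma$ agreeing on the subobject where $(i \idtp\IX 0) \vee (i \idtp\IX 1)$ holds, one must conclude $m = n : \Size$. Fixing $\gamma : \DSub W\Gamma$ and a bridge-vertex $v$, it suffices to show $m\dsub\gamma$ and $n\dsub\gamma$ take the same value at $v$. Pulling $i$ back along $\varsigma\Gamma \circ \gamma$ yields a coordinate $\hat\imath \in \PSub{W}{(\ctxbrid{\var i})}$; if it is an endpoint the hypothesis applies to $\gamma$ itself, and if it is a bridge variable $\var j \in W_\IB$ then restricting $\gamma$ along $(v(\var j)/\var j) : \PSub{W\setminus\var j}{W}$ forces $i$ to the endpoint $v(\var j)$, so the hypothesis gives that $m\dsub\gamma$ and $n\dsub\gamma$ agree after that restriction; since the induced map on bridge-vertices has image exactly the vertices whose $\var j$-component is $v(\var j)$ --- in particular containing $v$ --- the two values at $v$ coincide. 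I expect this to be the main obstacle, in the sense that it pins down the construction: remembering a size's value on all faces rather than only on vertices would still support the fills but would break separatedness, whereas the vertex-indexed presheaf is simultaneously fine enough for the fills to land in it and coarse enough for the rule to hold.
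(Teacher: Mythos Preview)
Your construction is correct and matches the paper's: both define $\Size\dsub\gamma = \IN^{\PSub{()}{\shp W}}$, i.e.\ a natural number at every bridge-vertex, with $\szero$, $\ssuc$, $\sqcup$ acting pointwise and $\sfill$ defined vertexwise by the case distinction on $P$. Your observation that $\Size = \cohdisc S$ for a cubical set $S$ is just another way of saying what the paper phrases as ``$\shp(0/\var i,\facewkn{\var i}) = \id$''.

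The one place where you work harder than necessary is the separatedness rule. You fix a general $\gamma : \DSub W \Gamma$, analyse whether $\hat\imath = i[\varsigma]\dsub\gamma$ is an endpoint or a bridge variable $\var j$, and in the latter case substitute only $\var j$. The paper instead goes straight to a point $\gamma : \DSub{()}{\Gamma}$: since a term of $\Size$ is determined by its values under all $\psi : \PSub{()}{W}$, it suffices to check $m\dsub\gamma = n\dsub\gamma$ there, and then $i\dsub{\varsigma\gamma} \in \PSub{()}{(\ctxbrid{\var i})} = \{0,1\}$ automatically, so the disjunction holds and the premise applies directly. Your argument is not wrong, but the case split is avoidable.
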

For closed types $T$, the set $T \dsub \gamma$ is independent of $\gamma : \PSub W \Gamma$; hence we will denote it as $\DSub W T$. Similarly, we will write $W \Dsez t : T$ for $W \Dsez t : T \dsub \gamma$.
\begin{proof}
	For $\gamma : \DSub W \Gamma$, we set $(\DSub W \Size) = \IN^{\PSub{()}{\shp W}}$, i.e.\ a term $n : \Size \dsub \gamma$ consists of a natural number for every vertex of the cube $\shp W$, which is $W$ with all path dimensions contracted. Put differently still, a term $n : \DSub W \Size$ consists of a natural number for every vertex of the cube $W$, such that numbers for path-adjacent vertices are equal. Writing $n \ssub \psi$ for the vertex corresponding to $\psi : \PSub{()}{\shp W}$, we define accordingly $n \psub \vfi \ssub \psi = n \ssub{\shp \vfi \circ \psi}$. This implies that we have in general $n \ssub \psi = n \psub{\psi'} \ssub{\id_{()}}$ for any $\psi' : \PSub{()}{W}$ such that $\shp \psi' = \psi$. Such a $\psi'$ always exists, e.g. $\psi' = (\psi, 0/\var i^\IP \in W)$. Hence, we will avoid the $\ssub \loch$ notation and say that a term $W \Dsez n : \Size$ is determined by all of its vertices $() \Dsez n \psub \vfi : \Size$ which are in fact functions $\IN^{\PSub{()}{()}}$ but can be treated as naturals since $\PSub{()}{()}$ is a singleton. Every such term $n$ has the property that $() \Dsez n \psub \vfi : \Size$ is independent of how $\vfi$ treats path variables, i.e. if $\shp \vfi = \shp \psi$, then $n \psub \vfi = n \psub \psi$.
	
	To see that $\Size$ is discrete, pick $(W, \ctxpath{\var i}) \Dsez n : \Size$. Then $n \psub{0/\var i, \facewkn{\var i}} = n$ because $\shp(0/\var i, \facewkn{\var i}) = \id$.
	
	We define $W \Dsez \szero : \Size$ by setting $() \Dsez \szero \psub{\vfi} = 0 : \Size$ for all $\vfi : \PSub{()}{W}$.
	
	We define $W \Dsez \ssuc n : \Size$ by setting $() \Dsez (\ssuc n) \psub \vfi = n \psub \vfi + 1 : \Size$.
	
	Assume we have $\shp \Gamma \sez P \prop$ and $\Gamma, \var p : P[\varsigma] \sez n : \Size$. Then we define $\Gamma \sez \sfillsys{\sfillsysclause{\var p : P}{n}}$ as follows: pick $\gamma : \DSub{()}{\Gamma}$. Then we set $\sfillsys{\sfillsysclause{\var p : P}{n}} \dsub \gamma = n \dsub{\gamma, \star / \var p}$ if $P [\varsigma] \dsub \gamma = \accol{\star}$, and $\sfillsys{\sfillsysclause{\var p : P}{n}} \dsub \gamma = 0$ if $P [\varsigma] \dsub \gamma = \eset$. We need to show that this respects paths, i.e. if $\vfi, \psi : \PSub{()}{W}$, $\gamma : \DSub{W}{\Gamma}$ and $\shp \vfi = \shp \psi$, then we should prove that $\sfillsys{\sfillsysclause{\var p : P}{n}} \dsub{\gamma \vfi} = \sfillsys{\sfillsysclause{\var p : P}{n}} \dsub{\gamma \psi}$. First, we show that $\varsigma \gamma \vfi = \varsigma \gamma \psi$. Since $\varsigma$ decomposes as $\kappa\inv \inquotshp$, it suffices to show that $\kappa \varsigma \gamma \vfi = \kappa \varsigma \gamma \psi$. Now we have
	\begin{equation}
		\kappa \quotshp \Gamma \circ \varsigma \Gamma \circ \gamma \circ \vfi
		= \fpshadj \shp\inv(\varsigma \Gamma \circ \gamma \circ \vfi) \circ \varsigma()
		= \fpshadj \shp\inv(\varsigma \Gamma \circ \gamma) \circ \shp \vfi \circ \varsigma(),
	\end{equation}
	and similar for $\psi$, which proves the equality since $\shp \vfi = \shp \psi$. But this implies that $P[\varsigma]\dsub{\gamma \vfi} = P[\varsigma]\dsub{\gamma \psi}$. Since we also have $n \dsub{\gamma \vfi} = n \dsub{\gamma \psi}$, we can conclude that $\sfill$ respects paths.
	
	For the equality expressing codiscreteness, assume the premises and pick $\gamma : \DSub{()}{\Gamma}$. Then we have $i\dsub{\varsigma \gamma} : \DSub{()}{\IX}$, which is either 0 or 1. Hence, $\paren{(i \idtp \IX 0) \vee (i \idtp \IX 1)} [\varsigma] \dsub \gamma = \accol \star$, and we find
	\begin{equation}
		m \dsub \gamma = m[\wknvar p] \dsub{\gamma, \star} = n[\wknvar p] \dsub{\gamma, \star} = n \dsub \gamma.
	\end{equation}

	We define $W \Dsez m \sqcup n : \Size$ by setting $() \Dsez (m \sqcup n) \psub \vfi : \Size$ equal to the maximum of $m \psub \vfi$ and $n \psub \vfi$.
\end{proof}

We can easily lift $\szero$, $\ssuc$ and $\smax{}{}$ to $\sharp \Size$, e.g. we have $\var m : \Size \sez \ssuc \var m : \Size$, whence $\ftrvar \sharp m : \sharp \Size \sez \ftrtm{\sharp}{(\ssuc \var m)} : \sharp \Size$, and then using substitution we can derive
\begin{equation}
	\inference{
		\Gamma \sez n : \sharp \Size
	}{\Gamma \sez \ftrtm{\sharp}{(\ssuc \var m)}[\bullet, n/\ftrvar \sharp m] : \sharp \Size}{}.
\end{equation}
We will denote the latter term as $\ssuc n$. Similarly, we can lift $\sfill$:
\begin{equation}
	\inference{
	\inference{
		\inference{
		\inference{
			\Gamma \sez P \prop
		}{\flat \Gamma \sez \flat P \prop}{}
		}{\shp \flat \Gamma \sez (\flat P)[\varsigma \inv] \prop}{}
		\qquad
		\inference{
			\Gamma, \var p : P \sez n : \sharp \Size
		}{\flat \Gamma, \ftrvar \flat p : \flat P \sez \iota\inv(n[\kappa]) : \Size}{}
	}{\flat \Gamma \sez \sfillsys{\sfillsysclause{\ftrvar \flat p : (\flat P)[\varsigma\inv]}{\iota\inv(n[\kappa])}} : \Size}{}
	}{\Gamma \sez \iota \paren{\sfillsys{\sfillsysclause{\ftrvar \flat p : (\flat P)[\varsigma\inv]}{\iota\inv(n[\kappa])}}}[\kappa]\inv : \sharp \Size}{}
\end{equation}
We will denote this result as $\sfillsyssharp{\sfillsysclause{\var p : P}{n}}$. Note that $\varsigma()$ and $\kappa()$ are the identity. For $\gamma : \DSub{()}{\Gamma}$, we have
\begin{align*}
	\sfillsyssharp{\sfillsysclause{\var p : P}{n}} \dsub{\gamma}
	&= \iota \paren{\sfillsys{\sfillsysclause{\ftrvar \flat p : (\flat P)[\varsigma\inv\Gamma]}{\iota\inv(n[\kappa\Gamma])}}}[\kappa\Gamma]\inv \dsub{\gamma} \\
	&= \iota \paren{\sfillsys{\sfillsysclause{\ftrvar \flat p : (\flat P)[\varsigma\inv\Gamma]}{\iota\inv(n[\kappa\Gamma])}}} \dsub{\fpshadj \shp(\gamma \circ \varsigma ()\inv)} \\
	&= \iota \paren{\sfillsys{\sfillsysclause{\ftrvar \flat p : (\flat P)[\varsigma\inv\Gamma]}{\iota\inv(n[\kappa\Gamma])}}} \dsub{\fpshadj \shp(\gamma)} \\
	&= \fpshadj \flat \paren{\sfillsys{\sfillsysclause{\ftrvar \flat p : (\flat P)[\varsigma\inv\Gamma]}{\iota\inv(n[\kappa\Gamma])}} \dsub{\fpshadj \shp(\gamma) \circ \kappa()}} \\
	&= \fpshadj \flat \paren{\sfillsys{\sfillsysclause{\ftrvar \flat p : (\flat P)[\varsigma\inv\Gamma]}{\iota\inv(n[\kappa\Gamma])}} \dsub{\fpshadj \shp(\gamma)}}.
\end{align*}
Now
\begin{equation}
	(\flat P)[\varsigma\inv \Gamma][\varsigma \Gamma] \dsub{\fpshadj \shp(\gamma)}
	= (\flat P) \dsub{\fpshadj \shp \Gamma} = P \dsub \gamma.
\end{equation}
So we make the expected case distinction: if $P \dsub \gamma = \accol \star$, then
\begin{align*}
	\sfillsyssharp{\sfillsysclause{\var p : P}{n}} \dsub{\gamma}
	&= \fpshadj \flat \paren{\iota\inv(n[\kappa\Gamma]) \dsub{\fpshadj \shp(\gamma)}}
	= n[\kappa\Gamma]\dsub{\fpshadj \shp(\gamma)} = n \dsub \gamma.
\end{align*}
If $P \dsub \gamma = \eset$, then we just get $0$. So while it looks ugly, this is precisely the construction we would expect.
\begin{proposition}
	We have an inequality proposition
	\begin{equation}
		\inference{\Gamma \sez m, n : \sharp \Size}{\Gamma \sez m \leq n \prop}{}
	\end{equation}
	that satisfies reflexivity, transitivity, $0 \leq n$, $\ssuc m \leq \ssuc n$ if $m \leq n$, $\sfillsyssharp{\sfillsysclause{\var p : P}{m}} \leq \sfillsyssharp{\sfillsysclause{\var p : P}{n}}$ if $m \leq n$, and $m \leq m \sqcup n$ and $n \leq m \sqcup n$.
\end{proposition}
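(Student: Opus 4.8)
The plan is to define $m \leq n$ directly and vertexwise, exploiting the concrete description of $\sharp \Size$ obtained in the previous proof. Since $\sharp = \fpsh\flat$ and $\shp\flat W = \flat W$, a defining term $W \Dsez k : \sharp \Size$ is nothing but a function assigning a natural number $k\dsub{\gamma\epsilon}$ to each vertex $\epsilon : \PSub{()}{W}$, \emph{with no path-coherence constraint} (the constraint present in $\Size$ is washed out by $\sharp$, because $\flat W$ has no path dimensions). Given $\Gamma \sez m, n : \sharp\Size$, I set $(m \leq n)\dsub\gamma = \accol\star$ when $m\dsub{\gamma\epsilon} \leq n\dsub{\gamma\epsilon}$ in $\IN$ for every vertex $\epsilon$ of $W$, and $(m \leq n)\dsub\gamma = \eset$ otherwise. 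This is by construction a sub-singleton, hence a proposition candidate; stability under restriction follows since any $\vfi : \PSub V W$ sends each vertex $\delta$ of $V$ to a vertex $\vfi \circ \delta$ of $W$, so the defining inequalities for $(m \leq n)\dsub\gamma$ specialize to those for $(m \leq n)\dsub{\gamma\vfi}$. (Equivalently, one could take $m \leq n$ to be the identity-type proposition $m \sqcup n \idtp{\sharp\Size} n$, which unfolds to the same pointwise condition.)

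It then remains to verify the listed properties, each of which reduces to an elementary fact about $\IN$ once we know how the size operators act vertexwise. Reflexivity, transitivity, and $\szero \leq n$ are immediate from $\szero\dsub{\gamma\epsilon} = 0$ and the corresponding facts for $\leq$ on $\IN$. For the successor clause I would first record that the lifted $\ssuc$ satisfies $(\ssuc k)\dsub{\gamma\epsilon} = k\dsub{\gamma\epsilon} + 1$ --- this comes from unfolding $\ssuc k = \ftrtm\sharp{(\ssuc\var m)}[\bullet, k/\ftrvar\sharp m]$, using that lifted functors preserve $\ssuc$ and that the bookkeeping labels may be ignored --- after which $\ssuc m \leq \ssuc n$ from $m \leq n$ is just $a+1 \leq b+1$ from $a \leq b$. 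Likewise $(m \sqcup n)\dsub{\gamma\epsilon} = \max(m\dsub{\gamma\epsilon}, n\dsub{\gamma\epsilon})$ for the lifted join, giving $m \leq m \sqcup n$ and $n \leq m \sqcup n$ at once.

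The only clause with real content is monotonicity of $\sfill_\sharp$, and there I would reuse the computation already carried out in the preceding proof: for $\gamma : \DSub{()}{\Gamma}$ one has $\sfillsyssharp{\sfillsysclause{\var p : P}{k}}\dsub\gamma = k\dsub{\gamma,\star}$ if $P\dsub\gamma = \accol\star$ and $= 0$ otherwise. Applying this at $\gamma\epsilon$ for each vertex $\epsilon$ of $W$ shows that $\sfill_\sharp$ is computed vertexwise by the same case split, where crucially the side condition ``$P$ holds at $\epsilon$'' depends only on $\gamma\epsilon$ and not on $k$. Hence if $m \leq n$, then at every vertex either both filled terms are $0$, or they equal $m\dsub{\gamma\epsilon,\star}$ and $n\dsub{\gamma\epsilon,\star}$ with $m\dsub{\gamma\epsilon,\star} \leq n\dsub{\gamma\epsilon,\star}$; in both cases $\sfillsyssharp{\sfillsysclause{\var p : P}{m}} \leq \sfillsyssharp{\sfillsysclause{\var p : P}{n}}$. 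The main obstacle is therefore not conceptual but clerical: reading the vertexwise formulas for the lifted operators $\ssuc$, $\sqcup$ and especially $\sfill_\sharp$ correctly off their rather opaque definitions through the $\flat \dashv \sharp$ adjunction --- exactly the sort of label-chasing already performed when lifting $\sfill$.
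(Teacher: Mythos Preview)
Your proposal is correct and takes essentially the same approach as the paper: define $(m \leq n)\dsub\gamma$ as $\accol\star$ precisely when the vertexwise inequality holds, then check each clause by unfolding the lifted operators vertexwise. The only cosmetic difference is that the paper indexes over vertices $\vfi : \PSub{()}{\flat W}$ of $\flat W$ (working directly with the underlying $\Size$-terms after stripping the $\fpshadj\flat$ label) while you index over vertices $\epsilon : \PSub{()}{W}$ of $W$; these are in canonical bijection via $\kappa$ and $\flat$, so the two definitions coincide, and your more detailed verification of the clauses is exactly what the paper leaves implicit with ``This can be shown to satisfy the required properties.''
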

\begin{proof}
	Assume we have $W \Dsez \fpshadj \flat(m), \fpshadj \flat(n) : \sharp \Size$, i.e. $\flat W \Dsez m, n : \Size$. Then we set $(\fpshadj \flat(m) \leq \fpshadj \flat(n))$ equal to $\accol \star$ if $() \Dsez m \psub \vfi \leq n \psub \vfi : \Size$ for every $\vfi : \PSub{()}{\flat W}$. Otherwise, we set it equal to $\eset$. This can be shown to satisfy the required properties.
\end{proof}
\begin{proposition}
	We have
	\begin{equation}
		\inference{
			(\mu, \beta : \mu \to \sharp) \in \accol{(\coshp, \iota \vartheta), (\Id, \iota), (\sharp, \id)} \\
			\Gamma, \ftrvar \mu n : \mu \Size \sez A \type \\
			\Gamma \sez f : \Pi(\ftrvar \mu n : \mu \Size).\paren{  \Pi(\ftrvar \mu m : \mu \Size).(\uparrow \beta(\ftrvar \mu m) \leq \beta(\ftrvar \mu n)) \to A[\ftrvar \mu m/\ftrvar \mu n]  } \to A
		}{\Gamma \sez \fix^\mu\,f : \Pi(\ftrvar \mu n : \mu\Size).A}{}
	\end{equation}
	(where we omit weakening and other uninteresting parts of substitutions).
\end{proposition}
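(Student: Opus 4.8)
The plan is to build $\fix^\mu\,f$ directly as a term of the $\Pi$-type, using the characterization recalled in \cref{sec:psh-pi-types}: a term $\Gamma \sez g : \Pi(\ftrvar \mu n : \mu\Size).A$ is completely determined by the data $g \dsub \gamma \cdot a \in A \dsub{\gamma, a}$ for all $\gamma : \DSub W \Gamma$ and all $a : (\mu\Size)\dsub \gamma$, and conversely any such assignment that is stable under restriction defines a term. To organize the recursion I would assign to every size-value $a$ over a cube $W$ (of modality $\mu$) a natural number $h(a)$, its \emph{height}: recalling that a defining term of $\Size$, and by inspection of $\sharp\Size$ and $\coshp\Size$, over a cube amounts to finitely many natural numbers, one per vertex of the appropriate cube, we let $h(a)$ be the maximum of those numbers.

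Two elementary facts make $h$ a well-founded measure. First, $h$ is non-increasing along restriction: for $\vfi : V \to W$ the vertex values of $a\psub\vfi$ form a sub-multiset of those of $a$, so $h(a\psub\vfi) \leq h(a)$. Second, the natural transformations $\beta$ in each of the admissible pairs $(\coshp, \iota\vartheta)$, $(\Id, \iota)$, $(\sharp, \id)$ leave vertex values unchanged, and $\ssuc$ adds $1$ at every vertex, while $\leq$ on $\sharp\Size$ was defined to hold exactly when the comparison holds at every vertex; hence a proof of $\ssuc\beta(b) \leq \beta(a)$ witnesses $b < a$ at every vertex, whence $h(b) < h(a)$.

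With this in hand I would define $(\fix^\mu\,f)\dsub\gamma \cdot a$ by strong induction on $h(a)$. In the inductive step, $f\dsub\gamma \cdot a$ expects an argument $r$ of type $\paren{\Pi(\ftrvar\mu m : \mu\Size).(\ssuc\beta(m)\leq\beta(n))\to A[m/n]}\dsub{\gamma, a}$; by the $\Pi$-characterization again, $r$ is determined by its values $r\psub\psi \cdot b \cdot e$, where $\psi : V \to W$, $b : (\mu\Size)\dsub{\gamma\psi}$, and $e$ is a proof of $\ssuc\beta(b)\leq\beta(a\psi)$. Since this proposition is then inhabited, the two facts above give $h(b) < h(a\psi) \leq h(a)$, so the inductive hypothesis applies and I set $r\psub\psi\cdot b \cdot e := (\fix^\mu\,f)\dsub{\gamma\psi}\cdot b$ (a proof $e$ being unique, there is no coherence to check here), and finally $(\fix^\mu\,f)\dsub\gamma\cdot a := (f\dsub\gamma\cdot a)\cdot r$. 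Stability under restriction is then a parallel induction on $h(a)$: $((\fix^\mu\,f)\dsub\gamma\cdot a)\psub\psi$ unfolds, using that $f$ is a term and that application commutes with restriction, to $(f\dsub{\gamma\psi}\cdot a\psi)\cdot r'$ where $r'$ has, by the induction hypothesis, the same values as the restriction of $r$, hence equals it, giving $(\fix^\mu\,f)\dsub{\gamma\psi}\cdot a\psi$. Naturality in $\Gamma$ is immediate because the construction never inspects $\Gamma$, and the expected reduction equation $\fix^\mu\,f\,n = f\,n\,(\lambda\ftrvar\mu m.\lambda e.\fix^\mu\,f\,m)$ holds by unfolding the definition of $(\fix^\mu\,f)\dsub\gamma\cdot a$.

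The main obstacle is the bookkeeping behind the second fact: one must check, for each of the three pairs $(\mu,\beta)$ separately, that the vertex-value description of $\mu\Size$, $\sharp\Size$ and the maps $\beta$, $\ssuc$, $\leq$ really is the naive one, which amounts to chasing the isomorphisms identifying $(\mu\Size)\dsub\gamma$ with $\Size$ over an appropriate cube through the definition of $\beta$ as an application of a natural transformation (\cref{thm:nattrans-function}) and the structural identities of \cref{thm:cohesion-psh}. Once the decreasing measure $h$ is pinned down and shown restriction-stable, the recursion and the verification of its equations are routine.
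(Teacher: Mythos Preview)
Your proposal is correct and follows essentially the same approach as the paper: define $\fix^\mu f$ by the recursive equation $\fix^\mu f\,n = f\,n\,(\lambda m.\lambda p.\fix^\mu f\,m)$ and justify well-foundedness by strong induction on the maximum vertex value of the size argument, using that a witness of $\ssuc\beta(m)\leq\beta(n)$ forces every vertex of $m$ to lie strictly below the maximal vertex of $n$. Your account is in fact slightly more careful than the paper's sketch in that you explicitly thread restrictions $\psi:V\to W$ through the construction of the recursive argument and verify restriction-stability by a parallel induction, whereas the paper works at a fixed cube and leaves this implicit; but the key measure and the key inequality are the same.
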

\begin{proof}
	We define $\fix^\mu\,f = \lambda \ftrvar \mu n.f~\ftrvar \mu n~(\lambda \ftrvar \mu m.\lambda \var p.\fix^\mu~f~\ftrvar \mu m)$ (where we omit weakening substitutions), which we will prove to be a well-founded definition by induction essentially on the greatest vertex of $\ftrvar \mu n$.
	\begin{description}
		\item[$\mu = \Id$] Pick $\gamma : \DSub W \Gamma$ and $n : \DSub W \Size$. Let $\omega : \PSub{()}{W}$ attain a maximal vertex $n \psub \omega$ of $n$. We show that we can define $(\fix~f)\dsub \gamma \cdot n$ as $f \dsub \gamma \cdot n \cdot (\lambda \var m.\lambda \var p.\fix~f~\var m) \dsub \gamma$, assuming that $(\fix~f) \dsub \gamma \cdot m$ is already defined for all $m : \DSub W \Size$ such that all vertices of $m$ are less than $n \psub \omega$. In other words, we have to show that $(\lambda \var m.\lambda \var p.\fix~f~\var m) \dsub \gamma$ is already defined. But this function is determined completely by defining terms of the form $(\lambda \var m.\lambda \var p.\fix~f~\var m) \dsub \gamma \cdot m \cdot p$, where $W \Dsez m : \Size$ and $W \Dsez p : \ssuc \iota(m) \leq \iota(n)$. Note that $\iota(m) = \fpshadj \flat(m \psub{\kappa})$. The existence of $p$ then implies that $m \psub{\kappa \vfi} + 1 \leq n \psub{\kappa \vfi}$ for all $\vfi : \PSub{()}{\flat W}$. Since any $\psi : \PSub{()}{W}$ factors as $\psi = \psi \circ \kappa () = \kappa W \circ \flat \psi$, we can conlude that all vertices of $m$ are less than the corresponding ones of $n$, hence less than $n \psub \omega$. Then $(\fix~f) \dsub \gamma \cdot m$ is already defined, and one can show that
		\begin{equation}
			(\lambda \var m.\lambda \var p.\fix~f~\var m) \dsub \gamma \cdot m \cdot p = (\fix~f) \dsub \gamma \cdot m.
		\end{equation}
		Hence, the definition is well-founded.
		
		\item[$\mu = \sharp$] Pick $\gamma : \DSub W \Gamma$ and $\fpshadj \flat(n) : \DSub{W}{\sharp \Size}$, i.e. $n : \DSub{\flat W}{\Size}$. Let $\omega : \PSub{()}{\flat W}$ attain a maximal vertex $n \psub \omega$ of $n$. We show that we can define $(\fix^\sharp~f)\dsub \gamma \cdot \fpshadj \flat(n)$ as $f \dsub \gamma \cdot \fpshadj \flat(n) \cdot (\lambda \ftrvar \sharp m.\lambda \var p.\fix^\sharp~f~\ftrvar \sharp m) \dsub \gamma$. So all $(\lambda \ftrvar \sharp m.\lambda \var p.\fix^\sharp~f~\ftrvar \sharp m) \dsub \gamma \cdot m \cdot p = (\fix^\sharp~f) \dsub \gamma \cdot m$ have to be defined. But the existence of $W \Dsez p : \fpshadj \flat(m+1) \leq \fpshadj \flat(n)$ asserts that $(\fix^\sharp~f) \dsub \gamma \cdot m$ is already defined by the induction hypothesis. 
		
		\item[$\mu = \coshp$] Analogous. \qedhere
	\end{description}
\end{proof}

\subsubsection{The type $\Size$}
We can now proceed with the interpration of $\Size$ in ParamDTT. Just like with $\Nat$, the interpretation of $\szero$, $\ssuc$ and $\smax{}{}$ is entirely straightforward. For t-Size-fill, we have
\begin{align*}
	&\interp{\inference{
		\Gamma \judtm{P}{\IF} \qquad
		\Gamma, \ctxface{P} \judtm{n}{\Size}
	}{\Gamma \judtm{\sfillsys{\sfillsysclause P n}}{\Size}}{t-Size-fill}}
	= \\
	&\inference{
		\inference{
		\inference{
		\inference{
			\interp \Gamma \sez \interp P : \PropD
		}{\shp \interp \Gamma \sez \ElD{\interp P} \prop}{}
		}{\sharp \shp \interp \Gamma \sez \sharp \ElD{\interp P} \prop}{}
		}{\shp \interp \Gamma \sez (\sharp \ElD{\interp P})[\iota] \prop}{}
		\qquad
		\interp \Gamma, \var p : (\sharp \ElD{\interp P})[\sharp \varsigma][\iota] \sez \interp n : \Size
	}{\interp \Gamma \sez \sfillsys{\sfillsysclause{\var p : (\sharp \ElD{\interp P})[\iota]}{\interp n}} : \Size}{}
\end{align*}
Then we can also interpet t=-Size-codisc.

\subsubsection{The inequality type}
The inequality type is interpreted as
\begin{equation}
	\interp{
		\inference{
			\Gamma \judtm{m, n}{\El~\Size}
		}{\Gamma \judtm{m \leq n}{\El~\uni 0}}{t-$\leq$}
	} =
	\inference{
	\inference{
	\inference{
	\inference{
		\interp \Gamma \sez \interp m, \interp n : \Size
	}{\shp \interp \Gamma \sez \interp m[\varsigma]\inv, \interp n[\varsigma]\inv : \Size}{}
	}{\sharp \shp \interp \Gamma \sez \ftrtm{\sharp}{(\interp m[\varsigma]\inv)}, \ftrtm{\sharp}{(\interp n[\varsigma]\inv)} : \sharp \Size}{}
	}{\sharp \shp \interp \Gamma \sez \ftrtm{\sharp}{(\interp m[\varsigma]\inv)} \leq \ftrtm{\sharp}{(\interp n[\varsigma]\inv)} \prop}{}
	}{\interp \Gamma \sez \tycodeDD{\ftrtm{\sharp}{(\interp m[\varsigma]\inv)} \leq \ftrtm{\sharp}{(\interp n[\varsigma]\inv)}} : \uniDD_0}{}
\end{equation}
We have $\interp{\El~ m \leq n} = \ftrtm{\sharp}{\interp m} \leq \ftrtm{\sharp}{\interp n}$, and $\interp{\El~m \leq n}[\iota] = \forsub \sharp \interp m \leq \forsub \sharp \interp n$.

As an example of how we interpret simple inequality axioms, we take the following:
\begin{equation}
	\interp{
		\inference{
			\sharp \setminus \Gamma \judtm{n}{\El~\Size}
		}{\Gamma \judtm{\name{zero}_\leq~ n}{\El~0 \leq n}}{t-$\leq$-zero}
	} =
	\inference{
		\interp \Gamma \sez \forsub \sharp \interp n : \sharp \Size
	}{\interp \Gamma \sez \star : \szero \leq \forsub \sharp \interp n}{}.
\end{equation}

The filling rule is a bit more complicated. We need to prove
\begin{equation}
	\inference{
		\sharp \setminus \Gamma \judtm P \IF \qquad
		\sharp \setminus \Gamma, \ctxface P \judtm{m, n}{\Size} \qquad
		\Gamma, \ctxface P \judtm{e}{m \leq n}
	}{\Gamma \judtm{\leqfillsys{\leqfillsysclause P e}}{\sfillsys{\sfillsysclause P m} \leq \sfillsys{\sfillsysclause P n}}}{t-$\leq$-fill}.
\end{equation}
First, we unpack the proposition (see the section on face predicates):
\begin{equation}
	\inference{
		\interp{\sharp \setminus \Gamma \judtm P \IF} = \paren{\interp{\sharp \setminus \Gamma} \sez \interp P : \PropD}
	}{\interp{\Gamma} \sez (\sharp \ElD \interp P)[\sharp \varsigma][\iota] \prop}{}.
\end{equation}
We have $\interp \Gamma, \var p : (\sharp \ElD \interp P)[\sharp \varsigma][\iota] \sez \interp e : \forsub \sharp m \leq \forsub \sharp n$, and we need to prove
\begin{equation}
	\interp \Gamma \sez \ldots : \forsub \sharp \paren{ \sfillsys{\sfillsysclause{\var p : (\sharp \ElD{\interp P})[\iota]}{\interp m}} } \leq \forsub \sharp \paren{ \sfillsys{\sfillsysclause{\var p : (\sharp \ElD{\interp P})[\iota]}{\interp n}} }.
\end{equation}
Now, precisely in those cases where the $\sfill$s evaluate to $\interp m$ and $\interp n$ respectively, we have evidence that $\forsub \sharp \interp m \leq \forsub \sharp \interp n$. This allows us to construct the conclusion.

\subsubsection{The $\sfix$ rule}
The fix rule
\begin{equation*}
	\inference{
		\Gamma, \ctxvar \nu n {\El~\Size} \judty{\El~A} \\
		\Gamma \judtm{f}{\El~\prodvar \nu n \Size.(\prodvar \nu m \Size.(\ssuc\,m \leq n) \to A[\varclr m / \varclr n]) \to A}
	}{\Gamma \judtm{\sfix^\nu\,f}{\El~\prodvar \nu n \Size.A}}{t-fix}
\end{equation*}
is interpreted as
\begin{equation}
	\inference{
		(\nu, \beta : \nu \to \sharp) \in \accol{(\coshp, \iota \vartheta), (\Id, \iota), (\sharp, \id)} \\
		\interp \Gamma, \ftrvar \nu n : \nu \Size \sez \interp{\El~A}[\iota] \dtype \\
		\interp \Gamma \sez \interp f : \Pi(\ftrvar \nu n : \nu \Size).(\Pi(\ftrvar \nu m : \nu \Size).(\beta(\ftrvar \nu m) \leq \beta(\ftrvar \nu n)) \to \interp{\El~A}[\iota][\ftrvar \nu m/\ftrvar \nu n]) \to \interp{\El~A}[\iota]
	}{\interp \Gamma \sez \fix^\nu \interp f : \Pi(\ftrvar \nu n : \nu \Size).\interp{\El~A}[\iota]}{}.
\end{equation}
Since the model supports the definitional version of the equality axiom for $\fix$, the axiom itself can be interpreted as an instance of reflexivity.

\bibliographystyle{alphaurl}
\bibliography{../paper/paramdtt-refs.bib}

\end{document}